\newcommand{\blind}{1}
\newtheorem{theorem}{Theorem}
\newtheorem{lemma}{Lemma}
\newtheorem{corollary}{Corollary}
\theoremstyle{definition}
\newtheorem{example}{Example}
\algnewcommand{\algorithmicand}{\textbf{ and }}
\algnewcommand{\algorithmicor}{\textbf{ or }}
\algnewcommand{\OR}{\algorithmicor}
\algnewcommand{\AND}{\algorithmicand}
\def\Ca{{C1}}
\def\Cb{{C2}}
\def\WCc{{WC3}}
\def\WCd{{WC4}}
\def\Cc{{C3}}
\def\Cd{{C4}}
\def\Cf{{C5}}
\def\Cg{{C6}}
\def\Ch{{C7}}
\def\Ca{{C1}}
\def\Cb{{C2}}
\def\WCc{{WC3}}
\def\WCd{{WC4}}
\def\Cc{{C3}}
\def\Cd{{C4}}
\def\Cf{{C5}}
\def\Cg{{C6}}
\def\Ch{{C7}}
\def\thmone{{S1}}
\def\thmtwo{{4}}
\def\thmmain{{1}}
\def\thmoneRes{{3}}
\def\thmzeroRes{{2}}
\def\lemlower{{4}}
\def\lemtailEudis{{1}}
\def\lemalpha{{2}}
\def\lemtailtwo{{3}}
\def\corHone{{1}}
\def\Pr{{\mathbb{P}}}
\def\Ex{{\mathbb{E}}}
\def\var{\hbox{Var}}
\def\cov{\hbox{cov}}
\def\cov{\mbox{cov}}
\def\argmax{\mathop{\arg\max}}
\def\diag{\mbox{diag}}
\def\trans{^{\rm T}}
\def\bTheta{\boldsymbol\Theta}
\def\btheta{\boldsymbol\theta}
\def\ba{{\boldsymbol\alpha}}
\def\bxi{\boldsymbol\xi}
\def\0{{\bf 0}}
\def\1{{\bf 1}}
\def\A{{\bf A}}
\def\V{{\bf V}}
\def\x{{\bf x}}
\def\B{{\bf B}}
\def\D{{\bf D}}
\def\V{{\bf V}}
\def\b{{\bf b}}
\def\I{{\bf I}}
\def\pr{ \mathbb{P}}
\def\V{{\bf V}}
\def\t{{\bf t}}
\def\bP{{\bf P}}
\def\bQ{{\bf Q}}
\def\bV{{\bf V}}
\def\m{{\bf m}}
\def\v{{\bf v}}
\def\x{{\bf x}}
\def\I{{\bf I}}
\def\0{{\bf 0}}
\def\diag{\hbox{diag}}
\def\bq{\begin{equation}}
\def\eq{\end{equation}}
\def\trans{^{\rm T}}
\def\diag{\hbox{diag}}
\def\log{\hbox{log}}
\def\squarebox#1{\hbox to #1{\hfill\vbox to #1{\vfill}}}
\def\btheta{{\boldsymbol \theta}}
\def\balpha{{\boldsymbol \alpha}}
\def\bx{{\bf x}}
\def\bse{\begin{eqnarray*}}
	\def\ese{\end{eqnarray*}}
\def\be{\begin{eqnarray}}
\def\ee{\end{eqnarray}}
\def\bsq{\begin{equation*}}
\def\esq{\end{equation*}}
\def\bq{\begin{equation}}
\def\eq{\end{equation}}
\def\diag{\hbox{diag}}
\def\diag{\hbox{diag}}
\def\trans{^{\rm T}}
\def\btheta{\bm{\theta}}
\def\rd{{\rm d}}
\def\colblue{{}}
\def\trans{^{\rm T}}
\begin{document}

\def\spacingset#1{\renewcommand{\baselinestretch}%
{#1}\small\normalsize} \spacingset{1}

%%%%%%%%%%%%%%%%%%%%%%%%%%%%%%%%%%%%%%%%%%%%%%%%%%%%%%%%%%%%%%%%%%%%%%%%%%%%%%

\if1\blind
{
  \title{\bf Feature screening for clustering analysis}
  \author{
    Changhu Wang\hspace{.2cm}\\
    School of Mathematical Sciences, Peking University\\
    and \\
    Zihao Chen\hspace{.2cm}\\
    School of Mathematical Sciences, Peking University\\
    and \\
    Ruibin Xi \thanks{
    	The authors gratefully acknowledge \textit{the National Key Basic Research Project of China (2020YFE0204000), the National Natural Science Foundation of China (11971039), and Sino-Russian Mathematics Center.}}\\
    School of Mathematical Sciences, Peking University\\
    Center for Statistial Sciences, Peking University}
  \maketitle
} \fi

\if0\blind
{
  \bigskip
  \bigskip
  \bigskip
  \begin{center}
    {\LARGE\bf  Feature screening for clustering analysis}
\end{center}
  \medskip
} \fi

\bigskip
\begin{abstract}
In this paper, we consider feature screening for ultrahigh dimensional clustering analyses. Based on the observation that the marginal distribution of any given feature is a mixture of its conditional distributions in different clusters, we propose to screen clustering features by independently evaluating the homogeneity of each feature’s mixture distribution. Important cluster-relevant features have heterogeneous components in their mixture distributions and unimportant features have homogeneous components. The well-known EM-test statistic is used to evaluate the homogeneity. Under general parametric settings, we establish the tail probability bounds of the EM-test statistic for the homogeneous and heterogeneous features, and further show that the proposed screening procedure can achieve the sure independent screening and even the consistency in selection properties. Limiting distribution of the EM-test statistic is also obtained for general parametric distributions. The proposed method is computationally efficient, can accurately screen for important cluster-relevant features and help to significantly improve clustering, as demonstrated in our extensive simulation and real data analyses. 
\end{abstract}

\noindent%
{\it Keywords:} Clustering  analyses; feature screening; homogeneity test.

\vfill

\newpage
\spacingset{1.9} % DON'T change the spacing!
\section{Introduction}
High dimensional data is prevalent in a wide range of research fields and applications, such as biological studies, financial studies and image data analyses. In high dimensional data, the number of features $p$ is very large and can be much larger than the number of samples $n$ ($ p \gg n$). One of the most important tasks of high dimensional data analyses is to cluster the samples and uncover unknown groups and structures in the data. In real applications, cluster-relevant features are often only a small proportion of the $p$ features, and other features are cluster-irrelevant. Incorporation of the irrelevant features in clustering analyses can blur the differences between clusters, significantly influence the clustering accuracy, and make clustering computationally more demanding, especially when $p$ is large. If one can accurately distinguish  cluster-relevant features from  cluster-irrelevant features, clustering analyses could be significantly improved in terms of both clustering accuracy and computational efficiency (See Figure \ref{fig:example} for an example). 

\begin{figure}[htb]
	\centering
	\includegraphics[width=0.8\linewidth]{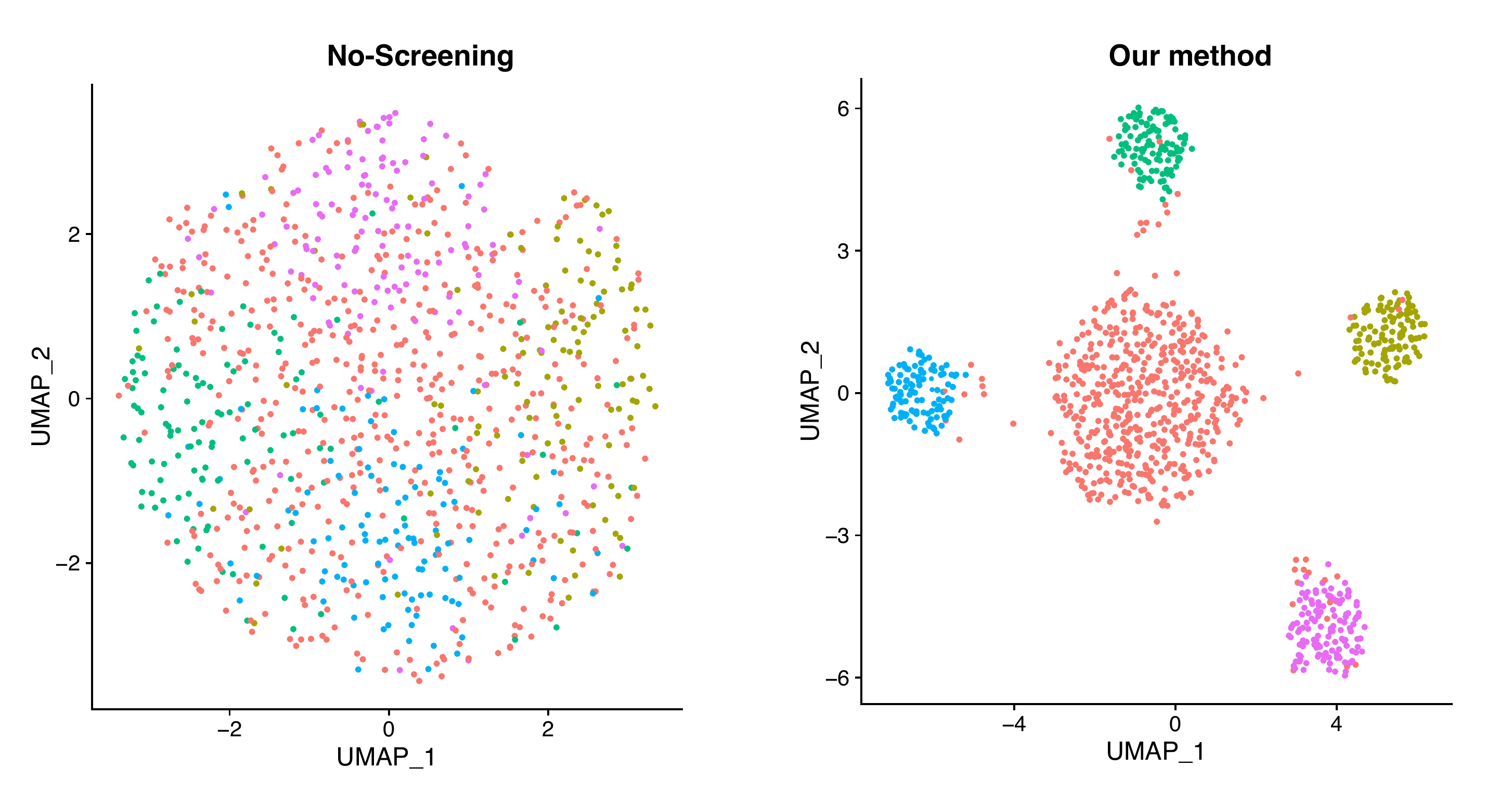}
	\caption{An example of simulated data generated  as in Section \ref{sec:simu}.   The data has 5 clusters and 5000 features. The first 30 features are cluster-relevant and the other 4970 features are cluster-irrelevant. We perform dimension reduction using the uniform manifold approximation and projection (UMAP). The left plot is the UMAP plot using all features and the right plot is the UMAP plot using the features selected by the proposed method. The points in the plots are colored by their true clustering labels. The large amount of the cluster-irrelevant features make the clusters difficult to be distinguished. After feature screening, different clusters are much easier to be distinguished. }
	\label{fig:example}
\end{figure}

We consider the feature screening problem in clustering analyses of high dimensional data. Suppose that  $\bx_{i} = (x_{i1},\dots,x_{ip})\trans \in \mathbb{R}^p$ ($i=1,\dots,n$) are $n$ independent observations. The $n$ samples are from $G$ clusters and their cluster labels are unknown. We assume that only $s$ of the $p$ features (often, $s \ll p$) contain cluster label information and all other features are independent of the clusters. We aim to develop a computationally efficient statistical method that can effectively screen out the cluster-irrelevant features, while retaining all or almost all cluster-relevant features. Traditional clustering algorithms such as the k-means algorithm can then be applied to the retained features and sample clusters can be obtained. We are most interested in high dimensional count data, although the method developed here can also be applied to continuous data.   

One motivation of this work is the single-cell RNA sequencing (scRNA-seq) analysis \citep{kiselev2019challenges}. In recent scRNA-seq studies, gene expressions in single-cells are profiled for over $10,000$ genes and the raw expression values are rather small count data ($<$ 20 for majority of genes). The unknown cell types of single-cells are often assigned based on clustering analyses of gene expressions. However, only marker genes differentially expressed among different cell types are useful for cell type identification. To address the high-dimensional clustering problems, scRNA-seq studies often only use the so-called highly variable genes for clustering analysis, based on the assumption that genes with larger expression variances are more likely to be marker genes. Though this strategy has been widely adopted, selecting highly variable genes can include many non-marker genes and exclude many marker genes, thus leading to the inaccurate clustering \citep{Andrews2019M3DropDF}.

	The supervised screening problem has been extensively studied and many methods have been developed, such as \citet{fan2008sure},  \citet{zhu2011model} and \citet{li2012feature} among many others \citep{liu2015selective}. These methods are particularly suitable for ultra-high dimensional supervised learning problems, which bring many statistical and computational challenges to the traditional variable selection methods. With the response variable, available supervised screening methods can measure each predictor's association with the response variable independently. The predictors are then ranked by their association strengths and top predictors are retained. With a proper threshold, these screening methods can correctly select all important features with a high probability or even can correctly distinguish the important and unimportant features with a high probability, which are known as the sure independent screening property \citep{fan2008sure} and the consistency in selection property \citep{li2012feature}, respectively.

The unsupervised feature screening is more challenging because there is no response variable. Variable selection methods for clustering analyses have been developed \citep{witten2010framework, fop2018variable, liu2022clustering}.  However, similar to the supervised variable selection methods, when the dimensionality is ultrahigh, their performance is challenged in terms of both statistical accuracy and computational efficiency. To address the ultra-high dimensional problems, the pioneer work by \citet{chan2010using} developed a feature screening method by testing the unimodality of each feature's distribution. Features with unimodal distributions are cluster-irrelevant and should be screened out. More recently, \citet{jin2016influential} developed an innovative method called IF-PCA for ultra-high dimensional clustering analysis. IF-PCA first screens cluster-relevant using the Kolmogorov-Smirnov (KS) test and then applies the k-means algorithm to cluster. 
\cite{liu2022clustering}  proposed  a non-parametric  feature screening  method called SC-FS.  SC-FS performs feature screening by correlating each feature with a pre-clustering label. 
However, the few available screening methods are either developed for continuous data or require pre-clustering of the data. The continuous methods are not suitable for count data, and in ultrahigh dimensional settings, the pre-clustering can be very inaccurate and the methods relying on the pre-clustering results will also be inaccurate.

In this paper, we develop a general parametric feature screening method that can be applied to both continuous and count data. 
Marginally, all features can be viewed as following mixture distributions. However, the mixture components of a cluster-relevant feature are not all the same (heterogeneous distribution), and those of a cluster-irrelevant feature will be the same (homogeneous distribution). Therefore, we can test whether a feature is cluster-relevant without the cluster labels. Observe that multi-modal distributions are mixture distributions of unimodal distributions. Thus, our method essentially uses the same characteristic as \citet{chan2010using} for feature screening of clustering analyses.

We propose to use the EM-test, a well-known homogeneity test of mixture models, for feature screening. The EM-test was originally developed to overcome the critical problems of likelihood ratio tests for homogeneity \citep{hartigan1985failure,chernoff1995asymptotic}. Limiting distributions under the homogeneity were available for mixture models of one-parameter distributions or of two components \citep{li2009non,niu2011testing,li2010testing}. In this paper, in addition to the limiting distribution, we establish theoretical properties of the EM-test for feature screening of clustering analyses under general settings of mixture models. The mixture models are allowed to be mixtures of multi-parameter distributions and/or of multiple-components. The major theoretical results include the following.
\begin{itemize}
	\item Under the homogeneous model, the EM-test statistic is bounded with a high probability, or more specifically, the probability of the EM-test statistic greater than any $t>0$ decays to zero at a polynomial rate with respect to $t$. 
	\item Under the heterogeneous model, the EM-test statistic diverges to infinity with a probability approaching to one at an exponential rate.
	\item When the dimensionality $p$ goes to infinity exponentially with the sample size $n$ (more precisely, with $\exp\left(n^\beta\right)$ for $0<\beta<1/2$), the screening procedure based on the EM-test achieves the sure independent screening property \citep{fan2008sure}. If $p$ goes to infinity at any polynomial order of $n$, we can even achieve the consistency in selection \citep{li2012feature}. 
\end{itemize}
We perform extensive simulation studies and find that the EM-test can accurately screen for cluster-relevant features. After feature screening, clustering accuracy can also be significantly improved. In an application of scRNA-seq data, we find that the EM-test renders more accurate single-cell clustering and enables the detection of a rare cell-type that is difficult to be detected by other methods.

The rest of the paper is organized as follows. Section 2 introduces the model setup, the EM-test and defines basic notations.  Section \ref{sec:theory} gives the bounds of the tail probabilities under the homogeneous and heterogeneous models, and further establishes the sure independent screening and model selection consistency property.  The limiting distribution of the EM-test statistic is also presented in Section 3. Simulation and real data analyses are presented in Section \ref{sec:simu} and Section \ref{sec:app}, respectively.   Finally, in Section \ref{sec:diss}, we discuss the limitations of this research and future research directions of high dimensional clustering feature screening.  
The  proofs of  the  results are presented in the Supplementary material.

\section{Model setup and the EM-test} 
In this section, we present the statistical model setup for the feature screening of clustering analyses and introduce the screening procedure based on the EM-test statistic. 
\subsection{Model setup for feature screening of clustering analyses}  Suppose that we have $n$ independent observations $\bx_{i} = (x_{i1},\dots,x_{ip})\trans \in \mathbb{R}^p$ ($i=1,\dots,n$) from $G$ clusters and $\ba=(\alpha_1,\dots,\alpha_G)$ be the proportions of different clusters $ (\sum_{g=1}^{G} \alpha_{g} = 1, \alpha_{g} > 0, g=1, \dots, G)$. We denote the unknown cluster labels as $g_i \in \{1,\dots,G\}$ ($i=1,\dots,n$). Assume that given the cluster label $g$, the conditional distribution $F_j(x|g)$ of $x_{ij}$ is from a known identifiable parametric distribution family
$\mathcal{P} = \{f(x;\btheta): \btheta \in  \Theta \subset \mathbb{R}^d \} $, where $f(x;\btheta)$ is the density function with respect to a $\sigma$-finite measure $\mu$, on $\mathbb{R}$ parameterized by $\btheta$, and $\Theta \subset \mathbb{R}^d$ is a convex compact parameter space.   Note that for count data, the measure $\mu$ can be taken as the counting measure of the nonnegative integers; for continuous data,  $\mu$ is the Lebesgue measure on $\mathbb{R}$. Thus, our method and theory apply to both count and continuous data.  Define $\Xi  =\Theta^G$  as the product space of $\Theta$. 

In high dimensional clustering problems, only a small portion of the $p$ features contain information about the cluster labels and the majority of them are irrelevant to the sample clusters. Our goal is to screen out the cluster-irrelevant features to facilitate downstream clustering analysis. Intuitively, if the $j$th random variable $x_{j} \in \mathbb{R}$ is unrelated with the cluster label $g$, the conditional distribution $F_j(x|g)$ of $x_{j}$ given the cluster label $g$ should be independent of the cluster label $g$, or $F_j(x|g = 1)=\cdots=F_j(x|g=G)$. If, on the other hand, the $j$th random variable $x_{j} \in R$ is a cluster-relevant feature, there are at least two $g\neq g^\prime$ such that $F_j(x| g) \neq F_j(x| g^\prime)$. 

Let $f(x;\btheta_{jg})$ be the density function of the conditional distribution $F_j(x|g)$. The labels are unknown and the random variable $x_j$ should follow a mixture distribution $\varphi(x;\bxi_j, \balpha)  = \sum_{g=1}^{G}\alpha_{g} f(x;\btheta_{jg})$, where $\bxi_j = (\btheta_{j1}\trans, \dots, \btheta_{jG}\trans)\trans \in \Xi =\Theta^G$.  Define the interior of the $G-1$ dimensional probability simplex as 
$\mathbb{S}^{G-1} = \{\ba \in \mathbb{R}^G : \sum_{g=1}^G \alpha_g = 1, \alpha_g > 0, \mbox{ for }  g=   1,\dots, G\}$ and the $G$-mixture distribution family  as
$$\mathcal{P}^G = \left\{\sum_{g=1}^{G}\alpha_gf(x;\btheta_g): \ba \in \mathbb{S}^{G-1}, \btheta_g \in \Theta, g=1,\dots,G \right\}.$$
We have $\varphi(x;\bxi_j, \balpha) \in \mathcal{P}^G$.  In this paper, we assume that $\mathcal{P}^G$ is an identifiable finite mixture, in other words,  $ \mathcal{P} $ is a linearly independent set over the field of real numbers \citep{Yakowitz1968OnTI}.  For a cluster-irrelevant feature $j$, $\btheta_{j1}=\cdots=\btheta_{jG}$ and thus $\varphi(x;\bxi_j, \balpha) \in \mathcal{P}$. For a cluster-relevant feature $j$, there are at least two $g\neq g^\prime$ such that $\btheta_{jg} \neq \btheta_{jg^{\prime}}$ and $\varphi(x;\bxi_j, \balpha) \in \mathcal{P}^G\backslash \mathcal{P}$. Therefore, we can consider the following hypothesis testing problems to screen for the cluster-relevant features.
\begin{equation} \label{homotest}
\mathbb{H}_{j0}: \varphi(x;\bxi_j, \balpha) \in \mathcal{P} \mbox{ v.s. } \mathbb{H}_{j1}: \varphi(x;\bxi_j, \balpha) \in \mathcal{P}^G\backslash \mathcal{P}. 
\end{equation}
We call the models under the null hypotheses $\mathbb{H}_{j0}$ homogeneous models, and those under the alternative hypotheses $\mathbb{H}_{j1}$ heterogeneous models. In real applications, the number of clusters $G$ is often unknown. However, we often can have a rough estimate of $G$ and can choose $G$ to be larger than the true number of clusters. In such cases, the null and alternative hypotheses still hold for the cluster-irrelevant and relevant features, respectively.    Simulation shows that the choice of $G$ has little influence on the performance of EM-test, especially when $G$ is chosen to be larger than the true clusters (Supplementary Section D).

\subsection{The EM-test statistic and the screening procedure} 
We use the EM-test statistic for feature screening of clustering analyses. Theoretical results of the EM-test statistic are developed for the hypothesis testing problem (\ref{homotest}) under general settings with multiple parameters ($d\geq 1$), multiple components ($G \geq 2$) and both continuous and count data. Let $\x = (x_1,\dots,x_n)$ be a random sample of size $n$ from a $G$-mixture model 
\begin{equation} \label{eq:Gmix}
\varphi(x;\bxi,\balpha) = \sum_{g=1}^{G}\alpha_gf(x;\btheta_g),
\end{equation}
where $\btheta_g \in \Theta$, ($g= 1,\ldots,G$), $\bxi = (\btheta_1,\ldots, \btheta_G)$ and $\balpha=(\alpha_1,\dots,\alpha_G)$.
Let
$l_n(\bxi, \balpha) = \sum_{i=1}^{n} \log \ \varphi(x_i;\bxi, \balpha)$
be the log-likelihood function, and define the penalized log-likelihood function as  
\begin{equation} \label{penalized-loglikelihood}
pl_n(\bxi, \balpha) = \sum_{i=1}^{n} \log \ \varphi(x_i;\bxi, \balpha)+ p(\balpha),
\end{equation} 
where $p(\balpha)=\lambda\left(\sum_{g=1}^G{\rm log}\alpha_g  + G\log(G)\right)$ is a penalty function, where $\lambda>0$ is a penalty parameter and is always set as 0.00001 in the simulation and real data analyses of this paper.   Simulation shows that EM-test is robust to the choice of the penalty parameter $\lambda$ and $\lambda=0.00001$ gives very similar results to other choices of $\lambda$ (Supplementary Table S3).  Largely speaking, the EM-test statistic is defined as the difference between the maximum penalized log-likelihoods of the heterogeneous and homogeneous models. The maximum penalized log-likelihood under the heterogeneous model is obtained using the EM algorithm. More specifically, we use the following procedure to calculate the EM-test statistic.

Suppose that $\hat{\bxi}_0 = \left(\hat{\btheta}_{0},\ldots,\hat{\btheta}_{0}\right)$ is the estimator that maximizes the penalized log-likelihood function (\ref{penalized-loglikelihood}) under the homogeneous model. Under the heterogeneous model, given any initial value $\ba^{(0)} \in \mathbb{S}^{G-1} $ 
%with 
%\begin{equation}\label{eq:ba0}
%\sum_{g=1}^G\alpha_{g}^{(0)} = 1  \mbox{ and } \alpha_{g}^{(0)} >0 \  (g = 1,\dots,G), 
%\end{equation}
we first compute 
\begin{equation}\label{eq:xi0}
\bxi^{(0)} =  \argmax_{\substack{\bxi \in \Xi }}  \sum_{i=1}^n \log \ \varphi\left(x_i;\bxi, \balpha^{(0)}\right) + p\left(\balpha^{(0)}\right).  
\end{equation}
Assume that $\balpha^{(k)}$ and $\bxi^{(k)} $ are the  estimators at the $k$-th iteration of the EM algorithm. The E-step updates the posterior probability of the $i$-th sample coming from the $g$-th component by
\begin{equation}\label{eq:defw}
w_{gi}^{(k)} = \frac{\alpha^{(k)}_{g} f\left(x_i; \btheta_g^{(k)}\right)}{\varphi\left(x_i; \bxi^{(k)}, \balpha^{(k)}\right)} .	
\end{equation}
At the $k+1$-th iteration, the M-step updates $\balpha$ and $\bxi$ such that 
\begin{equation}\label{eq:Malpha}
\balpha^{(k+1)} = \argmax_{{\ba} \in \mathbb{S}^{G-1}} \sum_{g=1}^G \sum_{i=1}^n w_{gi}^{(k)} \log(\alpha_g) + p(\ba), ~\mbox{and}
\end{equation}
\begin{equation}\label{eq:Mtheta}
\bxi^{(k+1)} = \argmax_{{\bxi \in \Xi }} \sum_{g=1}^G \sum_{i=1}^n w_{gi}^{(k)} \log f(x_i; \btheta_g).
\end{equation}
Let $K>0$ be the maximum number of EM updates. We define
${M}_n^{(K)} (\ba^{(0)}) =2 \{pl_n(\bxi^{(K)}, \balpha^{(K)}) -  pl_n(\hat{\bxi}_{0}, \ba_0)\}, $
where $\balpha_0 = (1/G,\dots, 1/G)\trans$. To improve the  performance, we choose a set of initial values $\{  \ba_1,\dots,\ba_T  \}$ and define the EM-test statistic as 	
${\rm EM}_n^{(K)} = \max\{ M_n^{(K)}(\ba_t),  t=1,\dots,T\}.$
Intuitively, under the homogeneous model,  $\bxi^{(K)}$ and $\hat{\bxi}_{0}$ are close to $\bxi_0$ and are close to each other, while under the heterogeneous model, $\bxi^{(K)}$ and $\hat{\bxi}_{0}$ are far away from each other. Hence, we reject the null hypothesis in (\ref{homotest})   when ${\rm EM}_n^{(K)}$ is large. In this paper, we always assume that $K \geq 3$ is a fixed number. In simulation and real data analyses, we  set $K=100$.  Simulation shows that EM-test is robust to the choice of $K$. When $K$ is chosen too large, EM-test tends to have slightly more false positives and $K=100$ is a reasonable choice (Supplementary Section  D).

With the EM-test statistic, we can use the following procedure to screen for the cluster-relevant features. Let ${\rm EM}_{nj}^{(K)}$ be the EM-test statistic corresponding to the $j$th hypothesis testing problem (\ref{homotest}). Given a threshold $t_n>0$, if ${\rm EM}_{nj}^{(K)}<t_n$, we will screen out the $j$th feature; Otherwise, we will retain the $j$th feature as a cluster-relevant feature. Theoretical results in Section \ref{sec:theory} show that if we choose $t_n = n^{\vartheta}$ ($0 < \vartheta < 1$), this feature screening procedure can have the sure independent screening property or even the consistency of selection property.

\subsection{Notations} 
We use ${\rm diam}(\Xi)$ to represent the Euclidean diameter of $\Xi$. Denote by $|\cdot|$ the absolute value of a real number or cardinality of a set. For two sequences of random variable  $\left\{a_n\right\}_{n=1}^{\infty}$ and $\left\{b_n\right\}_{n=1}^{\infty}$, we write  $a_n=o_p\left(|b_n|\right)$ if $a_n / |b_n| \rightarrow 0 $ in probability,  and $a_n=O_p\left(|b_n|\right)$ if there exists a positive constant $C$ such that $a_n \leq C |b_n|$  in probability.  For real numbers $a$ and $b$, let $a \wedge b=\min \{a, b\}$ and $a \vee b=\max \{a, b\}$.  Define $\left\|{\bf a} \right\|_2 = \sqrt{\sum_{i=1 }^n a_i^2}$ as the $L_2$-norm of the vector ${\bf a} = (a_1,\dots,a_n)\trans \in \mathbb{R}^n$, and ${\rm vech }(\A) = \left(a_{11}, a_{22}, \ldots, a_{dd}, a_{12}, \ldots, a_{1d}, \ldots, a_{d(d-1)}\right)\trans \in \mathbb{R}^{d(d+1)/2}$ as the vectorization of the symmetric $d$-dimensional matrix $\A = (a_{ij})$. We use $\A \succeq {\bf 0}$ to represent that the matrix $\A$ is positive semi-definite. For a random variable $X$, we define its sub-exponential norm as 
$\|X\|_{\psi_1}=\inf \{t>0: \mathbb{E} \left(\exp (|X| / t)\right) \leq 2\}$.  If $X$ is a random variable from a homogeneous model $\mathbb{H}_0$ and $g(x)$ is a function, we define the $L_m$-norm of $g(X)$ as $\left\| g(X) \right\|_{L^m} = \left( \Ex \left[g^m(X) \right]\right)^{1/m}$. 

We denote $\ba^*=(\alpha_1^*,\dots,\alpha_G^*)$ as the true proportions of the $G$ clusters and $\bxi^*_{j} = (\btheta^*_{j1}, \ldots, \btheta^*_{jG})$ as the true parameters of the mixture model corresponding to the $j$th feature. We assume that $ \ba^*$ is fixed and $\min_{g} \alpha_g^* > 0$. We use $\delta>0$ as a fixed very small constant. If the $j$th feature is from the homogeneous model $\mathbb{H}_{j0}$ (cluster-irrelevant), we write $\bxi^*_{j} = \bxi_{j0} = (\btheta_{j0}, \ldots, \btheta_{j0})$ as its true parameters. When appropriate, we drop the subscript $j$ and use $\bxi^*$ as the true parameters of a general mixture model and $\bxi_0 = (\btheta_{0}, \ldots, \btheta_{0})$ as the true parameter of some general homogeneous model $\mathbb{H}_0$.  We always assume that $\btheta_{0} $ is an interior point of $\Theta$ and use $\balpha_0 = (1/G,\dots, 1/G)\trans$.
Let 
\begin{eqnarray}\label{eq:b}
Y_{ih} = \frac{1}{f(x_i,  \btheta_0)} \frac{\partial f(x_i, \btheta_0)}{\partial \theta_h}, ~ Z_{ih} = \frac{1}{2f(x_i,  \btheta_0)}\frac{\partial^2 f(x_i,  \btheta_0)}{\partial \theta_h^2},\nonumber \\ 
U_{ih\ell} = \frac{1}{f(x_i,  \btheta_0)}\frac{\partial^2 f(x_i,  \btheta_0)}{\partial \theta_h \partial \theta_\ell} (h< \ell),~\b_{1i} = (Y_{i1}, \ldots, Y_{id})\trans, \\
\b_{2i}= (Z_{i1},\ldots,Z_{id},U_{i12},\ldots,U_{i(d-1)d})\trans, ~\mbox{and}~\b_i = \left(\b_{1i}\trans,\b_{2i}\trans \right)\trans. \nonumber
\end{eqnarray}
%Given $h,\ell \in \{1,\dots,d\}$, let 
%$${m}_{1h}(\ba, \bxi_1, \bxi_2) = \sum_{g=1}^G {\alpha}_g\left({\theta}_{1gh} - {\theta}_{2gh}\right),~ {m}_{2h}(\ba, \bxi_1, \bxi_2) =  \sum_{g=1}^G {\alpha}_g\left({\theta}_{1gh} - {\theta}_{2gh}\right)^2,$$
%$$\mbox{ and }{s}_{h\ell}(\ba, \bxi_1, \bxi_2) =\sum_{g=1}^G {\alpha}_g\left({\theta}_{1gh} - {\theta}_{2gh}\right)\left({\theta}_{1g\ell} - {\theta}_{2g\ell}\right) ~(h<\ell),$$
%where $\bxi_j=(\btheta_{j1},\dots,\btheta_{jG})$ and $\btheta_{jg}=(\theta_{jg1},\dots,\theta_{jgd})$ ($j=1,2$, $g=1,\dots,G$).   
%We define two vector functions as  
%\begin{equation}\label{eq:m1}
%{\m}_1(\ba, \bxi_1, \bxi_2) = ({m}_{11}, \dots, {m}_{1d}) \trans, 
%\end{equation}
%\begin{equation}\label{eq:m}
%{\m}(\ba, \bxi_1, \bxi_2) = ({m}_{11}, \dots, {m}_{1d},{m}_{21},  \dots, {m}_{2d}, {s}_{12},  \dots, {s}_{(d-1)d} ) \trans, 
%\end{equation}
%and simplify ${\m}_1(\ba, \bxi, \bxi_0)$ and ${\m}(\ba, \bxi, \bxi_0)$ as $\m_1(\ba, \bxi)$ and $\m(\ba, \bxi)$, respectively.

\section{Theoretical results}\label{sec:theory}
In this section, we investigate the theoretical properties of the screening procedure.  
Without loss of generality, we assume that there is only one initial value $\ba^{(0)}$ (i.e. $T = 1$) and $\min_{g=1,\ldots,G}$ $27^{-1}{\alpha}_g^{(0)} \geq \delta>0$.

We first present the main theoretical result of the paper---the feature screening property of the EM-test statistic. Define the set of cluster-irrelevant features as $S_0 =  \{j: 1\leq j \leq p, \bxi^*_j=(\btheta_{j0}, \dots, \btheta_{j0})\}$, and the set of cluster-relevant features as $S_1 = \{1,\dots,p \}  \backslash S_0$. Denote $s = |S_1|$ as the number of cluster-relevant features.
For a small fixed $\gamma>0$, we define 
\begin{equation}\label{eq:Xi1}
\Xi_{1} = \left\{ \bxi :  \max_{g \neq g'} \Vert\btheta_g - \btheta_{g'} \Vert_2 \geq \gamma , \bxi \in \Xi \right\}
\end{equation}
as the parameter set of heterogeneous models with a minimum component-difference $\gamma$. When $j \in S_1$, we assume $\bxi^*_j \in \Xi_1$. Given a threshold $t_n >0$, we define
\begin{equation}\label{eq:retained}
\hat{S}_1(t_n) = \{  1\leq j \leq p: {\rm EM}_{nj}^{(K)} \geq t_n  \}
\end{equation}
as an estimator of $S_1$. Under Condition (\Ca)--(\Ch) that will be specified in Section \ref{sec:boundh0} and \ref{sec:boundh1}, the following theorem guarantees that the screening procedure based on the EM-test statistic can effectively filter cluster-irrelevant features while retaining all cluster-relevant features with a high probability.

\begin{theorem}\label{thm:main}
	Assume that for any cluster-relevant feature $j\in S_1$, $\bxi^*_{j} \in \Xi_1$, where $\Xi_1$ is defined as in (\ref{eq:Xi1}). Under Condition (\Ca)--(\Ch), given a fixed $K$, choosing the threshold $t_n = n^{\vartheta} (0 < \vartheta < 1)$, when $n$ is sufficiently large, we have
	$$ 
	\Pr\left( S_1 \subset \hat{S}_1(t_n)\right) \geq 1 -  s\exp\left(-C_3 n^{1/2}  + C_4 n^{\vartheta - 1/2}\right),  \mbox{ and } $$
	\begin{align*}
	\Pr\left( S_1 =  \hat{S}_1(t_n)\right) &\geq 1 - (p-s) \left((C_1 n)^{-m/4} + (C_2n)^{-\vartheta m} \right) \\
	& \quad  -  s\exp\left(-C_3 n^{1/2}  + C_4 n^{\vartheta - 1/2}\right),
	\end{align*}
	where $C_1,C_2,C_3$ and $C_4$ are four constants depending on $K, G,d, \delta$, $  {\rm diam}(\Xi)$ and the constants specified in Condition (C3)--(C7),  $s = |S_1|$ and $m$ is the integer in Condition (\Cc).
\end{theorem}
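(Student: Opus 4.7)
The plan is to deduce Theorem~\thmmain\ from the per-feature tail probability bounds for ${\rm EM}_{nj}^{(K)}$ (the main analytical content of Section~\ref{sec:theory}, appearing as Theorems~\thmzeroRes\ and \thmoneRes) by a union bound over the $p$ features. The theorem itself is essentially bookkeeping; the real work sits in those two ingredients. The heterogeneous bound, which is the harder of the two to prove upstream, controls the probability that the EM statistic fails to diverge --- it gives, for $\bxi^*_j\in\Xi_1$, an exponentially small probability that ${\rm EM}_{nj}^{(K)}$ falls below the threshold $t_n=n^{\vartheta}$. The homogeneous bound gives only polynomial decay in $t$, reflecting that under $\mathbb{H}_{j0}$ the statistic is $O_p(1)$ but only has finite moments of some order $m$.

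First I would record the per-feature consequences of the two tail bounds. For any $j\in S_0$, Theorem~\thmzeroRes\ under Conditions~(\Ca)--(\Cb) specialised at $t=t_n=n^{\vartheta}$ yields
\[
\Pr\!\left({\rm EM}_{nj}^{(K)}\ge t_n\right)\le (C_1 n)^{-m/4}+(C_2 n)^{-\vartheta m},
\]
where $m$ is the moment integer from Condition~(\Cc). For any $j\in S_1$, since $\bxi^*_j\in\Xi_1$ has minimum component-difference at least $\gamma$, Theorem~\thmoneRes\ under Conditions~(\Cc)--(\Ch) yields, for $n$ large,
\[
\Pr\!\left({\rm EM}_{nj}^{(K)}<t_n\right)\le \exp\!\left(-C_3 n^{1/2}+C_4 n^{\vartheta-1/2}\right).
\]
It is important to check in carrying out this step that the constants $C_1,\dots,C_4$ depend only on $K,G,d,\delta,{\rm diam}(\Xi)$ and the quantities fixed in (\Cc)--(\Ch), so that they are uniform in $j$.

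Next I would assemble the two statements by union bounds. The sure-screening inequality is immediate:
\[
\Pr\!\left(S_1\not\subset\hat S_1(t_n)\right)\le \sum_{j\in S_1}\Pr\!\left({\rm EM}_{nj}^{(K)}<t_n\right)\le s\exp\!\left(-C_3 n^{1/2}+C_4 n^{\vartheta-1/2}\right).
\]
For the consistency-in-selection inequality I would use the inclusion $\{S_1\ne\hat S_1(t_n)\}\subset\{S_1\not\subset\hat S_1(t_n)\}\cup\{\hat S_1(t_n)\cap S_0\ne\emptyset\}$, bound the first event as above, and apply a second union bound over $S_0$ together with the homogeneous tail bound to obtain
\[
\Pr\!\left(\hat S_1(t_n)\cap S_0\ne\emptyset\right)\le(p-s)\!\left((C_1 n)^{-m/4}+(C_2 n)^{-\vartheta m}\right).
\]
Adding the two pieces produces the stated bound.

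The main obstacle is therefore not in this aggregation step but upstream, in Theorems~\thmzeroRes\ and \thmoneRes\ themselves. The heterogeneous bound is in particular delicate: one needs to track every EM iterate $(\bxi^{(k)},\balpha^{(k)})$ uniformly in the initial value, exploit the penalty $p(\balpha)$ to keep the mixing weights bounded away from zero along the trajectory, and then show that the penalized log-likelihood gap grows linearly in $n$ with only $O(\sqrt n)$ fluctuations of sub-exponential type --- the $\exp(-C_3 n^{1/2})$ rate reflects Bernstein-type control over such fluctuations. Once that and the polynomial homogeneous bound are in hand, Theorem~\thmmain\ follows exactly as sketched above.
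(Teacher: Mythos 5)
Your proposal is correct and follows essentially the same route as the paper: a union bound over $S_1$ using the heterogeneous tail bound (the paper invokes Corollary \corHone, derived from Theorem \thmoneRes) for sure screening, combined with the identity $\{S_1=\hat{S}_1(t_n)\}=\{S_1\subset\hat{S}_1(t_n)\}\cap\{{\rm EM}_{nj}^{(K)}<t_n \mbox{ for all } j\in S_0\}$ and a union bound over $S_0$ using Theorem \thmzeroRes. The only slips are cosmetic: the homogeneous bound requires Conditions (\Ca)--(\Cd) rather than (\Ca)--(\Cb), and the heterogeneous bound requires (\Cf)--(\Ch) rather than (\Cc)--(\Ch), but since Theorem \thmmain{} assumes all of (\Ca)--(\Ch) this does not affect the argument.
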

%\textbf{\colred In Theorem \ref{thm:main}, we do not need to assume that different features are independent and follow the same parametric family.  Even if the features are dependent and contain both continuous and count data, the EM-test statistic can also correctly screen out the cluster-irrelevant features.} 
 If $p$ does not go to infinity too fast, Theorem \ref{thm:main} implies that we can achieve the sure independent screening property or model selection consistency in high dimensional settings.
\begin{itemize}
	\item If $p = {O} (\exp(n^\beta)),  0 < \beta < 1/2$, we have $\Pr\left( S_1 \subset \hat{S}_1(t_n)\right) \rightarrow 1, $ as $n \rightarrow \infty$. Thus, the feature screening method based on the EM-test statistic has the sure independent screening property. 
	\item If $p= {O} ( n^{\kappa})$ with $0<\kappa<m/\max\{4, \vartheta^{-1}\}$, we have $ \Pr\left( S_1 =  \hat{S}_1(t_n)\right)  \rightarrow 1,$ as $n \rightarrow \infty$, or in other words, we can achieve model selection consistency. The condition $ \kappa < m/\max\{4, \vartheta^{-1}\}$ is a very lenient condition. For most common parametric distribution families, $m$ in Condition (\Cc) can be taken as any positive integer and thus $\kappa$ can be any positive number.
\end{itemize}

Empirical studies show that choosing $\vartheta \in [0.3, 0.35]$ can make a good balance between the type I and type II error (Supplementary Section D) and hence we suggest to choose $\vartheta \in [0.3, 0.35]$ in real applications.  Note that in Theorem 1, for notational simplicity, we assume that different features are in the same parametric family. The similar screening property can also be proved even if different features are in different parametric families (e.g. some are continuous variables and some are count variables), as long as these features satisfy conditions similar to the ones in Theorem 1. In addition, Theorem 1 does not need to assume that different features are independent. Even if the features are dependent, the same screening properties also hold. 

The proof of Theorem \ref{thm:main} is based on the tail probability bounds of the EM-test statistic under the null and alternative hypotheses. Under $\mathbb{H}_0$, we show that the EM-test is bounded with a high probability, and under $\mathbb{H}_1$, the EM-test statistic will diverge to infinity with a high probability. We present these tail probability bounds in Section \ref{sec:boundh0} and \ref{sec:boundh1}. The proof of Theorem \ref{thm:main} is in Supplementary material.  In the Supplementary material, we show that many commonly used distributions, such as many exponential family distributions and the negative binomial distributions, satisfy the conditions in Theorem 1, and thus the screening properties hold for these distributions.

\subsection{The probability bound of the EM-test statistic under $\mathbb{H}_0$}\label{sec:boundh0}
We need the following regularity conditions before presenting the tail probability bounds of the EM-test statistic under $\mathbb{H}_0$.

\begin{itemize}  
	\item[(\Ca)]  
	For every $\btheta_{0} \in \bTheta$ and  sufficiently small ball $V \subset \Theta$ around $\btheta_{0}$,  we assume that the function  $\sup_{\btheta \in V}  {f^{1/2}(x; \btheta)}{f^{1/2}(x; \btheta_0)}  $ is measurable and 
	$\int \sup_{\btheta \in V}  {f^{1/2}(x; \btheta)}{f^{1/2}(x; \btheta_0)}   \mu(\rd x) < \infty.$
	In addition, for every sufficiently small ball $U \subset \Xi$ around $\bxi_{0}=(\btheta_{0},\cdots,\btheta_{0})$ and $\ba \in \mathbb{S}^{G-1}$, we assume that the function   $\sup_{\bxi \in U} \log \{ \varphi(x;  \bxi, \ba)  \}$ is measurable and  
	$\Ex (\sup_{\bxi \in U} \log \{ 1+\varphi(x;  \bxi , \ba)  \}) <  + \infty.$   
	
	\item[(\Cb)] The density function $f(x;\btheta)$ has a common support for $\btheta\in \Theta$ and continuous 5th order partial derivatives with respect to $\btheta$.
	\item[(\Cc)] Let $m > 0$ be an integer and $M> 0$ be a constant. There are a function $g(x,\btheta )>0$ with $\sup_{\btheta \in \Theta}\left\|g(x; {\btheta})\right\|_{L^{8m}} \leq M$, a function $r(x)>0$ with $\int  r^2(x) \mu(\rd x) \leq M$ and a constant $\tau>0$, such that, for all $\btheta_0 \in \Theta$, $h =1, \dots, 5$ and ${j_1},\ldots,{j_h} \in \{1,\dots,d\}$, 
	$$\sup_{\Vert\btheta - \btheta_0\Vert_2 \leq \tau} \left|\frac{\partial^h f(x; \btheta)}{\partial \theta_{j_1} \cdots \partial \theta_{j_h}} \bigg/  f(x; \btheta_0)\right| \leq g(x;{\btheta_0}), \mbox{ and }$$
	$$\sup_{\btheta \in \Theta}f(x;\btheta) + \sup_{\btheta \in \Theta} \frac{1}{{f(x;\btheta)}}\left\| \frac{\partial f(x;\btheta)}{\partial \btheta} \right\|_2^2 \leq r^2(x).$$
	
	\item [(\Cd)] The minimum eigenvalue $\lambda_{\rm min}  (\B(\btheta_0))$ of the covariance matrix $\B(\btheta_0) = \cov(\b_i)$ satisfies
	$\lambda_{\rm min} := \inf_{\btheta_{0} \in \Theta} \lambda_{\rm min}(\B(\btheta_0)) > 0.$
\end{itemize}
Condition (\Ca) is the Wald consistency condition, which can be founded in \citet{van2000asymptotic}. It also ensures the continuity of the Hellinger distance which we will define below. Condition (\Cb) guarantees the smoothness of $f(x ; \btheta)$.   Condition (\Cc) ia a technical  condition  on the  partial derivatives.  It  guarantees that there is a dominating function of the remainder term in the Taylor expansion, and thus allows us to give polynomial tail probability bounds of the higher-order infinitesimal terms of the EM-test statistic under $\mathbb{H}_0$.  Condition (\Cd) is the strong identifiability condition \citep{chen1995optimal, nguyen2013convergence}. Most of the commonly used one-parameter distributions, such as the Poisson distribution and the exponential distribution,  satisfy Condition (\Cc) and (\Cd).  Many multiple-parameter distributions including the negative binomial distribution and the gamma  distribution also satisfy Condition (\Cc) and (\Cd).

\begin{theorem} \label{thm:H01}
	Assume that $x_1,\dots,x_n$ are independent samples from the homogeneous distribution $f(x;\btheta_0)$.  
	Under Condition (\Ca)--(\Cd),  
	for any  $t>0$, when  
	$n$ is sufficiently large, 
	we have 
	$$\Pr\left({\rm EM}_n^{(K)}  \leq t+C n^{-1/16} {\rm log}^{3/2}n\right) \geq  1-  { (C_1n)^{-m/4} } -  {(C_2t)}^{-m} ,$$
	where $C, C_1$ and $C_2 $ are three  positive constants  depending on $\tau, K, G,d, \lambda_{\rm min}, m, M, \delta$ and $  {\rm diam}(\Xi)$. 
\end{theorem}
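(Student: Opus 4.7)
The plan is to reduce ${\rm EM}_n^{(K)} = 2\{pl_n(\bxi^{(K)},\balpha^{(K)}) - pl_n(\hat\bxi_0,\balpha_0)\}$ to a quadratic form in the score-like vectors $\b_{1i},\b_{2i}$ of (\ref{eq:b}) via a second-order Taylor expansion around $\bxi_0 = (\btheta_0,\ldots,\btheta_0)$, and then apply moment-based concentration to obtain the polynomial tail in $t$. Throughout, $K$ is fixed so that the $K$-step EM dynamics can be handled by a finite induction. The additive structure of the bound separates cleanly into a ``deterministic slack'' piece absorbed into $Cn^{-1/16}\log^{3/2}n$, a Hellinger-consistency failure piece contributing $(C_1 n)^{-m/4}$, and a quadratic-form tail piece contributing $(C_2 t)^{-m}$.

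First I would use Condition (\Ca) (Wald consistency) to show $\hat\bxi_0,\bxi^{(K)} \to \bxi_0$ in probability, and upgrade this to a Hellinger-distance rate $h(\varphi(\cdot;\bxi^{(K)},\balpha^{(K)}),f(\cdot;\btheta_0)) = O_p(n^{-1/4})$, with the failure probability absorbed into $(C_1 n)^{-m/4}$ via a Markov bound using Condition (\Cc). Because of non-identifiability of $\mathcal{P}^G$ at the null, I would reparametrize in the ``moment residuals'' $\eta_1 = \sum_g \alpha_g(\btheta_g - \btheta_0)$ and $\eta_2 = {\rm vech}\left(\sum_g \alpha_g(\btheta_g-\btheta_0)(\btheta_g-\btheta_0)\trans\right)$. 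Under (\Cb) and (\Cc), a Taylor expansion to fifth order gives
\begin{equation*}
\log \varphi(x_i;\bxi,\balpha) - \log f(x_i;\btheta_0) = \b_{1i}\trans \eta_1 + \b_{2i}\trans \eta_2 + R_i(\bxi,\balpha),
\end{equation*}
where $R_i$ is dominated by polynomials in $\Vert\bxi-\bxi_0\Vert_2$ multiplied by $g(x_i;\btheta_0)$. The strong-identifiability Condition (\Cd) ensures the Hessian $\B(\btheta_0)$ of this expansion is positive-definite, so the $\eta$-coordinate is the correct regular parametrization.

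Maximizing the approximate quadratic $\sum_i (\b_{1i}\trans\eta_1 + \b_{2i}\trans\eta_2) - \tfrac{n}{2}(\eta_1\trans,\eta_2\trans)\B(\btheta_0)(\eta_1\trans,\eta_2\trans)\trans$ over $(\eta_1,\eta_2)$ yields a bound of the form $\tfrac{1}{2}\Vert\B(\btheta_0)^{-1/2} n^{-1/2}\sum_i \b_i\Vert_2^2$, and the same expansion for $pl_n(\hat\bxi_0,\balpha_0)$ gives an analogous quadratic in $\b_{1i}$ alone that subtracts a non-negative term. By Condition (\Cc), the coordinates of $\b_i$ have $L^{8m}$ norm $\leq M$, so Rosenthal's inequality gives $L^{2m}$ norm of $n^{-1/2}\sum_i \b_i$ of order $1$, and Markov's inequality applied to the $L^m$ norm of the quadratic form yields $\Pr(\text{quadratic term} > t) \leq (C_2 t)^{-m}$. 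The penalty $p(\balpha)$ is $O(1)$ under $\min_g \alpha_g^{(0)}/27 \geq \delta$ and is absorbed into the deterministic slack. The remainder $\sum_i R_i$ is of size $n$ times $\Vert\bxi^{(K)}-\bxi_0\Vert_2^3$ times an empirical average of $g(x;\btheta_0)$; balancing the $n^{-1/4}$ Hellinger rate against the fifth-order remainder (with an additional log factor from uniform control of the empirical process of $g$) yields the slack $C n^{-1/16}\log^{3/2} n$.

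The main obstacle I expect is the bookkeeping in the Taylor/moment reduction: the EM updates (\ref{eq:Malpha})--(\ref{eq:Mtheta}) are phrased in the full $(\balpha,\bxi)$ coordinates, whereas the log-likelihood ratio is only regular in the reduced coordinates $(\eta_1,\eta_2)$, so one must verify carefully that after each of the $K$ fixed iterations the iterate still falls in the quadratic-dominant neighborhood (and that the fifth-order remainder does not blow up because the EM does not directly contract $\eta_1,\eta_2$). The delicate aspect is tracking dependence on $K$ and $G$ in the constants while keeping the tail in $t$ sharp at the $m$-th power, and matching the exponents on $n$ in both the $(C_1 n)^{-m/4}$ Hellinger-exception term and the deterministic $n^{-1/16}\log^{3/2}n$ slack simultaneously.
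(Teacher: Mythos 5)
Your proposal follows essentially the same route as the paper's proof: Hellinger-distance consistency of the EM iterates (maintained through the $K$ fixed steps by an induction keeping $\min_g\alpha_g^{(k)}$ bounded away from zero), reparametrization into the moment vector $\m(\ba,\bxi,\bxi_0)$, a fifth-order Taylor expansion whose remainders are controlled by Rosenthal-type moment inequalities under Condition (\Cc), and completion of the square against $\lambda_{\rm min}(\B(\btheta_0))$ to reduce the leading term to a quadratic form in $n^{-1/2}\sum_i\b_i$ whose $m$-th moment yields the $(C_2t)^{-m}$ tail. The one bookkeeping discrepancy is that in the paper the $(C_1n)^{-m/4}$ term comes from the Rosenthal bounds on the Taylor-remainder events (the Hellinger-consistency failure probability is exponentially small and absorbed), and the dominant $Cn^{-1/16}\log^{3/2}n$ slack arises from the third-order term at the rate $\Vert\bxi^{(K)}-\bxi_0\Vert_2=O(n^{-11/48}\log^{1/2}n)$ rather than from the fifth-order remainder at rate $n^{-1/4}$; neither point affects the validity of your outline.
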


Observe that  when $n \rightarrow \infty$,   $C n^{-1/16} {\rm log}^{3/2}n  $  approaches to zero.  Therefore, roughly speaking, Theorem \ref{thm:H01} shows that when $n$ is sufficiently large, under $\mathbb{H}_0$, the tail probability of the EM-test statistic greater than $t $ has a polynomial decay rate. To prove Theorem \ref{thm:H01}, we first derive the tail probability bound for the mixture parameter estimators $\bxi^{(k)}$ by analyzing the empirical processes indexed by $\bxi$ \citep{wong1995probability}. Then, we analyze the Taylor expansion of ${\rm EM}_n^{(K)}$ and bound each term in the expansion using concentration inequalities \citep{wainwright2019high}. Details of the proof are given in the Supplementary material. 

\subsection{The probability bound of the EM-test statistic under $\mathbb{H}_1$} \label{sec:boundh1}
Our next goal is to show that the EM-test statistic diverges to infinity under $\mathbb{H}_1$ with a high probability. Recall that $\ba^*$ is the true proportion parameter and that under $\mathbb{H}_1$, $\bxi^* \in 	\Xi_{1} $, where $\Xi_{1}$ is defined in (\ref{eq:Xi1}).
We define 
%\begin{equation}\label{eq:theta0dag}
%{\btheta}^{\dagger}_0 = \argmax_{\btheta \in {\Theta}} \Ex_{\balpha^*,\bxi^{*}}  \left[\log f(x;\btheta)\right]
%\end{equation}
${\btheta}^{\dagger}_0 = \argmax_{\btheta \in {\Theta}} \Ex_{\balpha^*,\bxi^{*}}  \left[\log f(x;\btheta)\right]$
as the parameter of the homogeneous model that is closest to the true heterogeneous model in terms of the Kullback-Leibler divergence.   
Denote $\bxi_0^{\dagger} = \left({\btheta}^{\dagger}_0, \dots, {\btheta}^{\dagger}_0\right) $. 
Similarly, given an initial value $\ba^{(0)}$, we can find a heterogeneous model with a proportion parameter $\ba^{(0)}$ that is closest to the true heterogeneous model and denote its parameter as
%\begin{equation}\label{eq:xidag}
%{\bxi}^{\dagger} = \argmax_{\bxi \in \Xi} \Ex_{\balpha^*,\bxi^{*}}  \left[  \log \ \varphi\left(x;\bxi, \ba^{(0)}\right) \right].
%\end{equation}
$
{\bxi}^{\dagger} = \argmax_{\bxi \in \Xi} \Ex_{\balpha^*,\bxi^{*}}  \left[  \log \ \varphi\left(x;\bxi, \ba^{(0)}\right) \right].
$
Note that ${\bxi}^{\dagger}$ and ${\btheta}^{\dagger}_0$ depend on the true value $\ba^*, \bxi^* $. 

Define $R(x;\bxi^*) =  \log \ \varphi\left(x;{\bxi}^{\dagger}, \ba^{(0)}\right) -\log\ f\left(x;{\btheta}_0^{\dagger}\right)$ as the difference between two ``working"  log-likelihood $\log \ \varphi\left(x;{\bxi}^{\dagger}, \ba^{(0)}\right)$ and $\log\ f\left(x;{\btheta}_0^{\dagger}\right)$. If the initial value $\ba^{(0)}$ is close to the true proportion $\ba^*$, the expectation of $R(x;\bxi^*)$ would be bounded away from zero. So, the one-step EM-test statistic, and thus the EM-test statistic, would be large. Thus, we would correctly reject the null hypothesis with a high probability. Furthermore, denoting $D(\btheta) = \Ex_{\balpha^*,\bxi^{*}} [\log f(x;\btheta)]$, we define a mean-zero empirical
process indexed by $\btheta\in\Theta$ as
$$Z_{\btheta}(\bxi^*) =   n^{-1/2} \sum_{i=1}^n \left\lbrace  \log \ f(x_i;\btheta) - \log \ f\left(x_i;{\btheta}^{\dagger}_0\right) - \left[D(\btheta )- D\left({\btheta}^{\dagger}_0\right) \right]  \right\rbrace .$$ 
We need the following conditions under $\mathbb{H}_1$. 

\begin{itemize}
	\item[(\Cf)] The initial value $\balpha^{(0)}$ fulfills
	$\varrho = \inf_{\bxi^{*} \in \Xi_1  }  \Ex_{\balpha^*,\bxi^{*}} \left[ R(x;\bxi^*)  \right]  > 0.$
	\item[(\Cg)] 
	There exists a constant $M_{\psi_1}$ such that
	$\sup_{\bxi^{*} \in \Xi_1  }   \left\| R(x;\bxi^*) - \Ex_{\balpha^*,\bxi^{*}} \left[R(x;\bxi^*)\right]\right\|_{\psi_1} \leq M_{\psi_1}.$
	\item [(\Ch)] $\{Z_{\btheta}(\bxi^*) : \btheta \in \Theta\}$ is a $\psi_1 $- process such that for any $\btheta, \btheta' \in \Theta$,
	$\sup_{\bxi^{*} \in \Xi_1  }  \Vert Z_{\btheta}(\bxi^*) - Z_{\btheta'}(\bxi^*)\Vert_{\psi_1} \leq C_{\rho}\Vert \btheta - \btheta' \Vert_2,$
	where $C_{\rho}>0$ is a constant.	
\end{itemize}

Condition (\Cf) is a key assumption. Because the EM algorithm cannot guarantee convergence to the global maximum, we need to choose an initial value $\ba^{(0)}$ such that the theoretical best heterogeneous model that we can achieve in one step EM update is uniformly closer to the true heterogeneous model than the best homogeneous model. Note that we always have $\varrho \geq 0$, but it is hard to give a necessary and sufficient condition for the choice of $\ba^{(0)}$ such that $\varrho > 0$. However, we can show that if $\left\| \ba^* - \ba^{(0)} \right\|_2 \leq \tau(\gamma)$  for some constant $\tau(\gamma) $,  Condition (\Cf) holds (see Section C.1 in Supplementary material for more discussion).  Condition (\Cg) and  (\Ch) are two weaker conditions and hold for many commonly used distribution families.Under these conditions, we obtain the following tail probability bound of the EM-test statistic under $\mathbb{H}_1$.

\begin{theorem} \label{thm:H1res}
	Assume that $x_1,\dots,x_n$ are independent samples from the heterogeneous model distribution $\varphi(x;\ba^*, \bxi^*)$ with $\bxi^* \in \Xi_1$. 
	Under Condition (\Cf)--(\Ch),  for any $t$,  we have 
	\begin{align*}
	& \quad \Pr \left( {\rm EM}_n^{\left(K\right)} \geq 2^{-1}n\varrho  -  n^{1/2}C_J\left[J\left(D\right) +t\right] -   p_0\right) \\
	& \geq 1 - 2\exp\left[-C' \min\left(\frac{n\varrho^2}{4 M_{\psi_1}^2} , \frac{n\varrho}{2M_{\psi_1}}\right) \right] -    2\exp\left(\frac{-t}{D}\right),
	\end{align*}
	where $D, C_J, J(D)$ and $C'$ are four constants and defined in Lemma 17 and 18 in the Supplementary material and $ p_0 =  \lambda G \log(\delta G)$. 
\end{theorem}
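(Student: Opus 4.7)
The plan is to exploit the monotone ascent property of the penalized EM algorithm to reduce ${\rm EM}_n^{(K)}$ to a deterministic ``one-step'' lower bound centred at the population reference points $(\bxi^{\dagger},\balpha^{(0)})$ and $(\hat\btheta_0,\balpha_0)$, and then control the two resulting random pieces by (i) a Bernstein inequality for sub-exponential variables and (ii) a generic chaining bound for the $\psi_1$-process $Z_\btheta$. Because $p(\balpha)$ depends only on $\balpha$ and is included in the M-step for $\balpha$, the sequence $pl_n(\bxi^{(k)},\balpha^{(k)})$ is nondecreasing in $k$; combined with $\bxi^{(0)}=\arg\max_{\bxi} l_n(\bxi,\balpha^{(0)})$, the identity $p(\balpha_0)=0$, and the penalty floor $p(\balpha^{(0)}) \geq p_0$ forced by $\min_g \alpha_g^{(0)} \geq 27\delta$, this yields the deterministic inequality
\begin{align*}
\tfrac{1}{2}\,{\rm EM}_n^{(K)} \;\geq\; \sum_{i=1}^n R(x_i;\bxi^*) \;+\; \bigl[l_n^0(\btheta_0^{\dagger}) - l_n^0(\hat\btheta_0)\bigr] \;+\; p_0,
\end{align*}
where $l_n^0(\btheta)=\sum_i\log f(x_i;\btheta)$.

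Next I would lower-bound each random piece. Condition (\Cf) gives $\Ex R(x;\bxi^*) \geq \varrho$ and Condition (\Cg) gives $\|R(x;\bxi^*)-\Ex R(x;\bxi^*)\|_{\psi_1}\leq M_{\psi_1}$, so Bernstein's inequality for sub-exponential summands yields $\sum_i R(x_i;\bxi^*) \geq n\varrho/2$ on the complement of the first tail event stated in the theorem. For the second piece, the $D$-maximality of $\btheta_0^{\dagger}$ implies $D(\btheta_0^{\dagger})\geq D(\hat\btheta_0)$, so by the definition of $Z_\btheta$,
\begin{align*}
l_n^0(\btheta_0^{\dagger}) - l_n^0(\hat\btheta_0) \;\geq\; -\sqrt{n}\,\sup_{\btheta\in\Theta} Z_{\btheta}(\bxi^*).
\end{align*}
Condition (\Ch), which says that $\{Z_\btheta(\bxi^*)\}$ has $\psi_1$-increments Lipschitz in $\|\cdot\|_2$ with constant $C_\rho$, combined with a generic-chaining tail bound for sub-exponential processes (the content of Lemmas 17--18 in the Supplement), then yields $\sup_\btheta Z_\btheta(\bxi^*) \leq C_J[J(D)+t]$ with probability at least $1-2\exp(-t/D)$, where $J(D)$ is the Dudley-type entropy integral over $\Theta$ and $D$ is its diameter.

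Assembling the two high-probability bounds by a union bound and reinserting the factor $2$ in the definition of ${\rm EM}_n^{(K)}$ recovers the claimed inequality. The main obstacle is the last step---producing the chaining tail bound for $\sup_\btheta Z_\btheta(\bxi^*)$ in a form where $C_J,J(D),D$ are uniform in $\bxi^*\in\Xi_1$ and depend only on $\Theta$ and $C_\rho$. Everything else (EM monotonicity, the penalty floor $p_0$, and the Bernstein bound on $\sum_i R(x_i;\bxi^*)$) is standard.
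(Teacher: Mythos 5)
Your proposal is correct and follows essentially the same route as the paper's proof: the same decomposition of the one-step penalized log-likelihood ratio through the population anchors $(\bxi^{\dagger},\balpha^{(0)})$ and $\btheta_0^{\dagger}$, with the middle term $\sum_i R(x_i;\bxi^*)$ handled by Bernstein's inequality under (C5)--(C6) and the term $l_n(\btheta_0^{\dagger})-l_n(\hat\btheta_0)$ reduced via $D$-maximality to $-\sqrt{n}\sup_{\btheta}Z_{\btheta}(\bxi^*)$ and controlled by the chaining/Dudley bound under (C7), followed by a union bound and EM monotonicity. The only differences are cosmetic (you drop the nonnegative first piece $l_n(\bxi^{(0)},\balpha^{(0)})-l_n(\bxi^{\dagger},\balpha^{(0)})$ implicitly rather than writing it out), so the argument matches the paper's.
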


\begin{corollary}\label{coro:H1}
	From Theorem \ref{thm:H1res},  for $t_n = n^{\vartheta}, 0 < \vartheta < 1$, there are two constants $C_3,C_4$ such that 
	$$ 
	\Pr\left( {\rm EM}_n^{\left(K\right)} \geq t_n \right) \geq 1 -  \exp\left(-C_3 n^{1/2}  + C_4 n^{\vartheta - 1/2}\right).  $$
\end{corollary}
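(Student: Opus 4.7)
\textbf{Proof proposal for Corollary~\ref{coro:H1}.}
The plan is to apply Theorem~\ref{thm:H1res} with a carefully chosen value of $t$ so that the event on the left-hand side becomes $\{{\rm EM}_n^{(K)} \geq t_n\}$, and then to simplify the two exponential tails on the right-hand side into a single bound of the stated form.

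First I would pick $t$ so that the threshold $2^{-1}n\varrho - n^{1/2}C_J[J(D)+t] - p_0$ in Theorem~\ref{thm:H1res} equals exactly $t_n = n^\vartheta$. Solving for $t$ gives
\begin{equation*}
t_* \;=\; \frac{\varrho\, n^{1/2}}{2 C_J} \;-\; \frac{n^{\vartheta-1/2}}{C_J} \;-\; J(D) \;-\; \frac{p_0}{C_J n^{1/2}}.
\end{equation*}
For $n$ large enough this quantity is positive (the leading term $\varrho n^{1/2}/(2C_J)$ dominates because $\vartheta<1$), so $t_*$ is a legal choice in Theorem~\ref{thm:H1res}. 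Substituting $t=t_*$ immediately yields
\begin{equation*}
\Pr\!\left({\rm EM}_n^{(K)} \geq n^\vartheta\right) \;\geq\; 1 - 2\exp\!\left[-C'\min\!\left(\tfrac{n\varrho^2}{4 M_{\psi_1}^2},\tfrac{n\varrho}{2 M_{\psi_1}}\right)\right] - 2\exp(-t_*/D).
\end{equation*}

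Next I would bound each of the two exponential tails. For the first tail, since $\varrho$ and $M_{\psi_1}$ are fixed positive constants, for all sufficiently large $n$ the minimum inside equals $n\varrho^2/(4M_{\psi_1}^2)$, so this term is at most $2\exp(-c_0 n)$ for some $c_0>0$. Because $n \gg n^{1/2}$, this is dominated by $\exp(-C_3 n^{1/2})$ for any $C_3>0$ once $n$ is large. For the second tail, dividing $t_*$ by $D$ gives
\begin{equation*}
-\,t_*/D \;=\; -\,\frac{\varrho\, n^{1/2}}{2 C_J D} \;+\; \frac{n^{\vartheta-1/2}}{C_J D} \;+\; \frac{J(D)}{D} \;+\; \frac{p_0}{C_J D\, n^{1/2}},
\end{equation*}
so $2\exp(-t_*/D) \leq \exp(-C_3 n^{1/2} + C_4 n^{\vartheta-1/2})$ after absorbing the bounded constant terms $2e^{J(D)/D}$ and $e^{p_0/(C_J D n^{1/2})}$ into $C_4$ (they contribute at most an $O(1)$ multiplicative factor, which can be written as an additive constant inside the exponent and then folded into $C_4 n^{\vartheta-1/2}$ because $n^{\vartheta-1/2}$ has a positive exponent for any $\vartheta > 0$; for $\vartheta \leq 0$ the corollary is trivial since $n^\vartheta$ stays bounded).

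Finally I would combine the two tails. Since the first is $o\!\left(\exp(-C_3 n^{1/2} + C_4 n^{\vartheta-1/2})\right)$, the sum is bounded by $\exp(-C_3 n^{1/2} + C_4 n^{\vartheta-1/2})$ after a further harmless adjustment of the constants $C_3,C_4$. This gives exactly the statement of the corollary. The main bookkeeping step is verifying that all the lower-order terms in $-t_*/D$ (namely $J(D)/D$ and $p_0/(C_J D n^{1/2})$) can be absorbed into $C_3,C_4$ uniformly in $n$; this is the only nontrivial point, and it follows because these terms are bounded independently of $n$ while the two kept terms have strictly positive exponents in $n$.
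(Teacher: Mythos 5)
Your argument is exactly the paper's: choose $t$ so that the threshold $2^{-1}n\varrho - n^{1/2}C_J[J(D)+t] - p_0$ equals $n^{\vartheta}$, plug into Theorem \ref{thm:H1res}, and absorb the resulting terms into two constants $C_3, C_4$; your $t_*$ is identical to the paper's choice. One small slip in your bookkeeping: you claim the bounded constants $J(D)/D$ and $p_0/(C_J D n^{1/2})$ can be folded into $C_4 n^{\vartheta-1/2}$ ``because $n^{\vartheta-1/2}$ has a positive exponent for any $\vartheta>0$,'' but the exponent $\vartheta-\tfrac12$ is negative when $\vartheta<\tfrac12$, so $n^{\vartheta-1/2}\to 0$ and a fixed positive constant cannot be dominated by $C_4 n^{\vartheta-1/2}$ uniformly in $n$. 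The fix is trivial and at the same level of rigor as the paper's own one-line conclusion: absorb those $O(1)$ terms (and the dominated first tail $2\exp(-c_0 n)$) into the $-C_3 n^{1/2}$ term by slightly shrinking $C_3$ for $n$ sufficiently large.
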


Theorem \ref{thm:H1res}  and Corollary \ref{coro:H1} say that under $\mathbb{H}_1$,  by selecting a suitable threshold, the cluster-relevant  feature  can be retained with high probability. 

%We briefly describe the proof of Theorem \ref{thm:H1res}. Detailed proofs are given in the Supplementary material. Observe that the EM-test statistic is larger than the penalized log-likelihood ratio  $pl_n\left(\bxi^{(0)}, \ba^{(0)}\right) - pl_n\left(\hat{\bxi}_{0}, \ba_0\right)$, which can be decomposed as a summation of three parts, $pl_n\left(\bxi^{(0)}, \ba^{(0)}\right) -  pl_n\left({\bxi}^{\dagger}, \ba^{(0)}\right)$,  
%$pl_n\left({\bxi}^{\dagger}, \ba^{(0)}\right)-  pl_n\left({\bxi}_{0}^{\dagger}, \ba_{0}\right)$
%and $pl_n\left({\bxi}_{0}^{\dagger}, \ba_{0}\right) - pl_n\left(\hat{\bxi}_{0}, \ba_0\right)$. 
%All three parts can be bounded. The first part is non-negative.  
%The second part can be written as 
%$$pl_n\left({\bxi}^{\dagger}, \ba^{(0)}\right)-  pl_n\left({\bxi}_{0}^{\dagger}, \ba_{0}\right) = \sum_{i=1}^n R(x_i;\bxi^*) + p(\ba^{(0)}) -  p(\ba_{0}),$$
%and can be bounded using the Bernstein inequality. 
%For the third part, since $ D(\btheta )\leq  D\left({\btheta}^{\dagger}_0\right) $ for all $\btheta \in \Theta$, we have 
%$$pl_n\left({\bxi}_{0}^{\dagger}, \ba_{0}\right) - pl_n\left(\hat{\bxi}_{0}, \ba_0\right) =  \sum_{i=1}^n \left\lbrace  \log \ f(x_i;\hat{\btheta}_0) - \log \ f\left(x_i;{\btheta}^{\dagger}_0\right) \right\rbrace  \geq  -n^{1/2}\sup_{\btheta \in \Theta} Z_{\btheta}(\bxi^*). $$
%Thus, the third part can be bounded by analyzing the supremum of the empirical process $\{Z_{\btheta}(\bxi^*), \btheta \in \Theta\}$ using the generalized Dudley inequality.

\subsection{The limiting distribution of the EM-test statistic under $\mathbb{H}_0$} \label{sec:Asym}
In many applications, giving a valid $p$-value of the retained feature is also crucial. In this section, we give the limiting distribution of the EM-test statistic under $\mathbb{H}_0$. 
To derive the limiting distribution, we only need the following weaker conditions in replacement of Condition (\Cc) and (\Cd). 
\begin{itemize}
	\item [(\WCc)] For all $h = 1, \dots, 5$ and $\theta_{j_1},\ldots,\theta_{j_h}$, there exists a function $g(x; \btheta_0) \geq 0$ and a constant $\tau>0$ such that 
	$$\sup_{\Vert\btheta - \btheta_0\Vert \leq \tau} \left|\frac{\partial^h f(x,  \btheta)}{\partial \theta_{j_1} \cdots \partial \theta_{j_h}} \bigg/  f(x,  \btheta_0)\right| \leq g(x; \btheta_0)$$
	and $\left\|g(x; \btheta_0)\right\|_{L^3} < \infty$.
	\item [(\WCd)] The covariance matrix $\B(\btheta_{0})=\cov(\b_i)$ of $\b_i$  is positive definite. 
\end{itemize}

%We first derive
%the upper bound of $\left\|\bxi^{(k)} - \bxi_0\right\|_2$.  Namely, we provide the following results. 
%\begin{theorem}\label{rate}
%	Assume that $x_1,\dots,x_n$ are independent samples from the homogeneous distribution $f(x;\btheta_0)$. Under Condition (\Ca)--(\Cb) and (\WCc)--(\WCd),  given any initial value $\ba^{(0)} \in \mathbb{S}^{G-1}$, for any fixed $K>0$ and $1 \leq k \leq K$, we have 
%	$\left\|\balpha^{(k)} - \balpha^{(0)}\right\|_2 = o_p(1), \  \left\|\bxi^{(k)} - \bxi_0\right\|_2 = O_p\left(n^{-1/4}\right)$
%	and 
%	$\left\|\sum_{g=1}^G {\alpha}_g^{(k)} \left({\btheta}_g^{(k)} - \btheta_0\right)\right\|_2 = O_p\left(n^{-1/2}\right).$
%\end{theorem}
%%\begin{remark}
%Theorem \ref{rate} says that the convergence rate of $ \bxi^{(k)}$ under the homogeneous model $\mathbb{H}_0$ is only $O_p\left(n^{-1/4}\right)$, but not the common convergence rate $O_p\left(n^{-1/2}\right)$. The reason is that under $\mathbb{H}_0$,  the heterogeneous model (\ref{eq:Gmix}) is unidentifiable, and, in consequence, the Fisher information matrix is not positive definite. However, the weighted average of $\bxi^{(k)} $, $\sum_{g=1}^G {\alpha}_g^{(k)} \left({\btheta}_g^{(k)} - \btheta_0\right) $, is a $\sqrt{n}$-consistent estimator. Before presenting the asymptotic distribution of the EM-test statistic, we first introduce a few notations. 
Let $r = \min(G-1,d)$ and 
\begin{equation} \label{eq:V}
\mathcal{V} = \left\{ {\rm vech }(\V): \V \in \mathbb{R}^{d \times d} \mbox{  is symmetric},~{\rm rank}(\V)\leq r, \V \succeq 0\right\}.
\end{equation}
For $j,k = 1,2$, let $\B_{jk} = \Ex_{\btheta_{0}}(\{\b_{ji} - \Ex(\b_{ji})\}\{\b_{ki} - \Ex(\b_{ki})\}\trans )$ and 
$\tilde{\b}_{2i} = \b_{2i} - \B_{21}\B_{11}^{-1}\b_{1i}$. The covariance matrix of $\tilde{\b}_{2i}$ is $\widetilde{\B}_{22} = {\B}_{22} - \B_{21}\B_{11}^{-1}\B_{12}$. 
\begin{theorem} \label{thm:limitdist}
	Assume that $x_1,\cdots,x_n$ are independent samples from the homogeneous distribution $f(x;\btheta_0)$. If $G\geq 2$ and one of the $\balpha_t$ ($t=1,\dots,T$) is $\ba_0$, then under Condition (\Ca)--(\Cb) and (\WCc)--(\WCd),  as $n \rightarrow \infty$,  we have
	$${\rm EM}_n^{(K)} \overset{d}{\longrightarrow} \sup_{{\bf v} \in \mathcal{V}} 2{\bf v}\trans {\bf{w}}  - {\bf v}\trans\widetilde{\B}_{22} {\bf v},$$
	where ${\bf{w}} = (w_1,\ldots,w_{d(d+1)/2})\trans$ is a zero-mean multivariate normal random vector with a covariance matrix $\widetilde{\B}_{22}$ and $\mathcal{V} $ is  as in (\ref{eq:V}). 
\end{theorem}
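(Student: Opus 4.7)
My plan is to establish the weak convergence through a three-stage argument: consistency of the EM iterates under $\mathbb{H}_0$, a second-order Taylor expansion that reduces the statistic to a quadratic form, and identification of the geometric constraint that produces the cone $\mathcal{V}$.

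First I would show that under $\mathbb{H}_0$, for any initial value $\balpha^{(0)}$ and any $k\leq K$, the EM iterate $\bxi^{(k)}$ is consistent for $\bxi_0=(\btheta_0,\ldots,\btheta_0)$, with $\btheta_g^{(k)}-\btheta_0 = O_p(n^{-1/4})$ along the trajectory and $\hat\btheta_0-\btheta_0 = O_p(n^{-1/2})$ by standard MLE asymptotics. Consistency of $\bxi^{(0)}$ in (\ref{eq:xi0}) follows from the Wald argument under Condition (\Ca); it propagates through the EM iterations by the monotonicity of EM and concentration of the penalized log-likelihood.

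Next, writing $\bdelta_g = \btheta_g^{(K)}-\btheta_0$ and using (\Cb) and (\WCc), I would Taylor-expand the density ratio
$$\frac{f(x_i;\btheta_g)}{f(x_i;\btheta_0)} - 1 = \b_{1i}\trans \bdelta_g + \b_{2i}\trans {\rm vech}(\bdelta_g\bdelta_g\trans) + R_{gi},$$
where $R_{gi}$ is a third-order remainder controlled by the $L^3$-integrable dominating function from (\WCc). Setting $\bar{\bdelta} = \sum_g \alpha_g^{(K)}\bdelta_g$ and $\V = \sum_g \alpha_g^{(K)}(\bdelta_g-\bar{\bdelta})(\bdelta_g-\bar{\bdelta})\trans$, and using $\log(1+u)=u-u^2/2+O(u^3)$ together with $\sum_g \alpha_g\bdelta_g\bdelta_g\trans = \V + \bar{\bdelta}\bar{\bdelta}\trans$, the mixture log-likelihood decomposes into a linear piece in $\bar{\bdelta}$, a linear piece in ${\rm vech}(\V)$, a quadratic piece in $\bar{\bdelta}$ of order $n$, and a negligible remainder. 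Subtracting the analogous expansion of $pl_n(\hat{\bxi}_0,\ba_0)$, which absorbs the $\bar{\bdelta}$-score via $\hat\btheta_0-\btheta_0 \approx \B_{11}^{-1}n^{-1}\sum_i \b_{1i}$, the first-order pieces cancel and the cross terms produce the projected score $\tilde{\b}_{2i}=\b_{2i}-\B_{21}\B_{11}^{-1}\b_{1i}$. After profiling out $\bar{\bdelta}$ and rescaling via $\v_n = \sqrt{n}\,{\rm vech}(\V)$, one obtains
$$M_n^{(K)}(\balpha^{(0)}) = 2\,\v_n\trans {\bf w}_n - \v_n\trans \widetilde{\B}_{22}\,\v_n + o_p(1),$$
with ${\bf w}_n = n^{-1/2}\sum_i \tilde{\b}_{2i}\dovr {\bf w}\sim N({\bf 0},\widetilde{\B}_{22})$ by the CLT; Condition (\WCd) provides invertibility of $\B_{11}$ and positive definiteness of $\widetilde{\B}_{22}$.

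Finally, I would identify the set over which $\v_n$ ranges. Since $\V$ is a weighted sample covariance of $G$ vectors in $\mathbb{R}^d$, it is positive semi-definite with rank at most $\min(G-1,d)=r$; as the cone $\mathcal{V}$ is closed under positive scaling, $\v_n\in\mathcal{V}$, yielding the upper bound $M_n^{(K)}(\balpha^{(0)})\leq \sup_{\v\in\mathcal{V}} 2\v\trans{\bf w}_n - \v\trans\widetilde{\B}_{22}\v + o_p(1)$. For the matching lower bound, the hypothesis that one $\balpha_t$ equals $\ba_0=(1/G,\ldots,1/G)\trans$ is crucial: starting from the symmetric $\balpha^{(0)}=\ba_0$, a careful expansion of (\ref{eq:defw})--(\ref{eq:Mtheta}) at the $n^{-1/4}$ scale shows that after $K\geq 3$ EM updates the corresponding $\v_n^{(K)}$ is within $o_p(1)$ of the $\argmax$ of the limiting quadratic on $\mathcal{V}$. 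Combining both bounds and invoking the continuous mapping theorem on the closed cone $\mathcal{V}$ yields the claimed limit law. The main obstacle is this lower bound: because EM need not reach a global optimum in $K$ steps, one must rigorously verify that $\balpha^{(0)}=\ba_0$ with $K\geq 3$ iterations already explores the whole cone $\mathcal{V}$ in the scaling limit, which requires careful bookkeeping of the orders of $\balpha^{(k)}-\ba_0$ and $\bxi^{(k)}-\bxi_0$ through the EM recursion and exploiting the symmetry of the initialization so that the rank-$r$ constraint on $\V$ is preserved but not further tightened.
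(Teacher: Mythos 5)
Your upper-bound argument is essentially the paper's: consistency plus the $O_p(n^{-1/4})$ rate for $\bxi^{(K)}$ and $O_p(n^{-1/2})$ for the weighted mean, a Taylor expansion reducing $2\{pl_n(\bxi^{(K)},\balpha^{(K)})-pl_n(\hat\bxi_0,\ba_0)\}$ to a quadratic form in the projected second-order score $\tilde\b_{2i}=\b_{2i}-\B_{21}\B_{11}^{-1}\b_{1i}$, and the observation that the second-order parameter is ${\rm vech}$ of a weighted covariance, hence ranges in the rank-$\le\min(G-1,d)$ PSD cone $\mathcal{V}$ (your centered parametrization $\V=\sum_g\alpha_g\bdelta_g\bdelta_g\trans-\bar\bdelta\bar\bdelta\trans$ differs from the paper's column-modification of $\widetilde\V$ only by $o_p(n^{-1/2})$ terms, so that part is fine; note you also need the reverse inclusion, that \emph{every} element of $\mathcal{V}$ is realizable as such a weighted covariance, which the paper isolates as a separate lemma).

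The genuine gap is in your lower bound. You propose to show that after $K\ge 3$ EM updates started from $\balpha^{(0)}=\ba_0$ the iterate $\v_n^{(K)}$ comes within $o_p(1)$ of the $\argmax$ of the limiting quadratic over $\mathcal{V}$, and you correctly flag this as the main obstacle --- but it is an obstacle of your own making, and nothing in your sketch resolves it: a fixed finite number of EM steps need not approach the global maximizer, and no bookkeeping of $\balpha^{(k)}-\ba_0$ will establish this. The argument that actually works avoids the EM dynamics entirely. By definition (\ref{eq:xi0}), $\bxi^{(0)}$ \emph{globally} maximizes $\bxi\mapsto pl_n(\bxi,\ba_0)$ over $\Xi$ when the initial value $\ba_0$ is used, so for \emph{any} explicit $\bxi^{\flat}\in\Xi$ one has $pl_n(\bxi^{(0)},\ba_0)\ge pl_n(\bxi^{\flat},\ba_0)$, and the likelihood non-decreasing property of EM then gives ${\rm EM}_n^{(K)}\ge 2\{pl_n(\bxi^{\flat},\ba_0)-pl_n(\hat\bxi_0,\ba_0)\}$. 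One therefore only has to \emph{construct} a deterministic-in-form $\bxi^{\flat}$ (with equal weights $1/G$, built from the maximizer $\hat{\bf v}^{\flat}$ of ${\bf v}\mapsto 2{\bf v}\trans\sum_i\tilde\b_{2i}-n{\bf v}\trans\widetilde\B_{22}{\bf v}$ over $\mathcal{V}$, which satisfies $\|\hat{\bf v}^{\flat}\|_2=O_p(n^{-1/2})$ so the expansion applies to it) whose likelihood ratio matches the supremum up to $o_p(1)$. This is also the true role of the hypothesis that some $\balpha_t$ equals $\ba_0$: it guarantees $M_n^{(K)}(\ba_0)$ is among the maximands defining ${\rm EM}_n^{(K)}$ and that the equal-weight construction of $\bxi^{\flat}$ is admissible --- not that the symmetric initialization ``explores the whole cone'' through the EM recursion. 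Without replacing your dynamic argument by this static construction, the lower bound, and hence the distributional limit (as opposed to a stochastic upper bound), is not established.
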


If $d = 1$, the limiting distribution in Theorem \ref{thm:limitdist} is $0.5 \chi^2_1 + 0.5 \chi^2_0$, the same as the one in \cite{li2009non}, while, if $G = 2$, it is the distribution in \cite{niu2011testing}. When $G > d$, we have $r=d$ and the limiting distribution will be independent of the component number $G$. Generally, it is computationally difficult to calculate the limiting distribution in Theorem \ref{thm:limitdist}. When $G>d$, the feasible domain 
$\mathcal{V} = \left\{ {\rm vech }(\V): \V \in \mathbb{R}^{d \times d}, \V \succeq 0\right\}$
is a positive semi-definite matrix cone. Computation of the limiting distribution in Theorem \ref{thm:limitdist} becomes a classic cone quadratic program and can be solved using the algorithms reviewed in \cite{vandenberghe2010cvxopt}, but these algorithms are still computationally expensive. However, it can be easily shown that $\sup_{{\bf v} \in \mathcal{V}} 2{\bf v}\trans {\bf{w}}  - {\bf v}\trans\widetilde{\B}_{22} {\bf v} \leq {\bf w}\trans\widetilde{\B}_{22}^{-1} {\bf w}$, and thus ${\rm EM}_n^{(K)}$ is stochastically less than  $\chi^2({d(d+1)/2})$. Therefore, we can always use $\chi^2({d(d+1)/2})$ as the limiting distribution. The test will be conservative, but our empirical studies show that the test still has a high power.

\section{Simulation}\label{sec:simu}
In this section, we use simulation to assess the performance of the EM-test statistic for feature screening and clustering of high dimensional count data. We compare the EM-test with feature screening and feature selection methods.  The feature screening methods include Dip-test \citep{chan2010using}, KS-test \citep{jin2016influential}, COSCI \citep{banerjee2017feature}, SC-FS \citep{liu2022clustering} and a baseline method based on the goodness-of-fit test. Dip-test screens features by investigating the unimodality of the data distribution. For the baseline method, we use the Chi-square test to test the fit of the data to the null distribution. The feature selection method is the Sparse kmeans method (abbreviated as Skmeans) proposed in \citet{witten2010framework}.   Dip-test, KS-test and COSCI are methods for continuous data. When applying these to the simulated count data, we first $\log$ transform the data ($\log(x+1)$) to make them more like continuous data.

\subsection{Simulation Setup}\label{subsec:SimGen}

In the simulations, we set the number of clusters as $G=5$ and the proportions of the clusters as $\ba =(\alpha_1,\dots,\alpha_5) = (0.5,0.125,0.125,0.125,0.125)$. The sample size is set as $n=1000$. The dimension is set as $p=500$, $5000$ or $p=20,000$. Skmeans  and COSCI are not evaluated for $p=20,000$ because they are computationally too expensive for this ultra-high dimensional setting. The number of cluster-relevant features is fixed at $s=20$. We always set the first $20$ features as the cluster-relevant and all other features as cluster-irrelevant.

%\textbf{\colred In the Supplementary, we give many distribution examples that satisfy  Condition (\Ca)--(\Ch) and here we consider the distribution family $\mathcal{P}$ as the negative binomial distributions ${\rm NB} (\mu, r)$, where $\mu$ and $r$ are the mean and over-dispersion parameters, respectively. }

More specifically, for the $i$th sample, we first randomly assign it to a cluster $g$ with the probability $\alpha_g$. Then, if the $j$th feature is cluster-relevant ($j=1,\dots,20$), we randomly sample $x_{ij}$ from ${\rm NB} (\mu_{gj}, r_{j})$; If it is cluster-irrelevant ($j=21,\dots,p$), we randomly sample $x_{ij}$ from ${\rm NB} (\mu_j, r_j)$. The mean and over-dispersion parameters of the negative binomial distributions are randomly generated (see below). 

We consider simulation setups of two noise levels (low or high) and three cluster signal strength levels (low, medium and high).  
The over-dispersion parameters $r_j$ represent the noise level of the data and the differences of the mean parameters $\mu_{gj}$ between clusters represent the cluster signal strength.  The details of generating $r_j$ and $\mu_{gj}$ are given in the Supplementary material.
%For the low-noise and high-noise scenarios, we independently generate $r_j$ from the uniform distributions ${\rm U}(10,11)$ and ${\rm U}(5,6)$, respectively. For the cluster-relevant features ($j=1,\dots,20$), the mean parameters $\mu_{gj}$ are either set as $\exp(u_j)$ or $\exp(u_j)+D_j$, where $u_j$ is generated from $ {\rm U} (\log \ 2, \log \ 5)$, and $D_j$ is to control the signal strength (the differences between clusters). We generate $D_j$ from ${\rm U}(5,6)$, ${\rm U}(7,8)$ or ${\rm U}(9,10)$ for the low, medium and high signal strength settings, respectively. For the first 5 features ($1\leq j\leq 5$), we set $\mu_{2j}=\exp(u_j)+D_j$ and $\mu_{gj} = \exp(u_j) (g\neq 2)$. Similarly, for $5k+1\leq j\leq 5k+5 (k=1,2,3)$, we set $\mu_{k+2,j}=\exp(u_j)+D_j$ and $\mu_{gj} = \exp(u_j) (g\neq k+2)$. For all cluster-irrelevant features ($j=21,\dots,p$), we set $\mu_j=\exp(u_j)$, where $u_j$ is generated from $ {\rm U} (\log \ 2, \log \ 5)$. 
Thus, in total, we have 18 different simulation setups (3 dimension setups $\times$ 2 noise levels $\times$ 3 signal levels). In each simulation setup, we generate 100 datasets.

 To evaluate the performance of EM-test on continuous data, we also generate simulation data based on normal distributions. The simulation setups are  detailed in Supplementary material. EM-test for normal models are used for these continuous data. To investigate the robustness of EM-test to model mis-specification, we further generate count data based on Poisson-truncated-normal and the binomial-Gamma distributions. EM-test for negative binomial model is used for these count data (Supplementary Section D). In the following, we only discuss the negative binomial simulations. For the continuous data simulation, we find that EM-test performs similar to other available methods under easier simulation setups and outperforms other methods under more difficult setups (Supplementary Section D). From the mis-specified count data simulations, we find that EM-test is robust to model mis-specification and outperforms other methods (Supplementary Section D).

\subsection{Performance on feature screening}\label{subsec:FS}
We first evaluate the accuracy of feature screening. For different screening methods, we first rank the features by their  corresponding test statistics / p-values or feature weights. Following previous researches about feature screening \citep{zhu2011model,li2012feature}, let $\mathcal{S}$ be the minimum number of features needed to include all cluster-relevant features in a rank.  Table \ref{tab:S} shows  the mean and the standard deviation of $\mathcal{S}$ over the 100 replications. Table 1 does not include COSCI  because it only reports a selected feature index but does not provide an order of all features.  In the low dimensional cases ($p=500$), all count-data methods work well. EM-test only needs 20 or slightly more than 20 features to include all cluster-relevant features. In higher dimensions ($p=5000$ or $20,000$), the EM-test outperforms other methods, often by a large amount. For example, in the medium signal, high noise and $p=20,000$ case, the EM-test needs around 21 features to include all cluster-relevant features, while other methods need over a thousand features. SC-FS works well in lower dimensional cases, but its performance deteriorates in higher dimensional cases, especially in higher dimensional cases with lower signal to noise ratios. This is because SC-FS needs a pre-cluster label for feature screening. When $p$ is large, the pre-cluster results can be very inaccurate, leading to the inferior performance of SC-FS in these settings.   The continuous methods (KS-test and Dip-test) do not perform well for these count data. 
%because almost all of the p-values for the KS-test and Dip test are zero in this simulation.
% Table generated by Excel2LaTeX from sheet 'S'
\begin{table}
	\centering
	\caption{The  mean of the minimum model size $\mathcal{S}$ over 100 replications. The numbers in the parenthesis are the standard deviation of $\mathcal{S}$ over 100 replications.}
\renewcommand\arraystretch {1.2}
\footnotesize
\setlength{\tabcolsep}{1.2mm}
	\begin{tabular}{ccccccc}
			\hline
		\hline
	$p$	& EM-test & Chi-square  & SC-FS & Skmeans & KS-test & Dip-test \\
	\hline 
	\multicolumn{7}{c}{Case 1:  High signal  and low noise} \\  
	\hline  
		500   & 20.1 (0.4) & 20.9 (3.0) & 20.0 (0.0) & 20.0 (0.0) & 499.4 (3.1) & 499.2 (2.7) \\
		5000  & 20.7 (0.8) & 28.0 (24.8) & 212.0 (622.7) & 1048.0 (1226.5) & 4994.6 (12.6) & 4988.0 (31.3) \\
		20,000 & 22.7 (1.7) & 47.8 (47.9) & 14675.7 (4233.6) & NA  & 19968.2 (90.6) & 19968.2 (80.6) \\
				\hline 
		\multicolumn{7}{c}{Case 2:  High signal  and high noise} \\  
		\hline  
		500   & 20.0 (0.1) & 24.6 (11.4) & 20.0 (0.0) & 20.0 (0.0) & 499.1 (3.1) & 498.4 (4.1) \\
		5000  & 20.0 (0.2) & 75.2 (86.5) & 871.7 (1276.0) & 554.8 (1047.3) & 4994.4 (12.8) & 4981.0 (48.8) \\
		20,000 & 20.2 (0.4) & 335.5 (689.1) & 16921.6 (3076.0) & NA  & 19977.3 (43.2) & 19955.4 (104.7) \\
		\hline 
		\multicolumn{7}{c}{Case 3:  Medium signal  and low noise} \\  
		\hline  
		500   & 20.2 (0.6) & 41.9 (29.4) & 20.0 (0.0) & 20.0 (0.0) & 498.8 (3.7) & 499.0 (2.8) \\
		5000  & 22.1 (13.2) & 276.6 (348.1) & 1140.1 (1319.5) & 762.5 (1204.2) & 4990.4 (23.4) & 4985.3 (37.0) \\
		20,000 & 23.1 (2.6) & 901.5 (1200.0) & 16747.7 (2854.5) & NA  & 19956.4 (96.6) & 19952.5 (119.7) \\
		\hline 
		\multicolumn{7}{c}{Case 4:  Medium signal  and high noise} \\  
		\hline
		500   & 20.4 (2.1) & 80.2 (70.1) & 20.0 (0.0) & 20.0 (0.0) & 498.1 (5.9) & 497.5 (6.9) \\
		5000  & 20.8 (2.5) & 581.9 (581.6) & 2411.3 (1474.8) & 1814.1 (1438.5) & 4990.4 (22.9) & 4976.9 (48.8) \\
		20,000 & 45.2 (147.1) & 2103.1 (1799.7) & 17306.9 (2630.9) & NA  & 19938.5 (103.2) & 19917.7 (193.7) \\
			\hline 
		\multicolumn{7}{c}{Case 5:  Low signal  and low noise} \\  
		\hline  
		500   & 33.3 (27.0) & 202.3 (100.1) & 20.0 (0.1) & 20.0 (0.0) & 497.5 (5.6) & 498.4 (3.6) \\
		5000  & 129.8 (225.3) & 1795.7 (844.8) & 3349.3 (1203.5) & 2589.0 (1042.7) & 4971.4 (56.5) & 4974.6 (62.8) \\
		20,000 & 755.2 (1635.7) & 7964.7 (4051.5) & 18283.2 (1422.1) & NA  & 19885.6 (197.8) & 19915.5 (182.9) \\
			\hline 
		\multicolumn{7}{c}{Case 6:  Low signal  and high noise} \\  
		\hline
		500   & 58.9 (56.7) & 251.9 (97.1) & 20.1 (0.8) & 20.0 (0.0) & 496.3 (7.6) & 495.8 (8.3) \\
		5000  & 342.7 (496.9) & 2221.6 (965.0) & 4182.7 (783.2) & 3175.8 (1029.2) & 4979.0 (42.4) & 4971.7 (54.0) \\
		20,000 & 1425.6 (2429.9) & 9993.0 (3870.1) & 18696.1 (1263.5) & NA  & 19847.3 (250.0) & 19903.6 (187.2) \\
		\hline
	\end{tabular}%
	\label{tab:S}%
\end{table}%

The minimum model size  $\mathcal{S}$ measures the feature ranks given by different methods. However, in clustering analysis, simply having $\mathcal{S}$ close to $s$ is inadequate because we need a criterion to determine which features to retain. Therefore, we further compare the number of correctly retained cluster-relevant features (denoted as $\mathcal{R}$) and falsely retained cluster-irrelevant features (denoted as $\mathcal{F}$) by different methods. For SC-FS, COSCI and Skmeans, we use their default parameters to select the cluster-relevant features. For the EM-test, we select the features by the adjusted p-values (EM-adjust, adjusted p-value $<$  0.01) and by the threshold $n^{0.35}$ (EM-0.35). The p-value is calculated using the $\chi^2(3)$-distribution, because compared with the limiting distribution in Theorem \ref{thm:limitdist}, the $\chi^2(3)$-distribution is computationally more efficient, achieves good sensitivity and false discovery rate (FDR) control (Supplementary Section D). The Benjamini-Hochberg (BH) procedure \citep{benjamini1995controlling} is used to adjust the p-values. We choose the threshold $n^{0.35}$ because the EM-test has a good balance between retaining cluster-relevant features and excluding cluster-irrelevant features at these cutoffs (Supplementary Section D). For the Chi-square goodness-of-fit test, KS-test and Dip-test, we use the BH-adjusted p-values ($<0.01$) to screen the cluster-relevant features. 

Table \ref{tab:PN} shows the numbers of correctly retained and falsely retained features by different methods. Similarly, we find that the EM-test methods (EM-adjust,  EM-0.35) outperform other methods in most settings especially when $p$ is larger and different clusters are more similar to each other. In most cases, Skmeans is able to select all cluster-relevant features, but also falsely select many cluster-irrelevant features. SC-FS is conservative. In low dimensional settings ($p=500$), SC-FS could correctly select all cluster-relevant features with almost no false positives. However, in higher dimensions, SC-FS also has almost zero false positives, but its power is low. For example, in the $p=5000$, medium signal and high noise case, SC-FS only reports 2 cluster-relevant features. In the same case, EM-adjust reports all 20 features with almost zero false positives. The performance of two versions of EM-test are slightly different. EM-adjust is more conservative than  EM-0.35. In the more difficult settings (with large $p$ and low signal to noise ratio), EM-adjust is still able to control the false positives, but detects less cluster-relevant features. In most cases, EM-0.35 can detect most cluster-relevant features, but also report some cluster-irrelevant features in the more difficult simulation settings.    KS-test and Dip-test report many false positives and select almost all features as clustering-relevant features. COCSI  is very conservative  in this simulation and could not select any features.

\begin{table}
	\centering
	\caption{The mean of numbers of correctly retained features ($\mathcal{R}$) and falsely retained features ($\mathcal{F}$) by different methods over 100 replications. The numbers in the parenthesis are the standard deviation of $\mathcal{R}$ and  $\mathcal{F}$ over 100 replications.}
	\renewcommand\arraystretch {1.2}
	\footnotesize
	\setlength{\tabcolsep}{1.3mm}
	\begin{tabular}{cccccccccc}
		\hline
		\hline
		$p$	&       & EM-adjust & EM-0.35 & Chi-square & SC-FS & Skmeans & KS-test & Dip-test & COSCI \\
		\hline 
		\multicolumn{10}{c}{Case 1:  High signal  and low noise} \\  
		\hline  
		\multirow{2}[0]{*}{500} & $\mathcal{R}$ & 20 (0.1) & 20 (0.1) & 20 (0.7) & 20 (0.9) & 20 (0.0) & 20.0 (0.0) & 20.0 (0.0) & 0.0 (0.0) \\
		& $\mathcal{F}$ & 0 (0.1) & 1 (1.1) & 2 (1.6) & 0 (0.0) & 480 (0.0) & 480.0 (0.0) & 480.0 (0.0) & 0.0 (0.0) \\
		\multirow{2}[0]{*}{5000} & $\mathcal{R}$ & 20 (0.0) & 20 (0.0) & 19 (1.2) & 16 (4.2) & 19 (0.8) & 20.0 (0.0) & 20.0 (0.0) & 0.0 (0.0) \\
		& $\mathcal{F}$ & 0 (0.1) & 10 (3.3) & 2 (1.4) & 0 (0.0) & 239 (700.9) & 4980.0 (0.0) & 4980.0 (0.0) & 0.0 (0.0) \\
		\multirow{2}[0]{*}{20,000} & $\mathcal{R}$ & 20 (0.0) & 20 (0.0) & 18 (1.3) & 0 (0.4) & NA  & 20.0 (0.0) & 20.0 (0.0) & NA \\
		& $\mathcal{F}$ & 0 (0.2) & 40 (5.6) & 2 (1.7) & 0 (0.0) & NA  & 19980.0 (0.0) & 19980.0 (0.0) & NA \\
		\hline 
		\multicolumn{10}{c}{Case 2:  High signal  and high noise} \\  
		\hline  
		\multirow{2}[0]{*}{500} & $\mathcal{R}$ & 20 (0.1) & 20 (0.0) & 18 (1.2) & 20 (0.7) & 20 (0.0) & 20.0 (0.0) & 20.0 (0.0) & 0.0 (0.0) \\
		& $\mathcal{F}$ & 0 (0.2) & 2 (1.4) & 2 (1.5) & 0 (0.0) & 480 (0.0) & 480.0 (0.0) & 480.0 (0.0) & 0.0 (0.0) \\
		\multirow{2}[0]{*}{5000} & $\mathcal{R}$ & 20 (0.0) & 20 (0.0) & 17 (1.9) & 11 (5.4) & 19 (2.0) & 20.0 (0.0) & 20.0 (0.0) & 0.0 (0.0) \\
		& $\mathcal{F}$ & 0 (0.1) & 18 (4.1) & 2 (1.4) & 0 (0.1) & 367 (739.3) & 4980.0 (0.0) & 4980.0 (0.0) & 0.0 (0.0) \\
		\multirow{2}[0]{*}{20,000} & $\mathcal{R}$ & 20 (0.2) & 20 (0.0) & 15 (1.9) & 0 (0.1) & NA  & 20.0 (0.0) & 20.0 (0.0) & NA \\
		& $\mathcal{F}$ & 0 (0.3) & 75 (7.7) & 2 (1.9) & 0 (0.0) & NA  & 19980.0 (0.0) & 19980.0 (0.0) & NA \\
		\hline 
		\multicolumn{10}{c}{Case 3:  Medium signal  and low noise} \\  
		\hline  
		\multirow{2}[0]{*}{500} & $\mathcal{R}$ & 20 (0.3) & 20 (0.1) & 16 (2.4) & 20 (0.5) & 20 (0.0) & 20.0 (0.0) & 20.0 (0.0) & 0.0 (0.0) \\
		& $\mathcal{F}$ & 0 (0.1) & 1 (1.1) & 2 (1.6) & 0 (0.0) & 480 (0.0) & 480.0 (0.0) & 480.0 (0.0) & 0.0 (0.0) \\
		\multirow{2}[0]{*}{5000} & $\mathcal{R}$ & 20 (0.4) & 20 (0.1) & 12 (2.3) & 8 (5.0) & 19 (3.0) & 20.0 (0.0) & 20.0 (0.0) & 0.0 (0.0) \\
		& $\mathcal{F}$ & 0 (0.1) & 10 (3.2) & 2 (1.4) & 0 (0.0) & 447 (852.1) & 4980.0 (0.0) & 4980.0 (0.0) & 0.0 (0.0) \\
		\multirow{2}[0]{*}{20,000} & $\mathcal{R}$ & 20 (0.6) & 20 (0.0) & 10 (2.5) & 0 (0.0) & NA  & 20.0 (0.0) & 20.0 (0.0) & NA \\
		& $\mathcal{F}$ & 0 (0.2) & 41 (5.7) & 1 (1.3) & 0 (0.0) & NA  & 19980.0 (0.0) & 19980.0 (0.0) & NA \\
		\hline 
		\multicolumn{10}{c}{Case 4:  Medium signal  and high noise} \\  
		\hline  
		\multirow{2}[0]{*}{500} & $\mathcal{R}$ & 20 (0.5) & 20 (0.1) & 13 (2.3) & 20 (0.4) & 20 (0.0) & 20.0 (0.0) & 20.0 (0.0) & 0.0 (0.0) \\
		& $\mathcal{F}$ & 0 (0.1) & 2 (1.4) & 2 (1.4) & 0 (0.0) & 480 (0.0) & 480.0 (0.0) & 480.0 (0.0) & 0.0 (0.0) \\
		\multirow{2}[0]{*}{5000} & $\mathcal{R}$ & 19 (0.9) & 20 (0.1) & 9 (2.5) & 3 (3.2) & 16 (6.5) & 20.0 (0.0) & 20.0 (0.0) & 0.0 (0.0) \\
		& $\mathcal{F}$ & 0 (0.2) & 18 (4.1) & 1 (1.1) & 0 (0.1) & 1004 (1417.9) & 4980.0 (0.0) & 4980.0 (0.0) & 0.0 (0.0) \\
		\multirow{2}[0]{*}{20,000} & $\mathcal{R}$ & 19 (1.2) & 20 (0.2) & 7 (2.7) & 0 (0.0) & NA  & 20.0 (0.0) & 20.0 (0.0) & NA \\
		& $\mathcal{F}$ & 0 (0.3) & 75 (7.5) & 1 (1.3) & 0 (0.0) & NA  & 19980.0 (0.0) & 19980.0 (0.0) & NA \\
		\hline 
		\multicolumn{10}{c}{Case 5:  Low signal  and low noise} \\  
		\hline  
		\multirow{2}[0]{*}{500} & $\mathcal{R}$ & 16 (2.0) & 19 (1.0) & 4 (2.5) & 20 (0.6) & 20 (0.0) & 20.0 (0.0) & 20.0 (0.0) & 0.0 (0.0) \\
		& $\mathcal{F}$ & 0 (0.1) & 1 (1.1) & 1 (1.2) & 0 (0.0) & 470 (67.5) & 480.0 (0.0) & 480.0 (0.0) & 0.0 (0.0) \\
		\multirow{2}[0]{*}{5000} & $\mathcal{R}$ & 14 (2.0) & 19 (1.0) & 2 (1.6) & 1 (1.6) & 11 (8.0) & 20.0 (0.0) & 20.0 (0.0) & 0.0 (0.0) \\
		& $\mathcal{F}$ & 0 (0.1) & 10 (3.3) & 1 (0.7) & 0 (0.0) & 1458 (1811.1) & 4980.0 (0.0) & 4980.0 (0.0) & 0.0 (0.0) \\
		\multirow{2}[0]{*}{20,000} & $\mathcal{R}$ & 12 (2.3) & 19 (1.1) & 1 (1.0) & 0 (0.0) & NA  & 20.0 (0.0) & 20.0 (0.0) & NA \\
		& $\mathcal{F}$ & 0 (0.1) & 41 (5.6) & 0 (0.6) & 0 (0.0) & NA  & 19980.0 (0.0) & 19980.0 (0.0) & NA \\
		\hline 
		\multicolumn{10}{c}{Case 6:  Low signal  and high noise} \\  
		\hline  
		\multirow{2}[0]{*}{500} & $\mathcal{R}$ & 15 (2.1) & 18 (1.2) & 3 (2.0) & 20 (0.6) & 20 (0.0) & 20.0 (0.0) & 20.0 (0.0) & 0.0 (0.0) \\
		& $\mathcal{F}$ & 0 (0.1) & 2 (1.4) & 1 (1.0) & 0 (0.1) & 480 (0.0) & 480.0 (0.0) & 480.0 (0.0) & 0.0 (0.0) \\
		\multirow{2}[0]{*}{5000} & $\mathcal{R}$ & 12 (2.3) & 18 (1.3) & 1 (1.1) & 0 (0.3) & 10 (8.7) & 20.0 (0.0) & 20.0 (0.0) & 0.0 (0.0) \\
		& $\mathcal{F}$ & 0 (0.1) & 18 (4.2) & 0 (0.4) & 0 (0.0) & 2150 (2158.9) & 4980.0 (0.0) & 4980.0 (0.0) & 0.0 (0.0) \\
		\multirow{2}[0]{*}{20,000} & $\mathcal{R}$ & 10 (2.5) & 18 (1.4) & 1 (1.0) & 0 (0.0) & NA  & 20.0 (0.0) & 20.0 (0.0) & NA \\
		& $\mathcal{F}$ & 0 (0.3) & 75 (7.4) & 0 (0.7) & 0 (0.0) & NA  & 19980.0 (0.0) & 19980.0 (0.0) & NA \\
		\hline
	\end{tabular}%
	\label{tab:PN}%
\end{table}%

\subsection{Feature screening improves clustering analysis}\label{subsec:clus}
In this subsection, we assess the influence of feature screening on clustering analyses. For each simulation, we first use the feature screening methods to select potential cluster-relevant features and then use the k-means algorithm to cluster the samples.   For the feature selection method Skmeans, we directly use its clustering results.  The number of clusters in the k-means and Skmeans algorithms is set as 5. The parameters of the feature screening/selection methods are set as in Section \ref{subsec:FS}. For comparison, we also include k-means clustering results using all features (called No-Screening) and the oracle clustering results using only the cluster-relevant features. 

Table \ref{tab:ari} shows the adjusted Rand index (ARI) between the clustering results given by different methods and the true clusters. Generally speaking, methods that can accurately select more cluster-relevant features while excluding more cluster-irrelevant features (Table \ref{tab:PN}) tend to perform better in the clustering. All count-data feature screening methods or the feature selection method  can help to improve, often by a large amount, the clustering accuracy in comparison with the baseline method No-Screening, indicating that feature screening is an essential step for clustering analysis of high dimensional data. The two versions of the EM-test method have similar performances and consistently perform better than other methods. When the dimension is small ($p=500$) or the difference between clusters is large (high signal and low noise), EM-test, Chi-square, SC-FS and Skmeans have similar performance.  When the difference between clusters is smaller and the dimension $p$ is larger, the advantage of the EM-test over other methods is more apparent. In addition, clustering based on the EM-test screening can achieve an accuracy similar to that of the oracle clustering in most settings.   The performance of  KS-test and Dip test  are similar to  No-Screening because they select almost all features. The ARIs of COSCI are zero because COSCI could not select any features for count data.  

\begin{table}
	\centering
	\caption{The means and standard deviations (in parenthesis) of ARIs over 100 replications. The values in the table are shown as the actual values $\times ~ 100$.}
	\renewcommand\arraystretch {1.2}
	\footnotesize
	\setlength{\tabcolsep}{0.9mm}
	\begin{tabular}{ccccccccccc}
		\hline
		\hline
		$p$	& No-Screening & Oracle & EM-pvalue & EM-0.35 & Chi-square & SC-FS & Skmeans & KS-test & Dip-test & COSCI \\
		\hline 
		\multicolumn{11}{c}{Case 1:  High signal  and low noise} \\  
		\hline  
		$500$ & 94 (11.9) & 98 (1.1) & 98 (1.5) & 98 (2.7) & 98 (2.9) & 98 (4.1) & 98 (0.8) & 94 (1.4) & 94 (1.4) & 0 (0.0) \\
		$5000$ & 10 (3.6) & 98 (2.8) & 98 (0.9) & 98 (1.5) & 97 (1.4) & 95 (22.7) & 96 (21.5) & 12 (3.2) & 12 (3.1) & 0 (0.0) \\
		$20,000$ & 0 (0.3) & 98 (4.5) & 98 (2.9) & 98 (1.3) & 97 (3.0) & 0 (1.1) & NA  & 1 (0.3) & 1 (0.3) & NA \\
		\hline 
		\multicolumn{11}{c}{Case 2:  High signal  and high noise} \\  
		\hline  
		$500$ & 88 (11.2) & 94 (2.5) & 94 (2.8) & 94 (3.0) & 93 (2.1) & 94 (4.7) & 94 (1.4) & 88 (2.5) & 88 (2.5) & 0 (0.0) \\
		$5000$ & 4 (2.0) & 94 (1.7) & 94 (1.7) & 94 (1.7) & 91 (3.9) & 57 (28.5) & 49 (26.3) & 6 (2.0) & 6 (1.9) & 0 (0.0) \\
		$20,000$ & 0 (0.2) & 94 (2.9) & 94 (1.9) & 93 (1.5) & 89 (4.2) & 0 (1.0) & NA  & 1 (0.2) & 1 (0.2) & NA \\
		\hline 
		\multicolumn{11}{c}{Case 3:  Medium signal  and low noise} \\  
		\hline  
		$500$ & 88 (6.8) & 96 (1.7) & 96 (1.8) & 96 (1.9) & 92 (5.1) & 96 (3.0) & 96 (1.3) & 88 (2.5) & 88 (2.5) & 0 (0.0) \\
		$5000$ & 3 (1.7) & 96 (1.1) & 96 (1.2) & 96 (1.1) & 86 (8.0) & 33 (27.7) & 30 (24.4) & 5 (1.7) & 5 (1.8) & 0 (0.0) \\
		$20,000$ & 0 (0.2) & 96 (3.2) & 96 (1.9) & 95 (1.3) & 80 (11.4) & 0 (0.0) & NA  & 1 (0.2) & 1 (0.2) & NA \\
		\hline 
		\multicolumn{11}{c}{Case 4:  Medium signal  and high noise} \\  
		\hline  
		$500$ & 78 (5.5) & 90 (1.8) & 90 (2.0) & 90 (1.7) & 80 (7.4) & 90 (2.1) & 90 (1.7) & 77 (4.0) & 77 (4.1) & 0 (0.0) \\
		$5000$ & 2 (1.0) & 91 (2.1) & 90 (2.5) & 90 (2.1) & 68 (13.0) & 12 (14.5) & 13 (16.2) & 3 (1.1) & 3 (1.0) & 0 (0.0) \\
		$20,000$ & 0 (0.2) & 90 (1.9) & 89 (3.0) & 89 (1.8) & 54 (17.2) & 0 (0.0) & NA  & 0 (0.2) & 0 (0.2) & NA \\
		\hline 
		\multicolumn{11}{c}{Case 5:  Low signal  and low noise} \\  
		\hline  
		$500$ & 59 (11.1) & 90 (1.8) & 85 (7.6) & 89 (2.6) & 32 (19.4) & 90 (2.8) & 89 (2.1) & 60 (9.6) & 60 (9.7) & 0 (0.0) \\
		$5000$ & 1 (0.7) & 90 (2.1) & 79 (7.1) & 88 (2.6) & 12 (13.7) & 0 (8.9) & 9 (9.9) & 2 (0.6) & 2 (0.7) & 0 (0.0) \\
		$20,000$ & 0 (0.2) & 89 (1.8) & 75 (8.7) & 87 (3.0) & 0 (8.9) & 0 (0.0) & NA  & 0 (0.2) & 0 (0.2) & NA \\
		\hline 
		\multicolumn{11}{c}{Case 6:  Low signal  and high noise} \\  
		\hline  
		$500$ & 38 (8.7) & 82 (2.5) & 73 (6.5) & 80 (3.6) & 17 (14.9) & 82 (3.1) & 81 (3.1) & 42 (7.7) & 42 (7.7) & 0 (0.0) \\
		$5000$ & 1 (0.4) & 83 (2.4) & 67 (8.3) & 79 (3.9) & 10 (8.9) & 0 (1.8) & 0 (6.5) & 1 (0.4) & 1 (0.4) & 0 (0.0) \\
		$20,000$ & 0 (0.2) & 82 (2.6) & 60 (11.7) & 75 (8.9) & 0 (7.6) & 0 (0.0) & NA  & 0 (0.2) & 0 (0.2) & NA \\
		\hline
	\end{tabular}%
	\label{tab:ari}%
\end{table}%

We also compare the computational time of the feature screening/selection methods (Supplementary Section D). SC-FS, KS-test and Dip-test  are the computationally most efficient method. The EM-test is also computationally efficient and can allow analyzing tens of thousands of features using a typical desktop computer. Skmeans is computationally very demanding, partly because it has to select the best tuning parameter using  permutation.

%\begin{table}[htbp]
%	\centering
%	\caption{The computation time of different methods.}
%		\renewcommand\arraystretch {1.2}
%	\footnotesize
%	\setlength{\tabcolsep}{4mm}
%	\begin{tabular}{cccccccc}
%				\hline
%		\hline
%		Time (s) & EM-test & Chi-square & SC-FS & Skmeans & KS-test & Dip-test & COSCI \\
%		\hline
%		$p=500$   & 14.59  & 25.57  & 0.98  & 265.36  & 1.83  & 1.86  & 53.73  \\
%		$p=5000$  & 127.21  & 246.97  & 7.60  & 3074.06  & 7.01  & 7.09  & 470.50  \\
%		$p=20,000$ & 505.21  & 986.82  & 30.53  & NA    & 25.90  & 25.84  & NA \\
%		\hline
%	\end{tabular}%
%	\label{tab:time}%
%\end{table}%

%\begin{table}
%	\centering
%	\caption{The computation time of different methods.}
%	\renewcommand\arraystretch {1.2}
%\footnotesize
%\setlength{\tabcolsep}{8mm}
%	\begin{tabular}{ccccc}
%		\hline
%		\hline
%		Time (s) & EM-test & Chi-square & SC-FS & Skmeans \\
%		\hline
%		$p=500$ & 8.71  & 12.4 & 1.02  & 265 \\
%		$p=5000$ & 72.17 & 115.42 & 7.51  & 3074.06 \\
%		$p=20,000$ & 283.94  &  462.3 & 30.56 & NA \\
%		\hline
%	\end{tabular}%
%	\label{tab:time}%
%\end{table}%

\section{Application on scRNA-seq data}\label{sec:app}
In this section, we consider an application to the scRNA-seq data from \citet{Heming2021NeurologicalMO}. The scRNA-seq data contain single cells from 31 patients, including eight patients of coronavirus disease 2019 with acute or long-term neurological sequelae (Neuro-COVID), five viral encephalitis (VE) patients, nine multiple sclerosis (MS) patients and nine idiopathic intracranial hypertension (IIH) patients. Here, we focus on monocytes, granulocytes and dendritic cells. After quality control, in total, we have 11,697 cells and 33,538 candidate genes. The scRNA-seq data are usually modeled by the negative binomial distribution\citep{Chen2018UMIcountMA}. We thus apply the EM-test of the negative binomial distribution to screen for important genes. At the FDR threshold 0.01, the EM-test selects 2754 genes. With these genes, we perform clustering and annotation analysis and identify 9 cell subtypes. Details of feature screening, clustering and annotation are shown in the Supplementary material.

We also apply the Chi-square test and the KS-test and use their selected genes to cluster the single cells. The Chi-square test is the goodness-of-fit test of the negative binomial distribution. At the FDR threshold 0.01, it reports 158 genes. The KS-test is applied to the normalized data using a normalization procedure that are commonly used in scRNA-seq data analyses \citep{Butler2018IntegratingST}. The normalization can make the data better approximated by normal distributions. Following IF-PCA \citep{jin2016influential}, the threshold of the KS-test is chosen as the higher-criticism threshold, which gives 15,732 important genes. For comparison, we also consider the baseline No-Screening method (all genes are included). Then, we use the same clustering procedures to cluster the single cells using the Chi-square or KS-test selected genes.

We first evaluate the clustering results using the Calinski-Harabasz index \citep{Caliski1974ADM} and Silhouette index \citep{rousseeuw1987silhouettes}. More specifically, we perform dimension reduction with the genes selected by different methods. Then we calculate the Calinski-Harabasz and Silhouette index of the clustering results given by different methods in their respective lower dimension spaces. Principle component analysis (PCA) and uniform manifold approximation and projection (UMAP) \citep{McInnes2018UMAPUM} are used for dimension reduction (to 40 dimensions for PCA and 2 dimensions for UMAP). As shown in Table \ref{table::si}, the EM-test has the largest Calinski-Harabasz and Silhouette indexes, indicating that genes selected by EM-test provide the most distinct clustering results on the reduced feature spaces. Also, we can see that the EM-test is the only method that has the Calinski-Harabasz and Silhouette larger than the No-Screening method, indicating that the EM-test is more effective in selecting cluster-relevant features.

\begin{table}
	\centering
	\caption{The Silhouette and Calinski-Harabasz index of the clustering results based on genes selected by different methods on their respective lower dimensional spaces.}
	\renewcommand\arraystretch {1.2}
\footnotesize
\setlength{\tabcolsep}{6mm}
	\begin{tabular}{ccccc}
		\hline
		\hline
		Method                      & EM-test    & Chi-square & KS-test     & No-Screening \\ \hline
		Number of selected features & 2754 & 158      & 15,732 & 33,538               \\ \hline
		\multicolumn{5}{c}{Dimension Reduction by UMAP}                                                       \\ \hline
		Silhouette index            & 0.27  & 0.04     & 0.21   & 0.23                 \\
		Calinski-Harabasz index     & 8713 & 2981    & 7674  & 7895                \\ \hline
		\multicolumn{5}{c}{Dimension Reduction by PCA}                                                        \\ \hline
		Silhouette index            & 0.11  & 0.02     & 0.08   & 0.10                 \\
		Calinski-Harabasz index     & 1048 & 326      & 1033  & 988                  \\ \hline
	\end{tabular}
	\label{table::si}
\end{table}

Clustering of the EM-test selected genes gives 9 single cell clusters, including the 6 cell subtypes reported in \citep{Heming2021NeurologicalMO}. The additional three cell subtypes are two subtypes of monocytes, which we named as mono\_IFN monocyte and IL7$\textrm{R}^+$ monocyte1 and IL7$\textrm{R}^+$ monocyte2 (Fig.\ref{fig:pbmc2figure}). The two clusters of IL7$\textrm{R}^+$ monocytes are often observed at inflammation sites \citep{AlMossawi2019ContextspecificRO}. The mono\_IFN monocytes highly express many interferon-related genes (Fig.\ref{fig:pbmc2figure}E), suggesting  that these cells might play important roles in immune responses to viral infection \citep{Heming2021NeurologicalMO}. Most of these mono\_IFN monocytes are from VE patients (89\%), and are depleted in Neuro-COVID patients (1\%) compared with VE patients (15\%) (Fig.\ref{fig:pbmc2figure}F). These indicate that there might be an attenuated interferon response in Neuro-COVID patients. This attenuated interferon response in Neuro-COVID patients was discovered in \citet{Heming2021NeurologicalMO} by differential gene expression analysis. However, \citet{Heming2021NeurologicalMO} did not find the mono\_IFN monocyte possibly because of its feature screening. Here, we successfully identify the mono\_IFN monocyte and its marker genes, which can facilitate further downstream analysis of {these types of cells}. All other methods cannot detect these mono\_IFN monocytes. These mono\_IFN monocytes either scatter widely in the method's UMAP plot or are only a small portion of larger cell clusters detected by other methods  (Figure \ref{fig:pbmc2figure} B-D). Therefore, we conclude that the EM-test enables more accurate cell type identification via its precise cluster-relevant gene screening and lead to the discovery of the potential novel cell subtype mono\_IFN monocyte.

\begin{figure}[htb]
	\centering
	\includegraphics[width=\linewidth]{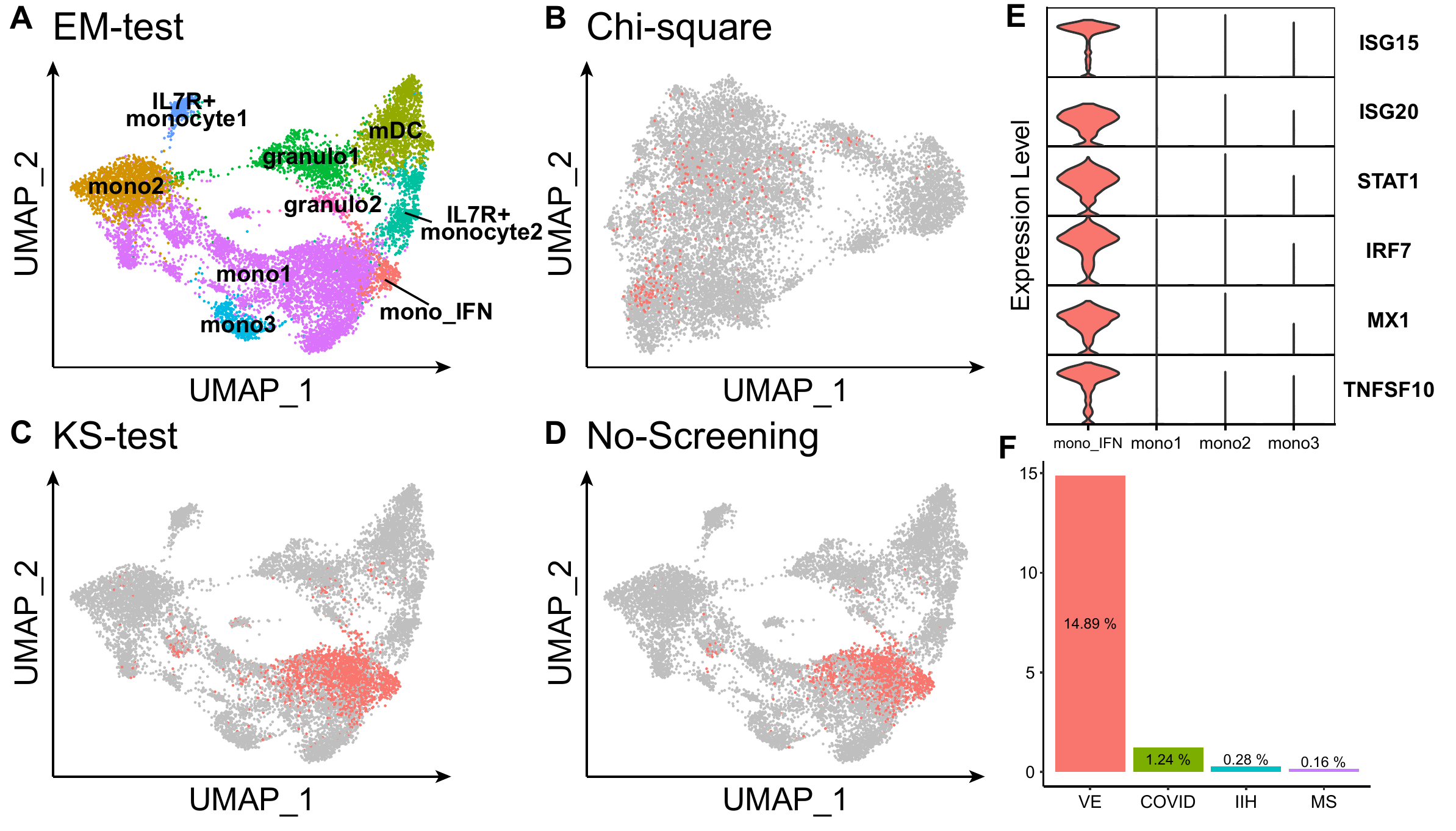}
	\caption{Analysis of cerebrospinal fluid (CSF) samples from  Neuro-COVID, viral encephalitis, multiple sclerosis and nine idiopathic intracranial hypertension patients. (A) Clustering and UMAP based on genes selected by the EM-test. (B) The location of mono\_IFN cells on the UMAP derived from genes selected by Chi-square. (C) The location of the cluster derived from genes selected by KS-test containing the mono\_IFN cells on the UMAP plot by EM-test selected genes. (D) The location of the cluster derived from all candidate genes containing the mono\_IFN cells on the UMAP plot by EM-test selected genes. (E) Expression of several interferon-related genes markers of different cell types.  (F) Percentages of mono\_IFN cells in VE, Neuro-COVID, IIH and MS patients. }
	\label{fig:pbmc2figure}
\end{figure}

\section{Discussion}\label{sec:diss}
%Screening for important features is a ubiquitous problem in high dimensional data analyses. Many innovative supervised feature screening methods have been proposed and deep statistical theories have been developed. In contrast, partly because of the lack of the response variables, feature screening of clustering analysis is much less studied. 
In this paper, we propose a general parametric clustering feature screening method using the EM-test. We establish the tail probability bounds of the EM-test statistic and show that the proposed screening method can achieve the sure independent screening property and consistency in feature selection when $p$ goes to infinity not too fast. Limiting distribution of the EM-test statistic under general settings is also obtained.
Conditions in this paper are generally mild and many commonly used parametric families satisfy all these conditions. Thus, our method can be widely applied. The most stringent condition is the strong identifiability condition (\Cd). Although many exponential family distributions satisfy this condition, normal distributions with unknown means and variances cannot satisfy this condition (but normal distributions with known variances can). However, we find that this problem is closely related to a well-known truncated moment problem and we actually can establish the tail probability bound for normal distributions without Condition (\Cd). This is out of the scope of this paper and we will discuss in future research.

 One limitation of the proposed method is that EM-test is a marginal screening method. Jointly important features may be marginally unimportant and thus could be missed by EM-test. This problem will not occur if the features are independent. In clustering analysis, this problem can also be avoided under conditions other than independence. For example, clustering methods like k-means rely on variables' means for clustering analysis. If clustering-relevant features are assumed to have different means in different clusters, a scenario considered in \citet{Cai2019CHIMECO} and many other clustering works \citep{jin2016influential, Lffler2019OptimalityOS}, jointly important clustering-relevant features will always be marginally important and the problem will not occur. 
On the other hand, marginally important features may be jointly unimportant and and could be falsely retained by marginal screening methods like EM-test. However, if most of important features are retained, inclusion of a few false positives may not have significant impact on clustering accuracy. For example,  for the simulation scenario with low signal and low noise and $p=20,000$, EM-0.35 retains almost all of 20 important features, but also report around 40 false positives. In comparison, EM-adjust has almost no false positives but only retains around 12 important features (Table 2). However, in terms of clustering accuracy, the ARI of EM-0.35 is considerably higher than EM-adjust (0.87 versus 0.74).

The current method can be improved in several aspects. One important type of data is binary data. Since a mixture of binary distributions is still a binary distribution, the current method is not able to screen for cluster-relevant binary data. Further studies on feature screening for binary data are needed. A potential way to address this problem is to first aggregate the binary variables and then perform screening on the aggregated variables. Another important direction to improve over the current methods is to develop non-parametric or semi-parametric screening methods for clustering analyses. Non-parametric or semi-parametric screening methods can allow more robust feature screening and thus potentially have wider applications.

\section{Supplementary material}
All proofs of the theoretical results are given in the Supplementary material. 
Additional simulation results and details for the application are also shown in the Supplementary material.

\section{Acknowledgments}
\label{sec:ackn}
This work was supported by the National Key Basic Research Project of China (2020YFE0204000), the National Natural Science Foundation of China (11971039), and Sino-Russian Mathematics Center.

\appendix
	\section{Proofs of the non-asymptotic results}\label{appnonasymptotic}
In this section, we aim to prove Theorem \thmmain--\thmoneRes.  
\subsection{Sketch of the proof of Theorem \thmzeroRes} \label{sec:sketch}
In this section, we sketch the proof of Theorem \thmzeroRes. We first recall some important definitions   in the manuscript. 
Let 
\begin{eqnarray}\label{eq:b}
	Y_{ih} = \frac{1}{f(x_i,  \btheta_0)} \frac{\partial f(x_i, \btheta_0)}{\partial \theta_h}, ~ Z_{ih} = \frac{1}{2f(x_i,  \btheta_0)}\frac{\partial^2 f(x_i,  \btheta_0)}{\partial \theta_h^2},\nonumber \\ 
	U_{ih\ell} = \frac{1}{f(x_i,  \btheta_0)}\frac{\partial^2 f(x_i,  \btheta_0)}{\partial \theta_h \partial \theta_\ell} (h< \ell),~\b_{1i} = (Y_{i1}, \ldots, Y_{id})\trans, \\
	\b_{2i}= (Z_{i1},\ldots,Z_{id},U_{i12},\ldots,U_{i(d-1)d})\trans, ~\mbox{and}~\b_i = \left(\b_{1i}\trans,\b_{2i}\trans \right)\trans. \nonumber
\end{eqnarray}
Given $h,\ell \in \{1,\dots,d\}$, let 
$${m}_{1h}(\ba, \bxi_1, \bxi_2) = \sum_{g=1}^G {\alpha}_g\left({\theta}_{1gh} - {\theta}_{2gh}\right),~ {m}_{2h}(\ba, \bxi_1, \bxi_2) =  \sum_{g=1}^G {\alpha}_g\left({\theta}_{1gh} - {\theta}_{2gh}\right)^2,$$
$$\mbox{ and }{s}_{h\ell}(\ba, \bxi_1, \bxi_2) =\sum_{g=1}^G {\alpha}_g\left({\theta}_{1gh} - {\theta}_{2gh}\right)\left({\theta}_{1g\ell} - {\theta}_{2g\ell}\right) ~(h<\ell),$$
where $\bxi_j=(\btheta_{j1},\dots,\btheta_{jG})$ and $\btheta_{jg}=(\theta_{jg1},\dots,\theta_{jgd})$ ($j=1,2$, $g=1,\dots,G$).   
We define two vector functions as  
\begin{equation}\label{eq:m1}
	{\m}_1(\ba, \bxi_1, \bxi_2) = ({m}_{11}, \dots, {m}_{1d}) \trans, 
\end{equation}
\begin{equation}\label{eq:m}
	{\m}(\ba, \bxi_1, \bxi_2) = ({m}_{11}, \dots, {m}_{1d},{m}_{21},  \dots, {m}_{2d}, {s}_{12},  \dots, {s}_{(d-1)d} ) \trans, 
\end{equation}
and simplify ${\m}_1(\ba, \bxi, \bxi_0)$ and ${\m}(\ba, \bxi, \bxi_0)$ as $\m_1(\ba, \bxi)$ and $\m(\ba, \bxi)$, respectively.

The basic idea of the proof is to alternatively bound $\ba^{(k)}$ and ${\bxi}^{(k)}$ ($k=0,\dots,K$), and use Taylor's expansion to bound the EM-test statistic. More specifically, given an initial value $\ba^{(0)}$, the one step EM update ${\bxi}^{(0)}$ maximizes the log-likelihood $l_n(\bxi, \ba^{(0)}) = \sum_{i=1}^{n} \log \ \varphi(x_i;\bxi, \ba^{(0)})$. Observe that the homogeneous distribution $f(x;\btheta_0)$ can also be written as $\varphi(x;\bxi_0, \ba^{(0)})$, and all elements of $\ba^{(0)}$ are bounded away from zero, i.e. $ \min_{g=1,\ldots,G} {\alpha}_g^{(0)} > \delta>0$. The one step update ${\bxi}^{(0)}$ will be a consistent estimate of the true parameter $\bxi_0$, and we can bound the tail probability of $\left\|{\bxi}^{(0)}-\bxi_0\right\|_2^2$. Alternatively, since ${\bxi}^{(0)}$ is close to $\bxi_0$, the EM update $\ba^{(1)}$ will also be bounded away from zero, i.e. $ \min_{g=1,\ldots,G} {\alpha}_g^{(1)} > \delta$. We can repeat this process $K$ times and give a tail probability bound for $\left\|{\bxi}^{(K)}-\bxi_0\right\|_2^2$. Finally, we can use Taylor's expansion to represent the EM-test statistic in terms of ${\bxi}^{(K)}-\bxi_0$, and obtain the tail probability bound for the EM-test statistic. In the following, we present the critical lemmas needed in this proof process.

The following Lemma \ref{lem:tailEudis} guarantees that if $\ba^{(k)}$ is bounded away from zero, the EM update ${\bxi}^{(k)}$ will be close to $\bxi_0$ and we can obtain a tail probability bound for $\left\|{\bxi}^{(k)}-\bxi_0\right\|_2^2$. More specifically, we define
\begin{equation}\label{eq:alphaA}
	\mathbb{S}_{\delta} = \left\{ \balpha: \ba \in \mathbb{S}^{G-1}, \min_{g=1,\dots,G} \alpha_g \geq\delta >0 \right\}.
\end{equation}
Clearly, we have $\ba^{(0)} \in \mathbb{S}_{\delta}$. In the proof process of Theorem \thmzeroRes, we can show that $\ba^{(k)}\in \mathbb{S}_{\delta}$ with a high probability. Denote 
\begin{equation}\label{eq:calS}
	\mathcal{S}^{(k)}_{\epsilon} =  \left\{\left\|{\bf {m}}\left(\ba^{(k)}, {\bxi}^{(k)}\right)\right\|_2 < \frac{\epsilon}{L_1}, \left\|{\bxi}^{(k)}-\bxi_0\right\|_2^2 < \frac{\epsilon}{L_2}\right\},
\end{equation}
where $L_1,L_2 > 0$ are two constants that will be specified in Lemma \ref{lem:lower} and $ {\bf {m}}\left(\ba^{(k)}, {\bxi}^{(k)}\right)$ is as defined in (\ref{eq:m}). We have the following lemma. 

\begin{lemma} \label{lem:tailEudis}
	Assume that $x_1,\dots,x_n$ are independent samples from the homogeneous distribution $f(x;\btheta_0)$. 
	Let $c_1 = 1/24, c_2 = (4/27)(1/1926), p_0 =  \lambda G \log(\delta G)$ and $c>0$ be a constant depending on $\delta, d,G, M$ and ${\rm diam}(\Xi)$.  
	Under Condition (\Ca)--(\Cd),
	for any  $\epsilon>0$ and sufficiently large $n$ such that 
	$\epsilon \geq \max \left(c n^{-1/2} {\rm log} n, c_1^{-1/2} \left({-p_0} \right)^{1/2} n^{-1/2}\right),$
	we have 
	$$\Pr\left( \mathcal{S}^{(k)}_{\epsilon} \cap \left\{ \ba^{(k)} \in\mathbb{S}_{\delta}\right\} \right) \geq \Pr \left(\ba^{(k)} \in \mathbb{S}_{\delta}\right) -5\exp\left(-c_2 n \epsilon^2 \right) .$$
\end{lemma}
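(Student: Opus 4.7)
\textbf{Proof proposal for Lemma \ref{lem:tailEudis}.}

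The plan is to show that the bad event $(\mathcal{S}^{(k)}_{\epsilon})^c \cap \{\ba^{(k)} \in \mathbb{S}_\delta\}$ has probability at most $5\exp(-c_2 n\epsilon^2)$. The argument proceeds in four steps, chaining an EM monotonicity fact to a Hellinger-distance empirical process bound and then to the parameter bounds in $\mathcal{S}^{(k)}_{\epsilon}$.

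\emph{Step 1 (EM monotonicity yields a likelihood-ratio lower bound).} By construction, $\bxi^{(0)}$ globally maximizes $pl_n(\,\cdot\,, \ba^{(0)})$ and the EM updates are monotone: $pl_n(\bxi^{(k)}, \ba^{(k)}) \geq pl_n(\bxi^{(0)}, \ba^{(0)}) \geq pl_n(\bxi_0, \ba^{(0)})$. Since $\varphi(x;\bxi_0,\ba) = f(x;\btheta_0)$ for any $\ba$, this gives $l_n(\bxi^{(k)}, \ba^{(k)}) - l_n(\bxi_0, \ba^{(k)}) \geq p(\ba^{(0)}) - p(\ba^{(k)})$. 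Jensen's inequality implies $p(\ba) \leq 0$ on $\mathbb{S}^{G-1}$, and the initialization hypothesis $27^{-1}\alpha_g^{(0)} \geq \delta$ gives $p(\ba^{(0)}) \geq \lambda G \log(G\delta) = p_0$. Hence on $\{\ba^{(k)} \in \mathbb{S}_\delta\}$ we always have $l_n(\bxi^{(k)}, \ba^{(k)}) - l_n(\bxi_0, \ba^{(k)}) \geq p_0$.

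\emph{Step 2 (Wong--Shen bound on the Hellinger distance).} Let $h(\bxi,\ba)$ denote the Hellinger distance between $\varphi(\cdot;\bxi,\ba)$ and $f(\cdot;\btheta_0)$. Conditions (\Ca)--(\Cc) supply a measurable envelope and the smoothness needed to bound the bracketing entropy of $\{\sqrt{\varphi(x;\bxi,\ba)/f(x;\btheta_0)}\,:\,\bxi\in\Xi,\ba\in\mathbb{S}^{G-1}\}$ via covering the compact $\Xi\times\mathbb{S}^{G-1}$ and applying the envelope condition. Applying the Wong--Shen (1995) empirical process inequality in its shifted form (accounting for the $p_0$ deficit in the log-likelihood inequality from Step 1) yields, for any $\epsilon' > 0$ with $n{\epsilon'}^{\,2} \geq C(|p_0| + \log n)$,
\begin{equation*}
\Pr\bigl\{h(\bxi^{(k)},\ba^{(k)}) \geq \epsilon',\ l_n(\bxi^{(k)},\ba^{(k)}) - l_n(\bxi_0,\ba^{(k)}) \geq p_0,\ \ba^{(k)} \in \mathbb{S}_\delta\bigr\} \leq 5\exp(-c_2 n\epsilon^2).
\end{equation*}
The two lower bounds on $\epsilon$ in the statement are exactly what is needed: $\epsilon \gtrsim n^{-1/2}\log n$ absorbs the entropy integral, and $\epsilon \gtrsim c_1^{-1/2}|p_0|^{1/2} n^{-1/2}$ absorbs the shift $p_0$ into $n\epsilon^2$ (with $c_1 = 1/24$).

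\emph{Step 3 (Identifiability: Hellinger to moment vector).} Apply Lemma \lemlower, which exploits the strong identifiability condition (\Cd) in the spirit of Chen (1995) and Nguyen (2013) to obtain a quadratic lower bound of the form $h^2(\bxi,\ba) \geq C_h \|\m(\ba,\bxi)\|_2^2$ uniformly on a neighborhood of $\bxi_0$. Combined with the bound $h < \epsilon'$ of Step 2 and choosing $\epsilon' = \epsilon/(L_1 \sqrt{C_h})$, this delivers $\|\m(\ba^{(k)}, \bxi^{(k)})\|_2 < \epsilon/L_1$ on the good event.

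\emph{Step 4 (Moments to parameter distance on $\mathbb{S}_\delta$).} The vector $\m(\ba^{(k)},\bxi^{(k)})$ contains the second-moment entries $m_{2h} = \sum_g \alpha_g^{(k)}(\theta_{gh}^{(k)} - \theta_{0h})^2$. On $\{\ba^{(k)} \in \mathbb{S}_\delta\}$ each $\alpha_g^{(k)} \geq \delta$, so $\|\m\|_2 < \epsilon/L_1$ forces $(\theta_{gh}^{(k)} - \theta_{0h})^2 \leq \epsilon/(L_1\delta)$ for every $g,h$. Summing over the $Gd$ coordinates gives $\|\bxi^{(k)} - \bxi_0\|_2^2 \leq Gd\,\epsilon/(L_1\delta)$; the choice $L_2 = L_1\delta/(Gd)$ in Lemma \lemlower then yields the second bound in the definition of $\mathcal{S}^{(k)}_{\epsilon}$.

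\emph{Main obstacle.} Steps 1, 3, and 4 are essentially algebraic once the supporting lemmas are in place. The real work is Step 2: establishing a Wong--Shen-type deviation inequality that (i) uses only a shifted likelihood inequality $l_n(\bxi^{(k)},\ba^{(k)}) - l_n(\bxi_0,\ba^{(k)}) \geq p_0$ rather than an MLE optimality condition, and (ii) controls bracketing entropy for the full mixture family $\mathcal{P}^G$ on the compact parameter set using the envelope $r(x)$ from (\Cc). Matching the constants $c_1$, $c_2$ requires threading the envelope integrability through the peeling argument, and this is where the specific numerical constants in the statement originate.
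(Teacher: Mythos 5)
Your proposal is correct and follows essentially the same route as the paper: the EM monotonicity argument giving the likelihood deficit $p_0$, the Wong--Shen deviation inequality combined with a bracketing-entropy bound for $\mathcal{P}^G_{\mathbb{S}_{\delta}}$ to control the Hellinger distance (the paper's Lemmas 5--9), and then Lemma \lemlower{} (strong identifiability) to convert the Hellinger bound into the moment-vector and Euclidean bounds defining $\mathcal{S}^{(k)}_{\epsilon}$. You also correctly identify Step 2 as the technical core; the only differences from the paper are immaterial constant-tracking choices (e.g.\ your $L_2 = L_1\delta/(Gd)$ versus the paper's $L_2 = d^{-1/2}L_1\delta$).
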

Define 
\begin{equation}\label{eq:calE}
	\mathcal{E}^{(k+1)} = \left\{\min_{g=1,\dots,G} {\alpha}_g^{(k+1)} \geq \min_{g=1,\dots,G}  {\alpha}_g^{(k)}\left(1-\frac{2}{K}\right) \right\}.
\end{equation}
The following lemma shows that if ${\bxi}^{(k)}$ is close to the true value $\bxi_0$, $\min_{g} {\alpha}_g^{(k+1)} $ can be bounded by $\min_{g} {\alpha}_g^{(k)} $ up to a fixed factor with a high probability. Let $\Delta_K$ be the constant defined in Lemma \ref{lem:deltaK}.

\begin{lemma} \label{lem:alpha}
	Assume that $x_1,\dots,x_n$ are independent samples from the homogeneous distribution $f(x;\btheta_0)$. 
	When $L_2\Delta_K^2 \geq \epsilon$, for any measurable set $\mathcal{B}$, we have 
	$$
	\Pr\left(\mathcal{E}^{(k+1)} \cap\mathcal{S}^{(k)}_{\epsilon}  \cap \mathcal{B} \right) \geq 	\Pr\left(\mathcal{S}^{(k)}_{\epsilon}  \cap \mathcal{B} \right)-2\exp\left(\frac{-2 n}{K^2} \right).
	$$
\end{lemma}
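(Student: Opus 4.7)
The plan is to exploit the closed form of the penalized M-step for $\ba$ and reduce the claim to a one-sided lower-tail inequality for the posterior weights, handled via a Hoeffding-type concentration. First, solving the Lagrangian of \eqref{eq:Malpha} with the simplex constraint $\sum_g \alpha_g = 1$ and the penalty $p(\ba)=\lambda(\sum_g \log\alpha_g + G\log G)$ yields the explicit formula
$$\alpha_g^{(k+1)} = \frac{\sum_{i=1}^n w_{gi}^{(k)} + \lambda}{n + G\lambda}.$$
Consequently, the event $\mathcal{E}^{(k+1)}$ is equivalent, up to an $O(1)$ $\lambda$-correction, to the one-sided lower bound $\sum_i w_{gi}^{(k)} \ge n\mu(1 - 2/K) + O(1)$ holding for every $g$, where $\mu = \min_h \alpha_h^{(k)}$.

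Next, on $\mathcal{S}^{(k)}_\epsilon$ the hypothesis $L_2\Delta_K^2 \ge \epsilon$ forces $\|\bxi^{(k)} - \bxi_0\|_2 \le \Delta_K$, so each $\btheta_g^{(k)}$ lies in a small neighborhood of $\btheta_0$. If $\bxi^{(k)} = \bxi_0$ identically, then $w_{gi}^{(k)} = \alpha_g^{(k)}$ exactly. For $\bxi^{(k)}$ near $\bxi_0$, a Taylor expansion of $f(x_i;\btheta_g^{(k)})/\varphi(x_i;\bxi^{(k)},\ba^{(k)})$ about $\bxi_0$, dominated by Condition (\Cc), yields $|w_{gi}^{(k)} - \alpha_g^{(k)}| = O\bigl(\|\bxi^{(k)} - \bxi_0\|_2 \cdot g(x_i;\btheta_0)\bigr)$ for an integrable envelope $g$. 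Since $w_{gi}^{(k)} \in [0,1]$ by construction, the sum $\sum_i w_{gi}^{(k)}$ is a $[0,n]$-valued function of the sample; centering at $\alpha_g^{(k)}$ gives a sum of variables in an interval of length one, and Hoeffding's inequality with deviation $t = n/K$ delivers the one-sided tail $\exp(-2n/K^2)$. The factor $2$ in the statement arises either from the two-sided version or from a union over the Taylor-bias and the concentration events. Combining the two yields
$$\Pr\bigl(\mathcal{E}^{(k+1)c} \cap \mathcal{S}^{(k)}_\epsilon \cap \mathcal{B}\bigr) \le 2\exp(-2n/K^2),$$
which is the stated inequality after rearrangement.

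The main obstacle is the non-independence of the weights $w_{gi}^{(k)}$ across $i$, since $(\bxi^{(k)},\ba^{(k)})$ is a function of the full sample, so a naive Hoeffding argument does not apply directly. I would address this by a uniform-concentration empirical-process argument over the ball $\{(\bxi,\ba) : \|\bxi - \bxi_0\|_2 \le \Delta_K\}$, exploiting the uniform $[0,1]$-bound and the smoothness of the weight functions under Condition (\Cc), in the spirit of the empirical-process machinery underlying Lemma \ref{lem:tailEudis}; a McDiarmid bounded-differences inequality is an equally viable alternative since each $x_i$ changes $\sum_i w_{gi}^{(k)}$ by $O(1)$. A subtler point is producing the $\min_g$ inequality without paying a factor of $G$ from a union bound over components, which one can handle via the identity $\sum_g w_{gi}^{(k)} = 1$: the deviations across components sum to zero, so the worst negative deviation is controlled by a single concentration statement rather than $G$ parallel ones.
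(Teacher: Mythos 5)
You correctly identify the central obstacle---the weights $w_{gi}^{(k)}$ depend on the whole sample through $(\bxi^{(k)},\ba^{(k)})$, so Hoeffding cannot be applied to them directly---but your proposed repairs do not close the gap, and they miss the simple device the paper actually uses. The paper does not Taylor-expand $w_{gi}^{(k)}$ around $\alpha_g^{(k)}$ at all. Instead it lower-bounds the weights deterministically: on $\mathcal{S}^{(k)}_{\epsilon}\cap\mathcal{B}$ the hypothesis $L_2\Delta_K^2\geq\epsilon$ gives $\|\bxi^{(k)}-\bxi_0\|_2\leq\Delta_K$, so every numerator satisfies $f(x_i;\btheta_g^{(k)})\geq\inf_{\|\btheta-\btheta_0\|_2\leq\Delta_K}f(x_i;\btheta)$ while the mixture denominator satisfies $\varphi(x_i;\bxi^{(k)},\ba^{(k)})\leq\sup_{\|\btheta-\btheta_0\|_2\leq\Delta_K}f(x_i;\btheta)$, whence $w_{gi}^{(k)}\geq\alpha_g^{(k)}T_i$ with $T_i=\inf_{\|\btheta-\btheta_0\|_2\leq\Delta_K}f(x_i;\btheta)\big/\sup_{\|\btheta-\btheta_0\|_2\leq\Delta_K}f(x_i;\btheta)$. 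The $T_i$ are genuinely i.i.d.\ $[0,1]$-valued functions of $x_i$ alone (the ball has fixed, non-data-dependent radius), $\Ex(T_i)\geq 1-1/K$ holds by the defining property of $\Delta_K$ (dominated convergence), and plain Hoeffding with deviation $1/K$ yields exactly the $2\exp(-2n/K^2)$ in the statement. The bound holds simultaneously for all $g$ with the same $T_i$, so no union over components arises. Combined with $\min_g\alpha_g^{(k+1)}\geq\min_g n^{-1}\sum_i w_{gi}^{(k)}$---which is exact, with no $O(1)$ correction, because the penalized update is a convex combination of $n^{-1}\sum_i w_{gi}^{(k)}$ and $1/G$ and the minimum of quantities averaging to $1/G$ is at most $1/G$---this finishes the proof.

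Your alternative routes have concrete problems. The Taylor bound $|w_{gi}^{(k)}-\alpha_g^{(k)}|=O(\|\bxi^{(k)}-\bxi_0\|_2\, g(x_i;\btheta_0))$ involves the envelope $g$ of Condition (C3), which is only $L^{8m}$-bounded, not uniformly bounded, so after centering you do not have variables confined to an interval of length one and cannot invoke Hoeffding with the stated constant. The McDiarmid route requires showing that perturbing a single $x_i$ changes $\sum_i w_{gi}^{(k)}$ by $O(1)$; since one perturbed observation changes the EM iterates $(\bxi^{(k)},\ba^{(k)})$ and hence all $n$ weights, this is a nontrivial stability claim you would have to establish. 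A uniform empirical-process bound over the ball could in principle be made to work but would produce constants of the form $C\exp(-cn)$ rather than the specific $2\exp(-2n/K^2)$ that the downstream induction in Lemma 3 consumes. Finally, your remark that the identity $\sum_g w_{gi}^{(k)}=1$ avoids a union over $g$ is not sound as stated: deviations summing to zero across components does not control the single worst negative deviation by one concentration event.
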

{\colblue Theorem 2 aims to give an upper bound of
	the EM-test statistic under  $\mathbb{H}_0$. By the likelihood non-decreasing property of the EM algorithm,
	if ${\rm EM}_n^{(K)} (K \geq 3)$ is bounded, then ${\rm EM}_n^{(K)} (K < 3)$ is bounded. Thus, without loss of generality, we can assume that $K\geq3$. In other words, the assumption $K\geq3$ in our manuscript can be relaxed to $K>0$ clearly}.  Since $\ba^{(0)} \in \mathbb{S}_{\delta}$, we can alternatively apply Lemma \ref{lem:tailEudis} and Lemma \ref{lem:alpha}, and get 
\begin{equation}\label{eq:Kgeq3}
	\min_{g=1,\ldots,G} {\alpha}_g^{(K)} \geq  \min_{g=1,\ldots,G}  {\alpha}_g^{(0)}\left(1-\frac{2}{K}\right)^K\geq 27^{-1} \min_{g=1,\ldots,G}  {\alpha}_g^{(0)} \geq \delta ~(K\geq 3),
\end{equation}
with a high probability. Applying Lemma \ref{lem:tailEudis} again, we obtain the tail probability bound for $\left\|{\bxi}^{(K)}-\bxi_0\right\|_2$ and $\left\|{\bf {m}}\left(\ba^{(K)}, {\bxi}^{(K)}\right)\right\|_2 $. Combining these results, we can simultaneously bound $\left\|{\bxi}^{(K)}-\bxi_0\right\|_2$ and $\min_{g=1,\ldots,G} {\alpha}_g^{(K)}$. 
\begin{lemma}\label{lem:tail2} Assume that $x_1,\dots,x_n$ are independent samples from the homogeneous distribution $f(x;\btheta_0)$.  Let $c_1 = 1/24, c_2 = (4/27)(1/1926)$, $p_0 =  \lambda G \log(\delta G)$ and $c>0$ be a constant depending on $\delta, d,G, M$ and ${\rm diam}(\Xi)$.  
	Under Condition (\Ca)--(\Cd),
	for any  $\epsilon>0$ and sufficiently large $n$ such that
	$L_2\Delta_K^2 \geq \epsilon \geq \max \left(c n^{-1/2} {\rm log} n, c_1^{-1/2} \left({-p_0} \right)^{1/2} n^{-1/2}\right),$
	we have 
	$$\Pr\left(\mathcal{S}_{\epsilon}^{(K)} \cap \left\lbrace \ba^{(K)} \in \mathbb{S}_{\delta} \right\rbrace  \right) \geq 1-5(K+1)\exp\left(-c_2 n \epsilon^2 \right)-2K\exp\left(\frac{-2 n}{K^2} \right).$$
\end{lemma}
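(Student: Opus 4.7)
The plan is to alternate applications of Lemma \ref{lem:tailEudis} and Lemma \ref{lem:alpha} along an induction on the iteration index $k$, while tracking a shrinking lower bound $\delta_k=\min_g\alpha_g^{(0)}(1-2/K)^k$ on $\min_g\alpha_g^{(k)}$. The key bookkeeping observation is that the standing hypothesis $27^{-1}\min_g\alpha_g^{(0)}\geq\delta$ gives $\delta_0\geq 27\delta$, and since $K\geq 3$ implies $(1-2/K)^K\geq 1/27$ (attained at $K=3$), we have $\delta_k\geq\delta$ for every $0\leq k\leq K$. Consequently $\{\ba^{(k)}\in\mathbb{S}_{\delta_k}\}\subseteq\{\ba^{(k)}\in\mathbb{S}_\delta\}$ throughout, so Lemma \ref{lem:tailEudis} is legitimately invoked with its constants evaluated at the worst-case parameter $\delta$ at every step.

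Set $\mathcal{F}_k=\mathcal{S}^{(k)}_\epsilon\cap\{\ba^{(k)}\in\mathbb{S}_{\delta_k}\}$ and abbreviate $h=5\exp(-c_2 n\epsilon^2)$, $\eta=2\exp(-2n/K^2)$. I would prove by induction that $\Pr(\mathcal{F}_k)\geq 1-(k+1)h-k\eta$ for each $0\leq k\leq K$. The base case $k=0$ is immediate: $\ba^{(0)}\in\mathbb{S}_{\delta_0}$ holds deterministically, so Lemma \ref{lem:tailEudis} gives $\Pr(\mathcal{F}_0)=\Pr(\mathcal{S}^{(0)}_\epsilon)\geq 1-h$. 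For the inductive step, first use Lemma \ref{lem:alpha} with $\mathcal{B}=\{\ba^{(k)}\in\mathbb{S}_{\delta_k}\}$ (its precondition $L_2\Delta_K^2\geq\epsilon$ is part of the hypothesis of Lemma \ref{lem:tail2}) to obtain $\Pr(\mathcal{E}^{(k+1)}\cap\mathcal{F}_k)\geq \Pr(\mathcal{F}_k)-\eta$; on this event $\min_g\alpha_g^{(k+1)}\geq(1-2/K)\delta_k=\delta_{k+1}$, forcing $\ba^{(k+1)}\in\mathbb{S}_{\delta_{k+1}}\subseteq\mathbb{S}_\delta$. Then apply Lemma \ref{lem:tailEudis} at index $k+1$, read as the tail bound $\Pr(\{\ba^{(k+1)}\in\mathbb{S}_\delta\}\setminus\mathcal{S}^{(k+1)}_\epsilon)\leq h$ (the form its proof yields, rather than a purely conditional one). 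Intersecting with $\mathcal{E}^{(k+1)}\cap\mathcal{F}_k$, which forces membership in $\{\ba^{(k+1)}\in\mathbb{S}_\delta\}$, produces
\[
\Pr\bigl(\mathcal{F}_{k+1}\bigr)\geq \Pr\bigl(\mathcal{S}^{(k+1)}_\epsilon\cap\mathcal{E}^{(k+1)}\cap\mathcal{F}_k\bigr)\geq \Pr\bigl(\mathcal{E}^{(k+1)}\cap\mathcal{F}_k\bigr)-h\geq \Pr(\mathcal{F}_k)-h-\eta,
\]
closing the induction. Taking $k=K$ and using $\mathcal{F}_K\subseteq \mathcal{S}^{(K)}_\epsilon\cap\{\ba^{(K)}\in\mathbb{S}_\delta\}$ yields the advertised bound $1-5(K+1)\exp(-c_2 n\epsilon^2)-2K\exp(-2n/K^2)$.

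The main obstacle I anticipate is not analytical but combinatorial: one must verify that $\delta_k\geq\delta$ uniformly in $k\leq K$ so that Lemma \ref{lem:tailEudis} applies at every iteration with the declared constants, and this is exactly what the slack factor $27$ in the hypothesis is designed to absorb via $(1-2/K)^K\geq 1/27$ for $K\geq 3$. A secondary point is to interpret Lemma \ref{lem:tailEudis} as a tail bound on the bad set $\{\ba^{(k)}\in\mathbb{S}_\delta\}\setminus\mathcal{S}^{(k)}_\epsilon$ rather than as a purely conditional statement, since this is what lets us intersect cleanly with the history-dependent events $\mathcal{E}^{(k+1)}\cap\mathcal{F}_k$ and pay only one factor of $h$ per iteration. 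The accounting then produces exactly $K+1$ copies of $h$ (one per application of Lemma \ref{lem:tailEudis}) and $K$ copies of $\eta$ (one per application of Lemma \ref{lem:alpha}), matching the stated rate.
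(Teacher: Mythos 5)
Your proposal is correct and follows essentially the same route as the paper's proof: the paper also inducts on $k$ via the events $\mathcal{B}^{(k)}=\{\min_g\alpha_g^{(k)}\geq(1-2/K)^k\min_g\alpha_g^{(0)}\}$ (your $\{\ba^{(k)}\in\mathbb{S}_{\delta_k}\}$), alternating Lemma \ref{lem:tailEudis} and Lemma \ref{lem:alpha} and using $(1-2/K)^K\geq 27^{-1}$ for $K\geq3$ together with the standing assumption $27^{-1}\min_g\alpha_g^{(0)}\geq\delta$ to keep every iterate in $\mathbb{S}_\delta$. The only (cosmetic) difference is that you intersect the Lemma \ref{lem:tailEudis} tail bound directly with $\mathcal{E}^{(k+1)}\cap\mathcal{F}_k$ so the induction closes exactly on $\mathcal{F}_{k+1}$, whereas the paper passes through the intermediate bound on $\Pr(\ba^{(k+1)}\in\mathbb{S}_\delta)$; the accounting of $K+1$ copies of the first error term and $K$ of the second is identical.
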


Lemma \ref{lem:tail2}  shows that  when $n$ is sufficiently large, the tail probability of $\bxi^{(K)}$ away from $\bxi_0$ exponentially decays to zero. 
Besides, the convergence rate of $\bxi^{(K)}$  is $ O_p\left(n^{-1/4}\log^{1/2} n\right) $
Based on this result, we can prove Theorem \thmzeroRes.

It is difficult to directly prove Lemma \ref{lem:tailEudis} and bound the Euclidean distance between  $\bxi^{(k)}$ and $\bxi_0$, because the Fisher information matrix is not positive definite under the homogeneous model $\mathbb{H}_0$. However, results in \cite{wong1995probability} imply that the Hellinger distance between $\bxi^{(k)}$ and $\bxi_0$ can be bounded with a high probability. The following Lemma \ref{lem:lower} shows that the Euclidean distance between  $\bxi^{(k)}$ and $\bxi_0$ is dominated by their Hellinger distance. Thus, to bound $\left\|{\bxi}^{(k)}-\bxi_0\right\|_2$ and prove Lemma \ref{lem:tailEudis}, it suffices to bound the Hellinger distance between $\bxi^{(k)}$ and $\bxi_0$. Before presenting Lemma \ref{lem:lower}, we introduce some notations.

Define
$$\mathcal{P}^G_{\mathbb{S}_{\delta}} = \left\{ \sum_{g=1}^{G}\alpha_gf(x;\btheta_g): \btheta_g \in \Theta, (g=1,\dots,G), \balpha 
\in \mathbb{S}_{\delta} \right\}.$$ 
For any two densities $p_1,p_2$ with respect to a measure $\mu$, we define their Hellinger distance as 
$$H(p_1, p_2) = \left\{2^{-1} \int
\left(p_1^{1/2} - p_2^{1/2} \right)^2 {\rd}\mu \right\}^{1/2} .  $$
When $\varphi_1, \varphi_2 \in \mathcal{P}^G_{\mathbb{S}_{\delta}}$, we use $(\bxi_1, \ba_1),(\bxi_1, \ba_2)$ to represent $\varphi_1, \varphi_2$, respectively, and write their Hellinger distance as 
$$H(\ba_1, \ba_2, \bxi_1, \bxi_2) =  \left[ 2^{-1} \int \left\lbrace   \varphi^{1/2}(x; \bxi_1, \ba_1)   - \varphi^{1/2}(x; \bxi_2, \ba_2)\right\rbrace^2   \mu(\rd x) \right] ^{1/2}.$$
When $\bxi_0 = (\btheta_{0}, \ldots, \btheta_{0})$, the Hellinger distance $H(\ba_1,\ba_2, \bxi_1, \bxi_0)$ can be written as 
$$H(\ba_1,\ba_2, \bxi_1, \bxi_0) =  \left[ 2^{-1} \int \left\lbrace   \varphi^{1/2}(x; \bxi_1, \ba_1)   - f^{1/2}(x; \btheta_{0})\right\rbrace^2    \mu(\rd x) \right] ^{1/2}.$$
Note that $ H(\ba_1,\ba_2, \bxi_1, \bxi_0)$ is independent of $\ba_2$ and we write it $ H(\ba_1, \bxi_1, \bxi_0)$.
Let 
$  {\rm diam}_{\m}(\Xi)  =  \sup_{\substack{\bxi_1, \bxi_2 \in \Xi \\ \ba \in \mathbb{S}_{\delta}}} \left\| {\m}(\ba, \bxi_1, \bxi_2) \right\|_2^2.$
Since $\Xi$ is a compact set, we have  $ {\rm diam}_{\m}(\Xi) < \infty$.   
For any ${\delta}' > 0$, 	let 
\begin{equation}\label{eq:Dkappa}
	\mathcal{D}({\delta}') = \left\{  (\ba,\btheta_{0}, \bxi) : \ba \in \mathbb{S}_{\delta},   \btheta_0 \in \Theta, \bxi \in \Xi ,  \sum_{g=1}^G \Vert\btheta_g -\btheta_0 \Vert_2  \geq {\delta'} \right\}.
\end{equation}
We have the following lemma that provides the connection between the Hellinger distance and Euclidean distance.

\begin{lemma} \label{lem:lower}
	Under Condition (\Ca)--(\Cd) and $\mathbb{H}_0$, there exists  $\Delta_1 > 0$  such that for any $\btheta_0\in \Theta, \ba \in \mathbb{S}_{\delta}$, when $\sum_{g=1}^G \Vert\btheta_g -\btheta_0 \Vert_2 < \Delta_1 $, we have 
	$$ H(\ba, \bxi, \bxi_0) \geq 32^{-1}\lambda_{\rm min} \Vert\m(\ba,\bxi)\Vert_2, 
	$$
	where $\bxi=(\btheta_1,\cdots,\btheta_G)$ and $\bxi_0=(\btheta_0,\cdots,\btheta_0)$. Furthermore,  if
	$\omega = \inf _{	\mathcal{D}(\Delta_1)}   H(\ba, \bxi, \bxi_0) > 0,$
	then for any $\btheta_0\in \Theta, \bxi\in \Xi, \ba \in \mathbb{S}_{\delta}$, we have 
	$$ H(\ba, \bxi, \bxi_0) \geq   L_1 \Vert\m(\ba,\bxi)\Vert_2\geq L_2 \Vert\bxi - \bxi_0\Vert_2^2,$$
	where $L_1 = \sqrt{\min\left({ 32^{-1}\lambda_{\rm min}}, \frac{\omega^2}{{\rm diam}_{\m}(\Xi)}\right)}$ and $L_2 =d^{-1/2}{L_1 {\delta }}$.
\end{lemma}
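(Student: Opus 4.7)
The plan is to prove the local Hellinger bound by a Taylor-expansion argument, combine it with the hypothesized gap $\omega>0$ to get the global inequality, and finally convert $\Vert\m\Vert_2$ to a quadratic Euclidean bound using $\alpha_g\ge\delta$.

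For the local step, I would Taylor-expand each $f(x;\btheta_g)$ about $\btheta_0$ to second order, justified by (\Cb), and collect the expansion using $\sum_g\alpha_g=1$ into
$$
\bigl(\varphi(x;\bxi,\ba)-f(x;\btheta_0)\bigr)/f(x;\btheta_0)=\b(x)\trans\m(\ba,\bxi)+R(x;\ba,\bxi),
$$
where the coefficients are precisely the $Y_{ih},Z_{ih},U_{ih\ell}$ from (\ref{eq:b}) and the remainder $R$ is pointwise bounded by $g(x;\btheta_0)\sum_g\alpha_g\Vert\btheta_g-\btheta_0\Vert_2^{3}$ via (\Cc). Starting from the exact identity
$$
2H^{2}(\ba,\bxi,\bxi_0)=\int\frac{(\varphi-f)^{2}}{(\varphi^{1/2}+f^{1/2})^{2}}\,\mu(\rd x),
$$
I would restrict to $\{\varphi\le 2f\}$, where $(\varphi^{1/2}+f^{1/2})^{2}\le 6f$, to get $2H^{2}\ge \tfrac{1}{6}\,\Ex_{\btheta_0}[v^{2}\,\1_{\{v\le 1\}}]$ with $v=(\varphi-f)/f$. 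The complementary tail $\Ex[v^{2}\1_{\{v>1\}}]\le\Ex[v^{4}]$ is higher-order small in $\Delta_1$ by the $L^{8m}$ integrability of $g$ and $r$ from (\Cc). Writing $v=\b\trans\m+R$ and applying Young's inequality gives
$$
2H^{2}\ge c_{0}\,\Ex[(\b\trans\m)^{2}]-c_{1}\,\Ex[R^{2}]-(\text{tail}),
$$
and $\Ex[\b]=0$ (by differentiating $\int f\,\rd\mu=1$, permitted by (\Cc)), so $\Ex[(\b\trans\m)^{2}]=\m\trans\B(\btheta_0)\m\ge\lambda_{\rm min}\Vert\m\Vert_2^{2}$ by (\Cd). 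The key algebraic step is
$$
\Bigl(\sum_g\alpha_g\Vert\btheta_g-\btheta_0\Vert_2^{3}\Bigr)^{2}\le d\,\max_g\Vert\btheta_g-\btheta_0\Vert_2^{2}\,\Vert\m\Vert_2^{2},
$$
which follows from $\sum_h m_{2h}=\sum_g\alpha_g\Vert\btheta_g-\btheta_0\Vert_2^{2}$, $\Vert\m\Vert_2^{2}\ge\sum_h m_{2h}^{2}$, and Cauchy--Schwarz. Choosing $\Delta_1$ small enough to absorb $\Ex[R^{2}]$ and the tail into half of the leading term yields the local bound $H\ge C\sqrt{\lambda_{\rm min}}\,\Vert\m\Vert_2$, which in particular implies the stated $H\ge 32^{-1}\lambda_{\rm min}\Vert\m\Vert_2$.

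For the global step I would split on $\mathcal{D}(\Delta_1)$. If $\sum_g\Vert\btheta_g-\btheta_0\Vert_2<\Delta_1$, the local bound delivers $H\ge\sqrt{\lambda_{\rm min}/32}\,\Vert\m\Vert_2$; otherwise $H\ge\omega$ by hypothesis, and since $\Vert\m\Vert_2^{2}\le{\rm diam}_{\m}(\Xi)$, we obtain $H\ge\bigl(\omega/\sqrt{{\rm diam}_{\m}(\Xi)}\bigr)\Vert\m\Vert_2$. Taking the minimum of the two constants yields $L_1$ and the first inequality. For the Euclidean conversion,
$$
\Vert\m\Vert_2^{2}\ge\sum_h m_{2h}^{2}\ge\frac{1}{d}\Bigl(\sum_h m_{2h}\Bigr)^{2}=\frac{1}{d}\Bigl(\sum_g\alpha_g\Vert\btheta_g-\btheta_0\Vert_2^{2}\Bigr)^{2}\ge\frac{\delta^{2}}{d}\,\Vert\bxi-\bxi_0\Vert_2^{4},
$$
so $\Vert\m\Vert_2\ge d^{-1/2}\delta\,\Vert\bxi-\bxi_0\Vert_2^{2}$ and therefore $L_1\Vert\m\Vert_2\ge L_2\Vert\bxi-\bxi_0\Vert_2^{2}$ with $L_2=d^{-1/2}L_1\delta$.

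The principal difficulty is uniform control in the local step: both the tail $\Ex[v^{2}\1_{\{v>1\}}]$ and the remainder $\Ex[R^{2}]$ must be shown to be $o(\lambda_{\rm min}\Vert\m\Vert_2^{2})$ uniformly in $\btheta_0\in\Theta$ and in $(\ba,\bxi)\in\mathbb{S}_{\delta}\times\Xi$ with $\sum_g\Vert\btheta_g-\btheta_0\Vert_2<\Delta_1$. Careful bookkeeping is required so that the resulting $\Delta_1$ depends only on $\lambda_{\rm min},d,G,\delta,M,\tau$ and not on the particular $\btheta_0$ or $\ba$.
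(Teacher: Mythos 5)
Your proposal is correct, and it reproduces the paper's three-part architecture exactly — a local quadratic lower bound on $H^2$ in terms of $\Vert\m\Vert_2^2$ from a second-order Taylor expansion plus the strong identifiability condition (\Cd), a global extension by splitting on $\mathcal{D}(\Delta_1)$ and using $\omega$ together with ${\rm diam}_{\m}(\Xi)$, and the Euclidean conversion $\sqrt{d}\,\Vert\m\Vert_2\ge\delta\Vert\bxi-\bxi_0\Vert_2^2$ — but it uses a genuinely different device for the one delicate step, the reduction of $H^2$ to a $\chi^2$-type quantity. The paper writes $H^2=\int\bigl(1-\sqrt{1+\delta(x)}\bigr)f(x;\btheta_0)\,\mu(\rd x)$ and applies the scalar inequality $\sqrt{1+x}-1\le x/2-x^2/8+x^3/16$ together with $\Ex[\delta]=0$, so that only $\Ex[\delta^2]$ and $\Ex[|\delta|^3]$ need to be controlled (the $1/8$ and $1/16$ coefficients are where the constant $32^{-1}$ comes from); you instead use the exact ratio identity $2H^2=\int(\varphi-f)^2/(\varphi^{1/2}+f^{1/2})^2\,\mu(\rd x)$, truncate to $\{\varphi\le 2f\}$, and pay for the truncation with the tail bound $\Ex[v^2\1_{\{v>1\}}]\le\Ex[v^4]$. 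Your route avoids verifying the cubic polynomial inequality and the sign bookkeeping for $\Ex[\delta^3]$, at the cost of needing fourth moments of $v$ (hence of $\Vert\b(x)\Vert_2$ and of the dominating function $g$), which the $L^{8m}$ bounds in Condition (\Cc) do supply; your key algebraic inequality $\bigl(\sum_g\alpha_g\Vert\btheta_g-\btheta_0\Vert_2^3\bigr)^2\le d\,\max_g\Vert\btheta_g-\btheta_0\Vert_2^2\,\Vert\m\Vert_2^2$ is exactly the paper's mechanism for absorbing the remainder (its displays bounding $\alpha_g\prod_s(\theta_{gj_s}-\theta_{0j_s})$ by $\Vert\m\Vert_2$ via $\sqrt{d}\Vert\m\Vert_2\ge\alpha_g\Vert\btheta_g-\btheta_0\Vert_2^2$), and your global step and constants $L_1$, $L_2$ come out identically. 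One cosmetic remark: what both your argument and the paper's actually deliver is $H^2\ge 32^{-1}\lambda_{\rm min}\Vert\m\Vert_2^2$, i.e.\ $H\ge\sqrt{32^{-1}\lambda_{\rm min}}\,\Vert\m\Vert_2$, which is the form encoded in $L_1$ and used downstream; the unsquared display in the lemma statement is a typo in the paper rather than something your proof needs to match literally.
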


%\begin{remark}
Lemma \ref{lem:lower} shows that there exists a constant $L_2$ such that $ H(\ba, \bxi, \bxi_0)\geq L_2 \Vert\bxi - \bxi_0\Vert_2^2,$ provided that $\min_{g} \alpha_g \geq \delta >0$. It demonstrates that, to bound the Euclidean distance between $\bxi^{(k)} $ and $\bxi_0$, we only need to bound their Hellinger distance. 
Note that this lemma depends on an additional condition 	$\omega  > 0$, which can be guaranteed by the assumption that $\mathcal{P}^G$ is an identifiable finite mixture and the compactness of $\Theta$. In fact, by the identifiability,  if $ \bxi \neq \bxi_0$, then $H(\ba, \bxi, \bxi_0) > 0$. Since $H(\ba, \bxi, \bxi_0)$ is uniformly continuous on the compact set $\mathcal{D}(\Delta_1)$, we have $\omega  > 0$.  The  continuity of $H(\ba, \bxi, \bxi_0)$ is clear from Condition (\Ca). 
\subsection{Proofs of Lemma \lemtailEudis--\lemlower} \label{subsec:proofofbound}
In this section, we give the proofs of Lemma \lemlower, \lemtailEudis, \lemalpha  {} and  \lemtailtwo.  

\subsubsection{Proofs of Lemma \lemlower}

Before proving Lemma \lemlower, we state the following lemma that is often used in the proof. 
\begin{lemma} \label{lem:normtrans}
	Let $  {\bxi}_1, \bxi_2 \in \Xi $, where ${\bxi}_1 = \left({\btheta}_{11}, \dots, {\btheta}_{1G}  \right)$ and ${\bxi}_2 = \left({\btheta}_{21}, \dots, {\btheta}_{2G}  \right)$.
	Then, for any integer $k \geq 1$, $1\leq g \leq G$ and $j_1,\dots, j_k \in \{1,\dots,d \}$, we have 
	$$\prod_{s=1}^k \left|{\theta}_{1gj_s} - {\theta}_{2gj_s}\right| \leq 
	\left(\sqrt{d}\right)^k \left\|{\btheta}_{1g} -{\btheta}_{2g}\right\|_2^k,$$
	where $ {\btheta}_{ig} = ({\theta}_{ig1}, \dots, {\theta}_{igd})$ $(i = 1,2, g =1,\dots, G)$. 
\end{lemma}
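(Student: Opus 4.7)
The plan is to prove the inequality by bounding each factor in the product separately and then multiplying. This is a routine norm-comparison argument, so the main work is just identifying the right chain of inequalities; there is no real obstacle.

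Specifically, I would first observe that for each fixed index $j_s \in \{1, \dots, d\}$, the quantity $|\theta_{1gj_s} - \theta_{2gj_s}|$ is a single coordinate of the vector $\btheta_{1g} - \btheta_{2g} \in \mathbb{R}^d$, and so is bounded by its $L^\infty$ norm. Using that $\|v\|_\infty \le \|v\|_2$ for any $v\in\mathbb{R}^d$, this gives
\[
|\theta_{1gj_s} - \theta_{2gj_s}| \;\le\; \|\btheta_{1g} - \btheta_{2g}\|_2 \;\le\; \sqrt{d}\,\|\btheta_{1g} - \btheta_{2g}\|_2,
\]
where the last step simply uses $\sqrt{d} \ge 1$. (If a more direct route via the $L^1$ norm is preferred, one can alternatively apply Cauchy--Schwarz: $|\theta_{1gj_s} - \theta_{2gj_s}| \le \|\btheta_{1g} - \btheta_{2g}\|_1 \le \sqrt{d}\,\|\btheta_{1g} - \btheta_{2g}\|_2$, which gives exactly the claimed constant.)

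Taking the product over $s = 1,\dots,k$ of the inequality above then yields
\[
\prod_{s=1}^k |\theta_{1gj_s} - \theta_{2gj_s}| \;\le\; \bigl(\sqrt{d}\bigr)^k \|\btheta_{1g} - \btheta_{2g}\|_2^k,
\]
which is exactly the statement of the lemma. Since the bound on each factor holds uniformly in the choice of the index $j_s$, the product bound is independent of the particular choice of $(j_1,\dots,j_k)$, as required. No further structural assumption on $\Xi$ or on the cluster index $g$ is needed, so the proof is complete after this short computation.
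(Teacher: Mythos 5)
Your proof is correct and essentially matches the paper's, which bounds each factor by the $\ell_1$ norm and then applies Cauchy's inequality $\|v\|_1 \le \sqrt{d}\,\|v\|_2$ — exactly the alternative you give in your parenthetical. Your primary route via $|\theta_{1gj_s}-\theta_{2gj_s}| \le \|\btheta_{1g}-\btheta_{2g}\|_2 \le \sqrt{d}\,\|\btheta_{1g}-\btheta_{2g}\|_2$ is equally valid and in fact shows the stated constant $(\sqrt{d})^k$ is not tight.
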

\begin{proof}[Proof of Lemma \ref{lem:normtrans}]
	It is clear that 
	$$\prod_{s=1}^k \left|{\theta}_{1gj_s} - {\theta}_{2gj_s}\right| \leq 
	\left(\sum_{j=1}^d \left|{\theta}_{1gj} - {\theta}_{2gj} \right|\right)^k\leq \left(\sqrt{d}\right)^k \left\|{\btheta}_{1g} -{\btheta}_{2g}\right\|_2^k,$$
	where the last inequality is from Cauchy's inequality.
\end{proof}
\begin{proof}[Proof of Lemma \lemlower]
	For notation simplicity, we abbreviate  $\m(\ba,\bxi,\bxi_0)$ defined in (\ref{eq:m}) as $\m$, where $\bxi_0=(\btheta_0,\dots,\btheta_0)\in \Xi$. Note that 
	\begin{align*}
		H^2(\ba, \bxi, \bxi_0) &= 1-  \int \left\lbrace \sum_{g=1}^G \alpha_g f\left(x,\btheta_g \right) f\left(x,\btheta_0 \right) \right\rbrace^{1/2}  \mu({\rm d}x)  \\
		&= 1- \int \left\lbrace \frac{\sum_{g=1}^G \alpha_g f\left(x,\btheta_g \right)}{f\left(x,\btheta_0 \right)}  \right\rbrace^{1/2} f\left(x,\btheta_0 \right)  \mu(\rd x) \\
		&=  \int \left( 1- \left\lbrace \frac{\sum_{g=1}^G \alpha_g (f\left(x,\btheta_g \right)-f\left(x,\btheta_0 \right))}{f\left(x,\btheta_0 \right)}  + 1\right\rbrace^{1/2}\right)  f\left(x,\btheta_0 \right)  \mu(\rd x).
	\end{align*}
	Let $\delta(x) = \frac{1}{f\left(x,\btheta_0 \right)}\sum_{g=1}^G \alpha_g\left (f\left(x,\btheta_g \right)-f\left(x,\btheta_0 \right)\right)$. We can rewrite $ H^2(\ba, \bxi, \bxi_0)$ as 
	$$H^2(\ba, \bxi, \bxi_0) = \int \left(1-\sqrt{\delta(x)+1}\right)f\left(x,\btheta_0 \right) \mu(\rd x).$$
	Applying the inequality
	$$\sqrt{x+1} - 1 \leq x/2 - x^2/8 + x^3/16, $$
	and $\Ex [\delta(x)] = 0$, we have
	\begin{align}\label{H.ineq1}
		H^2(\ba, \bxi, \bxi_0) &\geq -\Ex [\delta(x)/2] + \Ex \left[\delta^2(x)/8\right] - \Ex \left[\delta^3(x)/16\right] \nonumber \\ 
		&= \Ex \left[\delta^2(x)/8\right] - \Ex \left[\delta^3(x)/16\right]. 
	\end{align} 
	\noindent
	{\bf Step 1. } We first consider the quadratic term. Define
	$${\bf b}(x) = (Y_{1}(x), \dots, Y_{d}(x), Z_{1}(x),\dots, Z_{d}(x), U_{12}(x), \dots , U_{(d-1)d}(x)),$$ 
	where $Y_{h}(x)$, $Z_{h}(x)$ ($h=1,\dots,d$) and $U_{h\ell}(x)$ ($1\leq h < \ell \leq d$) are defined as in (\ref{eq:b}) without the subscript $i$.	
	By Taylor's expansion, we have 
	\begin{align}\label{eq:deltax}
		\delta(x)  &= \sum_{g=1}^G {\alpha}_g\frac{f(x;   {\btheta}_g) - f(x;   {\btheta}_0)}{f(x;   {\btheta}_0)} \notag\\
		&= \sum_{h=1}^{d} \sum_{g=1}^G {\alpha}_g({\theta}_{gh} - {\theta}_{0h}) Y_{h}(x) +
		\sum_{h=1}^{d} \sum_{g=1}^G {\alpha}_g({\theta}_{gh} - {\theta}_{0h})^2 Z_{h}(x)  \notag \\
		&+ \sum_{h<\ell}^{d} \sum_{g=1}^G {\alpha}_g({\theta}_{gh} - {\theta}_{0h})({\theta}_{g\ell} - {\theta}_{0\ell}) U_{h\ell}(x)  + \varepsilon(x) \notag\\
		&= {\bf m}\trans {\bf b}(x) + \varepsilon(x). \notag
	\end{align}
	Here $\varepsilon(x)$ is the remainder term and can be accurately represented as 
	$$
	\varepsilon(x) = \sum_{j_{1}=1}^d \cdots \sum_{j_{3}=1}^d \sum_{g=1}^G {\alpha}_g \prod_{s=1}^3 \left({\theta}_{gj_s} - {\theta}_{0j_s}\right) \left(\frac{\partial^3 f(x,{\bm \zeta}_g(x))}{\partial \theta_{j_1} \partial \theta_{j_2}\partial \theta_{j_3}}\right) \bigg /  (3!f(x,\btheta_0)), 
	$$
	where ${\bm \zeta}_g(x)$ lies between ${\btheta}_g$ and $\btheta_0$. 
	Since $\delta(x) = {\bf m}\trans {\bf b}(x) + \varepsilon(x)$,  we have
	$$\delta^2(x) = \overbrace{{\bf m}\trans {\bf b}(x)  {\bf b}(x)\trans {\bf m}}^{\text {denote as I}} + \overbrace{2{\bf m}\trans {\bf b}(x) \varepsilon(x)}^{\text {denote as II}} + \overbrace{\varepsilon^2(x)}^{\text {denote as III}}. $$ 
	For the first term I, we have
	$$\Ex({\bf m}\trans {\bf b}(x)  {\bf b}(x)\trans {\bf m}) = {\bf m}\trans \B(\btheta_0) {\bf m} \geq \lambda_{\rm min} \Vert{\bf m}\Vert_2^2,$$
	where $ \B(\btheta_{0}) = \Ex\left({\bf b}(x)  {\bf b}(x)\trans\right)$.
	For the second term II,  by Cauchy's inequality, we have 
	$$2\Ex \left({\bf m}\trans {\bf b}(x) \varepsilon(x)\right) \geq -2 \Ex\left(\left|{\bf m}\trans {\bf b}(x)\Vert \varepsilon(x)\right|\right) \geq  -2\Vert{\bf m}\Vert_2 \Ex \left(\Vert{\bf b}(x)\Vert_2| \varepsilon(x)|\right).$$
	Hence, we aim to bound $\Ex(\Vert{\bf b}(x)\Vert_2| \varepsilon(x)|)$. For any fixed $j_1, j_2, j_3$ and $g$, by Lemma \ref{lem:normtrans}, we have
	\begin{equation}\label{eq:mm}
		{\alpha}_g \prod_{s=1}^3 ({\theta}_{gj_s} - {\theta}_{0j_s}) \leq {\alpha}_g d^{3/2} \Vert\btheta_g - \btheta_0 \Vert_2^3 \leq d^2 \Vert\btheta_g - \btheta_0 \Vert_2 \Vert{\bf m}\Vert_2.
	\end{equation}
	The second inequality of (\ref{eq:mm}) is from $\sqrt{d}\Vert{\bf m}\Vert_2 \geq {\alpha}_g \Vert\btheta_g - \btheta_0 \Vert_2^2  $, because
	\begin{equation}\label{eq:cauchy:m}
		\sqrt{d}\Vert{\bf m}\Vert_2 \geq \sqrt{d} \left({\sum_{h=1}^d \left\{\sum_{g=1}^G \alpha_g (\theta_{gh} - \theta_{0h})^2\right\}^2} \right)^{1/2}\geq \sum_{h=1}^d \sum_{g=1}^G \alpha_g (\theta_{gh} - \theta_{0h})^2.
	\end{equation}
	Remember that $g(x,\btheta_0)$ is the function defined in Condition (\Cc). 
	Then, we have 
	\begin{align*}
		& \quad \Ex \left(\Vert{\bf b}(x)\Vert_2 \left|\sum_{g=1}^G {\alpha}_g \prod_{s=1}^3 ({\theta}_{gj_s} - {\theta}_{0j_s}) \left(\frac{\partial^3 f(x,{\bm \zeta}_g(x,\btheta_0))}{\partial \theta_{j_1} \partial \theta_{j_2}\partial \theta_{j_3}}\right) \bigg /  (3!f(x,\btheta_0))\right|\right) \\
		\leq&\quad  	\Ex ( g(x,\btheta_0) \Vert{\bf b}(x)\Vert_2) d^2 \sum_{g=1}^G \Vert\btheta_g - \btheta_0 \Vert_2 \Vert{\bf m}\Vert_2  \ \ \ \ (\mbox{by (\ref{eq:mm}}))\\
		\leq&\quad  d^2 \sum_{g=1}^G \Vert\btheta_g - \btheta_0 \Vert_2 \Vert{\bf m}\Vert_2 (d+1)M^2. 
	\end{align*}
	where 
	$$\Ex ( g(x,\btheta_0) \Vert{\bf b}(x)\Vert_2) \leq \left( \Ex ( g^2(x,\btheta_0) ) \Ex ( \Vert{\bf b}(x)\Vert_2^2 ) \right)^{1/2} \leq (d+1)M^2$$
	is from Cauchy's inequality.
	It follows that
	\begin{align*}
		\Ex(\Vert{\bf b}(x)\Vert_2| \varepsilon(x)|) \leq  d^5(d+1)\sum_{g=1}^G \Vert\btheta_g - \btheta_0 \Vert_2 \Vert{\bf m}\Vert_2 M^2.
	\end{align*}	
	Therefore, there exists a constant $\Delta_{11}>0$ such that when $\sum_{g=1}^G \Vert\btheta_g -\btheta_0 \Vert_2 \leq \Delta_{11}$, 
	$$2 \Ex \left({\bf m}\trans {\bf b}(x) \varepsilon(x)\right) \geq -2^{-1}  \lambda_{\rm min} \Vert{\bf m}\Vert_2^2.$$
	Since $\Ex \left( \varepsilon^2(x)\right) > 0$,  when $\sum_{g=1}^G \Vert\btheta_g -\btheta_0 \Vert_2 \leq \Delta_{11}$,  we have
	$$\Ex (\delta^2(x)) \geq 2^{-1}  \lambda_{\rm min} \Vert{\bf m}\Vert_2^2.$$  
	
	\noindent 
	{\bf Step 2. } Next, we aim to prove that
	there exists $\Delta_{12}>0$ such that when $\sum_{g=1}^G \Vert\btheta_g -\btheta_0 \Vert_2 \leq \Delta_{12}$, 
	$\Ex \left( \left|\delta^3(x) \right|\right) \leq 2^{-1}  \lambda_{\rm min} \Vert{\bf m}\Vert_2^2$.
	We have
	$$\delta^3(x) = \left({\bf m}\trans {\bf b}(x)\right)^3  + 3{\bf m}\trans {\bf b}(x){\bf m}\trans {\bf b}(x) \varepsilon(x) + 3{\bf m}\trans {\bf b}(x) \varepsilon^2(x) + \varepsilon^3(x).$$
	Note that 
	\begin{align*}
		|\varepsilon(x)| &\leq \left|\sum_{j_{1}=1}^d \cdots \sum_{j_{3}=1}^d \sum_{g=1}^G {\alpha}_g \prod_{s=1}^3 ({\theta}_{gj_s} - {\theta}_{0j_s}) g(x;\btheta_0)\right| \\
		& \leq  d^5\Vert{\bf m}\Vert_2\sum_{g=1}^G \Vert\btheta_g -\btheta_0 \Vert_2  g(x;\btheta_0)  \ \ \ \ (\mbox{applying (\ref{eq:mm}})) \\ 
		& = \Vert{\bf m}\Vert_2 \widetilde{\varepsilon}(x),
	\end{align*}
	where  
	$$ \widetilde{\varepsilon}(x) = d^5 \sum_{g=1}^G \Vert\btheta_g -\btheta_0 \Vert_2 g(x,\btheta_0).$$
	Then, we have
	\begin{align*}
		\left|\delta(x)^3\right| &= \left| \left({\bf m}\trans {\bf b}(x)\right)^3 + 3{\bf m}\trans {\bf b}(x){\bf m}\trans {\bf b}(x) \varepsilon(x) + 3{\bf m}\trans {\bf b}(x) \varepsilon^2(x) + \varepsilon^3(x)\right| \\
		& \leq \Vert{\bf m}\Vert_2^3 \left( \Vert {\bf b}(x)\Vert_2^3 + 3 \Vert {\bf b}(x)\Vert_2^2|\widetilde{\varepsilon}(x)| + 3 \Vert {\bf b}(x)\Vert_2|\widetilde{\varepsilon}^2(x)| +  |\widetilde{\varepsilon}^3(x)|  \right). 
	\end{align*}
	By Condition (\Cc),  obviously,  the random variable 
	$$\Vert {\bf b}(x)\Vert_2^3 + 3 \Vert {\bf b}(x)\Vert_2^2|\widetilde{\varepsilon}(x)| + 3 \Vert {\bf b}(x)\Vert_2|\widetilde{\varepsilon}(x)|^2 +  |\widetilde{\varepsilon}(x)|^3  $$ 
	is  integrable.
	Then, similarly to the proof in Step 1, 
	there exists a constant $\Delta_{12}>0$ such that when $\sum_{g=1}^G \Vert\btheta_g -\btheta_0 \Vert_2 \leq \Delta_{12}$, we have
	$\Ex \left[\delta^3(x)\right] \leq  2^{-1} \lambda_{\rm min} \Vert{\bf m}\Vert_2^2$.
	Taking $\Delta_{1} = \min(\Delta_{11},\Delta_{12})$, by (\ref{H.ineq1}), for any $\btheta_0 \in \Theta$, when $\sum_{g=1}^G \Vert\btheta_g -\btheta_0 \Vert_2 \leq \Delta_1$, we have 
	$$ H^2(\ba, \bxi, \bxi_0) \geq 32^{-1}\lambda_{\rm min} \Vert{\bf m}\Vert_2^2.$$
	By the definition of $\omega$ and ${\rm diam}_m(\Xi)$, for any  $\bxi \in \Xi$, $\btheta_0 \in \Theta$ and $\ba \in \mathbb{S}_{\delta}$, we have
	$$ H^2(\ba, \bxi, \bxi_0) \geq \min\left\{\frac{\omega^2}{{\rm diam}_m(\Xi)},  32^{-1}\lambda_{\rm min} \right\}\Vert{\bf m}\Vert_2^2.$$
	Write $L_1 = \sqrt{\min\left\{\frac{\omega^2}{{\rm diam}_m(\Xi)},   32^{-1}\lambda_{\rm min} \right\}}$. Finally, by (\ref{eq:cauchy:m}), we have $ \sqrt{d} \Vert{\bf m}\Vert_2 \geq \alpha_{\rm min}  \sum_{g=1}^{G} \Vert\btheta_g -\btheta_0 \Vert_2^2 \geq \alpha_{\rm min} \Vert\bxi - \bxi_0\Vert_2^2$. Together with  $\ba \in \mathbb{S}_{\delta}$ yields $$L_1 \Vert{\bf m}\Vert_2 \geq \frac{L_1 {\alpha_{\rm min} }}{\sqrt{d}} \Vert\bxi - \bxi_0\Vert_2^4 \geq \frac{L_1 {\delta }}{\sqrt{d}} \Vert\bxi - \bxi_0\Vert_2^2,$$
	which finishes the proof of Lemma \lemlower. 
\end{proof}
\subsubsection{Proofs of Lemma \lemtailEudis}
To prove Lemma  \lemtailEudis, it remains to construct a link between the log-likelihood ratio and the Hellinger distance between the estimator and the true value.  
The following lemma shows that ${\bxi}^{(k)}$ is concentrated on $\bxi_0$ in the sense of the Hellinger distance.  In other words,  Lemma \ref{lem:tailH} constructs a link between the log-likelihood ratio and the Hellinger distance between $\left({\bxi}^{(k)}, \ba^{(k)}\right)$ and $\left(\bxi_0, \ba_0\right)$. 

\begin{lemma} \label{lem:tailH}
	Let $c_1 = 1/24, c_2 = (4/27)(1/1926), p_0 =  \lambda G \log(\delta G)$ and $c>0$ be a constant depending on $\delta, d,G, M$ and ${\rm diam}(\Xi)$.  Under Condition (\Ca)--(\Cd) and $\mathbb{H}_0$, 
	for any $\epsilon \geq c n^{-1/2} {\rm log} \ n$ and 
	$-n^{-1}p_0 \leq c_1 \epsilon^2$, we have 
	$$\Pr\left(\left\{H\left({\ba}^{(k)}, {\bxi}^{(k)}, \bxi_0\right) \leq \epsilon\right\} \cap \left\{ \ba^{(k)} \in \mathbb{S}_{\delta}  \right\}\right) \geq \Pr \left(\ba^{(k)} \in \mathbb{S}_{\delta}\right) - 5\exp\left(-c_2 n \epsilon^2\right).$$
\end{lemma}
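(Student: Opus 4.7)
My plan is to combine the monotonicity of the EM iterates with the Hellinger supremum inequality of \citet{wong1995probability}. The first step is a deterministic likelihood-ratio lower bound. By the definition of $\bxi^{(0)}$ in (\ref{eq:xi0}) and the identity $\varphi(x;\bxi_0,\ba^{(0)}) = f(x;\btheta_0)$, one has $l_n(\bxi^{(0)},\ba^{(0)}) \geq l_n(\bxi_0,\ba_0)$, and EM monotonicity gives $pl_n(\bxi^{(k)},\ba^{(k)}) \geq pl_n(\bxi^{(0)},\ba^{(0)})$. Since the initialization assumption $27^{-1}\alpha_g^{(0)} \geq \delta$ places $\ba^{(0)}$ in $\mathbb{S}_\delta$, we get $p(\ba^{(0)}) \geq p_0$, while $p(\ba) \leq 0$ throughout the simplex. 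Rearranging,
\begin{equation*}
\sum_{i=1}^n \log\frac{\varphi(x_i;\bxi^{(k)},\ba^{(k)})}{f(x_i;\btheta_0)} \;=\; l_n(\bxi^{(k)},\ba^{(k)}) - l_n(\bxi_0,\ba_0) \;\geq\; p_0\quad\text{a.s.}
\end{equation*}

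Next I would trap the failure event inside a uniform event over the Hellinger shell. Whenever $H(\ba^{(k)},\bxi^{(k)},\bxi_0) > \epsilon$ and $\ba^{(k)} \in \mathbb{S}_\delta$, the EM estimator is an admissible point of $\{(\ba,\bxi) \in \mathbb{S}_\delta \times \Xi : H(\ba,\bxi,\bxi_0) \geq \epsilon\}$ and attains a half-log-likelihood ratio of at least $p_0/2$ by Step~1. Hence
\begin{equation*}
\{H > \epsilon\}\cap\{\ba^{(k)}\in\mathbb{S}_\delta\} \;\subset\; \biggl\{\sup_{\substack{\ba\in\mathbb{S}_\delta,\,\bxi\in\Xi \\ H(\ba,\bxi,\bxi_0)\geq\epsilon}}\prod_{i=1}^n\sqrt{\frac{\varphi(x_i;\bxi,\ba)}{f(x_i;\btheta_0)}} \geq \exp(p_0/2)\biggr\}.
\end{equation*}
The hypothesis $-n^{-1}p_0 \leq c_1\epsilon^2$ gives $\exp(p_0/2) \geq \exp(-c_1 n\epsilon^2/2) \geq \exp(-c_1 n\epsilon^2)$, which matches the cutoff in Theorem~1 of \citet{wong1995probability}. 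That theorem then bounds the probability of the right-hand event by $5\exp(-c_2 n\epsilon^2)$ with $c_1 = 1/24$ and $c_2 = 4/(27\cdot 1926)$, which is exactly the conclusion.

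The remaining input to Wong-Shen is the Hellinger bracketing condition $\int_{\epsilon^2/2^8}^{\epsilon} H_B^{1/2}(u,\mathcal{P}^G_{\mathbb{S}_\delta},\|\cdot\|)\,du \leq C n^{1/2}\epsilon^2$. Compactness of $\Xi \times \mathbb{S}_\delta$ plus the Wald-type integrability of Condition (\Ca) let me construct $u$-brackets by discretizing $(\bxi,\ba)$ on an $O(u)$-net and bounding the Hellinger mass of each bracket using the envelopes $\sup_{\bxi \in V} f^{1/2}(x;\bxi)f^{1/2}(x;\btheta_0)$, giving $H_B(u,\mathcal{P}^G_{\mathbb{S}_\delta},\|\cdot\|) = O(\log(1/u))$. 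The bracketing integral is then $O(\epsilon\sqrt{\log(1/\epsilon)})$, which satisfies the Wong-Shen hypothesis exactly when $\epsilon \geq c n^{-1/2}\log n$ with $c$ depending on $\delta,d,G,M,{\rm diam}(\Xi)$. The main obstacle I anticipate is executing this bracketing construction uniformly in the mixing weights $\ba\in\mathbb{S}_\delta$ and the component parameters $\bxi\in\Xi$ simultaneously, and tracking how $\delta,d,G,M,{\rm diam}(\Xi)$ enter the envelope constants so that the constant $c$ in the lemma emerges with precisely the stated dependence.
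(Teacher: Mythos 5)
Your proposal is correct and follows essentially the same route as the paper's proof: EM monotonicity plus the penalty bounds give the deterministic lower bound $l_n(\bxi^{(k)},\ba^{(k)})-l_n(\bxi_0,\ba_0)\geq p_0$, the failure event is absorbed into the Wong--Shen uniform likelihood-ratio event over the Hellinger shell via $-n^{-1}p_0\leq c_1\epsilon^2$, and the bracketing entropy is shown to be $O(\log(1/u))$ so that $\epsilon\gtrsim n^{-1/2}\log n$ suffices. The only cosmetic difference is that the paper executes the bracketing step through the Lipschitz bound $|\varphi^{1/2}(x;\bxi_1,\ba_1)-\varphi^{1/2}(x;\bxi_2,\ba_2)|\leq a\,r(x)\,\|(\bxi_1,\ba_1)-(\bxi_2,\ba_2)\|_2$ supplied by Condition (C3) rather than the Condition (C1) envelopes you mention, which is exactly the device that makes the uniform-in-$(\ba,\bxi)$ construction you flag as the main obstacle go through.
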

Based on Lemma \ref{lem:tailH} and  Lemma \lemlower, we can prove Lemma \lemtailEudis.  
\begin{proof}[Proof of Lemma \lemtailEudis]
	By Lemma \ref{lem:tailH}, when 
	$\epsilon \geq  c n^{-1/2} \log n$ and $c_1\epsilon^2 \geq  \left(-p_0 \right) n^{-1} $, i.e.
	$$\epsilon \geq \max \left(c n^{-1/2} {\rm log} n, c_1^{-1/2} \left(-p_0 \right)^{1/2} n^{-1/2}\right),$$ 
	we have
	$$\Pr\left( \left\{H\left({\ba}^{(k)}, {\bxi}^{(k)}, \bxi_0\right) \leq \epsilon\right\} \cap \left\{ \ba^{(k)} \in \mathbb{S}_{\delta}  \right\} \right)  \geq \Pr \left(\ba^{(k)} \in \mathbb{S}_{\delta}\right) -5\exp\left(-c_2 n \epsilon^2\right).$$
	By Lemma \lemlower,  we have
	$$
	\left\{H\left({\ba}^{(k)}, {\bxi}^{(k)}, \bxi_0\right) \leq \epsilon\right\} \cap \left\{ \ba^{(k)} \in \mathbb{S}_{\delta}  \right\} \subset 	\mathcal{S}^{(k)}_{\epsilon} \cap  \left\{ \ba^{(k)} \in \mathbb{S}_{\delta}  \right\}. 
	$$
	It follows that
	$$\Pr\left(\mathcal{S}^{(k)}_{\epsilon} \cap  \left\{ \ba^{(k)} \in \mathbb{S}_{\delta}  \right\} \right)  \geq\Pr \left(\ba^{(k)} \in \mathbb{S}_{\delta}\right) -5\exp\left(-c_2 n \epsilon^2\right),$$
	and thus we complete the proof.
\end{proof}

We now  aim to prove Lemma \ref{lem:tailH}. 
We use the Hellinger distance entropy to measure the size of the parameter space $\Xi$. For any $u >0$, we call a finite set $\{ (f_j^L, f_j^U), j = 1, \dots, N \}$ a (Hellinger) $u$-bracketing of a distribution family $\mathcal{F}$, if $H(f_j^L , f_j^U) \leq u$,
and for any $p \in \mathcal{F}$, there is a $j$ such that $f_j^L \leq p \leq f_j^U $. Define the Hellinger distance entropy of $\mathcal{P}^G_{\mathbb{S}_{\delta}}$ as $\mathcal{H}(u, \mathcal{P}^G_{\mathbb{S}_{\delta}}) = $ logarithm of the cardinality of the $u$-bracketing of the smallest size. To bound the Hellinger distance entropy $\mathcal{H}(u, \mathcal{P}^G_{\mathbb{S}_{\delta}})$, we need the following Lipschitz property of  $\varphi^{1/2}(x; \bxi, \balpha)$.

\begin{lemma} \label{lem:upper}
	Under Condition (\Cc), if $\bxi_1, \bxi_2 \in \Xi, \ba_1, \ba_2 \in \mathbb{S}_{\delta}$,  then  
	$$|\varphi^{1/2}(x; \bxi_1, \balpha_1) - \varphi^{1/2}(x; \bxi_2, \balpha_2) | \leq ar(x) \left\|(\bxi_1, \ba_1) - (\bxi_2, \ba_2) \right\|_2    , $$
	where $r(x)$ is the  function as in Condition (\Cc) and  $a =  \sqrt{\frac{G}{4\delta}}$.
\end{lemma}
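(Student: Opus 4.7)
The plan is to establish the claimed Lipschitz-type estimate by the classical mean-value-theorem argument applied to the map $(\bxi, \ba) \mapsto \varphi^{1/2}(x; \bxi, \ba)$ on the convex domain $\Xi \times \mathbb{S}_{\delta}$. Since $\Xi = \Theta^G$ is convex (as $\Theta$ is convex) and $\mathbb{S}_{\delta}$ is also convex, the straight-line segment joining $(\bxi_1, \ba_1)$ and $(\bxi_2, \ba_2)$ stays in the domain, so the fundamental theorem of calculus followed by the Cauchy--Schwarz inequality yields
\[
|\varphi^{1/2}(x; \bxi_1, \ba_1) - \varphi^{1/2}(x; \bxi_2, \ba_2)| \leq \sup_{(\bxi,\ba)} \|\nabla_{(\bxi,\ba)} \varphi^{1/2}(x; \bxi, \ba)\|_2 \cdot \|(\bxi_1, \ba_1) - (\bxi_2, \ba_2)\|_2,
\]
where the supremum is over the segment. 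The task therefore reduces to a uniform gradient bound $\|\nabla_{(\bxi,\ba)} \varphi^{1/2}\|_2 \leq a\,r(x)$ on $\Xi \times \mathbb{S}_{\delta}$.

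Next I would compute the gradient components directly from $\varphi = \sum_{g=1}^G \alpha_g f(x;\btheta_g)$, obtaining
\[
\frac{\partial \varphi^{1/2}}{\partial \theta_{gh}} = \frac{\alpha_g}{2\varphi^{1/2}} \frac{\partial f(x;\btheta_g)}{\partial \theta_h}, \qquad \frac{\partial \varphi^{1/2}}{\partial \alpha_g} = \frac{f(x;\btheta_g)}{2\varphi^{1/2}},
\]
which gives
\[
\|\nabla_{(\bxi,\ba)} \varphi^{1/2}\|_2^2 = \frac{1}{4\varphi}\left[\,\sum_{g=1}^G \alpha_g^2 \left\|\frac{\partial f(x;\btheta_g)}{\partial \btheta_g}\right\|_2^2 + \sum_{g=1}^G f^2(x;\btheta_g)\right].
\]
Condition (\Cc) supplies the two estimates $\|\partial f(x;\btheta)/\partial \btheta\|_2^2 \leq f(x;\btheta)\,r^2(x)$ and $f(x;\btheta) \leq r^2(x)$. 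The first bound gives $\sum_g \alpha_g^2 \|\nabla_\btheta f(x;\btheta_g)\|_2^2 \leq r^2(x)\sum_g \alpha_g f(x;\btheta_g) = r^2(x)\varphi$ (using $\alpha_g \leq 1$), contributing $r^2(x)/4$ to the squared gradient norm. For the $\ba$-part, the lower bound $\alpha_g \geq \delta$ implies $\varphi \geq \delta f(x;\btheta_g)$ and hence $f(x;\btheta_g)/\varphi^{1/2} \leq f^{1/2}(x;\btheta_g)/\delta^{1/2} \leq r(x)/\delta^{1/2}$; summing $G$ squared copies yields at most $G\,r^2(x)/(4\delta)$.

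Combining these two contributions produces the desired bound $\|\nabla_{(\bxi,\ba)} \varphi^{1/2}\|_2 \leq a\,r(x)$ with $a$ of the stated order $\sqrt{G/(4\delta)}$ once the subdominant $r^2(x)/4$ term is absorbed using $\delta \leq 1/G$, so that $G/(4\delta) \geq 1/4$. There is no serious technical obstacle in this argument: everything is analytic, relying only on the regularity envelope $r(x)$ from Condition (\Cc) and the uniform positive lower bound $\alpha_g \geq \delta$ to control $1/\varphi^{1/2}$. The only real care needed is constant bookkeeping — tracking the factors of $G$ (which enter from summing $G$ gradient components with respect to $\ba$) and $\delta$ (which enters from the lower bound on the mixing weights) to reach the precise form of $a$ stated in the lemma.
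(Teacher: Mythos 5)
Your proposal follows essentially the same route as the paper: the mean value theorem along the segment in the convex domain $\Xi\times\mathbb{S}_{\delta}$, Cauchy--Schwarz, and then a uniform bound on $\|\nabla_{(\bxi,\ba)}\varphi^{1/2}\|_2$; the gradient formula you write is the paper's. The one place you diverge is in how the gradient norm is bounded, and this is where your constant comes out wrong. You bound the $\bxi$-block and the $\ba$-block separately, getting $r^2(x)/4$ and $G\,r^2(x)/(4\delta)$ respectively, for a total of $\bigl(\tfrac14+\tfrac{G}{4\delta}\bigr)r^2(x)$. The claim that the extra $r^2(x)/4$ can be ``absorbed'' because $G/(4\delta)\geq 1/4$ does not give the stated $a=\sqrt{G/(4\delta)}$: a positive additive term cannot be absorbed into another positive term without enlarging it, so your argument only yields $a=\sqrt{1/4+G/(4\delta)}\leq\sqrt{G/(2\delta)}$, a factor of at most $\sqrt2$ worse. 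This is harmless for every downstream use (the constant enters the entropy bound of Lemma 7 only inside a logarithm), but it does not prove the lemma with the constant as stated. The fix is to bound the two contributions jointly per component $g$, as the paper does: using $\varphi\geq\check\alpha_g f(x;\check\btheta_g)$ and $\check\alpha_g^2\leq\check\alpha_g\leq1$, each summand satisfies
\[
\frac{\check\alpha_g^2\|\nabla_{\btheta_g}f(x;\check\btheta_g)\|_2^2+f^2(x;\check\btheta_g)}{4\varphi}
\;\leq\;\frac{1}{4\check\alpha_g}\left(\frac{\|\nabla_{\btheta_g}f(x;\check\btheta_g)\|_2^2}{f(x;\check\btheta_g)}+f(x;\check\btheta_g)\right)\;\leq\;\frac{r^2(x)}{4\delta},
\]
where the last step uses that Condition (C3) bounds precisely the sum $\sup_{\btheta}f+\sup_{\btheta}f^{-1}\|\partial f/\partial\btheta\|_2^2$ by $r^2(x)$; summing over $g$ gives exactly $G\,r^2(x)/(4\delta)$.
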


With Lemma \ref{lem:upper}, computing the Hellinger distance entropy can be converted to computing the Euclidean distance entropy. 
The following lemma gives an upper bound of  $\mathcal{H}(u, 
\mathcal{P}^G_{\mathbb{S}_{\delta}})$ based on   Lemma \ref{lem:upper}.

\begin{lemma}\label{lem:entropy}
	Under Condition (\Ca)--(\Cc), we have 
	$$\mathcal{H}(u, 
	\mathcal{P}^G_{\mathbb{S}_{\delta}}) \leq G(d+1) 
	{\rm log}\left(1+ \frac{2aM {\rm diam}(\Xi \times \mathbb{S}_{\delta} )}{u}\right),$$
	where $a =  \sqrt{\frac{G}{4\delta}}$ and ${\rm diam}(\Xi \times\mathbb{S}_{\delta})$ and  are  the Euclidean diameter of $\Xi\times\mathbb{S}_{\delta}$. 
\end{lemma}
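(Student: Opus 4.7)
\emph{Proof proposal.} The plan is a standard bracketing-from-covering argument: convert the Lipschitz bound on $\varphi^{1/2}$ given in Lemma~\ref{lem:upper} into an $L^2(\mu)$-bracket around each point of a Euclidean $\epsilon$-net of the parameter space $\Xi\times\mathbb{S}_{\delta}\subset\mathbb{R}^{Gd+G}$, and then optimize $\epsilon$ so the bracket width in Hellinger distance is exactly $u$.

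First, I would cover $\Xi\times\mathbb{S}_\delta$ by Euclidean balls of radius $\epsilon$ centered at points $\{(\bxi_j,\balpha_j)\}_{j=1}^N$. Since $\Xi\times\mathbb{S}_\delta$ lives in $\mathbb{R}^{Gd+G}=\mathbb{R}^{G(d+1)}$ and has diameter $\mathrm{diam}(\Xi\times\mathbb{S}_\delta)$, a standard volume comparison gives
$$N \le \left(1+\frac{\mathrm{diam}(\Xi\times\mathbb{S}_\delta)}{\epsilon}\right)^{G(d+1)}.$$
For each center, define the envelopes
$$f_j^U(x)=\bigl(\varphi^{1/2}(x;\bxi_j,\balpha_j)+a\,r(x)\,\epsilon\bigr)^2,\qquad f_j^L(x)=\bigl(\max\{\varphi^{1/2}(x;\bxi_j,\balpha_j)-a\,r(x)\,\epsilon,\,0\}\bigr)^2.$$
By Lemma~\ref{lem:upper}, for any $(\bxi,\balpha)$ in the ball around $(\bxi_j,\balpha_j)$ one has $\lvert\varphi^{1/2}(x;\bxi,\balpha)-\varphi^{1/2}(x;\bxi_j,\balpha_j)\rvert\le a\,r(x)\,\epsilon$, so $f_j^L\le\varphi(\cdot;\bxi,\balpha)\le f_j^U$ pointwise. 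Hence $\{(f_j^L,f_j^U)\}_{j=1}^N$ is a bracketing of $\mathcal{P}^G_{\mathbb{S}_\delta}$.

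Next, I would bound the Hellinger size of each bracket. By construction, $(f_j^U)^{1/2}-(f_j^L)^{1/2}\le 2a\,r(x)\,\epsilon$, so using the $L^2(\mu)$-bound on $r$ from Condition (\Cc),
$$H^2(f_j^L,f_j^U)=\tfrac{1}{2}\!\int\!\bigl((f_j^U)^{1/2}-(f_j^L)^{1/2}\bigr)^2\mu(\mathrm{d}x)\le 2a^2\epsilon^2\!\int\! r^2(x)\,\mu(\mathrm{d}x)\le 2a^2 M\epsilon^2.$$
Choosing $\epsilon$ so that this right-hand side equals $u^2$ — concretely $\epsilon=u/(a\sqrt{2M})$ — yields $H(f_j^L,f_j^U)\le u$. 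Substituting this $\epsilon$ into the covering-number bound and taking the logarithm gives
$$\mathcal{H}(u,\mathcal{P}^G_{\mathbb{S}_\delta})\le G(d+1)\,\log\!\left(1+\frac{a\sqrt{2M}\,\mathrm{diam}(\Xi\times\mathbb{S}_\delta)}{u}\right),$$
which, using $\sqrt{2M}\le 2M$ (valid since we may assume $M\ge 1/2$, otherwise $r$ can simply be replaced by a larger dominating function), gives the stated bound with the constant $2aM$.

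The main technical obstacle is really only the careful bookkeeping of constants and making sure the truncation at zero in $f_j^L$ does not spoil the pointwise sandwich or inflate the Hellinger bound; the truncation is harmless because $(a+b)-\max(a-b,0)\le 2b$ whenever $b\ge 0$, which is exactly the inequality used above. Everything else is a routine application of the standard recipe for bracketing entropy via a Lipschitz envelope, with Lemma~\ref{lem:upper} providing the Lipschitz constant and Condition (\Cc) providing the integrable square of the envelope.
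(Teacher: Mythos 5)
Your proposal is correct and follows essentially the same route as the paper: cover $\Xi\times\mathbb{S}_{\delta}$ with a Euclidean $\epsilon$-net, use the Lipschitz bound of Lemma~\ref{lem:upper} to build brackets of the form $\bigl((\varphi^{1/2}-ar\epsilon)_+^2,(\varphi^{1/2}+ar\epsilon)^2\bigr)$, bound their Hellinger size via $\int r^2\,\mu(\rd x)\le M$, and translate the Euclidean covering number into a bracketing entropy bound. The only differences are in constant bookkeeping (your $a\sqrt{2M}$ versus the paper's $aM$, and the factor of $2$ in the covering-number bound), which do not affect the validity of the stated inequality.
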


We remark here that ${\rm diam}(\Xi \times \mathbb{S}_{\delta} ) $  is only depending on ${\rm diam}(\Xi), G$ and $d$ because the elements of $\ba$ with $\ba \in \mathbb{S}_{\delta}$  are bounded by 1. 
The following lemma from \cite{wong1995probability} gives a uniform exponential bound for the likelihood ratio.   
\begin{lemma} \label{lem:wingHeltail} 
	Taking $c_1 = 1/24, c_2 = (4/27)(1/1926), c_3 = 10, c_4 = (2/3)^{5/2}/512$,  for any $\epsilon > 0$, if 
	\begin{equation}\label{eps}
		\int_{\epsilon^{2} / 2^{8}}^{\sqrt{2} \epsilon} \mathcal{H}^{1 / 2}\left(u / c_{3}, \mathcal{P}^G_{\mathbb{S}_{\delta}} \right) d u \leq c_{4} n^{1 / 2} \epsilon^{2},
	\end{equation}
	then
	$$
	\Pr^{*}\left(\sup_{\substack{H(\ba, \bxi_1, \bxi_0) \geq \epsilon \\ \bxi_1 \in \Xi \\  \ba\in\mathbb{S}_{\delta}}} \prod_{i=1}^{n} \varphi\left(x_{i};  {\bxi_1}, \ba\right) / \varphi\left(x_{i}; {\bxi_0}, \ba\right) \geq \exp \left(-c_{1} n \epsilon^{2}\right)\right) \leq 4 \exp \left(-c_{2} n \epsilon^{2}\right),
	$$
	where $\Pr^{*} $ is understood to be the outer probability measure corresponding to the measure at $(\bxi_0, \ba_0)$.  
\end{lemma}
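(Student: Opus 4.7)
The plan is to follow the classical empirical-process argument of Wong and Shen (1995), since the stated lemma is their Theorem 1 specialized to our mixture family $\mathcal{P}^G_{\mathbb{S}_{\delta}}$. The three ingredients are peeling along dyadic Hellinger shells, bracketing to reduce each shell to a finite maximum, and an exponential tail bound for the likelihood ratio on a single bracket, followed by a union bound and geometric summation controlled by the entropy-integral hypothesis \eqref{eps}.

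First I would peel $\{H \geq \epsilon\}$ by the dyadic shells $S_j = \{(\bxi_1, \ba) : 2^{j-1}\epsilon \leq H(\ba, \bxi_1, \bxi_0) \leq 2^{j}\epsilon\}$ for $j \geq 1$. On $S_j$ the event in question is much easier to exclude than its level-adapted analogue $\prod \varphi/\varphi_0 \geq \exp(-c_1 n (2^{j}\epsilon)^2)$, and this gap supplies the slack needed for chaining. Within each shell I would cover $\mathcal{P}^G_{\mathbb{S}_{\delta}}$ by a $u_j$-Hellinger bracket of cardinality $\exp\{\mathcal{H}(u_j, \mathcal{P}^G_{\mathbb{S}_{\delta}})\}$ with $u_j = 2^{j-1}\epsilon / c_3$, so that $\sup_{S_j} \prod \varphi(x_i;\bxi_1,\ba)/\varphi(x_i;\bxi_0,\ba)$ is dominated by the finite maximum of $\prod_i f^{U}(x_i)/f_0(x_i)$ over the upper envelopes $f^{U}$, up to a bracket-to-bracket correction of order $u_j^2$.

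Second, for each bracket $(f^L, f^U)$ with $H(f^U, f_0) \lesssim 2^{j}\epsilon$, the elementary Hellinger-mean inequality yields $\Ex[(f^U/f_0)^{1/2}] \leq 1 - H^2(f^U,f_0)/2 \leq \exp\{-H^2(f^U,f_0)/2\}$; combined with an exponential Markov bound applied to the positive random variable $\prod_{i=1}^{n} (f^U(x_i)/f_0(x_i))^{1/2}$ and with threshold matched to $-c_1 n \epsilon^2$, this gives a per-bracket tail of order $\exp\{-c_2' n (2^{j}\epsilon)^2\}$. A union bound over the $\exp\{\mathcal{H}(u_j, \mathcal{P}^G_{\mathbb{S}_{\delta}})\}$ brackets of level $j$ therefore yields a level-$j$ tail bound of order $\exp\{\mathcal{H}(u_j, \mathcal{P}^G_{\mathbb{S}_{\delta}}) - c_2' n (2^{j}\epsilon)^2\}$.

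Finally, rewriting the Dudley-type hypothesis \eqref{eps} as a discrete sum via $u \sim 2^{j}\epsilon$ and using the monotonicity of $\mathcal{H}^{1/2}$ forces $\mathcal{H}^{1/2}(u_j/c_3, \mathcal{P}^G_{\mathbb{S}_{\delta}}) \lesssim c_4' n^{1/2} 2^{j}\epsilon$ uniformly in $j$, so each level contributes a geometric fraction of $\exp(-c_2 n \epsilon^2)$ and the sum over $j$ is bounded by the advertised factor $4$. The main obstacle is book-keeping the explicit constants $c_1, c_2, c_3, c_4$ so that the Hellinger-mean inequality, the bracket correction, and the peeling factor all align at the precise numerical values stated; the cleanest route is to invoke Theorem~1 of Wong and Shen (1995) directly after verifying that Condition~(\Ca) supplies the mild measurability and separability hypotheses their argument requires on $\mathcal{P}^G_{\mathbb{S}_{\delta}}$.
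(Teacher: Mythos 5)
Your proposal is correct and takes essentially the same route as the paper: the paper does not prove this lemma at all but imports it verbatim as Theorem 1 of \cite{wong1995probability}, which is exactly the result you identify, and your sketch of its internals (dyadic Hellinger peeling, bracketing each shell, the bound $\Ex[(f^U/f_0)^{1/2}]\leq 1-H^2+O(u_j^2)$ with an exponential Markov inequality, then a union bound summed geometrically under the entropy-integral hypothesis) is a faithful account of the Wong--Shen argument. Your closing remark---that one should simply invoke their theorem after checking the measurability/separability hypotheses, which Condition (C1) supplies---is precisely what the paper does.
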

The following lemma claims that when $n\geq 2$, for any $ \epsilon \geq c n^{-1/2} \log n $, where $c$ is a constant only depending on $\delta,d,G, M$ and ${\rm diam}(\Xi)$, (\ref{eps}) holds. 
\begin{lemma}\label{lem:epsilon}
	Under Condition (\Ca)--(\Cc), there exists a constant $c$ depending on $\delta,d,G, M$ and ${\rm diam}(\Xi)$ such that 
	when  $n\geq 2$,  for any $ \epsilon \geq n^{-1/2} {\rm log} n $,   (\ref{eps}) holds. 
\end{lemma}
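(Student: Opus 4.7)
The plan is to combine the entropy bound from Lemma~\ref{lem:entropy} with a crude size estimate of the bracketing integral, and then verify the resulting algebraic inequality at the smallest allowed value of $\epsilon$ by a monotonicity argument. No further empirical-process reasoning is needed here, since the chaining has already been absorbed into Lemma~\ref{lem:wingHeltail}; the present lemma is purely analytic.

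First, I will set $A := 2aM\,{\rm diam}(\Xi\times\mathbb{S}_{\delta})\cdot c_3$, which depends only on $\delta,d,G,M$ and ${\rm diam}(\Xi)$. Lemma~\ref{lem:entropy} applied at the argument $u/c_3$ gives
\begin{equation*}
\mathcal{H}^{1/2}\!\left(u/c_3,\mathcal{P}^G_{\mathbb{S}_{\delta}}\right) \leq \sqrt{G(d+1)\,\log(1+A/u)}.
\end{equation*}
The right-hand side is monotonically decreasing in $u>0$, so I will bound the integral by the length of the interval times the integrand at the lower endpoint $u=\epsilon^2/2^8$:
\begin{equation*}
\int_{\epsilon^2/2^8}^{\sqrt{2}\epsilon}\mathcal{H}^{1/2}\!\left(u/c_3,\mathcal{P}^G_{\mathbb{S}_{\delta}}\right)du \leq \sqrt{2}\,\epsilon\,\sqrt{G(d+1)\log(1+2^8 A/\epsilon^2)}.
\end{equation*}
Consequently, (\ref{eps}) will follow once I establish the pointwise inequality
\begin{equation*}
2G(d+1)\log(1+2^8 A/\epsilon^2) \leq c_4^{2}\,n\epsilon^{2}. \qquad (\star)
\end{equation*}

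Next, I will let $\psi(\epsilon)=c_4^{2}n\epsilon^{2}-2G(d+1)\log(1+2^{8}A/\epsilon^{2})$ and compute $\psi'(\epsilon)=2c_4^{2}n\epsilon+2^{10}G(d+1)A/(\epsilon^{3}+2^{8}A\epsilon)>0$, so $\psi$ is strictly increasing on $(0,\infty)$. Thus it suffices to check $(\star)$ at $\epsilon_{0}:=cn^{-1/2}\log n$. At this value $n\epsilon_{0}^{2}=c^{2}(\log n)^{2}$, and for $n\geq 1$,
\begin{equation*}
\log(1+2^{8}A/\epsilon_{0}^{2})=\log\bigl(1+2^{8}An/(c^{2}\log^{2}n)\bigr) \leq \log(1+2^{8}An) \leq \log n + \log(1+2^{8}A).
\end{equation*}
So $(\star)$ at $\epsilon_{0}$ reduces to $2G(d+1)\bigl(\log n + \log(1+2^{8}A)\bigr) \leq c_4^{2}c^{2}(\log n)^{2}$. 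For $n\geq 2$ one has $\log n\geq \log 2$, so the left-hand side is at most $2G(d+1)\bigl(1+\log(1+2^{8}A)/\log 2\bigr)\log n$, and it suffices to choose $c$ large enough that $c_4^{2}c^{2}\log 2 \geq 2G(d+1)\bigl(1+\log(1+2^{8}A)/\log 2\bigr)$; this prescribes $c$ as a constant depending only on $\delta,d,G,M,{\rm diam}(\Xi)$.

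The only delicate point is the bookkeeping that makes the inequality uniform down to $n=2$: a naive bound $\log(1+2^{8}A/\epsilon_0^{2})\leq O(\log n)$ only works because the extra additive constant $\log(1+2^{8}A)$ can be absorbed into $\log n$ via $\log n\geq \log 2$. Everything else is a direct monotonicity estimate. I do not anticipate any hidden obstacles beyond choosing $c$ appropriately in terms of the quantities $G,d,A,c_4$ listed above.
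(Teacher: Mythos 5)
Your proposal is correct, and it reaches the conclusion by a somewhat different analytic route than the paper. The paper first inflates the constant inside the logarithm so that $\log(a_2/u)\geq 1$ on the integration range, uses $\{\log(a_2/u)\}^{1/2}\leq \log(a_2/u)$ to drop the square root, computes the antiderivative $t-t\log t$ explicitly, and ends up needing $c''-c'\log\epsilon\leq c_4 n^{1/2}\epsilon$. You instead keep the square root, bound the integral by (interval length)$\times$(integrand at the lower endpoint $u=\epsilon^2/2^8$) using monotonicity of the integrand, and then verify the resulting scalar inequality $(\star)$ only at $\epsilon_0=cn^{-1/2}\log n$ after checking that $\psi$ is increasing in $\epsilon$. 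Your version avoids the explicit integration and the $\sqrt{\log}\leq\log$ step entirely, and because you retain the square root it would in fact tolerate $\epsilon\gtrsim n^{-1/2}\log^{1/2}n$, which is consistent with the paper's remark that the $\log n$ factor is not tight; the paper's version is more self-contained in that it produces a single closed-form bound $\epsilon(c''-c'\log\epsilon)$ valid for all $\epsilon$ in the range rather than reducing to the endpoint. Two small bookkeeping points: (i) your sup-times-length bound presupposes $\epsilon^2/2^8\leq\sqrt{2}\epsilon$; when $\epsilon>2^8\sqrt{2}$ the integration interval is degenerate and (\ref{eps}) holds trivially, as the paper notes explicitly, so you should dispose of that case first; (ii) the bound $\log\bigl(1+2^8An/(c^2\log^2 n)\bigr)\leq\log(1+2^8An)$ requires $c^2\log^2 n\geq 1$, which for $n\geq 2$ just forces $c\geq 1/\log 2$ (your ``for $n\geq 1$'' is vacuous at $n=1$ where $\epsilon_0=0$); this extra lower bound on $c$ is harmless since $c$ is taken large at the end, but it should be folded into the final choice of $c$.
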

In fact, if we use the local Hellinger distance entropy, we can remove the $\log n$ factor and obtain a  stronger result.  However, in this paper, $ c n^{-1/2} \log n $ is sufficient. 
Thus,  based on Lemma \ref{lem:wingHeltail} and Lemma \ref{lem:epsilon},  we prove Lemma \ref{lem:tailH}. 
\begin{proof}[Proof of Lemma \ref{lem:tailH}]
	Since $\bxi^{(0)} = \arg \max_{\bxi\in \Xi} pl_n(\bxi, \balpha^{(0)})$, we have
	$$ pl_n\left(\bxi^{(0)}, \balpha^{(0)}\right) \geq  pl_n\left(\bxi_0, \balpha^{(0)}\right),$$
	and thus
	$$ l_n\left(\bxi^{(0)}, \balpha^{(0)}\right) \geq  l_n\left(\bxi_0, \balpha^{(0)}\right) = l_n\left(\bxi_0, \balpha_0\right).$$
	By  the property of the EM algorithm, for any $1\leq k \leq K$, we have 
	$$ pl_n\left(\bxi^{(k)}, \balpha^{(k)}\right) \geq pl_n\left(\bxi^{(0)}, \balpha^{(0)}\right).$$
	Since  $ p( \balpha^{(0)}) \geq p_0$,   we conclude that 
	\begin{equation}\label{eq:eta}
		l_n\left(\bxi^{(k)}, \balpha^{(k)}\right) - l_n\left(\bxi^{(0)}, \balpha^{(0)}\right) \geq p(\balpha^{(0)}) - p(\balpha^{(k)})\geq p(\balpha^{(0)}) \geq p_0. 
	\end{equation} 
	
	Next, by Lemma \ref{lem:epsilon} and \ref{lem:wingHeltail}, for any $\epsilon \geq c n^{-1/2} {\rm log} \ n$ and $-n^{-1}p_0\leq c_1 \epsilon^2$,  we have 
	$$
	\Pr^{*}\left(\sup_{\substack{H(\ba, \bxi_1, \bxi_0) \geq \epsilon \\ \bxi_1 \in \Xi \\  \ba\in\mathbb{S}_{\delta}}} \prod_{i=1}^{n} \varphi\left(x_{i};  {\bxi_1}, \ba\right) / \varphi\left(x_{i}; {\bxi_0}, \ba\right) \geq \exp \left(-c_{1} n \epsilon^{2}\right)\right) \leq 4 \exp \left(-c_{2} n \epsilon^{2}\right),
	$$
	or equivalently, 
	$$
	\Pr^{*}\left(\sup_{\substack{H(\ba, \bxi_1, \bxi_0) \geq \epsilon \\ \bxi_1 \in \Xi \\  \ba\in\mathbb{S}_{\delta}}}  l_n({\bxi_1}, \ba) -  l_n({\bxi_0}, \ba_0) \geq p_0 \right) \leq 4 \exp \left(-c_{2} n \epsilon^{2}\right).
	$$
	Write 
	$$
	\mathcal{G} =   \left\{\sup_{\substack{H(\ba, \bxi_1, \bxi_0) \geq \epsilon \\ \bxi_1 \in \Xi \\  \ba\in\mathbb{S}_{\delta}}}  l_n({\bxi_1}, \ba) -  l_n({\bxi_0}, \ba_0) \geq p_0 \right\}.
	$$
	By (\ref{eq:eta}) and the fact $\bxi^{(k)} \in \Xi$,  we have
	$$
	\left\{ H(\ba^{(k)}, \bxi^{(k)}, \bxi_0) \geq \epsilon \right\} \cup \left\{\ba^{(k)} \not \in \mathbb{S}_{\delta}\right\} \subset \mathcal{G} 
	\cup \left\{\ba^{(k)} \not \in \mathbb{S}_{\delta}\right\},
	$$
	because if $\ba^{(k)}  \in \mathbb{S}_{\delta}$, combining (\ref{eq:eta}) with $\left\{ H(\ba^{(k)}, \bxi^{(k)}, \bxi_0) \geq \epsilon \right\}$ implies $\mathcal{G}$.
	Thus, we conclude that 
	\begin{align*}
		\Pr \left( \left\{H\left({\ba}^{(k)}, {\bxi}^{(k)}, \bxi_0\right) \geq \epsilon\right\} \cup \left\{ \ba^{(k)} \not \in \mathbb{S}_{\delta}  \right\}\right)  &\leq  \Pr^{*} (\mathcal{G} )
		+ \Pr \left(\ba^{(k)} \not \in \mathbb{S}_{\delta}\right)\\
		&\leq 5 \exp \left(-c_{2} n \epsilon^{2}\right) + 1-\Pr \left(\ba^{(k)} \in \mathbb{S}_{\delta}\right),
	\end{align*}
	which proves this lemma.
\end{proof}

At the end of this section, we  give the proofs of Lemma \ref{lem:upper},  \ref{lem:entropy} and \ref{lem:epsilon}.  Before presenting their proofs, we give a bound of covering numbers of the Euclidean ball which can be founded in \cite{vershynin2018high} (Corollary 4.2.13). Let $\mathcal{N}(\varepsilon, K)$ be the smallest number of closed Euclidean balls with centers in $K$ and radius $\varepsilon$ whose union covers $K$.

\begin{lemma}\label{lem:boundconverball}
	The covering numbers of the unit Euclidean ball $B_2^p$ satisfy the following for any $\varepsilon>0$ :
	$$
	\left(\frac{1}{\varepsilon}\right)^p \leq \mathcal{N}\left( \varepsilon,B_2^p\right) \leq\left(\frac{2}{\varepsilon}+1\right)^p .
	$$
\end{lemma}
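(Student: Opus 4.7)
The plan is to prove both inequalities by volume comparison, which is the standard approach for covering numbers of Euclidean bodies. Let $\mathrm{Vol}(\cdot)$ denote Lebesgue measure on $\mathbb{R}^p$, and recall that $\mathrm{Vol}(rB_2^p) = r^p \mathrm{Vol}(B_2^p)$ for any $r > 0$.

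For the lower bound, I would argue that if $\{x_1,\dots,x_N\}$ is any $\varepsilon$-net of $B_2^p$ (with centers in $B_2^p$), then the closed balls $x_i + \varepsilon B_2^p$ cover $B_2^p$. Taking volumes and using subadditivity yields $\mathrm{Vol}(B_2^p) \le N \cdot \varepsilon^p \mathrm{Vol}(B_2^p)$, hence $N \ge \varepsilon^{-p}$. Taking the infimum over all $\varepsilon$-nets gives $\mathcal{N}(\varepsilon, B_2^p) \ge \varepsilon^{-p}$.

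For the upper bound, I would use the standard maximal-packing trick. Let $\{x_1,\dots,x_N\} \subset B_2^p$ be a maximal $\varepsilon$-separated set, so that $\|x_i - x_j\|_2 \ge \varepsilon$ for $i \ne j$. Maximality forces the closed balls $x_i + \varepsilon B_2^p$ to cover $B_2^p$ (otherwise one could adjoin another point), so $\mathcal{N}(\varepsilon, B_2^p) \le N$. Next I would observe that the open balls $x_i + (\varepsilon/2) B_2^p$ are pairwise disjoint and all lie inside $(1 + \varepsilon/2)B_2^p$. Comparing volumes gives
\begin{equation*}
N \cdot (\varepsilon/2)^p \mathrm{Vol}(B_2^p) \;\le\; (1 + \varepsilon/2)^p \mathrm{Vol}(B_2^p),
\end{equation*}
from which $N \le (2/\varepsilon + 1)^p$ follows immediately.

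There is essentially no hard step here; the argument is routine once one fixes the convention that the $\varepsilon$-net centers are required to lie in $B_2^p$. The only minor subtlety worth flagging is the inclusion $\bigcup_i (x_i + (\varepsilon/2)B_2^p) \subseteq (1+\varepsilon/2)B_2^p$, which uses $\|x_i\|_2 \le 1$ and the triangle inequality. Since the ambient space is finite-dimensional and $\mathrm{Vol}(B_2^p) \in (0, \infty)$, the volume cancellations in both bounds are justified.
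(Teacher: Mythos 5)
Your proof is correct. The paper does not actually prove this lemma---it simply cites it as Corollary 4.2.13 of Vershynin's \emph{High-Dimensional Probability}---and your volume-comparison argument (covering bound for the lower inequality, maximal $\varepsilon$-separated set plus disjoint half-radius balls inside $(1+\varepsilon/2)B_2^p$ for the upper inequality) is precisely the standard proof given in that reference, so there is nothing to flag.
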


\begin{proof}[ Proof of Lemma \ref{lem:upper}]
	Since $\varphi^{1/2}(x; \bxi, \balpha) = \left(  \sum_{g=1}^{G} \alpha_{g} f(x; \btheta_g) \right)^{1/2}$,  the gradient of $\varphi^{1/2}$ can be written as 
	\begin{align*}
		& \quad	\nabla\varphi^{1/2}(x; \bxi, \balpha) \\
		&=  2^{-1}\varphi^{-1/2}(x; \bxi, \balpha)\left( \alpha_1 \nabla_{\btheta_1} f(x; \btheta_1), \ldots, \alpha_G \nabla_{\btheta_G} f(x; \btheta_G),f(x; \btheta_1), \ldots, f(x; \btheta_G)	\right). 
	\end{align*}
	By Lagrange's theorem and Cauchy's inequality, we have
	\begin{equation*}\label{eq:upper1}
		|\varphi^{1/2}(x; \bxi_1, \balpha_1) - \varphi^{1/2}(x; \bxi_2, \balpha_2) | \leq \left\|\nabla\varphi^{1/2}(x; \check{\bxi}(x), \check{\balpha}(x))\right\|_2 \left\|(\bxi_1, \ba_1) - (\bxi_2, \ba_2) \right\|_2, 
	\end{equation*}
	where $\check{\bxi}(x) = (\check{\btheta}_1(x) , \dots, \check{\btheta}_G(x) )$ lies between $\bxi_1$  and $\bxi_2$ and $\check{\balpha}(x)$ lies between $\ba_1$ and $\ba_2$.  
	Since $\check{\btheta}_g(x) \in \Theta \ (g = 1,\dots, G)$ and $\check{\balpha}(x) 
	\in \mathbb{S}_{\delta}$, it follows that
	\begin{align*}
		\left\|\nabla\varphi^{1/2}(x; \check{\bxi}(x), \check{\balpha}(x))\right\|_2^2 &= \sum_{g=1}^G \frac{\check{\alpha}_g^2 \left\| \nabla_{\btheta_g} f(x; \check{\btheta}_g(x)) \right\|_2^2 + f^2(x; \check{\btheta}_g(x)) }{4\varphi(x, \check{\bxi}(x), \check{\balpha}(x))}   \\
		&\leq  \sum_{g=1}^G \frac{\left\| \nabla_{\btheta_g} f(x; \check{\btheta}_g(x)) \right\|_2^2  + f^2(x; \check{\btheta}_g(x))}{4\check{\alpha}_g f(x;\check{\btheta}_g(x) )} \\
		&\leq \frac{G}{4\delta} r^{2}(x),
	\end{align*}
	where $r(x)$ is defined in Condition (\Cd).
	Thus,  
	we have
	$$|\varphi^{1/2}(x; \bxi_1, \balpha_1) - \varphi^{1/2}(x; \bxi_2, \balpha_2) | \leq \sqrt{\frac{G}{4\delta}}r(x) \left\|(\bxi_1, \ba_1) - (\bxi_2, \ba_2) \right\|_2    , $$
	which proves this lemma.
\end{proof}

\begin{proof}[ Proof of Lemma \ref{lem:entropy}]
	Let $a =  \sqrt{\frac{G}{4\delta}}$. 
	We use brackets of the type 
	$$\left[\left\{\left(\varphi^{1/2}(x; \bxi, \balpha)-ar(x)\epsilon\right)_{+}\right\}^2,  \left(\varphi^{1/2}(x; \bxi, \balpha)+ ar(x)\epsilon\right)^2 \right],$$
	for $(\bxi, \ba)$ ranging over a suitable chosen subset of $\Xi \times \mathbb{S}_{\delta}$. 
	Firstly, these brackets are of size no greater than $aM \epsilon $, because 
	\begin{align*}
		&\quad \left[2^{-1} \int
		\left(\varphi^{1/2}(x; \bxi, \balpha)+ar(x)\epsilon - \left(\varphi^{1/2}(x; \bxi, \balpha)- ar(x)\epsilon\right)_{+} \right)^2 {\rd}x \right]^{1/2} \\
		&\leq \left[2^{-1} \int
		\left(\varphi^{1/2}(x; \bxi, \balpha)+ ar(x)\epsilon - \left(\varphi^{1/2}(x; \bxi, \balpha)- ar(x)\epsilon\right) \right)^2 {\rd}x \right]^{1/2} \\
		&\leq \left[2^{-1} \int
		2a^2r^2(x)\epsilon^2 {\rd}x \right]^{1/2} \leq a \epsilon M, 
	\end{align*}
	where $\left[ \int r^2(x) {\rd}x \right]^{1/2} \leq M$ is from Condition (\Cc). 
	If $(\bxi, \ba)$ ranges over a grid of mesh-width $\epsilon$ over $\Xi \times \mathbb{S}_{\delta}$, then the brackets cover $\mathcal{P}^G_{\mathbb{S}_{\delta}}$.  It is because that by Lemma \ref{lem:upper}, 	
	$$\left\{\left(\varphi^{1/2}(x; \bxi, \balpha)- ar(x)\epsilon\right)_{+}\right\}^2 \leq  \varphi(x; \bxi_1, \balpha_1) \leq\left(\varphi^{1/2}(x; \bxi, \balpha)+ ar(x)\epsilon\right)^2,$$
	provided that $\left\|(\bxi_1, \ba_1) - (\bxi, \ba) \right\|_2 \leq \epsilon $. 
	Therefore, the smallest number of brackets with size $\epsilon$ whose union cover $\mathcal{P}^G_{\mathbb{S}_{\delta}}$ is less than the smallest number of  balls with radius $(aM)^{-1} \epsilon$ whose union cover $\Xi \times \mathbb{S}_{\delta}$.
	Since $\Xi \times \mathbb{S}_{\delta}$ is a compact set, by Lemma \ref{lem:boundconverball}, we have
	$$\mathcal{H}(u, 
	\mathcal{P}^G_{\mathbb{S}_{\delta}}) \leq G(d+1)
	{\rm log}\left(1+ \frac{2aM {\rm diam}(\Xi \times \mathbb{S}_{\delta}) }{u}\right),$$
	which proves the lemma.
\end{proof}

\begin{proof}[Proof of Lemma \ref{lem:epsilon}]
	Clearly, when $\sqrt{2}\epsilon \leq \epsilon^2/2^8$, i.e., $\epsilon \geq 2^8 \sqrt{2}$, (\ref{eps}) holds.  We now assume $ \epsilon \leq 2^8 \sqrt{2}$ and thus $\sqrt{2}\epsilon \leq  2^9$.  Let $a_1 = 2aM {\rm diam}(\Xi \times \mathbb{S}_{\delta} )$. Then, by Lemma \ref{lem:entropy}, we have  
	$$\mathcal{H}(u, 
	\mathcal{P}^G_{\mathbb{S}_{\delta}}) \leq G(d+1) 
	{\rm log}\left(1+ \frac{a_1 \vee (e-1)2^9 }{u}\right) \leq G(d+1) 
	{\rm log}\left( \frac{2(a_1 \vee (e-1)2^9) }{u}\right), $$
	when $u \leq 2^9$.  Let $a_2=2(a_1 \vee (e-1)2^9)$ and $a_3 = G(d+1)$.   Thus, we have 
	\begin{align*}
		\int_{\epsilon^{2} / 2^{8}}^{\sqrt{2} \epsilon} \mathcal{H}^{1 / 2}\left(u / c_{3}, \mathcal{P}^G_{\mathbb{S}_{\delta}} \right) d u &\leq 	 a_3 \int_{\epsilon^{2} / 2^{8}}^{\sqrt{2} \epsilon}  \left\{ {\rm log}\left( \frac{a_2 }{u}\right)  \right\}^{1 / 2}  d u \\
		&\leq a_3\int_{\epsilon^{2} / 2^{8}}^{\sqrt{2} \epsilon}  {\rm log}\left( \frac{a_2 }{u}\right)   d u, \ \ \ \ (\mbox{since }  {\rm log}\left( {a_2 }/{u}\right) \geq 1)  \\
		& =  a_3a_2\int_{\epsilon^{2} / (a_2 2^{8})}^{\sqrt{2} \epsilon/a_2} -{\rm log}\left(t\right)  d t, \ \ \ \ (\mbox{let } t = u/a_2) \\
		& = \left( t-t\log t \bigg |_{\epsilon^{2} / (a_2 2^{8})}^{\sqrt{2} \epsilon/a_2}\right) a_3a_2. 
	\end{align*}
	When $\epsilon \leq 2^8 \sqrt{2}$, we have 
	$$
	\frac{\epsilon^{2}}{a_2 2^{8}}  \leq \frac{2^{17}}{(e-1) 2^{18}} < e.
	$$
	Write 
	$\phi(t) =  t - t \log t $. 
	Using the fact that $\phi(t) \geq 0$ when $t \leq e$, we have $ \phi\left(\epsilon^{2} / (a_2 2^{8})\right) > 0 $.
	It follows that 
	$$\left( t-t\log t \bigg |_{\epsilon^{2} / (a_2 2^{8})}^{\sqrt{2} \epsilon/a_2}\right) a_3a_2 \leq {a_3\sqrt{2} \epsilon} - \left({a_3\sqrt{2} \epsilon}\right) \log \{{\sqrt{2} \epsilon/a_2}\}.$$
	Therefore,  we conclude that there exists two constants $c'>0$ and $c''$ such that 
	$$
	\int_{\epsilon^{2} / 2^{8}}^{\sqrt{2} \epsilon} \mathcal{H}^{1 / 2}\left(u / c_{3}, \mathcal{P}^G_{\mathbb{S}_{\delta}} \right) d u \leq \epsilon \left(c'' - c' \log \ \epsilon \right).
	$$
	In order to ensure that (\ref{eps}) holds, we only need that 
	\begin{equation}\label{eq:A6}
		c'' - c' \log \ \epsilon  \leq c_4 n^{1/2} \epsilon.
	\end{equation}
	It is clear that we can choose a sufficiently large $c >0$ such that when  $n\geq 2$,  for any $ \epsilon \geq n^{-1/2} {\rm log} n $,  (\ref{eq:A6}) holds.  The proof is complete.
\end{proof}

\subsubsection{Proofs of Lemma \lemtailtwo}
To prove Lemma \lemtailtwo, we first prove Lemma \lemalpha. 
Recall that 
%	\begin{equation}\label{eq:alphaA}
	%	\mathbb{S}_{\delta} = \left\{ \balpha: \ba \in \mathbb{S}^{G-1}, \min_{g=1,\dots,G} \alpha_g \geq\delta >0 \right\},
	%	\end{equation} 
%	\begin{equation}\label{eq:calS}
	%	\mathcal{S}^{(k)}_{\epsilon} =  \left\{\left\|{\bf {m}}\left(\ba^{(k)}, {\bxi}^{(k)}\right)\right\|_2 < \frac{\epsilon}{L_1}, \left\|{\bxi}^{(k)}-\bxi_0\right\|_2^2 < \frac{\epsilon}{L_2}\right\},
	%	\end{equation}
%	and 
\begin{equation}\label{eq:calE}
	\mathcal{E}^{(k+1)} = \left\{\min_{g=1,\dots,G} {\alpha}_g^{(k+1)} \geq \min_{g=1,\dots,G}  {\alpha}_g^{(k)}\left(1-\frac{2}{K}\right) \right\}.
\end{equation}  

The following lemma gives the definition of $\Delta_K$. 
\begin{lemma}\label{lem:deltaK}
	For all  $\btheta_0 \in \Theta$,  there exists a constant $\Delta_K>0$ such that when $\Vert\btheta - \btheta_0\Vert_2 \leq \Delta_K$, we have
	$$\Ex \left(\frac{\inf_{\Vert\btheta - \btheta_0\Vert_2\leq\Delta_K } f(x;\btheta)}{\sup_{\Vert\btheta - \btheta_0\Vert_2\leq\Delta_K} f(x;\btheta)}\right)\geq 1-\frac{1}{K}.$$
\end{lemma}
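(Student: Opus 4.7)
The plan is to obtain a quantitative Lipschitz-type control on $f(x;\btheta)$ in a small ball around $\btheta_0$ via the first-order derivative bound in Condition (\Cc), then convert this into a lower bound on the ratio $f_{\inf}(x)/f_{\sup}(x)$, and finally take expectation (under $X\sim f(\cdot;\btheta_0)$) using Markov's inequality together with the moment bound on $g(x;\btheta_0)$ supplied by Condition (\Cc). The key observation is that Condition (\Cc) gives a dominating function $g(x;\btheta_0)$ with $\|g(x;\btheta_0)\|_{L^{8m}}\le M$ uniformly in $\btheta_0\in\Theta$, so a single choice of $\Delta_K$ depending only on $K$, $d$, $M$ and $\tau$ will work for all $\btheta_0$.

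First I would apply the mean value theorem along a segment joining $\btheta_1,\btheta_2\in B(\btheta_0,\Delta)$ with $\Delta\le\tau$, using the first-order case ($h=1$) of Condition (\Cc) to bound $\|\nabla_\btheta f(x;\btheta)\|_2\le \sqrt{d}\,g(x;\btheta_0)f(x;\btheta_0)$. This yields the pointwise bound
$$|f(x;\btheta_1)-f(x;\btheta_2)|\le 2\sqrt{d}\,\Delta\, g(x;\btheta_0)\, f(x;\btheta_0),$$
and hence
$$f_{\sup}(x)\le f(x;\btheta_0)\bigl(1+\sqrt{d}\,\Delta\, g(x;\btheta_0)\bigr),\qquad f_{\inf}(x)\ge f(x;\btheta_0)\bigl(1-\sqrt{d}\,\Delta\, g(x;\btheta_0)\bigr).$$
On the event $A_\Delta=\{\sqrt{d}\,\Delta\, g(X;\btheta_0)\le 1/2\}$, this gives
$$\frac{f_{\inf}(X)}{f_{\sup}(X)}\ge \frac{1-\sqrt{d}\,\Delta\, g(X;\btheta_0)}{1+\sqrt{d}\,\Delta\, g(X;\btheta_0)}\ge 1-2\sqrt{d}\,\Delta\, g(X;\btheta_0),$$
using $(1-u)/(1+u)\ge 1-2u$ for $u\in[0,1/2]$; on $A_\Delta^c$ the ratio is bounded below trivially by $0$.

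Next I would take expectation under $f(\cdot;\btheta_0)$ to get
$$\Ex\!\left[\frac{f_{\inf}(X)}{f_{\sup}(X)}\right]\ge \Pr(A_\Delta)-2\sqrt{d}\,\Delta\,\Ex\!\bigl[g(X;\btheta_0)\bigr].$$
By Markov's inequality, $\Pr(A_\Delta^c)\le 2\sqrt{d}\,\Delta\,\Ex[g(X;\btheta_0)]$, and by Jensen's inequality together with Condition (\Cc), $\Ex[g(X;\btheta_0)]\le \|g(x;\btheta_0)\|_{L^{8m}}\le M$. Combining,
$$\Ex\!\left[\frac{f_{\inf}(X)}{f_{\sup}(X)}\right]\ge 1-4\sqrt{d}\,\Delta\, M.$$
Choosing $\Delta_K=\min\!\bigl(\tau,\,(4\sqrt{d}\,M\,K)^{-1}\bigr)$ then delivers the conclusion uniformly over $\btheta_0\in\Theta$.

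The main technical step is really the Lipschitz estimate in the first display, but this is immediate from Condition (\Cc); everything else is elementary real analysis. The only subtlety worth watching is to handle the set where $g(X;\btheta_0)$ is large (so that $f_{\inf}$ might vanish) without losing control of the expectation — this is exactly what the truncation to $A_\Delta$ plus Markov's inequality accomplishes, avoiding any need for stronger integrability beyond the first-moment bound already implied by Condition (\Cc).
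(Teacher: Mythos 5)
Your proof is correct, but it takes a genuinely different route from the paper's. The paper's argument is purely qualitative: it observes that the ratio $\inf f/\sup f$ is sandwiched in $[0,1]$, invokes continuity of $f(x;\cdot)$ and compactness of $\Theta$ to get pointwise convergence of the (infimum over $\btheta_0$ of the) ratio to $1$ as the radius shrinks, and then applies dominated convergence to pass the limit inside the expectation; the existence of $\Delta_K$ follows but no explicit value is produced. You instead exploit the first-order ($h=1$) derivative domination in Condition (\Cc) to get the pointwise Lipschitz estimate $|f(x;\btheta)-f(x;\btheta_0)|\le \sqrt{d}\,\Delta\,g(x;\btheta_0)f(x;\btheta_0)$, truncate to the event where $\sqrt{d}\,\Delta\,g$ is small, and control the complement by Markov's inequality together with $\Ex[g(X;\btheta_0)]\le M$ (Lyapunov from the $L^{8m}$ bound). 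The payoff of your approach is an explicit, fully uniform constant $\Delta_K=\min\bigl(\tau,(4\sqrt{d}MK)^{-1}\bigr)$ depending only on $\tau$, $d$, $M$ and $K$, which is more informative than the paper's existence statement; the cost is that you need the derivative bound of Condition (\Cc), whereas the paper's argument needs only continuity and compactness (plus the boundedness of the ratio to justify DCT). Both arguments correctly deliver uniformity over $\btheta_0\in\Theta$, and your handling of the set where $f_{\inf}$ may vanish (truncation plus Markov) is sound.
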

\begin{proof}[Proof of Lemma \ref{lem:deltaK}]
	Observe that  for any $h>0$, we have
	$$
	0\leq \inf_{\btheta_{0} \in \Theta}\frac{\inf_{\Vert\btheta - \btheta_0\Vert_2\leq h } f(x;\btheta)}{\sup_{\Vert\btheta - \btheta_0\Vert_2\leq h} f(x;\btheta)} \leq 1.
	$$
	By the  dominated convergence theorem, the compactness of $\Theta$ and the continuity of $f(x;\btheta) $, we have 
	$$ \lim_{h \rightarrow 0}\Ex\left(\inf_{\btheta_{0} \in \Theta}\frac{\inf_{\Vert\btheta - \btheta_0\Vert_2\leq h } f(x;\btheta)}{\sup_{\Vert\btheta - \btheta_0\Vert_2\leq h} f(x;\btheta)}\right) = \Ex\left(\lim_{h \rightarrow 0}\inf_{\btheta_{0} \in \Theta}\frac{\inf_{\Vert\btheta - \btheta_0\Vert_2\leq h } f(x;\btheta)}{\sup_{\Vert\btheta - \btheta_0\Vert_2\leq h} f(x;\btheta)}\right) = 1.$$ 
	Therefore,  there is  $\Delta_K>0$ such that the inequality in Lemma  \ref{lem:deltaK} holds. 
\end{proof}
Before proving Lemma \lemalpha, we state the following Hoeffding's  inequality which can be found in \cite{vershynin2018high} (Theorem 2.2.6).

\begin{lemma}\label{lem:hoeff}
	Let $X_1, \ldots, X_N$ be independent random variables. Assume that $X_i \in\left[m_i, M_i\right]$ for every $i$. Then, for any $t>0$, we have
	$$
	\mathbb{P}\left\{\sum_{i=1}^N\left(X_i-\mathbb{E} X_i\right) \geq t\right\} \leq \exp \left(-\frac{2 t^2}{\sum_{i=1}^N\left(M_i-m_i\right)^2}\right).
	$$
\end{lemma}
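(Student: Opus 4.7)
The plan is to use the classical Chernoff--Cram\'er exponential-moment method. First I would center by writing $Y_i = X_i - \mathbb{E} X_i$, so that $\mathbb{E} Y_i = 0$ and $Y_i$ lies in an interval of length $M_i - m_i$. For any $\lambda > 0$, Markov's inequality applied to the random variable $\exp\bigl(\lambda \sum_{i=1}^N Y_i\bigr)$ gives
\begin{equation*}
\Pr\Bigl\{\sum_{i=1}^N Y_i \geq t\Bigr\} \;\leq\; e^{-\lambda t}\,\mathbb{E}\Bigl[\exp\Bigl(\lambda \sum_{i=1}^N Y_i\Bigr)\Bigr] \;=\; e^{-\lambda t}\prod_{i=1}^N \mathbb{E}\bigl[e^{\lambda Y_i}\bigr],
\end{equation*}
where the last equality uses independence. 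So the problem reduces to controlling each moment generating function $\mathbb{E}[e^{\lambda Y_i}]$ and then optimizing over $\lambda$.

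The key technical step (Hoeffding's lemma) is to show that for any mean-zero random variable $Y$ supported on $[a,b]$ with $a \leq 0 \leq b$,
\begin{equation*}
\mathbb{E}[e^{\lambda Y}] \;\leq\; \exp\!\left(\frac{\lambda^2 (b-a)^2}{8}\right).
\end{equation*}
I would establish this by convexity: for every $x \in [a,b]$, $e^{\lambda x} \leq \frac{b-x}{b-a}\,e^{\lambda a} + \frac{x-a}{b-a}\,e^{\lambda b}$. Taking expectations and using $\mathbb{E} Y = 0$ yields $\mathbb{E}[e^{\lambda Y}] \leq e^{\phi(u)}$ where, writing $p = -a/(b-a) \in [0,1]$ and $u = \lambda(b-a)$, $\phi(u) = -pu + \log(1 - p + p e^u)$. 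A direct computation gives $\phi(0) = \phi'(0) = 0$ and $\phi''(u) = \frac{p(1-p)e^u}{(1-p+pe^u)^2} \leq \tfrac{1}{4}$ (since the product of two quantities summing to one is at most $\tfrac{1}{4}$), whence a second-order Taylor expansion yields $\phi(u) \leq u^2/8$.

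Combining the two steps, for every $\lambda > 0$,
\begin{equation*}
\Pr\Bigl\{\sum_{i=1}^N Y_i \geq t\Bigr\} \;\leq\; \exp\!\left(-\lambda t + \frac{\lambda^2}{8}\sum_{i=1}^N (M_i - m_i)^2\right).
\end{equation*}
The right-hand side is a quadratic in $\lambda$ minimized at $\lambda^{\ast} = 4t/\sum_{i=1}^N (M_i-m_i)^2$, substituting which produces the claimed bound $\exp\!\bigl(-2t^2/\sum_{i=1}^N (M_i-m_i)^2\bigr)$. The main obstacle is really just the sharp constant $1/8$ in Hoeffding's lemma; getting any sub-Gaussian bound is easy via Hoeffding's symmetrization trick, but nailing the constant requires the convexity argument and the clean second-derivative bound above. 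Everything else (Markov, independence, optimization in $\lambda$) is routine.
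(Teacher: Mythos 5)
Your proof is correct. The paper does not actually prove this lemma --- it is stated as a known result and cited to Vershynin (2018), Theorem 2.2.6 --- and your argument (Chernoff bound via Markov's inequality on the exponential moment, Hoeffding's lemma established by the convexity interpolation and the bound $\phi''(u)=s(1-s)\leq 1/4$, then optimization over $\lambda$) is precisely the standard proof found there, with all constants handled correctly.
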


\begin{proof}[Proof of Lemma \lemalpha]
	Recall that  $p(\balpha) = \lambda \left(\sum_{g=1}^G\log(\alpha_g) + G\log G\right)$ and 
	\begin{equation}\label{eq:defw}
		w_{gi}^{(k)} = \frac{\alpha^{(k)}_{g} f\left(x_i; \btheta_g^{(k)}\right)}{\varphi\left(x_i; \bxi^{(k)}, \balpha^{(k)}\right)}.	
	\end{equation} Then, the update of $\balpha$ can be written as $${\alpha}_g^{(k+1)} = \frac{\sum_{i=1}^n {w}_{gi}^{(k)} + \lambda }{n+G\lambda},$$
	which is a weighted sum of $n^{-1}\sum_{i=1}^n w_{gi}^{(k)}$ and $G^{-1}$ and shrinks $n^{-1}\sum_{i=1}^n{w_{gi}^{(k)}}$ towards $G^{-1}$.
	Thus,  we conclude that 
	\begin{equation}\label{eq:boundalpha}
		\min_{g=1,\ldots,G} {\alpha}_g^{(k+1)} \geq \min_{g=1,\ldots,G} n^{-1}{\sum_{i=1}^n {w}_{gi}^{(k)}}.	
	\end{equation}
	Thus, we only need to bound $\min_{g=1,\dots,G} {n^{-1}\sum_{i=1}^n {w}_{gi}^{(k)}}$.
	Let 
	$$T_i = \frac{\inf_{\Vert\btheta - \btheta_0\Vert_2\leq\Delta_K } f(x_i;\btheta)}{\sup_{\Vert\btheta - \btheta_0\Vert_2\leq\Delta_K} f(x_i;\btheta)},$$
	and  $\mathcal{K} = \mathcal{S}^{(k)}_{\epsilon}  \cap \mathcal{B}$, where $\Delta_K$ is as defined in Lemma \ref{lem:deltaK}, and $S_\epsilon^{(k)}$ is defined in (\ref{eq:calS}). 
	Since  $L_2\Delta_K^2 \geq \epsilon$,  on $\mathcal{K} $, we have $\left\|{\bxi}^{(k)}-\bxi_0\right\|_2 \leq \Delta_K$.
	It follows that on $\mathcal{K} $, 
	$ {w}_{gi}^{(k)} \geq {\alpha}_{g}^{(k)}T_i$.  
	Thus, 
	we conclude that 
	\begin{align}\label{eq:boundalpha2}
		&\quad	\left\{  n^{-1}{\sum_{i=1}^n T_i} \geq 1-2K^{-1}\right\} \cap \mathcal{K} \notag\\ 
		&\subset	\left\{ \min_{g=1,\ldots,G}  n^{-1}{\sum_{i=1}^n {w}_{gi}^{(k)}} \geq \min_{g=1,\ldots,G} {\alpha}_g^{(k)}\left( 1- 2K^{-1} \right) \right\} \cap \mathcal{K}  \notag \\
		&\subset \mathcal{E}^{(k+1)} \cap \mathcal{K},
	\end{align}
	where $  \mathcal{E}^{(k+1)}$ is defined in (\ref{eq:calE}).
	
	Then, it suffices to bound the probability
	$\Pr \left(n^{-1}{\sum_{i=1}^n T_i} \geq 1-2K^{-1} \right).$
	Note that $0\leq T_i\leq 1$. Hence, by Lemma \ref{lem:hoeff}, we have 
	$$\Pr \left(n^{-1}\left|{\sum_{i=1}^n T_i- \Ex(T_i)} \right| \geq K^{-1} \right) \leq 2\exp\left({-2 n}/{K^2} \right).$$
	By Lemma \ref{lem:deltaK}, we have
	$1 \geq \Ex (T_i) \geq 1-K^{-1},$  and thus
	$$\Pr \left(n^{-1}{\sum_{i=1}^n T_i} \geq 1-2K^{-1} \right) \geq 1-2\exp\left({-2 n}/{K^2} \right).$$
	Applying (\ref{eq:boundalpha}) and  (\ref{eq:boundalpha2}), we have 
	\begin{align*}
		\Pr\left( \mathcal{E}^{(k+1)} \cap \mathcal{K} \right) & \geq 	\pr \left(n^{-1}{\sum_{i=1}^n T_i} \geq 1-2K^{-1}\right) +  \pr \left(\mathcal{K} \right) - 1
		\\ &\geq \pr \left(\mathcal{K}\right)-2\exp\left({-2 n}/{K^2} \right),
	\end{align*}
	and Lemma \lemalpha {} is proved.
\end{proof}

Finally, combining Lemma \lemtailEudis {} with Lemma \lemalpha, we can prove  Lemma \lemtailtwo.
\begin{proof}[Proof of Lemma \lemtailtwo]
	Recall the definition of $\mathbb{S}_{\delta}$, $\mathcal{S}_{\epsilon}^{(k)}  $ and $\mathcal{E}^{(k+1)} $ defined in (\ref{eq:alphaA}), (\ref{eq:calS}) and (\ref{eq:calE}). 
	For $0\leq k \leq K$, we define 
	\begin{equation}\label{eq:alphak}
		\mathcal{B}^{(k)}=  \left\{  \min_{g=1,\ldots,G} \alpha^{(k)}_g \geq \min_{g=1,\ldots,G}  \left(1-\frac{2}{K}\right)^k{\alpha}_g^{(0)}  \right\}.
	\end{equation}
	It is clear that for any $0\leq k \leq K$,  $\mathcal{B}^{(k)} \subset \left\{ \ba^{(k)} \in \mathbb{S}_{\delta}\right\}$  because $(1-2/K)^k \geq 27^{-1}$ for $K\geq 3$. 
	We aim to prove a stronger result
	\begin{equation}\label{eq:strongerresult}
		\Pr\left( 	\mathcal{S}_{\epsilon}^{(k)} \cap 	\mathcal{B}^{(k)}  \right) \geq 1-5(k+1)\exp\left(-c_2 n \epsilon^2 \right)-2k\exp\left(\frac{-2 n}{K^2} \right), 0\leq k\leq K.
	\end{equation}
	We use mathematical induction to prove (\ref{eq:strongerresult}). 
	We first give the proof for the case $k = 0$.  It is clear that $ \mathcal{S}_{\epsilon}^{(0)} \cap 	\mathcal{B}^{(0)}  = \mathcal{S}_{\epsilon}^{(0)}. $
	Since $\ba^{(0)} \in \mathbb{S}_{\delta} $,  by Lemma \lemtailEudis, we have 
	\begin{equation}\label{eq:conind0}
		\Pr\left( 	\mathcal{S}_{\epsilon}^{(0)}  \cap 	\mathcal{B}^{(0)} \right) \geq 1-5\exp\left(-c_2 n \epsilon^2 \right).  
	\end{equation}
	
	Assume the result holds for $k < K$,  we will prove it for $k+1$.  On $	\mathcal{S}_{\epsilon}^{(k)} \cap 	\mathcal{B}^{(k)}$, 
	since $L_2\Delta_K^2 \geq \epsilon$, 
	we have $\left\|{\bxi}^{(k)}-\bxi_0\right\|_2 \leq \Delta_K$, i.e. $ \mathcal{S}_{\epsilon}^{(k)} \cap 	\mathcal{B}^{(k)} \subset \left\{ \left\|{\bxi}^{(k)}-\bxi_0\right\|_2 \leq \Delta_K \right\}$. 
	By the inductive hypothesis, we have
	$$
	\Pr\left( 	\mathcal{S}_{\epsilon}^{(k)} \cap 	\mathcal{B}^{(k)}  \right) \geq 1-5(k+1)\exp\left(-c_2 n \epsilon^2 \right)-2k\exp\left(\frac{-2 n}{K^2} \right).
	$$
	Thus, by Lemma \lemalpha, we have
	$$
	\Pr\left(\mathcal{E}^{(k+1)} \cap 	\mathcal{S}_{\epsilon}^{(k)} \cap 	\mathcal{B}^{(k)} \right) \geq 
	1-5(k+1)\exp\left(-c_2 n \epsilon^2 \right)-2(k+1)\exp\left(\frac{-2 n}{K^2} \right).
	$$
	Note that 
	$$ \mathcal{E}^{(k+1)} \cap	\mathcal{B}^{(k)} \subset 	\mathcal{B}^{(k+1)},$$
	because 
	$$
	\min_{g=1,\dots,G} {\alpha}_g^{(k+1)} \geq \min_{g=1,\dots,G}  {\alpha}_g^{(k)}\left(1-\frac{2}{K}\right) \geq
	\min_{g=1,\ldots,G}  \left(1-\frac{2}{K}\right)^{k+1}{\alpha}_g^{(0)}. 
	$$
	Thus, we conclude that
	$$
	\Pr\left(  \left\{\ba^{(k+1)} \in \mathbb{S}_{\delta}\right\}  \right) \geq \Pr\left(	\mathcal{B}^{(k+1)}  \right) \geq \Pr\left(\mathcal{E}^{(k+1)} \cap 	\mathcal{S}_{\epsilon}^{(k)} \cap 	\mathcal{B}^{(k)} \right). 
	$$
	By Lemma \lemtailEudis, 	we have 
	\begin{align*}
		\Pr\left( \mathcal{S}^{(k+1)}_{\epsilon} \cap \left\{ \ba^{(k+1)} \in \mathbb{S}_{\delta}\right\} \right) &\geq\Pr\left(  \left\{\ba^{(k+1)} \in \mathbb{S}_{\delta}\right\}  \right) -5\exp\left(-c_2 n \epsilon^2 \right) \\ 
		&\geq  
		1-5(k+2)\exp\left(-c_2 n \epsilon^2 \right)-2(k+1)\exp\left(\frac{-2 n}{K^2} \right),
	\end{align*}
	and thus we complete the proof.
\end{proof}
\subsection{Proofs of Theorem \thmzeroRes} \label{subsec:proofofH01}
In order to derive a tail probability bound for the EM-test statistic, we need the following lemmas.
\begin{lemma}[Rosenthal's  inequality] \label{lem:Rosen}
	Suppose that $\{X_i\}_{i=1}^n$ are mean-zero  and independent random variables and satisfy the moment bound $\Vert X_i\Vert_{L^{2m}} \leq C , 1 \leq i \leq n $ with some fixed integer $m\geq 1$. Then, we have 
	$$\Pr \left\{\left|\sum_{i=1}^n X_i\right| \geq nt   \right\}\leq  2R_m \left(\frac{C}{\sqrt{n}t}\right)^{2m}, \ \ \mbox{for all } t>0,$$
	where $R_m$ is a universal constant only depending on $m$. 
	Further, if  $\Ex(X_i) \neq 0$, then we have
	$$\Pr \left\{\left|\sum_{i=1}^n \left[X_i - \Ex(X_i)\right]\right| \geq nt   \right\}\leq  2R_m \left(\frac{2C}{\sqrt{n}t}\right)^{2m}, \ \ \mbox{for all } t>0.$$
\end{lemma}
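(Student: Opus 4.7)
The plan is to reduce the tail bound to a moment bound by a Markov step and then invoke the classical \emph{moment} form of Rosenthal's inequality (the one saying that for mean-zero independent $X_i$ and integer $m\ge 1$,
\[
\Ex\Bigl|\sum_{i=1}^n X_i\Bigr|^{2m}\ \le\ C_m\,\max\!\left\{\sum_{i=1}^n \Ex|X_i|^{2m},\ \Bigl(\sum_{i=1}^n \Ex X_i^{2}\Bigr)^{m}\right\},
\]
for a universal constant $C_m$ depending only on $m$). This is standard and can be cited, so there is no real obstacle in its derivation.

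First I would apply Markov's inequality at order $2m$ to obtain
\[
\Pr\Bigl\{\bigl|{\textstyle\sum_{i=1}^n} X_i\bigr|\ge nt\Bigr\} \;\le\; \frac{\Ex\bigl|\sum_{i=1}^n X_i\bigr|^{2m}}{(nt)^{2m}}.
\]
Next, plug in the moment form of Rosenthal's inequality above. The hypothesis $\|X_i\|_{L^{2m}}\le C$ gives directly $\Ex|X_i|^{2m}\le C^{2m}$, and by Lyapunov / Jensen $\Ex X_i^{2}\le \bigl(\Ex|X_i|^{2m}\bigr)^{1/m}\le C^{2}$. Therefore
\[
\sum_{i=1}^n \Ex|X_i|^{2m}\le nC^{2m},\qquad \Bigl(\sum_{i=1}^n \Ex X_i^{2}\Bigr)^{m}\le (nC^{2})^{m}=n^{m}C^{2m}.
\]
For any $m\ge 1$ and $n\ge 1$, the maximum of these two quantities is $n^{m}C^{2m}$, so
\[
\Ex\Bigl|{\textstyle\sum_{i=1}^n} X_i\Bigr|^{2m}\ \le\ C_m\, n^{m}C^{2m}.
\]
Substituting into the Markov bound yields
\[
\Pr\Bigl\{\bigl|{\textstyle\sum_{i=1}^n} X_i\bigr|\ge nt\Bigr\}\ \le\ \frac{C_m\,n^{m}C^{2m}}{(nt)^{2m}}\ =\ C_m\Bigl(\frac{C}{\sqrt{n}\,t}\Bigr)^{2m},
\]
and the statement of the lemma follows upon setting $2R_m:=C_m$.

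For the second part, I would apply the centered version just proved to the independent, mean-zero variables $\widetilde X_i:=X_i-\Ex X_i$. By the triangle inequality for the $L^{2m}$-norm, $\|\widetilde X_i\|_{L^{2m}}\le \|X_i\|_{L^{2m}}+|\Ex X_i|\le 2\|X_i\|_{L^{2m}}\le 2C$, so replacing $C$ by $2C$ in the first bound gives exactly the stated inequality. The only ``step to be careful with'' is checking that the max in the Rosenthal moment bound is dominated by the variance-type term $n^{m}C^{2m}$ (trivial for $n\ge 1$); everything else is a direct citation plus a one-line Markov step.
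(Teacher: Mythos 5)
Your proposal is correct and follows essentially the same route as the paper: Markov's inequality at order $2m$ combined with the moment form of Rosenthal's inequality, with $\Ex X_i^2 \leq C^2$ obtained via Lyapunov and the centered case handled by the triangle inequality in $L^{2m}$. The only cosmetic difference is that you quote Rosenthal with a maximum where the paper uses a sum of the two terms, which merely shifts a factor of $2$ into the constant; the bookkeeping for the second part (replacing $C$ by $2C$) matches the stated bound exactly.
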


\begin{lemma} \label{lem:normcontrol}
	Let $p,q \in \mathbb{N}$ and fulfill $p+q \leq 3$.  Let $g(x; \btheta_{0})$ and $m$ be the same as in Condition (\Cc).
	Write $R_{pq}(x_i) =  g^p(x_i; \btheta_{0})  \Vert \b_i \Vert_2^q $,
	where  
	$$  \left\|\b_i\right\|_2 = \left( \sum_{j=1}^d Y_{ij}^2 + \sum_{j=1}^d Z_{ij}^2 + \sum_{j_1=1}^d\sum_{j_2> j_1}^d U_{ij_1j_2}^2\right)^{1/2}.$$
	Then, under $\mathbb{H}_0$ and Condition (\Cc), we have 
	$$\Pr \left\{\left|\sum_{i=1}^n R_{pq}(x_i)\right| \geq n(1+(d+1)^{q} M^{p+q})   \right\}\leq  2R_m \left(\frac{2(d+1)^{q} M^{p+q}}{\sqrt{n}}\right)^{2m}.$$
\end{lemma}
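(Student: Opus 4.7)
The plan is a direct application of Rosenthal's inequality (Lemma \ref{lem:Rosen}) to the i.i.d.\ sequence $X_i = R_{pq}(x_i)$. The setup reduces to two ingredients: a pointwise deterministic bound of $R_{pq}$ in terms of $g(x;\btheta_0)$, and an $L^{2m}$ moment bound that feeds Rosenthal's inequality with the right constant.

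First, I would establish the pointwise bound $\Vert\b_i\Vert_2 \leq (d+1)\, g(x_i;\btheta_0)$. Every entry of $\b_i$ (the $Y_{ih}$, $Z_{ih}$, $U_{ih\ell}$) is of the form $(1/f)\,\partial^k f/\partial\theta_{j_1}\cdots\partial\theta_{j_k}$ with $k\leq 2$, so by Condition (\Cc) each entry is dominated in absolute value by $g(x_i;\btheta_0)$ (the factor $1/2$ in $Z_{ih}$ only helps). Since $\b_i$ has $2d+d(d-1)/2 \leq (d+1)^2$ coordinates, the Euclidean norm is at most $(d+1)\, g(x_i;\btheta_0)$, and therefore $R_{pq}(x_i) \leq (d+1)^q\, g^{p+q}(x_i;\btheta_0)$ deterministically.

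Next, I would translate this into the $L^{2m}$ bound $\Vert R_{pq}(x_i)\Vert_{L^{2m}} \leq (d+1)^q M^{p+q}$. Because $p+q\leq 3$ we have $2m(p+q)\leq 6m\leq 8m$, so by monotonicity of $L^r$--norms on the probability measure $f(x;\btheta_0)\mu(\mathrm{d}x)$,
\[
\Vert g^{p+q}(x_i;\btheta_0)\Vert_{L^{2m}} = \Vert g(x_i;\btheta_0)\Vert_{L^{2m(p+q)}}^{p+q} \leq \Vert g(x_i;\btheta_0)\Vert_{L^{8m}}^{p+q} \leq M^{p+q},
\]
the last inequality being Condition (\Cc). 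Setting $C = (d+1)^q M^{p+q}$, the same bound yields $|\Ex R_{pq}(x_i)| \leq \Vert R_{pq}(x_i)\Vert_{L^1} \leq C$.

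Finally, applying the centred form of Lemma \ref{lem:Rosen} with $t=1$ and envelope $C$ gives
\[
\Pr\!\left\{\left|\sum_{i=1}^n [R_{pq}(x_i) - \Ex R_{pq}(x_i)]\right| \geq n\right\} \leq 2R_m\!\left(\frac{2C}{\sqrt{n}}\right)^{\!2m}.
\]
The triangle inequality $|\sum R_{pq}(x_i)| \leq |\sum [R_{pq}(x_i)-\Ex R_{pq}(x_i)]| + n|\Ex R_{pq}(x_i)|$, combined with $|\Ex R_{pq}(x_i)|\leq C$, shows that $\{|\sum R_{pq}(x_i)| \geq n(1+C)\}$ is contained in the event on the left-hand side above, so the same bound applies and yields exactly the claim. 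The argument is routine; the only care required is verifying the combinatorial constant in the pointwise bound $\Vert\b_i\Vert_2\leq (d+1)g(x_i;\btheta_0)$ and confirming that Condition (\Cc)'s $L^{8m}$ envelope dominates all $L^{2m(p+q)}$ norms in the range $p+q\leq 3$. No serious obstacle is expected.
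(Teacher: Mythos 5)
Your proof is correct and follows the same overall strategy as the paper's: establish the moment bound $\Vert R_{pq}(x_i)\Vert_{L^{2m}} \leq (d+1)^q M^{p+q}$, note that the mean is bounded by the same constant via Lyapunov's inequality, and then apply the centred Rosenthal inequality (Lemma \ref{lem:Rosen}) with $t=1$ together with the triangle inequality. The only place the two arguments diverge is in how the moment bound is obtained: the paper decouples the two factors via Cauchy's inequality, $\Vert \Vert\b_i\Vert_2^q g^p\Vert_{L^{2m}} \leq \Vert\Vert\b_i\Vert_2^q\Vert_{L^{4m}}\,\Vert g^p\Vert_{L^{4m}}$, which forces a separate treatment of the boundary cases $(p,q)=(3,0)$ and $(0,3)$, whereas you use the pointwise domination $\Vert\b_i\Vert_2 \leq (d+1)\,g(x_i;\btheta_0)$ (valid since each of the $d(d+3)/2 \leq (d+1)^2$ entries of $\b_i$ is bounded by $g(x_i;\btheta_0)$ under Condition (\Cc)) followed by a single comparison $\Vert g\Vert_{L^{2m(p+q)}} \leq \Vert g\Vert_{L^{8m}} \leq M$. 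Your route is slightly cleaner, handles all $p+q\leq 3$ uniformly, and lands on exactly the same constant, so nothing is lost.
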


\begin{lemma} \label{lem:cont-deri}
	Let $k \in \{3,4\}$ and  $j_1,\dots, j_k \in \{1,\dots,d \}$.  Define 
	$$D_i(j_1, \dots, j_k) = \left(\frac{\partial^k f(x_i; \btheta_0)}{\partial \theta_{j_1}  \cdots\partial \theta_{j_k}}\right) \bigg/  (k!f(x_i; \btheta_0)).$$
	Then, under $\mathbb{H}_0$ and Condition (\Cc), we have 
	$$\Pr\left\{\sum_{j_{1}=1}^d \cdots \sum_{j_{k}=1}^d\left|\sum_{i=1}^n D_i(j_1, \dots, j_k) \right| < d^kn^{5/8}  \right\}\geq 1- 2d^kR_m  \left(\frac{M}{ n^{1/8} }\right)^{2m},$$
\end{lemma}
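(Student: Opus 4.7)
The plan is to obtain this as a direct consequence of Rosenthal's inequality (Lemma \ref{lem:Rosen}) combined with a union bound over the $d^k$ multi-indices $(j_1,\ldots,j_k)$. First I would fix such a multi-index and verify that $D_1(j_1,\ldots,j_k),\ldots,D_n(j_1,\ldots,j_k)$ are i.i.d.\ mean-zero random variables with a uniformly bounded $L^{2m}$ norm under $\mathbb{H}_0$.

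The mean-zero property comes from interchanging differentiation and integration: Condition (\Cc) dominates the relevant $k$-th partial derivative of $f$ by $g(x;\btheta_0)\,f(x;\btheta_0)$, which is integrable since $\|g(\cdot;\btheta_0)\|_{L^{8m}}\leq M<\infty$. This justifies
$$\Ex\left[D_i(j_1,\ldots,j_k)\right] \;=\; \frac{1}{k!}\,\frac{\partial^k}{\partial\theta_{j_1}\cdots\partial\theta_{j_k}} \int f(x;\btheta_0)\,\mu(\rd x) \;=\; 0.$$
For the moment bound, Condition (\Cc) gives $k!\,\bigl|D_i(j_1,\ldots,j_k)\bigr| \leq g(x_i;\btheta_0)$, so since $k!\geq 1$ and $L^{8m}\subset L^{2m}$ by Jensen's inequality, $\|D_i(j_1,\ldots,j_k)\|_{L^{2m}} \leq \|g(\cdot;\btheta_0)\|_{L^{8m}} \leq M$.

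Next I would apply Lemma \ref{lem:Rosen} with $C=M$ and $t = n^{-3/8}$, so that $nt = n^{5/8}$ and $C/(\sqrt{n}\,t) = M/n^{1/8}$. This yields, for each fixed multi-index,
$$\Pr\left\{\left|\sum_{i=1}^n D_i(j_1,\ldots,j_k)\right| \geq n^{5/8}\right\} \leq 2R_m\left(\frac{M}{n^{1/8}}\right)^{2m}.$$
Finally, observe that if every summand in the outer $d^k$-fold sum is smaller than $n^{5/8}$, then the total is strictly smaller than $d^k\,n^{5/8}$; contrapositively, the event of interest is contained in the union $\bigcup_{j_1,\ldots,j_k}\bigl\{|\sum_i D_i(j_1,\ldots,j_k)| \geq n^{5/8}\bigr\}$. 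A union bound over the $d^k$ multi-indices gives the stated tail bound with prefactor $2d^k R_m$.

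This is essentially a mechanical argument, so I do not anticipate a significant obstacle. The only subtle technical point is justifying the differentiation under the integral to conclude $\Ex[D_i] = 0$, but this is a clean consequence of the uniform domination provided by Condition (\Cc). The choice $t = n^{-3/8}$ is dictated by the target rates: we want the threshold to be $n^{5/8}$ and the tail prefactor to involve $n^{1/8}$, and $t = n^{-3/8}$ balances $nt$ and $\sqrt{n}\,t$ to achieve both simultaneously.
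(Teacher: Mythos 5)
Your proposal is correct and follows essentially the same route as the paper's proof: verify $\Ex[D_i]=0$ and $\|D_i\|_{L^{2m}}\leq M$ via Condition (C3), apply Rosenthal's inequality (Lemma \ref{lem:Rosen}) with $t=n^{-3/8}$, and take a union bound over the $d^k$ multi-indices. The only difference is that you spell out the justification for interchanging differentiation and integration and for the $L^{2m}$ bound, which the paper asserts without detail.
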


\begin{proof}[Proof of Theorem \thmzeroRes]
	Recall that ${\rm EM}_n^{(K)} = \max\left\{ M_n^{(K)}(\ba_t),  t=1,\dots,T\right\}.$ Without loss of generality, we assume $T=1$ and $ {\rm EM}_n^{(K)}  =M_n^{(K)}\left(\ba^{(0)}\right) .$ 
	%{\bf[check.]}
	Considering that
	$$M_n^{(K)}\left(\ba^{(0)}\right) = 2\left[pl_n\left( {\bxi}^{(K)},{\balpha}^{(K)}\right) - pl_n( \bxi_0,\ba_0) + pl_n( \bxi_0,\ba_0) - pl_n\left( \hat{\bxi}_0,\ba_0\right) \right],$$
	we let $R_{1n} =2\left[pl_n\left( {\bxi}^{(K)},{\balpha}^{(K)}\right) - pl_n( \bxi_0,\ba_0) \right] $ and $R_{0n} =2\left[ pl_n( \bxi_0,\ba_0) - pl_n\left( \hat{\bxi}_0,\ba_0\right) \right] $.  
	Since $R_{0n} \leq 0$, we have $R_{1n} \geq M_n^{(K)}\left(\ba^{(0)}\right) $. 
	Hence, we only consider the $R_{1n}$ term. For notation simplicity, we write $\bar{\ba}, \bar{\bxi} $ in replacement of ${\balpha}^{(K)}, {\bxi}^{(K)}$.
	
	Next, we focus on the $R_{1n}$ term. Since $p(\ba)$ is maximized at $\ba_0$, we have
	\begin{align}\label{R1-inequality}
		R_{1n} &\leq 2 \left\{l_n( \bar{\bxi},\bar{\balpha} ) - l_n( \bxi_0, \ba_0)\right\}  \nonumber\\
		&= 2\sum_{i=1}^n \log\left(1+\sum_{g=1}^G \bar{\alpha}_g\left(\frac{f(x_i;   \bar{\btheta}_g)}{f(x_i;   {\btheta}_0)} -1 \right)\right) \nonumber\\
		&= \sum_{i=1}^n 2\log(1+\delta_i), \nonumber
	\end{align}
	where $\delta_i = \sum_{g=1}^G \bar{\alpha}_g\left(\frac{f(x_i;   \bar{\btheta}_g)}{f(x_i;   {\btheta}_0)} -1 \right)$. 
	Applying the inequality $\log(1+x) \leq x - x^2/2 + x^3/3$, we have
	
	\begin{equation}\label{eq:R1n}
		R_{1n} \leq \sum_{i=1}^n 2\log\left(1+\delta_i\right) \leq 2\sum_{i=1}^{n} \delta_i - \sum_{i=1}^{n} \delta_i^2 + (2/3) \sum_{i=1}^{n} \delta_i^3,	
	\end{equation}
	where $\delta_i = \sum_{g=1}^G \bar{\alpha}_g\left(\frac{f\left(x_i;  \bar{\btheta}_g\right)}{f\left(x_i;  {\btheta}_0\right)} -1 \right)$.
	Let 
	$$\bar{\m} = \m(\bar{\ba}, \bar{\bxi}, \bxi_0),$$
	where $\m$ is defined in (\ref{eq:m}). 
	Define  
	\begin{equation}\label{eq:epsilon}
		\varepsilon_{in} =  \delta_i - \bar{\m}^{ {\rm T}} \b_i.
	\end{equation}
	Plugging (\ref{eq:epsilon}) into (\ref{eq:R1n}), we have 
	$$2\sum_{i=1}^{n} \delta_i  =  2\sum_{i=1}^{n} \bar{\m}^{ {\rm T}} \b_i + 2\sum_{i=1}^{n} \varepsilon_{in}. $$
	and 
	$$-\sum_{i=1}^{n} \delta^2_i  \leq  - \sum_{i=1}^{n} \bar{\m}^{ {\rm T}} \b_i \b_i\trans \bar{\m} - 2\sum_{i=1}^{n} \bar{\m}^{ {\rm T}} \b_i\varepsilon_{in}, $$
	because $ \sum_{i=1}^{n}  \varepsilon_{in}^2 \geq 0$.  Therefore, (\ref{eq:R1n}) can be rewritten as
	\begin{align}\label{eq:R1nexpan}
		R_{1n} &\leq 2\sum_{i=1}^{n} \bar{\m}^{ {\rm T}} \b_i - \sum_{i=1}^{n} \bar{\m}^{ {\rm T}} \b_i \b_i\trans \bar{\m} + 2\sum_{i=1}^{n} \varepsilon_{in} \\
		&\quad - 2\sum_{i=1}^{n} \bar{\m}^{ {\rm T}} \b_i\varepsilon_{in} + (2/3) \sum_{i=1}^{n} \delta_i^3 \notag.
	\end{align}
	Our next goal is to control the $  2\sum_{i=1}^{n} \varepsilon_{in} - 2\sum_{i=1}^{n} \bar{\m}^{ {\rm T}} \b_i\varepsilon_{in} + (2/3) \sum_{i=1}^{n} \delta_i^3$ term, and it will be divided into three steps.  
	
	\noindent
	{\bf Step 1: } In the first step, we bound the $2\sum_{i=1}^{n}\varepsilon_{in} $ term.
	By Taylor's expansion to the fifth order,  $\varepsilon_{in}$ can be accurately represented as 
	\begin{align*}
		\varepsilon_{in} &= 	\sum_{j_{1}=1}^d \cdots \sum_{j_{3}=1}^d \sum_{g=1}^G \bar{\alpha}_g \prod_{s=1}^3 \left(\bar{\theta}_{gj_s} - {\theta}_{0j_s}\right) \left(\frac{\partial^3 f(x_i; \btheta_0)}{\partial \theta_{j_1} \partial \theta_{j_2}\partial \theta_{j_3}}\right) \bigg/  (3!f(x_i; \btheta_0)) \\
		&+  \sum_{j_{1}=1}^d \cdots \sum_{j_{4}=1}^d \sum_{g=1}^G \bar{\alpha}_g \prod_{s=1}^4 \left(\bar{\theta}_{gj_s} - {\theta}_{0j_s}\right) \left(\frac{\partial^4 f(x_i; \btheta_0)}{\partial \theta_{j_1} \partial \theta_{j_2}\partial \theta_{j_3}\partial \theta_{j_4}}\right) \bigg/  (4!f(x_i; \btheta_0)) \\
		&+  \sum_{j_{1}=1}^d \cdots \sum_{j_{5}=1}^d  \sum_{g=1}^G \bar{\alpha}_g \prod_{s=1}^5 \left(\bar{\theta}_{gj_s} - {\theta}_{0j_s}\right) \left(\frac{\partial^5 f(x_i; {\bm \zeta}_g(x_i))}{\partial \theta_{j_1} \partial \theta_{j_2}\partial \theta_{j_3}\partial \theta_{j_4}\partial \theta_{j_5}}\right)\bigg /  (5!f(x_i; \btheta_0)) \\
		&= {\rm I + II + III},
	\end{align*}
	where ${\bm \zeta}_g (x_i)$ lies between $\bar{\btheta}_g$ and $\btheta_0$. 
	Take $\epsilon = cn^{-11/24} \log n \wedge L_2 \Delta^2$, where $\Delta = \Delta_K \wedge  \tau$. 
	Let 
	$$\mathcal{A}_1 = \left\{ \left \|\bar{\bxi} - \bxi_0\right \|_2^2 < \frac{\epsilon}{L_2}, \left\|
	\bar{\bf{m}}\right\|_2 < \frac{\epsilon}{L_1} \right\}.$$ 
	By Lemma \lemtailtwo, when $n$ is large enough such that  $$L_2\Delta^2 \geq \max \left(c n^{-1/2} {\rm log} n, c_1^{-1/2} \sqrt{-p_0} n^{-1/2}\right),$$ we have 
	\begin{equation}\label{eq:H0A1prob}
		\Pr(\mathcal{A}_1) \geq 1-5(K+1)\exp\left(-c_2 n \epsilon^2 \right)-2K\exp\left(\frac{-2 n}{K^2} \right).
	\end{equation}
	On $\mathcal{A}_1$, we have 
	$$\left\|\bar{\bxi}- \bxi_0\right\|_2 < \left(\sqrt{\frac{{c}}{L_2}}\right)n^{-11/48}\log^{1/2}n \mbox{ and } \left\|\bar{\bf{m}}\right\|_2 < \frac{c}{L_1}n^{-11/24} \log n.$$ 
	For fixed $j_1,j_2,j_3$, by Lemma \ref{lem:normtrans}, we have
	$$\sum_{g=1}^G \bar{\alpha}_g \prod_{s=1}^3 \left|\left(\bar{\theta}_{gj_s} - {\theta}_{0j_s}\right)\right| \leq \sum_{g=1}^G  \bar{\alpha}_g \left(\sqrt{d}\right)^3 \left\|\bar{\btheta}_g -{\btheta}_{0}\right\|_2^3 \leq (\sqrt{d})^3 \left \|\bar{\bxi} - \bxi_0\right \|_2^3.$$
	Let 
	$$\mathcal{A}_2 = \left\{ \sum_{j_{1}=1}^d \cdots \sum_{j_{3}=1}^d\left|\sum_{i=1}^{n} \left(\frac{\partial^3 f(x_i; \btheta_0)}{\partial \theta_{j_1} \partial \theta_{j_2}\partial \theta_{j_3}}\right) \bigg/  (3!f(x_i; \btheta_0))\right| < d^{3} n^{5/8} \right\}.$$
	Then, on $\mathcal{A}_1 \cap \mathcal{A}_2$, we have 
	$${\rm |I|} < d^{9/2}\left(\sqrt{\frac{{c}}{L_2}}\right)^3 n^{-1/16}\log^{3/2}n.$$
	By Lemma  \ref{lem:cont-deri}, it follows that 
	$$P(\mathcal{A}_2) \geq 1-2d^3 R_m \left(\frac{M}{ n^{1/8} }\right)^{2m}.$$ 
	Similarly, we let
	$$\mathcal{A}_3 = \sum_{j_{1}=1}^d \cdots \sum_{j_{4}=1}^d \left\{ \left|\sum_{i=1}^{n} \left(\frac{\partial^4 f(x_i; \btheta_0)}{\partial \theta_{j_1} \partial \theta_{j_2}\partial \theta_{j_3}\partial \theta_{j_4}}\right) \bigg/  (4!f(x_i; \btheta_0))\right| < d^4 n^{5/8} \right\}.$$
	On $\mathcal{A}_1 \cap \mathcal{A}_3$ we have 
	$${\rm |II|} < d^{6}\left(\sqrt{\frac{{c}}{L_2}}\right)^4 n^{-7/24}\log^{2}n.$$
	By Lemma  \ref{lem:cont-deri}, it follows that 
	$$P(\mathcal{A}_3) \geq 1-2d^4  R_m \left(\frac{M}{ n^{1/8} }\right)^{2m}.$$ 
	Finally, let $g(x; \btheta_0)$ be  the function defined in Condition (\Cc). 
	Then, for fixed $j_1,\dots,j_5$,  
	$$\sup_{\Vert\btheta - \btheta_{0}\Vert_2 \leq \tau} \left| \left(\frac{\partial^5 f(x; {\bm \theta})}{\partial \theta_{j_1} \partial \theta_{j_2}\partial \theta_{j_3}\partial \theta_{j_4}\partial \theta_{j_5}}\right) /  (5!f(x; \btheta_0)) \right| \leq g(x; \btheta_{0}).$$
	By Lemma \ref{lem:normcontrol}, we have 
	$$\Pr\left(\left|\sum_{i=1}^n g(x_i)\right| \geq n(1+M)\right) \leq 2R_m \left( \frac{4M^2}{n}\right)^{m}.$$
	Let  
	$$\mathcal{A}_4 = \left\{ \left|\sum_{i=1}^{n}d^5 g(x_i) \right| < d^5 n(1+M) \right\}.$$
	Then, we have on $\mathcal{A}_1\cap \mathcal{A}_4$
	$${\rm |III|} < d^{15/2}\left(\sqrt{\frac{{c}}{L_2}}\right)^5 (1+M) n^{-7/48} \log^{5/2}n,$$
	and the probability is at least 
	$$\Pr( \mathcal{A}_4) \geq 1- 2R_m \left( \frac{4M^2}{n}\right)^{m}.$$ 
	In summary, on $\mathcal{A}_1 \cap \mathcal{A}_2 \cap \mathcal{A}_3 \cap \mathcal{A}_4$, we have 
	\begin{align}\label{eq:H0step1}
		\left|\sum_{i=1}^{n}\varepsilon_{in}\right|   &< d^{9/2}\left(\sqrt{\frac{{c}}{L_2}}\right)^3 n^{-1/16}\log^{3/2}n + d^{6}\left(\sqrt{\frac{{c}}{L_2}}\right)^4 n^{-7/24}\log^{2}n \\
		&\quad + d^{15/2}\left(\sqrt{\frac{{c}}{L_2}}\right)^5 (1+M) n^{-7/48} \log^{5/2}n, \notag
	\end{align}
	and 
	\begin{equation}\label{eq:H0step1prob}
		\Pr(\mathcal{A}_2 \cap \mathcal{A}_3 \cap \mathcal{A}_4) \geq 1- 2(d^3+d^4 ) R_m \left(\frac{M}{ n^{1/8} }\right)^{2m}  - 2R_m \left( \frac{4M^2}{n}\right)^{m}. 
	\end{equation}
	\noindent
	{\bf Step 2: } Next, we aim to bound 
	$\left|2\sum_{i=1}^{n} \bar{\m}^{ {\rm T}} \b_i\varepsilon_{in}\right|. $
	By Taylor's expansion to the third order,  $\varepsilon_{in}$ can be accurately represented as 
	$$
	\varepsilon_{in} =  \sum_{j_{1}=1}^d \cdots \sum_{j_{3}=1}^d \sum_{g=1}^G \bar{\alpha}_g \prod_{s=1}^3 \left(\bar{\theta}_{gj_s} - {\theta}_{0j_s}\right) \left(\frac{\partial^3 f(x_i;{\bm \zeta}_g(x_i))}{\partial \theta_{j_1} \partial \theta_{j_2}\partial \theta_{j_3}}\right) /  (3!f(x_i;\btheta_0)), 
	$$
	where ${\bm \zeta}_g(x_i)$ lies between ${\theta}_g$ and $\theta_0$.  
	Then we have 
	\begin{align}\label{eq:epsBound}
		\left|\varepsilon_{in}\right| &\leq \left|\sum_{j_{1}=1}^d \cdots \sum_{j_{3}=1}^d \sum_{g=1}^G \bar{\alpha}_g \prod_{s=1}^3 \left(\bar{\theta}_{gj_s} - {\theta}_{0j_s}\right)\right|  g(x_i; \btheta_{0})  \notag\\
		&\leq (\sqrt{d})^3 \left \|\bar{\bxi} - \bxi_0\right\|_2^3 d^3 g(x_i; \btheta_{0}).
	\end{align}
	Applying the inequality (\ref{eq:epsBound}) and Cauchy's inequality, we have
	$$\left|2 \bar{\m}^{{\rm T}} \b_i  \varepsilon_{in}\right|\leq 2 \left\|\bar{\m}\right\|_2  \left\|\b_i\right\|_2  \left(\sqrt{d}\right)^3 \left\|\bar{\bxi} - \bxi_0\right\|_2^3  d^3g(x_i; \btheta_{0}),$$
	where 
	$$  \left\|\b_i\right\|_2 = \left( \sum_{j=1}^d Y_{ij}^2 + \sum_{j=1}^d Z_{ij}^2 + \sum_{j_1=1}^d\sum_{j_2> j_1}^d U_{ij_1j_2}^2\right)^{1/2}.$$
	Therefore,  we only need to consider the term 
	$ \sum_{i=1}^n \left\|\b_i\right\|_2    g(x_i; \btheta_{0}).$
	By Lemma \ref{lem:normcontrol},   we have 
	$$\Pr\left(\left|\sum_{i=1}^n \left\|  \b_i\right\|_2     g(x_i; \btheta_{0}) \right| \geq n\left(1+(d+1)M^2\right)\right) \leq 2R_m \left( \frac{ 4(d+1)^2M^4}{n}\right)^{m}.$$
	Let 
	$$\mathcal{A}_{5}= \left\{\left|\sum_{i=1}^n \left\|  \b_i\right\|_2     g(x_i; \btheta_{0}) \right|  < n\left(1+(d+1)M^2\right) \right\}.$$
	Therefore, on $\mathcal{A}_1 \cap \mathcal{A}_5$,  using the fact that $L_2 \leq L_1 $, we have 
	\begin{equation} \label{eq:H0step2}
		\left|\sum_{i=1}^n2 \bar{\m}^{{\rm T}} \b_i  \varepsilon_{in}\right|< 2 d^{9/2} \left(1+(d+1)M^2\right) \left({\frac{{c}}{L_2}}\right)^{5/2}n^{-7/48}\log^{5/2}n,	
	\end{equation}
	and 
	\begin{equation}\label{eq:H0step2prob}
		\Pr(\mathcal{A}_5) \geq 1 - 2R_m \left( \frac{ 4(d+1)^2M^4}{n}\right)^{m}.
	\end{equation}
	\noindent
	{\bf Step 3: } Finally, we aim to bound 
	\begin{equation}\label{eq:H0step3all}
		\sum_{i=1}^{n} \delta_i^3 =  \sum_{i=1}^{n} \left\{  \left(\bar{\m}^{ {\rm T}} \b_i\right)^3  + 3 \left(\bar{\m}^{ {\rm T}} \b_i\right)^2\varepsilon_{in} + 3 \left(\bar{\m}^{ {\rm T}} \b_i\right)\varepsilon_{in}^2 +  \varepsilon_{in}^3\right\}.
	\end{equation}
	We first deal with 
	$ \sum_{i=1}^{n} \left|\left(\bar{\m}^{ {\rm T}} \b_i\right)^q\varepsilon_{in}^p\right|,$ where $p,q \in \mathbb{N}$ and $p+q = 3$. 
	Similar to the proof in Step 2,  we have 
	$$\left|\varepsilon_{in}\right|
	\leq d^{9/2}\left\|\bar{\bxi} - \bxi_0\right\|_2^3 g(x_i;\btheta_{0}).$$
	Then, we have 
	\begin{equation*}
		\sum_{i=1}^{n} \left|\left(\bar{\m}^{ {\rm T}} \b_i\right)^q\varepsilon_{in}^p\right| \leq d^{9p/2} \left\|\bar{\bxi} - \bxi_0\right\|_2^{3p}\Vert\bar{\m}\Vert_2^q\sum_{i=1}^{n}  \Vert\b_i\Vert_2^q  g^p(x_i; \btheta_{0}). 
	\end{equation*}
	Let 
	$$\mathcal{B}_{pq}= \left\{ \left|\sum_{i=1}^{n}  \Vert\b_i\Vert_2^q  g^p(x_i; \btheta_{0})\right|  < n(1+(d+1)^{q} M^{3})\right\}.$$
	Then, on $\mathcal{A}_1 \cap \mathcal{B}_{pq}$, we have 
	$$  \sum_{i=1}^{n} \left|\left(\bar{\m}^{ {\rm T}} \b_i\right)^q\varepsilon_{in}^p\right| <  d^{9p/2} \left(1+(d+1)^qM^3\right) \left({\frac{{c}}{L_2}}\right)^{\frac{3p+2q}{2}}n^{\frac{48-11(3p+2q)}{48}}\log^{\frac{3p+2q}{2}}n, $$
	and  by Lemma \ref{lem:normcontrol}, 
	$$\Pr(\mathcal{B}_{pq}) \geq 1 - 2R_m \left(\frac{2(d+1)^{q} M^{3}}{\sqrt{n}}\right)^{2m}.$$
	Hence, we let 
	\begin{equation*}
		\mathcal{A}_6 = \bigcap_{p+q = 3}  \mathcal{B}_{pq}.
	\end{equation*}
	Then, on $\mathcal{A}_1 \cap \mathcal{A}_6$,
	\begin{align} \label{eq:H0step3}
		\left|\sum_{i=1}^{n} \delta_i^3\right| &<  \sum_{p+q = 3} \tbinom{3}{p}   d^{9p/2} \left(1+(d+1)^qM^3\right) \left({\frac{{c}}{L_2}}\right)^{\frac{3p+2q}{2}}n^{\frac{48-11(3p+2q)}{48}}\log^{\frac{3p+2q}{2}}n \notag \\
		&\leq \tilde{C} n^{-\frac{3}{8}}\log^{3}n,
	\end{align}
	where $\tilde{C}>0$ is a constant and  by Lemma \ref{lem:normcontrol}, 
	\begin{equation}\label{eq:H0step3prob}
		\Pr(\mathcal{A}_6) \geq 1-  \left( \sum_{q = 0}^3    2R_m \left(\frac{2(d+1)^{q} M^{3}}{\sqrt{n}}\right)^{2m} \right).
	\end{equation}
	By (\ref{eq:H0step1}, \ref{eq:H0step1prob}, \ref{eq:H0step2}, \ref{eq:H0step2prob}, \ref{eq:H0step3}, \ref{eq:H0step3prob}), there are two constants $C$ and $C_1$ depending on $m,M,d,L_2,c$  such that 
	\begin{equation}\label{eq:H0all1}
		\Pr\left(\left|2\sum_{i=1}^{n} \varepsilon_{in} - 2\sum_{i=1}^{n} \bar{\m}^{ {\rm T}} \b_i\varepsilon_{in} + \frac{2}{3} \sum_{i=1}^{n} \delta_i^3\right| \geq C n^{-1/16}\log^{3/2}n \right) \leq { (C_1n)^{-m/4} }. 
	\end{equation}
	The above inequality (\ref{eq:H0all1}) yields that $\left|2\sum_{i=1}^{n} \varepsilon_{in} - \sum_{i=1}^{n} \bar{\m}^{ {\rm T}} \b_i\varepsilon_{in} + \frac{2}{3} \sum_{i=1}^{n} \delta_i^3\right| $ is sufficiently small with high probability.  
	
	We next  bound
	$$\sum_{i=1}^{n} \bar{\m}^{ {\rm T}} \b_i  \b_i\trans \bar{\m}  =  \bar{\m}^{ {\rm T}}\sum_{i=1}^{n} \left[  \b_i  \b_i \trans\right]  \bar{\m}  =  n \bar{\m}^{\rm T}\sum_{i=1}^{n} \frac{\left[  \b_i  \b_i\trans \right]}{n}  \bar{\m}.$$
	To this end, we only need to bound 
	the set 
	$$\mathcal{A}_7 = \left\{ \left \|\sum_{i=1}^{n} \frac{\left[  \b_i \b_i\trans \right]}{n}-\B\right\|_F < \lambda_{\rm min}/2\right\}.$$
	By the matrix inequality $\Vert\A-\B\Vert_F \geq |\lambda_{\rm min}(\A) - \lambda_{\rm min}(\B)|$, on $\mathcal{A}_7 $, we have 
	$$\lambda_{\rm min}\left(\sum_{i=1}^{n} \frac{\left[  \b_i \b_i\trans \right]}{n}\right) > \lambda_{\rm min}/2.$$
	It follows that  on $\mathcal{A}_7 $,  we have 
	\begin{equation}\label{eq:H0B}
		n \bar{\m}^{{\rm T}}\sum_{i=1}^{n} \frac{\left[  \b_i  \b_i\trans \right]}{n}  \bar{\m} \geq \frac{\lambda_{\rm min}}{2} n \bar{\m}^{{\rm T}}\bar{\m}.
	\end{equation}
	Next we bound the probability of $\mathcal{A}_7$. 
	Since $$\mathcal{A}_7^c \subset \left\{\mbox{There exists one pair } k,l \mbox{ such that } \left|\left( \sum_{i=1}^{n} \frac{\left[  \b_i  \b_i\trans \right]}{n}\right)_{kl} - B_{kl}\right| \geq \lambda_{\rm min}/(2d^2)  \right\}, $$
	we have 
	$$\Pr(\mathcal{A}_7^c) \leq \sum_{k=1}^{d}\sum_{l=1}^{d} \Pr\left(\left|\left( \sum_{i=1}^{n} \frac{\left[  \b_i  \b_i\trans \right]}{n}\right)_{kl} - B_{kl}\right| \geq \lambda_{\rm min}/(2d^2) \right).$$
	For fixed $k,l$ let 
	$T_i =   {\left[  \b_i  \b_i\trans \right]_{kl}}.$
	Therefore, we have 
	$\left\| T_i  \right\|_{L^{2m}} \leq M^2$.
	Since $\Ex \left( {\left[  \b_i  \b_i\trans \right]}\right)_{kl} =  B_{kl}$,  by Lemma \ref{lem:Rosen}, we have 
	$$\Pr\left(\left|\sum_{i=1}^n \left( T_i - B_{kl}\right)\right| \geq n\lambda_{\rm min}/\left(2d^2\right)\right) \leq 2R_m  \left( \frac{4M^4}{n}\right)^{m}  \left( \frac{2d^2}{{\lambda_{\rm min}} }\right)^{2m} .$$
	Thus, we have 
	\begin{equation}\label{eq:H0Bprob}
		\Pr(\mathcal{A}_7) \geq 1-2d^2R_m  \left( \frac{4M^4}{n}\right)^{m}  \left( \frac{2d^2}{{\lambda_{\rm min}} }\right)^{2m} .	
	\end{equation}
	Combining (\ref{eq:R1nexpan}, \ref{eq:H0all1}) with (\ref{eq:H0B}, \ref{eq:H0Bprob}),  we have
	\begin{equation}\label{eq:H0final}
		\Pr\left( R_{1n} \leq 2\sum_{i=1}^{n} \bar{\m}^{ {\rm T}} \b_i -\frac{\lambda_{\rm min}}{2} n \bar{\m}^{{\rm T}}\bar{\m} +C n^{-1/16}\log^{3/2}n   \right) \geq 1-  { (C_1n)^{-m/4} },
	\end{equation}
	where $C, C_1$ are another two constants.
	It is clear that  
	\begin{align*}
		R_{1n} &\leq  2\bar{\m}^{ {\rm T}} \sum_{i=1}^n\b_i - \frac{\lambda_{\rm min}}{2} n \bar{\m}^{ {\rm T}}\bar{\m} +C n^{-1/16}\log^{3/2}n  \\
		&\leq \frac{2(\sum_{i=1}^n \b_i)\trans(\sum_{i=1}^n \b_i)}{\lambda_{\rm min} n} + C n^{-1/16}\log^{3/2}n \\
		&= \frac{2}{\lambda_{\rm min} n} \sum_{j=1}^{d(d+3)/2} \left(\sum_{i=1}^nb_{ij} \right) ^2 + C n^{-1/16}\log^{3/2}n  . 
	\end{align*}
	Therefore, let 
	$$T_n = \frac{2}{\lambda_{\rm min} n} \sum_{j=1}^{d(d+3)/2} \left(\sum_{i=1}^nb_{ij} \right) ^2.$$ For any $t>0$, we have 
	\begin{align*}
		\{ T_n  \geq t \} &= \left\{  \sum_{j=1}^{d(d+3)/2} \left(\sum_{i=1}^nb_{ij} \right) ^2 \bigg/n \geq \frac{\lambda_{\rm min}}{2}t \right\} \\ 
		&\subset \left\{ \mbox{There exists } j \mbox{ such that } \left(\sum_{i=1}^nb_{ij} \right) ^2\bigg/n \geq    \frac{\lambda_{\rm min}}{d(d+3)} t \right\}.
	\end{align*}
	It follows that  
	$$\Pr(T_n  \geq t) \leq  \sum_{j=1}^{d(d+3)/2} \Pr \left(\frac{\sum_{i=1}^nb_{ij}}{\sqrt{n}} \geq \left(\frac{\lambda_{\rm min}}{d(d+3)} t\right)^{1/2}\right).$$
	For any fixed $j$, by Lemma \ref{lem:Rosen}, we have 
	$$\Pr\left(\frac{\sum_{i=1}^nb_{ij}}{\sqrt{n}} \geq \left(\frac{\lambda_{\rm min}}{d(d+3)} t\right)^{1/2}\right) \leq 2R_m \left(\frac{M^2d(d+3)}{\lambda_{\rm min}} \right)^{m} t^{-m}.$$
	Thus, we have 
	\begin{equation}\label{eq:H0Tn}
		\Pr(T_n  \geq t) \leq  d(d+3) R_m \left(\frac{M^2d(d+3)}{\lambda_{\rm min}} \right)^{m} t^{-m}.	
	\end{equation}
	Combing (\ref{eq:H0final})  with (\ref{eq:H0Tn}), we conclude that 
	$$\Pr(R_{1n} \leq t+C n^{-1/16}\log^{3/2}n) \geq  1-  { (C_1n)^{-m/4} } -  {(C_2t)}^{-m} ,$$
	where $C, C_1,C_2 $ are three constants.
	It follows that 
	$$\Pr({\rm EM}_n^{(K)}  \leq t+C n^{-1/16}\log^{3/2}n) \geq  1-  { (C_1n)^{-m/4} } -  {(C_2t)}^{-m} ,$$
	and thus we prove the theorem.
\end{proof}

\begin{proof}[Proof of Lemma \ref{lem:Rosen}]
	By Exercise 2.20 in \cite{wainwright2019high}, under the stated conditions, there is a universal constant $R_m$ such that  
	$$
	\mathbb{E}\left[\left(\sum_{i=1}^{n} X_{i}\right)^{2 m}\right] \leq R_{m}\left\{\sum_{i=1}^{n} \mathbb{E}\left[X_{i}^{2 m}\right]+\left(\sum_{i=1}^{n} \mathbb{E}\left[X_{i}^{2}\right]\right)^{m}\right\}.
	$$
	By Lyapunov's inequality, we have $\Vert X_i\Vert_{L^2} \leq \Vert X_i\Vert_{L^{2m}} \leq C$.
	By Markov's inequality, we have
	\begin{align*}
		\Pr \left\{\left|\sum_{i=1}^n X_i \right| \geq n\delta  \right \} &\leq  \frac{\Ex \left(  |\sum_{i=1}^n X_i|^{2m}  \right)}{\left( n\delta\right)^{2m}} \\
		&\leq\frac{  R_{m}}{\left( n\delta\right)^{2m}} \left\{\sum_{i=1}^{n} \mathbb{E}\left[X_{i}^{2 m}\right]+\left(\sum_{i=1}^{n} \mathbb{E}\left[X_{i}^{2}\right]\right)^{m}\right\} \\
		&\leq \frac{  R_{m}}{\left( n\delta\right)^{2m}} \left( n C^{2m} + n^m C^{2m}\right) \\
		&\leq  \frac{  2R_{m}} {\left( n\delta\right)^{2m}}  n^m C^{2m} =  2R_m \left(\frac{C}{\sqrt{n}\delta}\right)^{2m}.
	\end{align*}
	The second conclusion is  a direct corollary of the first   conclusion, and thus we complete the proof.
\end{proof}
\begin{proof}[Proof of Lemma \ref{lem:normcontrol}]
	We first prove that when $p+q  \leq 3 $,  we have
	\begin{equation}\label{eq:normcontrolcon}
		\left\|  \left\|\b_i\right\|_2^q  g^p(x_i; \btheta_{0}) \right\|_{L^{2m}} \leq (d+1)^{q} M^{p+q} .
	\end{equation}
	When $p,q < 3$, by Cauchy's inequality,  we have
	$$\left\|  \left\|\b_i\right\|_2^q  g^p(x_i; \btheta_{0}) \right\|_{L^{2m}} \leq  \left\|\left\|\b_i\right\|_2^q \right\|_{L^{4m}}  \|g^p(x_i; \btheta_{0}) \|_{L^{4m}}.$$
	By the triangle inequality and Condition (\Cc), we conclude that 
	\begin{align}\label{eq:normcontrol1}
		\left\|\left\|\b_i\right\|_2^q \right\|_{L^{4m}}   &=    \left\| \left( \sum_{j=1}^d Y_{ij}^2 + \sum_{j=1}^d Z_{ij}^2 + \sum_{j_1=1}^d\sum_{j_2>j_1}^d U_{ij_1j_2}^2\right)^{q/2}\right\|_{L^{4m}} \notag \\
		&=   \left(  \left\| \sum_{j=1}^d Y_{ij}^2 + \sum_{j=1}^d Z_{ij}^2 + \sum_{j_1=1}^d\sum_{j_2>j_1}^d U_{ij_1j_2}^2\right\|_{L^{2mq}}\right)^{q/2} \notag\\
		&\leq \left(  \sum_{j=1}^d \left\| Y_{ij}^2 \right\|_{L^{2mq}}+ \sum_{j=1}^d  \left\|Z_{ij}^2\right\|_{L^{2mq}} + \sum_{j_1=1}^d\sum_{j_2>j_1}^d \left\| U_{ij_1j_2}^2 \right\|_{L^{2mq}}\right)^{q/2}  \notag\\
		&= \left(  \sum_{j=1}^d \left\| Y_{ij} \right\|_{L^{4mq}}^2+ \sum_{j=1}^d  \left\|Z_{ij}\right\|_{L^{4mq}}^2 + \sum_{j_1=1}^d\sum_{j_2>j_1}^d \left\| U_{ij_1j_2}^2 \right\|_{L^{4mq}}^2\right)^{q/2}  \notag \\
		&\leq (d+1)^{q} M^{q},
	\end{align}
	where the last inequality is from the fact $q\leq 2$ and $d(d+3)/2 \leq (d+1)^2$.  
	By $p\leq 2$ and Condition (\Cc), similarly, we conclude that  
	$$
	\|g^p(x_i; \btheta_{0}) \|_{L^{4m}} = (\|g(x_i; \btheta_{0}) \|_{L^{4mp}})^p \leq M^p. 
	$$
	Thus, we prove (\ref{eq:normcontrolcon}) when $p,q < 3$. 
	When $p=0, q=3$, analysis similar  to that in (\ref{eq:normcontrol1}) shows that   
	$$\left\|  \left\|\b_i\right\|_2^q \right\|_{L^{2m}} \leq (d+1)^{q} M^{q}.$$
	Similarly, when $p=3,q=0$ we have 
	$$
	\|g^p(x_i; \btheta_{0}) \|_{L^{2m}} = (\|g(x_i; \btheta_{0}) \|_{L^{2mp}})^p \leq M^p. 
	$$
	Thus, we prove that for any $p+q \leq 3$, (\ref{eq:normcontrolcon}) holds.
	
	Next,  by Lemma \ref{lem:Rosen}, we have
	$$\Pr \left\{\left|\sum_{i=1}^n \left\{R_{pq}(x_i) - \Ex(R_{pq}(x_i))\right\}\right| \geq n   \right\}\leq  2R_m \left(\frac{2(d+1)^{q} M^{p+q}}{\sqrt{n}}\right)^{2m}.$$
	By Lyapunov's inequality, we have 
	$$
	\left\|  \left\|\b_i\right\|_2^q  g^p(x_i; \btheta_{0}) \right\|_{L^{1}}	\leq \left\|  \left\|\b_i\right\|_2^q  g^p(x_i) \right\|_{L^{2m}} \leq (d+1)^{q} M^{p+q} .
	$$
	It follows that 
	$$\Pr \left\{\left|\sum_{i=1}^n R_{pq}(x_i)\right| \geq n(1+(d+1)^{q} M^{p+q})   \right\}\leq  2R_m \left(\frac{2(d+1)^{q} M^{p+q}}{\sqrt{n}}\right)^{2m},$$
	and thus we complete the proof.
\end{proof}
\begin{proof}[Proof of Lemma \ref{lem:cont-deri}]
	Note that $\Ex \left[ D_i(j_1, \dots, j_k) \right] = 0.$ 
	By Condition (\Cc), we have 
	$$\left\| D_i(j_1, \dots, j_k)\right\|_{L^{2m}} \leq M.$$  By Lemma \ref{lem:Rosen}, we have
	$$\Pr\left\{\left|\sum_{i=1}^n  D_i(j_1, \dots, j_k) \right| < n t  \right\}\geq 1-  2R_m  \left(\frac{M}{\sqrt{n}t}\right)^{2m}, \ \ \mbox{for all } t>0.$$ 
	Taking $t = n^{-3/8}$, we obtain the tail probability bound  as
	$$\Pr\left\{\left|\sum_{i=1}^n D_i(j_1, \dots, j_k) \right| \geq n^{5/8}  \right\}\leq 2R_m  \left(\frac{M}{ n^{1/8} }\right)^{2m}.$$
	Observe that 
	$$
	\left\{\sum_{j_{1}=1}^d \cdots \sum_{j_{k}=1}^d\left|\sum_{i=1}^n D_i(j_1, \dots, j_k) \right| \geq d^kn^{5/8}  \right\} \subset \bigcup_{j_{1}=1}^d \cdots \bigcup_{j_{k}=1}^d\left\{\left|\sum_{i=1}^n D_i(j_1, \dots, j_k) \right| \geq n^{5/8}  \right\}
	$$
	It follows that 
	$$\Pr\left\{\sum_{j_{1}=1}^d \cdots \sum_{j_{k}=1}^d\left|\sum_{i=1}^n D_i(j_1, \dots, j_k) \right| < d^kn^{5/8}  \right\}\geq 1- 2d^kR_m  \left(\frac{M}{ n^{1/8} }\right)^{2m},$$
	and thus we complete the proof.
\end{proof}

\subsection{Proofs of Theorem \thmoneRes}
We abbreviate $pl_n\left( \hat{\bxi}_0,\ba_0\right)$ and $l_n\left( \hat{\bxi}_0,\ba_0\right)$ to $pl_n\left(\hat{\bxi}_0\right)$ and $l_n\left(\hat{\bxi}_0\right)$, respectively. 
Recall that 
\begin{equation}\label{eq:theta0dag}
	{\btheta}^{\dagger}_0 = \argmax_{\btheta \in {\Theta}} \Ex_{\balpha^*,\bxi^{*}}  \left[\log f(x;\btheta)\right],
\end{equation}
and
\begin{equation}\label{eq:xidag}
	{\bxi}^{\dagger} = \argmax_{\bxi \in \Xi} \Ex_{\balpha^*,\bxi^{*}}  \left[  \log \ \varphi\left(x;\bxi, \ba^{(0)}\right) \right].
\end{equation}

We briefly describe the proof of Theorem \thmoneRes. Observe that the EM-test statistic is larger than the penalized log-likelihood ratio  $pl_n\left(\bxi^{(0)}, \ba^{(0)}\right) - pl_n\left(\hat{\bxi}_{0}, \ba_0\right)$, which can be decomposed as a summation of three parts, $pl_n\left(\bxi^{(0)}, \ba^{(0)}\right) -  pl_n\left({\bxi}^{\dagger}, \ba^{(0)}\right)$,  
$pl_n\left({\bxi}^{\dagger}, \ba^{(0)}\right)-  pl_n\left({\bxi}_{0}^{\dagger}, \ba_{0}\right)$
and $pl_n\left({\bxi}_{0}^{\dagger}, \ba_{0}\right) - pl_n\left(\hat{\bxi}_{0}, \ba_0\right)$. 
All three parts can be bounded. The first part is non-negative.  
The second part can be written as 
$$pl_n\left({\bxi}^{\dagger}, \ba^{(0)}\right)-  pl_n\left({\bxi}_{0}^{\dagger}, \ba_{0}\right) = \sum_{i=1}^n R(x_i;\bxi^*) + p(\ba^{(0)}) -  p(\ba_{0}),$$
and can be bounded using the Bernstein inequality. 
For the third part, since $ D(\btheta )\leq  D\left({\btheta}^{\dagger}_0\right) $ for all $\btheta \in \Theta$, we have 
$$pl_n\left({\bxi}_{0}^{\dagger}, \ba_{0}\right) - pl_n\left(\hat{\bxi}_{0}, \ba_0\right) =  \sum_{i=1}^n \left\lbrace  \log \ f(x_i;\hat{\btheta}_0) - \log \ f\left(x_i;{\btheta}^{\dagger}_0\right) \right\rbrace  \geq  -n^{1/2}\sup_{\btheta \in \Theta} Z_{\btheta}(\bxi^*). $$
Thus, the third part can be bounded by analyzing the supremum of the empirical process $\{Z_{\btheta}(\bxi^*), \btheta \in \Theta\}$ using the generalized Dudley inequality.

We first give two technical lemmas.
\begin{lemma} \label{lem:H1first}
	Under Condition (\Cf) and (\Cg), for every $t\geq 0$, we have, 
	$$\Pr\left(l_n\left({\bxi}^{\dagger},\balpha^{\left(0\right)}\right) -l_n\left({\bxi}_0^{\dagger}\right) \geq n\varrho - t\right)  \geq 1-2\exp\left[-C' \min\left(\frac{t^2}{n M_{\psi_1}^2} , \frac{t}{M_{\psi_1}}\right) \right],$$
	where $C'$ is a constant and $\bxi^{\dagger}$ and $\bxi_{0}^{\dagger} = \left(\btheta_{0}^{\dagger}, \dots, \btheta_{0}^{\dagger}\right)$ are defined in (\ref{eq:xidag}) and (\ref{eq:theta0dag}).
\end{lemma}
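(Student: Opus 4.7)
The plan is to reduce the claim to a one-sided Bernstein-type concentration inequality for a sum of i.i.d.\ sub-exponential random variables. First I would rewrite the log-likelihood difference in closed form: since
$$l_n(\bxi^\dagger,\balpha^{(0)}) - l_n(\bxi_0^\dagger) = \sum_{i=1}^n\bigl[\log\varphi(x_i;\bxi^\dagger,\balpha^{(0)}) - \log f(x_i;\btheta_0^\dagger)\bigr] = \sum_{i=1}^n R(x_i;\bxi^*),$$
the event in question becomes $\{\sum_{i=1}^n R(x_i;\bxi^*)\ge n\varrho-t\}$. The $x_i$ are i.i.d.\ under the heterogeneous model, so the summands $R(x_i;\bxi^*)$ are i.i.d.

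Next I would center them. Set $Y_i = R(x_i;\bxi^*) - \Ex_{\balpha^*,\bxi^*}[R(x;\bxi^*)]$. By Condition (C5), $\Ex_{\balpha^*,\bxi^*}[R(x;\bxi^*)] \ge \varrho$, so
$$\sum_{i=1}^n R(x_i;\bxi^*) \;\ge\; \sum_{i=1}^n Y_i + n\varrho,$$
which implies the inclusion of events
$$\Bigl\{\sum_{i=1}^n Y_i \ge -t\Bigr\} \;\subset\; \Bigl\{\sum_{i=1}^n R(x_i;\bxi^*) \ge n\varrho - t\Bigr\}.$$
Thus it suffices to lower-bound the probability of the left-hand event.

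By Condition (C6) each $Y_i$ is a mean-zero sub-exponential random variable with $\|Y_i\|_{\psi_1}\le M_{\psi_1}$. I would invoke the standard Bernstein inequality for sums of i.i.d.\ centered sub-exponential variables (e.g.\ Theorem 2.8.1 of Vershynin, 2018), which yields a universal constant $C'>0$ such that
$$\Pr\Bigl(\bigl|\sum_{i=1}^n Y_i\bigr| \ge t\Bigr) \;\le\; 2\exp\!\Bigl[-C'\min\!\Bigl(\tfrac{t^{2}}{n M_{\psi_1}^{2}},\tfrac{t}{M_{\psi_1}}\Bigr)\Bigr].$$
Applying this two-sided bound gives the desired one-sided bound on $\{\sum Y_i \ge -t\}$, and combining with the inclusion above produces the statement of the lemma.

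There is no real obstacle here; the argument is essentially bookkeeping plus one invocation of Bernstein's inequality. The only subtlety to be careful about is that the conditions (C5) and (C6) are stated uniformly over $\bxi^*\in\Xi_1$, but for this particular lemma $\bxi^*$ is held fixed, so the uniform suprema in those conditions simply furnish the same constants $\varrho$ and $M_{\psi_1}$ used above. The uniformity will be needed later when the bound is combined over the cluster-relevant features, but not within the proof of this lemma itself.
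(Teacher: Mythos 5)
Your proposal is correct and follows essentially the same route as the paper's proof: both rewrite the difference as $\sum_{i=1}^n R(x_i;\bxi^*)$, apply Bernstein's inequality for sub-exponential variables (Vershynin, Theorem 2.8.1) under Condition (\Cg), and then use $\Ex[R(x;\bxi^*)]\geq\varrho$ from Condition (\Cf) to pass from the centered bound to the stated one-sided event.
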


Let $\rho(\btheta, \btheta')  = C_{\rho}\Vert \btheta - \btheta' \Vert_2$, where $C_{\rho}$ is in Condition (\Ch). 
Let $\mathcal{N}(u , \Theta, \rho)$ be the covering number, which is the smallest number of closed balls with centers in $\Theta$ and
radius  $u$ whose union covers $\Theta$.  Next we define the generalized Dudley integral as 
$$J(D) = \int_{0}^{D} \log \left(1 + \mathcal{N}(u , \Theta, \rho)\right) {\rm d}u. $$
where $D = \sup_{\btheta, \btheta' \in \Theta} \Vert \btheta - \btheta' \Vert_2$ is the  Euclidean diameter. 
Note that $\Theta $ is a compact set and $\rho(\btheta, \btheta')  = C_{\rho}\Vert \btheta - \btheta' \Vert_2$. Therefore,  $J(D) < \infty$, and we have the following generalized Dudley inequality by the chaining method. 
The proof of the classic Dudley's inequality can be found in \cite{vershynin2018high}.  
The proof of the following lemma follows the same arguments by a chaining method and can be found in \cite{wainwright2019high} (Theorem 5.36). Thus, we omit the proof.
\begin{lemma}\label{lem:Dudley}
	Under Condition (\Ch), for any $t$,
	$$\Pr \left(\sup_{\btheta, \btheta' \in \Theta} |Z_{\btheta}(\bxi^*) -Z_{\btheta'}(\bxi^*) |\geq C_J[J(D) + t]\right) \leq 2\exp\left(\frac{-t}{D}\right),$$
	where $C_J$ is a constant, $D$ is the diameter and $J(D)$ is the generalized Dudley integral.
\end{lemma}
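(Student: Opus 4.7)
The plan is to prove this via the standard chaining argument, adapted to the sub-exponential ($\psi_1$) setting, along the lines of Talagrand's generic chaining and the generalized Dudley entropy bound for sub-exponential processes. First I would construct a sequence of nested $\epsilon_k$-nets for $\Theta$ in the metric $\rho(\btheta,\btheta') = C_\rho\|\btheta-\btheta'\|_2$. Set $\epsilon_k = D\, 2^{-k}$ and choose $\Theta_k \subset \Theta$ with $|\Theta_k| \le \mathcal{N}(\epsilon_k, \Theta, \rho)$. Since $\Theta_0$ can be taken to be a single point (as $\epsilon_0 = D$ is already the diameter), we have $Z_{\pi_0(\btheta)}(\bxi^*) - Z_{\pi_0(\btheta')}(\bxi^*) = 0$. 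For each $\btheta \in \Theta$, let $\pi_k(\btheta)$ be a closest point in $\Theta_k$, so that $\rho(\btheta, \pi_k(\btheta)) \le \epsilon_k$ and by Condition (\Ch) we have $\|Z_{\pi_{k+1}(\btheta)}(\bxi^*) - Z_{\pi_k(\btheta)}(\bxi^*)\|_{\psi_1} \le \epsilon_k + \epsilon_{k+1} \le 3 D\, 2^{-(k+1)}$.

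Next I would telescope: for any $\btheta$, $Z_\btheta(\bxi^*) - Z_{\pi_0(\btheta)}(\bxi^*) = \sum_{k=0}^\infty \bigl(Z_{\pi_{k+1}(\btheta)}(\bxi^*) - Z_{\pi_k(\btheta)}(\bxi^*)\bigr)$, where convergence of the tail follows from the continuity of $\btheta \mapsto Z_\btheta(\bxi^*)$ (inherited from (\Ch)). The supremum over $\btheta, \btheta' \in \Theta$ is therefore bounded by twice the sum of the supremum of each link across all admissible pairs in $\Theta_{k+1} \times \Theta_k$, of which there are at most $\mathcal{N}(\epsilon_k,\Theta,\rho) \cdot \mathcal{N}(\epsilon_{k+1},\Theta,\rho) \le \mathcal{N}(\epsilon_{k+1},\Theta,\rho)^2$. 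For a $\psi_1$ random variable $X$ with $\|X\|_{\psi_1} \le K$, the Bernstein-type tail bound gives $\Pr(|X| \ge s) \le 2\exp(-c \min(s^2/K^2, s/K))$; choosing $s$ in the linear regime (which suffices because we want an $\exp(-t/D)$ tail) one obtains $\Pr(|X|\ge s)\le 2\exp(-cs/K)$. Apply this at chain level $k$ with $s_k = C' D\, 2^{-k}\bigl(\log\mathcal{N}(\epsilon_{k+1}, \Theta, \rho) + a_k + t/D\bigr)$ for a suitable constant $C'$ and positive weights $a_k$ with $\sum a_k$ bounded (e.g.\ $a_k = k+1$ or a geometric choice).

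Then I would take a union bound at each level: the probability that any link exceeds $s_k$ is at most $2\mathcal{N}(\epsilon_{k+1},\Theta,\rho)^2\exp(-cs_k/(3D\,2^{-(k+1)}))$, which by the choice of $s_k$ telescopes to something summable in $k$, yielding on the complement $\sup_{\btheta,\btheta'}|Z_\btheta(\bxi^*) - Z_{\btheta'}(\bxi^*)| \le 2\sum_{k=0}^\infty s_k$. The sum $\sum_k D\, 2^{-k}\log\mathcal{N}(\epsilon_{k+1},\Theta,\rho)$ is, up to a universal constant, dominated by the entropy integral $J(D) = \int_0^D \log(1+\mathcal{N}(u,\Theta,\rho))\,\mathrm{d}u$ via the standard comparison $\int_{\epsilon_{k+1}}^{\epsilon_k} \log(1+\mathcal{N}(u,\Theta,\rho))\,\mathrm{d}u \ge (\epsilon_k-\epsilon_{k+1})\log\mathcal{N}(\epsilon_{k+1},\Theta,\rho)$. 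The constant terms involving $a_k$ contribute a further multiple of $D$ which may be absorbed into $C_J J(D)$. Finally, the $t$-dependent terms contribute $2 t \sum_k 2^{-k} \cdot \text{const} = O(t)$, producing the tail $2\exp(-t/D)$ after absorbing all universal constants into $C_J$.

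The main obstacle I anticipate is arranging the chain increments and constants so that the final tail bound emerges with exactly the stated form $C_J[J(D)+t]$ and $\exp(-t/D)$ rather than a slightly weaker Bernstein-type $\exp(-\min(t^2,t))$ bound. Because the process is only $\psi_1$ (not sub-Gaussian), the Gaussian-like small-deviation regime of the Bernstein inequality is unavailable at large scales, and one must consistently stay in the linear regime at each chain level; this forces the appearance of $D$ in the exponent (rather than a variance proxy) and dictates why the tail is of the form $\exp(-t/D)$. A secondary technical point is verifying the exchange of supremum and summation (i.e.\ that the sum of chain increments converges uniformly in $\btheta$), which follows from the $\psi_1$-continuity of the process together with compactness of $\Theta$.
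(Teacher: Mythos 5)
Your chaining argument is essentially the paper's intended proof: the paper omits the argument entirely and cites Theorem 5.36 of Wainwright (2019), which is precisely the generalized Dudley bound for $\psi_1$-processes obtained via the dyadic chaining, level-wise union bound, and linear-regime exponential tail that you describe. One cosmetic correction: since $u\mapsto\mathcal{N}(u,\Theta,\rho)$ is non-increasing, the comparison $\int\log\left(1+\mathcal{N}(u,\Theta,\rho)\right)\mathrm{d}u\geq(\epsilon_{k+1}-\epsilon_{k+2})\log\mathcal{N}(\epsilon_{k+1},\Theta,\rho)$ must be taken over the next-lower dyadic interval $[\epsilon_{k+2},\epsilon_{k+1}]$ rather than $[\epsilon_{k+1},\epsilon_k]$ (costing only a harmless factor of $4$), which does not affect the conclusion.
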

\begin{proof}[Proof of Theorem \thmoneRes]
	We first aim to bound the probability 
	$$\Pr \left(pl_n \left( \bxi^{(0)},\balpha^{(0)}\right) - pl_n\left(\hat{\bxi}_0\right) \geq 2^{-1}n\varrho  -  n^{1/2}C_J\left[J\left(D\right) +t\right] -   p_0\right).$$
	Since $pl_n \left( \bxi^{(0)},\balpha^{(0)}\right) = l_n\left( \bxi^{(0)},\balpha^{(0)}\right) + p\left(\balpha^{(0)}\right)$ and $ p\left(\balpha^{(0)}\right)\geq p_0$,  we have 
	$$pl_n \left( \bxi^{(0)},\balpha^{(0)}\right) - pl_n\left(\hat{\bxi}_0\right) \geq 
	l_n\left(\bxi^{(0)},\balpha^{(0)}\right) - l_n\left(\hat{\bxi}_0\right) + p_0.$$
	Thus, we only need to control 
	$$\Pr \left(l_n\left(\bxi^{(0)},\balpha^{(0)}\right) - l_n\left(\hat{\bxi}_0\right) \geq2^{-1}n\varrho  -  n^{1/2}C_J\left[J\left(D\right) +t\right] \right).
	$$ 
	Since 
	\begin{align*}
		l_n\left(\bxi^{(0)},\balpha^{(0)}\right) - l_n\left(\hat{\bxi}_0\right) 
		&= l_n\left(\bxi^{(0)},\balpha^{(0)}\right)- l_n\left({\bxi}^{\dagger},\balpha^{(0)}\right) \\ &+ 
		l_n\left({\bxi}^{\dagger},\balpha^{(0)}\right) -l_n\left({\bxi}_0^{\dagger}\right) +l_n
		\left({\bxi}_0^{\dagger}\right) - l_n\left(\hat{\bxi}_0\right)
	\end{align*}
	and 
	$l_n\left(\bxi^{(0)},\balpha^{(0)} \right)- l_n\left({\bxi}^{\dagger},\balpha^{(0)}\right) > 0$, we have 
	\begin{align*}
		&\quad \Pr\left(l_n\left(\bxi^{(0)},\balpha^{\left(0\right)}\right) - l_n\left(\hat{\bxi}_0\right) \geq 2^{-1}n\varrho  -  n^{1/2}C_J\left[J\left(D\right) +t\right] \right)  \\
		&\geq \Pr\left(l_n\left({\bxi}^{\dagger},\balpha^{\left(0\right)}\right) -l_n\left({\bxi}_0^{\dagger}\right) +l_n\left({\bxi}_0^{\dagger}\right) - l_n\left(\hat{\bxi}_0\right) \geq 2^{-1}n\varrho  -  n^{1/2}C_J\left[J\left(D\right) +t\right]\right) \\
		&\geq 1 - \Pr\left(l_n\left({\bxi}^{\dagger},\balpha^{\left(0\right)}\right) -l_n\left({\bxi}_0^{\dagger}\right) < 2^{-1}n\varrho \right) - \Pr\left(l_n\left({\bxi}_0^{\dagger}\right) - l_n\left(\hat{\bxi}_0\right) < -n^{1/2}C_J\left[J\left(D\right) +t\right]\right).
	\end{align*}
	Thus, we divide the remaining proof into two steps. 
	
	\noindent
	{\bf Step 1.} 
	We aim to bound
	$ \Pr\left(l_n\left({\bxi}^{\dagger},\balpha^{\left(0\right)}\right) -l_n\left({\bxi}_0^{\dagger}\right)  \geq 2^{-1}n\varrho\right).$
	Applying Lemma \ref{lem:H1first} and 
	taking $t = n\varrho / 2$ yield 
	\begin{equation}\label{H11}
		\Pr \left(l_n\left({\bxi}^{\dagger},\balpha^{\left(0\right)}\right) -l_n\left({\bxi}_0^{\dagger}\right)\geq 2^{-1}n\varrho \right)  \geq 1-2\exp\left[-c \min\left(\frac{n\varrho^2}{4 M_{\psi_1}^2} , \frac{n\varrho}{2M_{\psi_1}}\right) \right].
	\end{equation}
	
	\noindent
	{\bf Step 2.} We aim to bound
	$\Pr\left(l_n\left({\bxi}_0^{\dagger}\right) - l_n\left(\hat{\bxi}_0\right) < -n^{1/2}C_J\left[J\left(D\right) +t\right]\right).$
	We can write $l_n\left({\bxi}_0^{\dagger}\right)$ as $l_n\left({\btheta}_0^{\dagger} \right)$ because ${\bxi}_0^{\dagger} = \left({\btheta}_0^{\dagger} , \ldots, {\btheta}_0^{\dagger}\right) $. 
	The major difficulty  to control the second term is the randomness of $\hat{\btheta}_0$.  To deal with it, we  note that 
	\begin{align*}
		&\quad	\Pr\left(l_n\left({\btheta}_0^{\dagger}\right) - l_n\left(\hat{\btheta}_0\right) \geq -n^{1/2}C_J\left[J\left(D\right) +t\right]\right)  \\
		&\geq \Pr\left( \inf_{\btheta \in \Theta} \left\lbrace l_n\left({\btheta}_0^{\dagger}\right) - l_n\left(\btheta \right) \right\rbrace \geq -n^{1/2}C_J\left[J\left(D\right) +t\right]\right). 
	\end{align*}
	Thus, we turn to control 
	the  probability of  $\left \lbrace\inf_{\btheta \in \bTheta} \left\lbrace l_n\left({\btheta}_0^{\dagger}\right) - l_n\left(\btheta \right) \right\rbrace \geq -n^{1/2}C_J\left[J\left(D\right) +t\right] \right\rbrace$.  It is equivalent to controlling 
	the  probability of 
	$\left \lbrace\sup_{\btheta \in \bTheta} \left\lbrace l_n\left(\btheta \right)  - l_n\left({\btheta}_0^{\dagger}\right) \right\rbrace \leq n^{1/2}C_J\left[J\left(D\right) +t\right]  \right\rbrace$.  
	
	Let $\btheta' = {\btheta}_0^{\dagger}$. It follows that  $Z_{\btheta'}(\bxi^*) = 0$.  By Lemma \ref{lem:Dudley}, we have 
	$$\Pr \left(\sup_{\btheta \in \Theta} |Z_{\btheta}(\bxi^*) |\geq C_J[J(D) + t]\right) \leq 2\exp\left(\frac{-t}{D}\right).$$ 
	Plugging $Z_{\btheta}$ into the above inequality, we have
	$$\Pr \left(\sup_{\btheta \in \Theta} \left|  n^{-1/2} \left\lbrace l_n(\btheta ) - l_n({\btheta}_0^{\dagger}) - \left(D(\btheta )- D\left({\btheta}_0^{\dagger}\right) \right)  \right\rbrace \right| \geq C_J[J(D) + t]\right) \leq 2\exp\left(\frac{-t}{D}\right),$$
	where $D(\btheta ) = \Ex_{\ba^*, \bxi^*}[ \log f(x; \btheta)]$.
	Since $D(\btheta )- D({\btheta}_0^{\dagger}) \leq 0$, we have
	$$\Pr \left(\sup_{\btheta \in \Theta}   n^{-1/2} \left\lbrace l_n(\btheta ) - l_n\left({\btheta}_0^{\dagger}\right)  \right\rbrace\geq C_J[J(D) + t]\right) \leq 2\exp\left(\frac{-t}{D}\right).$$
	That is 
	\begin{equation}\label{H12}
		\Pr \left(\sup_{\btheta \in \Theta}    \left\lbrace l_n(\btheta ) - l_n\left({\btheta}_0^{\dagger}\right)  \right\rbrace\leq  n^{1/2}C_J[J(D) + t]\right) \geq 1-2\exp\left(\frac{-t}{D}\right).
	\end{equation}
	Combining (\ref{H11}), (\ref{H12}) with the likelihood non-decreasing property of EM yields Theorem  \thmoneRes. 
\end{proof}
From Theorem \thmoneRes, we can prove Corollary \corHone. 

\begin{proof}[Proof of Corollary \corHone]
	Write 
	$$
	t = \frac{2^{-1}n^{1/2}\varrho - n^{\vartheta - 1/2} - p_0 n^{-1/2}}{C_J} - J_D.
	$$
	Then, we have 
	$$2^{-1}n\varrho  -  n^{1/2}C_J\left[J\left(D\right) +t\right] -   p_0 = n^{\vartheta}.$$
	By Theorem \thmoneRes, we have 
	\begin{align*}
		\Pr \left( {\rm EM}_n^{\left(K\right)} \geq n^{\vartheta} \right) &\geq 1 - 2\exp\left[-C' \min\left(\frac{n\varrho^2}{4 M_{\psi_1}^2} , \frac{n\varrho}{2M_{\psi_1}}\right) \right] \\
		& -    2\exp\left(D^{-1}\frac{-2^{-1}n^{1/2}\varrho + n^{\vartheta - 1/2} + p_0 n^{-1/2}}{C_J} + D^{-1}J_D\right).
	\end{align*}
	Therefore, we can find two constants $C_3, C_4 > 0$ such that 
	$$ 
	\Pr\left( {\rm EM}_n^{\left(K\right)} \geq t_n \right) \geq 1 -  \exp\left(-C_3 n^{1/2}  + C_4 n^{\vartheta - 1/2}\right), $$
	and complete the proof.
\end{proof}

\begin{proof}[Proof of Lemma \ref{lem:H1first}]
	Recall that 
	$$R(x_i;\bxi^*) =  \log\left(\varphi\left(x_i;{\bxi}^{\dagger}, \ba^{(0)}\right) \right)  -\log\left(f\left(x_i;{\btheta}_0^{\dagger}\right)\right) .$$
	Since 
	$$l_n\left({\bxi}^{\dagger},\balpha^{\left(0\right)}\right) -l_n\left({\bxi}_0^{\dagger}\right) = \sum_{i=1}^{n}
	\left\lbrace \log\left(\varphi\left(x_i;{\bxi}^{\dagger}, \ba^{(0)}\right) \right) -\log f\left(x_i;{\btheta}_0^{\dagger}\right)\right\rbrace, 
	$$
	we have
	$$l_n\left({\bxi}^{\dagger},\balpha^{\left(0\right)}\right) -l_n\left({\bxi}_0^{\dagger}\right) = \sum_{i=1}^{n} R(x_i;\bxi^*).$$
	By Bernstein's inequality in \cite{vershynin2018high} (Theorem 2.8.1)  and Condition (\Cg),  for every $t\geq0$, we have
	$$\Pr\left(\left|\sum_{i=1}^n R(x_i;\bxi^*) - \Ex\left(R(x_i;\bxi^*)\right) \right|\leq t\right) \geq 1-2\exp\left[-C' \min\left(\frac{t^2}{n M_{\psi_1}^2} , \frac{t}{M_{\psi_1}}\right) \right],$$
	where $C'>0$ is a universal constant.   
	By Condition (\Cf),  we conclude that $\Ex \left(R(x_i;\bxi^*)\right) \geq \varrho$. It follows that  
	$$\Pr \left(\sum_{i=1}^n R(x_i;\bxi^*) \geq n\varrho - t\right) \geq 1-2\exp\left[-C' \min\left(\frac{t^2}{n M_{\psi_1}^2} , \frac{t}{M_{\psi_1}}\right) \right].$$ 
	That is 
	$$\Pr \left(l_n\left({\bxi}^{\dagger},\balpha^{\left(0\right)}\right) -l_n\left({\bxi}_0^{\dagger}\right) \geq n\varrho - t\right)  \geq 1-2\exp\left[-C' \min\left(\frac{t^2}{n M_{\psi_1}^2} , \frac{t}{M_{\psi_1}}\right) \right],$$
	which proves the lemma.
\end{proof}

\subsection{Proofs of Theorem \thmmain} \label{subsec:proofofmain}
\begin{proof}[Proof of Theorem \thmmain]
	Observe that 
	$$\{ S_1  \subset \hat{S}_1(t_n)\}  = \left\{ {\rm EM}_{nj}^{(K)} \geq t_n , \mbox{ for all } j \in S_1 \right \}.$$
	We have 
	$$\Pr\left( S_1  \subset \hat{S}_1(t_n)\right) \geq 1- \sum_{j \in S_1} \Pr \left(  {\rm EM}_{nj}^{(K)} < t_n \right).$$
	By Corollary \corHone, we have 
	\begin{equation}\label{eq:main1}
		\Pr\left( S_1  \subset \hat{S}_1(t_n)\right) \geq 1- s  \exp\left(-C_3 n^{1/2}  + C_4 n^{\vartheta - 1/2}\right),
	\end{equation}
	and the first inequality is proved. Next, observing that 
	$$\left\{ S_1  =  \hat{S}_1(t_n)\right\}  = \left\{ S_1  \subset \hat{S}_1(t_n)\right\}   \cap \left\{ {\rm EM}_{nj}^{(K)} <  t_n , \mbox{ for all } j \in S_0 \right\},$$
	using Theorem \thmzeroRes,  we have
	\begin{equation}\label{eq:main0}
		\Pr\left( \left\{ {\rm EM}_{nj}^{(K)} <  t_n , \mbox{ for all } j \in S_0 \right\} \right) \geq 1- (p-s) \left((C_1 n)^{-m/4} + (C_2n)^{-\vartheta m} \right) .
	\end{equation}
	Combining (\ref{eq:main0}) with (\ref{eq:main1}) yields the second result. 
\end{proof}
\section{Proofs of the asymptotic results}\label{suppA}
We first derive
the upper bound of $\left\|\bxi^{(k)} - \bxi_0\right\|_2$.  Namely, we provide the following results. 
\begin{theorem}\label{rate}
	Assume that $x_1,\dots,x_n$ are independent samples from the homogeneous distribution $f(x;\btheta_0)$. Under Condition (\Ca)--(\Cb) and (\WCc)--(\WCd),  given any initial value $\ba^{(0)} \in \mathbb{S}^{G-1}$, for any fixed $K>0$ and $1 \leq k \leq K$, we have 
	$\left\|\balpha^{(k)} - \balpha^{(0)}\right\|_2 = o_p(1), \  \left\|\bxi^{(k)} - \bxi_0\right\|_2 = O_p\left(n^{-1/4}\right)$
	and 
	$\left\|\sum_{g=1}^G {\alpha}_g^{(k)} \left({\btheta}_g^{(k)} - \btheta_0\right)\right\|_2 = O_p\left(n^{-1/2}\right).$
\end{theorem}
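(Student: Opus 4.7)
The plan is to follow the same two-stage strategy as in the non-asymptotic analysis (Lemma \lemtailEudis--\lemlower and Lemma \lemtailtwo), but in an asymptotic setting where the weaker conditions (\WCc)--(\WCd) suffice. Stage one controls the Hellinger distance $H\!\left(\balpha^{(k)},\bxi^{(k)},\bxi_0\right)$; stage two converts this bound into a Euclidean rate on $\bxi^{(k)}-\bxi_0$ via a quadratic lower bound; stage three handles the $\balpha$ component via the explicit form of the EM update.

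First, I would establish that $H\!\left(\balpha^{(k)},\bxi^{(k)},\bxi_0\right)=O_p(n^{-1/2})$ for every fixed $1\le k\le K$. The argument is the same as that used for Lemma \ref{lem:tailH}: since the EM algorithm does not decrease the penalized likelihood and the penalty is $O(1)$, we have $l_n(\bxi^{(k)},\balpha^{(k)})-l_n(\bxi_0,\balpha_0)\ge p_0=O(1)$, which combined with a uniform likelihood-ratio bound over $\{(\balpha,\bxi):H(\balpha,\bxi,\bxi_0)\ge\epsilon\}$ (Lemma \ref{lem:wingHeltail}, Wong--Shen) and the bracket entropy of $\mathcal{P}^G$ (Lemma \ref{lem:entropy}, which only needs (\Ca)--(\Cb) and the envelope part of (\WCc)) gives the stated rate. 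Here no positivity assumption on $\balpha^{(k)}$ is needed because the maximization is over the whole simplex.

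Next, I would replace Lemma \lemlower by its asymptotic analogue. Using a third-order Taylor expansion of $\varphi(x;\bxi,\balpha)/f(x;\btheta_0)$ and Condition (\WCc), one gets
\[
H^2(\balpha,\bxi,\bxi_0)\;\ge\;\tfrac{1}{8}\,\m(\balpha,\bxi)\trans \B(\btheta_0)\,\m(\balpha,\bxi)\;-\;o\!\left(\Vert\m(\balpha,\bxi)\Vert_2^2\right),
\]
for $\bxi$ in a neighbourhood of $\bxi_0$. By (\WCd) $\B(\btheta_0)\succ 0$, so locally $H(\balpha,\bxi,\bxi_0)\ge c\Vert\m(\balpha,\bxi)\Vert_2$, and using the elementary inequality $\sqrt{d}\Vert\m\Vert_2\ge(\min_g\alpha_g)\Vert\bxi-\bxi_0\Vert_2^2$ (see (\ref{eq:cauchy:m})) one gets $H(\balpha,\bxi,\bxi_0)\ge c'(\min_g\alpha_g)\Vert\bxi-\bxi_0\Vert_2^2$. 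Combining this with the Hellinger rate from Step~1, and first using a standard argument to show that $\min_g\alpha_g^{(k)}$ stays bounded away from $0$ in probability (so the prefactor is $O_p(1)$), yields
\[
\Vert\bxi^{(k)}-\bxi_0\Vert_2^2=O_p(n^{-1/2}),\qquad \Vert\m(\balpha^{(k)},\bxi^{(k)})\Vert_2=O_p(n^{-1/2}).
\]
The second bound immediately gives the claimed $O_p(n^{-1/2})$ rate on $\sum_g\alpha_g^{(k)}(\btheta_g^{(k)}-\btheta_0)$, because this vector is exactly the first $d$ coordinates of $\m(\balpha^{(k)},\bxi^{(k)})$.

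Finally, for the $\balpha$ claim, I would argue inductively. From (\ref{eq:defw}) and the M-step we have
\[
\alpha_g^{(k+1)}=\frac{\sum_{i=1}^n w_{gi}^{(k)}+\lambda}{n+G\lambda},\qquad w_{gi}^{(k)}=\frac{\alpha_g^{(k)}f\!\left(x_i;\btheta_g^{(k)}\right)}{\sum_{h}\alpha_h^{(k)}f\!\left(x_i;\btheta_h^{(k)}\right)}.
\]
Because $\Vert\bxi^{(k)}-\bxi_0\Vert_2\overset{p}{\to}0$ by Step~2 and $f(x;\btheta)$ is continuous with a uniform envelope from (\WCc), a Taylor expansion of the ratio at $\btheta_0$ together with a law-of-large-numbers argument gives $n^{-1}\sum_i w_{gi}^{(k)}=\alpha_g^{(k)}+o_p(1)$, hence $\alpha_g^{(k+1)}=\alpha_g^{(k)}+o_p(1)$. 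Iterating $K$ times produces $\Vert\balpha^{(k)}-\balpha^{(0)}\Vert_2=o_p(1)$. The main obstacle is the second step: one must establish the local quadratic lower bound $H^2\gtrsim\Vert\m\Vert_2^2$ under only (\WCc)--(\WCd), and handle the fact that $\min_g\alpha_g^{(k)}$ is not assumed bounded below a priori---this is why the asymptotic result is weaker than the non-asymptotic Lemma \lemtailtwo and needs a preliminary argument to keep the proportions away from zero with high probability before the Euclidean rate can be extracted.
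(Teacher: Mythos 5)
Your proposal takes a genuinely different route from the paper: you want to recycle the non-asymptotic machinery (Wong--Shen uniform likelihood-ratio bounds over Hellinger balls, bracketing entropy, then the Hellinger-to-Euclidean conversion of Lemma \lemlower), whereas the paper's proof of this theorem deliberately abandons that machinery and argues directly. The paper first gets consistency from Wald's theorem (Lemma 10), then Taylor-expands the log-likelihood ratio $R_{1n}=2\{pl_n(\bar\bxi,\bar\balpha)-pl_n(\bxi_0,\ba_0)\}$ as $2\bar\m\trans\sum_i\b_i-n\bar\m\trans\B\bar\m(1+o_p(1))+o_p(1)$, and extracts $\Vert\bar\m\Vert_2=O_p(n^{-1/2})$ from $R_{1n}\geq 2c$ via the elementary inequality (\ref{eq:meaninequ}); the $\balpha$ part is then handled by an inductive one-step argument (Lemma 12) essentially identical to your Stage 3.

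The reason the paper switches methods is precisely where your proposal has a genuine gap. Your Stage 1 invokes Lemma \ref{lem:wingHeltail} and the entropy bound of Lemma \ref{lem:entropy}, but that entropy bound rests on the Lipschitz property of $\varphi^{1/2}$ in Lemma \ref{lem:upper}, which requires the square-integrable envelope $r(x)$ with $\sup_{\btheta}f(x;\btheta)+\sup_{\btheta}f(x;\btheta)^{-1}\Vert\partial f(x;\btheta)/\partial\btheta\Vert_2^2\leq r^2(x)$ and $\int r^2\,\mu(\rd x)\leq M$. This envelope is part of Condition (\Cc) but is \emph{not} part of Condition (\WCc), which only controls ratios of derivatives to $f(x;\btheta_0)$ in $L^3$. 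So the bracketing entropy of $\mathcal{P}^G_{\mathbb{S}_{\delta}}$ is simply not available under the hypotheses of this theorem, and Stage 1 cannot get off the ground. (Relatedly, Lemma \ref{lem:upper} is stated on $\mathbb{S}_{\delta}$ with Lipschitz constant $a=\sqrt{G/(4\delta)}$ that degenerates as $\min_g\alpha_g\to 0$, so your remark that no positivity of $\balpha$ is needed in Stage 1 is not justified either.) Even granting Stage 1, the Hellinger route as developed in the paper only yields $\epsilon\geq cn^{-1/2}\log n$ in Lemma \ref{lem:tailH}, hence $\Vert\bxi^{(k)}-\bxi_0\Vert_2=O_p(n^{-1/4}\log^{1/2}n)$ and $\Vert\m\Vert_2=O_p(n^{-1/2}\log n)$ --- the paper says so explicitly after Lemma 3 --- which falls short of the clean $O_p(n^{-1/4})$ and $O_p(n^{-1/2})$ rates claimed here; removing the logarithm would require a local-entropy refinement you do not supply. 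Your Stages 2 and 3 are sound in outline (the local quadratic lower bound $H^2\gtrsim\Vert\m\Vert_2^2$ can indeed be established with only $L^3$ moments, and the identification of $\sum_g\alpha_g^{(k)}(\btheta_g^{(k)}-\btheta_0)$ with $\m_1$ is correct), but the theorem cannot be proved along these lines without strengthening (\WCc) back to (\Cc).
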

%\begin{remark}
Theorem \ref{rate} says that the convergence rate of $ \bxi^{(k)}$ under the homogeneous model $\mathbb{H}_0$ is only $O_p\left(n^{-1/4}\right)$, but not the common convergence rate $O_p\left(n^{-1/2}\right)$. The reason is that under $\mathbb{H}_0$,  the heterogeneous model  is unidentifiable, and, in consequence, the Fisher information matrix is not positive definite. However, the weighted average of $\bxi^{(k)} $, $\sum_{g=1}^G {\alpha}_g^{(k)} \left({\btheta}_g^{(k)} - \btheta_0\right) $, is a $\sqrt{n}$-consistent estimator. 
%	Before presenting the asymptotic distribution of the EM-test statistic, we first introduce a few notations. 

\subsection{Proofs of Theorem \thmone} 
In this subsection,  we give the proof of Theorem \thmone. 
To prove Theorem \thmone, we only need to prove the following lemmas. 
\begin{lemma}[Consistency] \label{lem:consisasy}
	Assume that $x_1,\dots,x_n$ are independent samples from the homogeneous distribution $f(x;\btheta_0)$. Let $\left(  \bar{\bxi},\bar{\balpha}   \right) $ be an estimator of the parameters in the heterogeneous model $\varphi(x;\bxi, \balpha)  = \sum_{g=1}^{G}\alpha_{g} f(x;\btheta_{g})$ such that $\eta \leq \bar{\alpha}_g \leq 1$ for some  $\eta \in (0,0.5] $. Assume that there exists a constant $c$ such that for any $n$ 
	$$l_n\left(  \bar{\bxi},\bar{\balpha}   \right) - l_n\left(  {\bxi}_0,{\balpha_0}   \right) \geq c > -\infty.$$
	Then, under Condition (\Ca)--(\Cb) and (\WCc)--(\WCd), we have $\left\|\bar{\bxi} -{\bxi}_0\right\|_2 = o_p(1) $.
\end{lemma}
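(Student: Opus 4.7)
The plan is to adapt Wald's classical consistency argument to this mixture setting. The set $\Xi \times \{\balpha : \sum_g \alpha_g = 1,\ \eta \leq \alpha_g \leq 1\}$ is compact, so every subsequence of $(\bar{\bxi}, \bar{\balpha})$ admits a further (random) subsequence along which $(\bar{\bxi}_n, \bar{\balpha}_n) \to (\bxi^*, \balpha^*)$ almost surely, where necessarily $\alpha^*_g \geq \eta > 0$ for every $g$. By the subsequence principle for convergence in probability, it suffices to show that any such subsequential limit must satisfy $\bxi^* = \bxi_0 = (\btheta_0, \dots, \btheta_0)$.

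I would then establish two bounds on $n^{-1} l_n(\bar{\bxi}, \bar{\balpha})$. For the \emph{upper} bound, Condition (\Ca) supplies the measurability and integrability of $\sup_{\bxi \in U}\log\{1+\varphi(x;\bxi,\ba)\}$ for small neighborhoods $U \subset \Xi$ around any point. A finite-open-cover argument over the compact set $\Xi \times \{\balpha:\alpha_g\geq\eta\}$ combined with the ordinary strong law of large numbers yields the envelope/upper-semi-continuity inequality
\begin{equation*}
\limsup_{n\to\infty} \frac{1}{n}\, l_n(\bar{\bxi}, \bar{\balpha}) \;\leq\; \Ex_{\btheta_0}\!\left[\log\ \varphi\left(x;\bxi^*,\balpha^*\right)\right] \quad \text{a.s.}
\end{equation*}
For the \emph{lower} bound, divide the hypothesis $l_n(\bar{\bxi},\bar{\balpha}) - l_n(\bxi_0,\balpha_0) \geq c$ by $n$ and combine with the strong law $n^{-1} l_n(\bxi_0,\balpha_0) \to \Ex_{\btheta_0}[\log f(x;\btheta_0)]$, giving
\begin{equation*}
\Ex_{\btheta_0}\!\left[\log\ \varphi\left(x;\bxi^*,\balpha^*\right)\right] \;\geq\; \Ex_{\btheta_0}\!\left[\log\ f(x;\btheta_0)\right].
\end{equation*}

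The final step is the nonnegativity of Kullback--Leibler divergence. Jensen's inequality gives $\Ex_{\btheta_0}[\log\{\varphi(x;\bxi^*,\balpha^*)/f(x;\btheta_0)\}] \leq 0$, with equality if and only if $\varphi(\cdot;\bxi^*,\balpha^*) = f(\cdot;\btheta_0)$ $\mu$-a.e.\ on the common support guaranteed by Condition (\Cb). The two bounds above force this equality. Because $\alpha^*_g \geq \eta > 0$ for every $g$, the identifiability of the finite mixture family $\mathcal{P}^G$ (assumed throughout the paper) then forces $\btheta^*_g = \btheta_0$ for every $g$, i.e. $\bxi^* = \bxi_0$. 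Invoking the subsequence principle yields $\|\bar{\bxi} - \bxi_0\|_2 = o_p(1)$.

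The hard part is the envelope step. Condition (\Ca) is tailored exactly to this: it provides enough integrability of the local suprema of $\log\{1+\varphi\}$ to run the Wald compactness argument (essentially Theorem 5.7 of \citet{van2000asymptotic}) on the mixture log-likelihood, rather than on the log-likelihood of an identifiable parametric family. Conditions (\Cb) and (\WCc) guarantee the continuity and common support needed so that the Jensen step is nondegenerate, but they play no other essential role here; (\WCd) is not needed for consistency and will become important only when one subsequently pins down the $n^{-1/4}$-rate in Theorem \ref{rate}. Once the envelope bound is secured, the remaining identifiability deduction is clean precisely because the lower bound $\eta$ on the mixing weights prevents the well-known degeneracy in which an empty component leaves its $\btheta$-coordinate unconstrained.
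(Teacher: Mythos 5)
Your proposal is correct and follows essentially the same route as the paper, whose entire proof is the one-line observation that compactness of $\Xi$ plus Condition (C1) lets one invoke the classical Wald consistency theorem of \citet{van2000asymptotic}; your write-up is simply that theorem's argument unpacked (finite open cover with integrable local envelopes, SLLN, strict Jensen away from the truth, and mixture identifiability with weights bounded below by $\eta$ to conclude $\bxi^*=\bxi_0$). The only cosmetic caveat is that the "almost-sure convergent random subsequence" phrasing should be replaced in a formal write-up by the standard finite-cover form of Wald's argument, which you already sketch for the envelope step.
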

\begin{lemma}[Convergence rate] \label{lem:rateasy}
	Assume that $x_1,\dots,x_n$ are independent samples from the homogeneous distribution $f(x;\btheta_0)$. Let $\left(  \bar{\bxi},\bar{\balpha}   \right) $ be an estimator of the parameters in the heterogeneous model $\varphi(x;\bxi, \balpha)  = \sum_{g=1}^{G}\alpha_{g} f(x;\btheta_{g})$ such that $\eta \leq \bar{\alpha}_g \leq 1$ for some $\eta \in (0,0.5] $. Assume that there exists a constant $c$, such that for any  $n$,  
	$$pl_n\left(  \bar{\bxi},\bar{\balpha}   \right) - pl_n\left(  {\bxi}_0,{\balpha_0}   \right) \geq c > - \infty.$$
	Then, under Condition (\Ca)--(\Cb) and (\WCc)--(\WCd) we have 
	$$\left\|\bar{\bxi} -{\bxi}_0\right\|_2 = O_p(n^{-1/4}) ,  \mbox{ and }  \Vert\bar{\m}_1\Vert_2 = \left\|\sum_{g=1}^G \bar{\alpha}_g (\bar{\btheta}_g - \btheta_0)\right\|_2 = O_p(n^{-1/2}),$$
	where $\bar{\m}_1 = \m_1(\bar{\ba}, \bar{\bxi}, \bxi_0)$ and $\m_1$ is defined in (\ref{eq:m1}).
\end{lemma}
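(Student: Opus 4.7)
The plan is to follow the Taylor-expansion strategy used in the proof of Theorem \thmzeroRes, but adapted to produce $O_p$-type rates under the weaker moment assumption (\WCc). First, Lemma~\ref{lem:consisasy} delivers $\|\bar{\bxi}-\bxi_0\|_2 = o_p(1)$, so eventually $\bar{\bxi}$ lies in an arbitrarily small neighbourhood of $\bxi_0$ on which (\WCc) supplies the dominating function $g(x;\btheta_0)$ with $\|g(x;\btheta_0)\|_{L^3}<\infty$. On this neighbourhood I would perform a third-order Taylor expansion of $\log\{\varphi(x_i;\bar{\bxi},\bar{\balpha})/f(x_i;\btheta_0)\}$ and combine it with $\log(1+u) \le u - u^2/2 + u^3/3$ to obtain
\[
l_n(\bar{\bxi},\bar{\balpha}) - l_n(\bxi_0,\balpha_0) \le 2\bar{\m}^{\rm T}\!\sum_{i=1}^n\b_i - \bar{\m}^{\rm T}\!\Bigl(\sum_{i=1}^n\b_i\b_i^{\rm T}\Bigr)\bar{\m} + \mathcal{R}_n,
\]
where $\bar{\m}=\m(\bar{\ba},\bar{\bxi},\bxi_0)$ is the moment vector from (\ref{eq:m}) and $\mathcal{R}_n$ collects cubic-and-higher remainder terms.

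Then I would control each piece. Under (\WCd) together with the finite second moments implied by (\WCc), the weak law yields $n^{-1}\sum_i\b_i\b_i^{\rm T} \povr \Ex(\b_i\b_i^{\rm T})$, whose minimum eigenvalue is at least $\lambda_{\min}(\B(\btheta_0))>0$; consequently, with probability tending to one, $\bar{\m}^{\rm T}(\sum_i\b_i\b_i^{\rm T})\bar{\m} \ge \tfrac{1}{2}\lambda_{\min}(\B(\btheta_0))\,n\,\|\bar{\m}\|_2^2$. The central limit theorem gives $\|\sum_i\b_i\|_2 = O_p(\sqrt{n})$. For the remainder, each cubic piece is dominated by a product of $\|\bar{\bxi}-\bxi_0\|_2^q$ (which is $o_p(1)$ by consistency) and $n^{-1}\sum_i g^p(x_i;\btheta_0)\|\b_i\|_2^{q'}$; the latter averages are $O_p(1)$ by the weak law under $\|g\|_{L^3}<\infty$, so $\mathcal{R}_n = o_p(n\|\bar{\m}\|_2^2) + o_p(1)$. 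The hypothesis $pl_n(\bar{\bxi},\bar{\balpha})\ge pl_n(\bxi_0,\balpha_0)+c$, together with $p(\bar{\balpha}) \le p(\balpha_0)$, supplies the matching lower bound $l_n(\bar{\bxi},\bar{\balpha})-l_n(\bxi_0,\balpha_0)\ge c$.

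Chaining the two bounds and completing the square via $2\bar{\m}^{\rm T}\sum_i\b_i \le \tfrac{1}{4}\lambda_{\min}(\B(\btheta_0))\,n\,\|\bar{\m}\|_2^2 + 4\lambda_{\min}(\B(\btheta_0))^{-1}n^{-1}\|\sum_i\b_i\|_2^2$ yields $n\|\bar{\m}\|_2^2 = O_p(1)$, i.e.\ $\|\bar{\m}\|_2 = O_p(n^{-1/2})$. Since $\bar{\m}_1$ is a sub-vector of $\bar{\m}$, this immediately gives $\|\bar{\m}_1\|_2 = O_p(n^{-1/2})$. For the rate of $\bar{\bxi}-\bxi_0$, the inequality $\sqrt{d}\,\|\bar{\m}\|_2 \ge \eta\,\|\bar{\bxi}-\bxi_0\|_2^2$ (analogous to (\ref{eq:cauchy:m}), using $\bar{\alpha}_g\ge\eta$) converts this to $\|\bar{\bxi}-\bxi_0\|_2 = O_p(n^{-1/4})$. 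The main obstacle will be verifying $\mathcal{R}_n = o_p(n\|\bar{\m}\|_2^2) + o_p(1)$ under only the $L^3$ bound in (\WCc): the Rosenthal-type polynomial tail bounds available under the stronger (\Cc) are not applicable here, and one must instead exploit the vanishing prefactor $\|\bar{\bxi}-\bxi_0\|_2$ from consistency to absorb the cubic remainder into the quadratic term.
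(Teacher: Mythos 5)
Your overall architecture matches the paper's: the same $\log(1+u)\le u-u^2/2+u^3/3$ reduction, the same quadratic term controlled by the weak law and (\WCd), the same lower bound from the penalized-likelihood hypothesis, the same completion of squares, and the same conversion $\sqrt{d}\,\Vert\bar{\m}\Vert_2\ge \eta\Vert\bar{\bxi}-\bxi_0\Vert_2^2$ at the end. However, there is a genuine gap exactly at the step you flag as "the main obstacle," and the fix you propose --- absorbing the cubic remainder into the quadratic term via the vanishing prefactor $\Vert\bar{\bxi}-\bxi_0\Vert_2=o_p(1)$ --- does not work for the term that actually causes trouble. The problematic piece is the \emph{linear} contribution $2\sum_{i=1}^n\varepsilon_{in}$ coming from $2\sum_i\delta_i=2\bar{\m}\trans\sum_i\b_i+2\sum_i\varepsilon_{in}$, where $\varepsilon_{in}$ is the third-order Taylor remainder of $\delta_i$. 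The only bound your scheme gives for it is $|\sum_i\varepsilon_{in}|\le C\Vert\bar{\bxi}-\bxi_0\Vert_2^3\sum_i g(x_i;\btheta_0)=O_p(n)\Vert\bar{\bxi}-\bxi_0\Vert_2^3$. Since $\Vert\bar{\m}\Vert_2\asymp\Vert\bar{\bxi}-\bxi_0\Vert_2^2$ is the best available comparison, this is of order $n\Vert\bar{\m}\Vert_2^{3/2}$, which is \emph{not} $o_p(n\Vert\bar{\m}\Vert_2^2)+o_p(1)$: at the target rate $\Vert\bar{\bxi}-\bxi_0\Vert_2\asymp n^{-1/4}$ one gets $n\Vert\bar{\bxi}-\bxi_0\Vert_2^3\asymp n^{1/4}\to\infty$ while $n\Vert\bar{\m}\Vert_2^2\asymp 1$, and no Young-type inequality can repair this (any split of $na^3$ against $\epsilon na^4$ leaves a divergent residual). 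Carried through, your argument only delivers $\Vert\bar{\m}\Vert_2=O_p(n^{-3/8})$ and $\Vert\bar{\bxi}-\bxi_0\Vert_2=O_p(n^{-3/16})$, short of the claimed rates. The cross terms $\sum_i\bar{\m}\trans\b_i\varepsilon_{in}$, $\sum_i\varepsilon_{in}^2$ and $\sum_i\delta_i^3$ are fine with your bound, because they each carry an extra factor of $\Vert\bar{\m}\Vert_2$ or $\Vert\bar{\bxi}-\bxi_0\Vert_2^{\ge 4}$.

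The paper's proof closes this gap by expanding $\delta_i$ to \emph{fifth} order rather than third. The third- and fourth-order coefficients are then evaluated at the true $\btheta_0$ (not at intermediate points), so the quantities $D_i(j_1,\dots,j_k)=\partial^k f(x_i;\btheta_0)/(\partial\theta_{j_1}\cdots\partial\theta_{j_k})\,/\,(k!f(x_i;\btheta_0))$ have mean zero and finite variance under (\WCc), whence $\sum_iD_i=O_p(n^{1/2})$ by the CLT. This turns those contributions into $O_p(n^{1/2})\Vert\bar{\bxi}-\bxi_0\Vert_2^3=o_p(n^{1/2})\Vert\bar{\m}\Vert_2\le o_p(1)+o_p(n)\Vert\bar{\m}\Vert_2^2$, and only the fifth-order piece is bounded crudely by the dominating function, where $O_p(n)\Vert\bar{\bxi}-\bxi_0\Vert_2^5=o_p(n)\Vert\bar{\bxi}-\bxi_0\Vert_2^4\le o_p(n)\Vert\bar{\m}\Vert_2^2$ is harmless. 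You need this mean-zero/CLT structure of the intermediate-order terms; consistency of $\bar{\bxi}$ alone is not enough.
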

Given an estimator $\left(  \bar{\bxi},\bar{\balpha}   \right) $, define 
$\bar{w}_{gi} = {\bar{\alpha}_g f(x_i;  \bar{\btheta}_g)}/{\varphi\left(x_i;\bar{\balpha}, \bar{\bxi}\right)}$ and $\bar{\ba}^{(1)} = \sum_{i=1}^n \frac{\bar{w}_{gi} + \lambda}{n+G\lambda}$ is the one-step EM update of $\balpha$. 
The following lemma states that under $\mathbb{H}_0$, the EM-update of $\balpha$ does not change much. 
\begin{lemma}\label{lem:asymalpha}
	Assume $\bar{\ba} - \ba^{(0)}= o_p(1)$.  Then, under the same conditions as in Lemma \ref{lem:rateasy}, we have $\bar{\ba}^{(1)} - \ba^{(0)} = o_p(1)$.
\end{lemma}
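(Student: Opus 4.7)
The intuition is that as $\bar{\bxi}\to \bxi_0 = (\btheta_0,\ldots,\btheta_0)$, the conditional densities $f(x_i;\bar{\btheta}_g)$ become nearly equal across $g$, so the posterior weight $\bar{w}_{gi} = \bar{\alpha}_g f(x_i;\bar{\btheta}_g)/\varphi(x_i;\bar{\bxi},\bar{\ba})$ becomes close to its prior $\bar{\alpha}_g$, and this forces the one-step M-update $\bar{\alpha}_g^{(1)}$ to stay near $\bar{\alpha}_g$, which is $o_p(1)$-close to $\alpha_g^{(0)}$ by hypothesis. The plan is first to invoke Lemma \ref{lem:rateasy}, whose hypotheses are automatically in force here, to obtain the rates $\|\bar{\bxi} - \bxi_0\|_2 = O_p(n^{-1/4})$ and $\|\bar{\m}_1\|_2 = \|\sum_g \bar{\alpha}_g(\bar{\btheta}_g-\btheta_0)\|_2 = O_p(n^{-1/2})$.

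Given these rates, I would decompose the M-step update as
\begin{equation*}
\bar{\alpha}_g^{(1)} - \alpha_g^{(0)} = \frac{1}{n+G\lambda}\sum_{i=1}^n(\bar{w}_{gi} - \bar{\alpha}_g) + \frac{n(\bar{\alpha}_g - \alpha_g^{(0)})}{n+G\lambda} + \frac{\lambda(1 - G\alpha_g^{(0)})}{n+G\lambda},
\end{equation*}
in which the second term is $o_p(1)$ by the hypothesis $\bar{\ba} - \ba^{(0)} = o_p(1)$ and the third is $O(1/n)$. This reduces the claim to $n^{-1}\sum_i(\bar{w}_{gi} - \bar{\alpha}_g) = o_p(1)$. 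Starting from the identity
\begin{equation*}
\bar{w}_{gi} - \bar{\alpha}_g = \bar{\alpha}_g\,\frac{f(x_i;\bar{\btheta}_g) - \varphi(x_i;\bar{\bxi},\bar{\ba})}{\varphi(x_i;\bar{\bxi},\bar{\ba})},
\end{equation*}
I would Taylor-expand both $f(x_i;\bar{\btheta}_g)$ and $\varphi(x_i;\bar{\bxi},\bar{\ba})$ around $\btheta_0$ to first order; the numerator becomes $f(x_i;\btheta_0)\{\b_{1i}\trans[(\bar{\btheta}_g-\btheta_0)-\bar{\m}_1] + R_{gi}\}$, with $|R_{gi}|\le C\|\bar{\bxi}-\bxi_0\|_2^2\, g(x_i;\btheta_0)$ on the event $\{\|\bar{\bxi}-\bxi_0\|_2 \le \tau\}$ by Condition (\WCc). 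When averaged over $i$, the leading linear contribution becomes $\bigl(n^{-1}\sum_i \b_{1i}\bigr)\trans\bigl[(\bar{\btheta}_g-\btheta_0) - \bar{\m}_1\bigr] = O_p(n^{-1/2})\cdot O_p(n^{-1/4}) = o_p(1)$, using the central limit theorem for $\b_{1i}$ (with $\Ex[\b_{1i}] = 0$ under $\mathbb{H}_0$) together with Lemma \ref{lem:rateasy}.

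The principal technical obstacle is that the denominator ratio $\varphi(x_i;\bar{\bxi},\bar{\ba})/f(x_i;\btheta_0) = 1 + S_i$ with $S_i = \sum_h \bar{\alpha}_h\epsilon_h(x_i)$ can be pointwise very small even when $\|\bar{\bxi}-\bxi_0\|_2$ is small, so the geometric-series expansion $(1+S_i)^{-1} = 1 - S_i + \cdots$ is not uniformly valid in $i$, and multiplying the linear term by $1/(1+S_i)$ is not a benign operation. I plan to resolve this by restricting attention to the high-probability event $\{\|\bar{\bxi}-\bxi_0\|_2 \le \tau\}$ and exploiting the uniform bound $|\bar{w}_{gi}-\bar{\alpha}_g|\le 1$ to render the complementary event negligible. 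On the good event, the correction from $f/\varphi \neq 1$ as well as the Taylor remainder $R_{gi}$ are handled by (\WCc): the sample average of $|R_{gi}|/\varphi(x_i;\bar{\bxi},\bar{\ba})$ is dominated by $C\|\bar{\bxi}-\bxi_0\|_2^2\cdot n^{-1}\sum_i g(x_i;\btheta_0)\cdot f(x_i;\btheta_0)/\varphi(x_i;\bar{\bxi},\bar{\ba})$, which is $O_p(\|\bar{\bxi}-\bxi_0\|_2^2)=O_p(n^{-1/2})=o_p(1)$ upon invoking the $L^3$-integrability of $g(\cdot;\btheta_0)$ and Markov's inequality; the $S_i/(1+S_i)$ correction multiplying the linear term is controlled analogously, using $\sum_g \bar{w}_{gi}\equiv 1$ to bound the resulting ratio pointwise. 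Combined with the linear estimate from the previous paragraph, this yields $n^{-1}\sum_i(\bar{w}_{gi}-\bar{\alpha}_g)=o_p(1)$, completing the proof.
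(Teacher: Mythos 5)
Your overall route coincides with the paper's: you reduce the claim to $n^{-1}\sum_{i=1}^n(\bar{w}_{gi}-\bar{\alpha}_g)=o_p(1)$ via the same algebraic decomposition of the M-step update, use the same identity $\bar{w}_{gi}-\bar{\alpha}_g=\bar{\alpha}_g\{f(x_i;\bar{\btheta}_g)-\varphi(x_i;\bar{\bxi},\bar{\ba})\}/\varphi(x_i;\bar{\bxi},\bar{\ba})$, and control the numerator by a Taylor expansion with the dominating function of Condition (\WCc) and the rates from Lemma \ref{lem:rateasy}. The genuine gap is in your treatment of the random denominator. Writing $\delta_i=\varphi(x_i;\bar{\bxi},\bar{\ba})/f(x_i;\btheta_0)-1$ (your $S_i$), you correctly flag that $1/(1+\delta_i)$ admits no uniform-in-$i$ expansion, but your proposed fix does not close the issue: the claim that $n^{-1}\sum_i g(x_i;\btheta_0)\,f(x_i;\btheta_0)/\varphi(x_i;\bar{\bxi},\bar{\ba})=O_p(1)$ ``by $L^3$-integrability of $g$ and Markov's inequality'' is unjustified, because the factor $f(x_i;\btheta_0)/\varphi(x_i;\bar{\bxi},\bar{\ba})$ depends on the random estimate and can be arbitrarily large at individual sample points whenever $\varphi\ll f(\cdot;\btheta_0)$, so the summands carry no moment bound to which Markov could be applied. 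The same problem afflicts the ``$S_i/(1+S_i)$ correction multiplying the linear term'': $|\delta_i/(1+\delta_i)|$ is unbounded as $\delta_i$ approaches $-1$, and neither $|\bar{w}_{gi}-\bar{\alpha}_g|\le 1$ nor $\sum_g\bar{w}_{gi}\equiv 1$ by itself yields a pointwise bound on that ratio.

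The paper closes exactly this gap by first proving $\max_{1\le i\le n}|\delta_i|=o_p(1)$: it writes $\delta_i=\bar{\m}\trans\b_i+\varepsilon_{in}$, uses $\|\bar{\m}\|_2=O_p(n^{-1/2})$ from Lemma \ref{lem:rateasy}, and shows $\max_i n^{-3/8}\|\b_i\|_2=o_p(1)$ by a union bound and Chebyshev's inequality with the third moments supplied by (\WCc); this makes $1/(1+\delta_i)\le 1+o_p(1)$ uniformly over the sample, after which all of your averages go through. An alternative that stays closer to your sketch is to rearrange the identity as $\bar{w}_{gi}-\bar{\alpha}_g=\bar{\alpha}_g(\delta_{gi}-\delta_i)-(\bar{w}_{gi}-\bar{\alpha}_g)\delta_i$ with $\delta_{gi}=f(x_i;\bar{\btheta}_g)/f(x_i;\btheta_0)-1$, so that the problematic $1/(1+\delta_i)$ is absorbed into the bounded factor $\bar{w}_{gi}-\bar{\alpha}_g$ and only $n^{-1}\sum_i|\delta_i|=o_p(1)$ (an average, not a maximum) is needed; your appeal to $\sum_g\bar{w}_{gi}\equiv 1$ points in this direction but is not deployed where it is actually required, namely for the remainder term and for the average of $g\cdot f(\cdot;\btheta_0)/\varphi$. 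Either repair is routine, but one of them must be carried out explicitly.
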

Combining Lemma \ref{lem:rateasy}, \ref{lem:asymalpha} and the likelihood non-decreasing property of the EM algorithm, we prove Theorem \thmone.  
\begin{proof}[Proof of Lemma \ref{lem:consisasy}]
	Since $\Xi$ is compact, the conclusion can be easily proved using the classical Wald's consistency Theorem \citep{van2000asymptotic}.
\end{proof}

\begin{proof}[Proof of Lemma \ref{lem:rateasy}]
	Let $	R_{1n}(\bar{\bxi}, \bar{\ba}) = 2 \left\{pl_n( \bar{\bxi},\bar{\balpha} ) - pl_n( \bxi_0, \ba_0)\right\}$, where $\bar{\bxi} = (\bar{\btheta}_1, \ldots, \bar{\btheta}_G)$. Since $p(\ba)$ is maximized at $\ba_0$, we have
	\begin{align}\label{R1-inequality}
		R_{1n} &\leq 2 \left\{l_n( \bar{\bxi},\bar{\balpha} ) - l_n( \bxi_0, \ba_0)\right\}  \nonumber\\
		&= 2\sum_{i=1}^n \log\left(1+\sum_{g=1}^G \bar{\alpha}_g\left(\frac{f(x_i;   \bar{\btheta}_g)}{f(x_i;   {\btheta}_0)} -1 \right)\right) \nonumber\\
		&= \sum_{i=1}^n 2\log(1+\delta_i),
	\end{align}
	where $\delta_i = \sum_{g=1}^G \bar{\alpha}_g\left(\frac{f(x_i;   \bar{\btheta}_g)}{f(x_i;   {\btheta}_0)} -1 \right)$. 
	Applying the inequality $\log(1+x) \leq x - x^2/2 + x^3/3$, we have 
	\begin{equation}\label{basic}
		R_{1n} \leq 2\sum_{i=1}^{n} \delta_i - \sum_{i=1}^{n} \delta_i^2 + (2/3) \sum_{i=1}^{n} \delta_i^3.
	\end{equation}
	We first  deal with  $\sum_{i=1}^{n}\delta_i$ in (\ref{basic}).  By Taylor's expansion of $f(x_i;   \bar{\btheta}_g)$ at ${\btheta}_0$, we have 
	\begin{align}\label{eq:deltai}
		\delta_i &= \sum_{g=1}^G \bar{\alpha}_g\frac{f(x_i;   \bar{\btheta}_g) - f(x_i;   {\btheta}_0)}{f(x_i;   {\btheta}_0)} \notag\\
		&= \sum_{h=1}^{d} \sum_{g=1}^G \bar{\alpha}_g(\bar{\theta}_{gh} - {\theta}_{0h}) Y_{ih} +
		\sum_{h=1}^{d} \sum_{g=1}^G \bar{\alpha}_g(\bar{\theta}_{gh} - {\theta}_{0h})^2 Z_{ih} \notag \\
		&+ \sum_{h<\ell}^{d} \sum_{g=1}^G \bar{\alpha}_g(\bar{\theta}_{gh} - {\theta}_{0h})(\bar{\theta}_{g\ell} - {\theta}_{0\ell}) U_{ih\ell} + \varepsilon_{in},
	\end{align}
	where $ Y_{ih}$, $Z_{ih}$ and $U_{ih\ell}$ are defined in (\ref{eq:b}) and $ \varepsilon_{in}$ is the remainder term.
	Let
	$$\bar{\m} = \m(\bar{\ba}, \bar{\bxi}, \bxi_0) \mbox{ and }, \bar{\m}_1 = \m_1(\bar{\ba}, \bar{\bxi}, \bxi_0)$$
	where $\m $ and $\m_1$ are defined in (\ref{eq:m}) and (\ref{eq:m1}).
	Then, we write (\ref{eq:deltai}) as 
	\begin{equation}\label{eq:deltai2}
		\delta_i = \bar{\m}\trans \b_i +  \varepsilon_{in},
	\end{equation}
	where $\b_i$ is defined in (\ref{eq:b}). 
	
	\noindent
	{\bf{Step 1. Controlling the remainder term $\bm{\varepsilon}_{in}$}}.  We aim to prove 
	\begin{equation} \label{eps1}
		\sum_{i=1}^{n}\varepsilon_{in} = o_p(1) + o_p(n)\left(\Vert\bar{\m}\Vert_2^2\right).
	\end{equation}
	In order to show this, we note that $\sum_{i=1}^{n}\varepsilon_{in}$ can be written as 
	\begin{align*}
		\sum_{i=1}^{n}\varepsilon_{in} &= 	\sum_{i=1}^{n} \sum_{j_{1}=1}^d \cdots \sum_{j_{3}=1}^d\sum_{g=1}^G \bar{\alpha}_g \prod_{s=1}^3 (\bar{\theta}_{gj_s} - {\theta}_{0j_s}) \left(\frac{\partial^3 f(x_i;  \btheta_0)}{\partial \theta_{j_1} \partial \theta_{j_2}\partial \theta_{j_3}}\right) \bigg /  (3!f(x_i;  \btheta_0)) \\
		&+ \sum_{i=1}^{n} \sum_{j_{1}=1}^d \cdots \sum_{j_{4}=1}^d\sum_{g=1}^G \bar{\alpha}_g \prod_{s=1}^4 (\bar{\theta}_{gj_s} - {\theta}_{0j_s}) \left(\frac{\partial^4 f(x_i;  \btheta_0)}{\partial \theta_{j_1} \partial \theta_{j_2}\partial \theta_{j_3}\partial \theta_{j_4}}\right)\bigg /  (4!f(x_i;  \btheta_0)) \\
		&+ \sum_{i=1}^{n} \sum_{j_{1}=1}^d \cdots \sum_{j_{5}=1}^d\sum_{g=1}^G  \bar{\alpha}_g \prod_{s=1}^5 (\bar{\theta}_{gj_s} - {\theta}_{0j_s}) \left(\frac{\partial^5 f(x_i;  {\bm \zeta}_{g}(x_i))}{\partial \theta_{j_1} \partial \theta_{j_2}\partial \theta_{j_3}\partial \theta_{j_4}\partial \theta_{j_5}}\right) \bigg/  (5!f(x_i;\btheta_0)) \\
		& = {\rm I + II + III},
	\end{align*}
	where ${\bm \zeta}_{g}(x_i)$ lies between $\bar{\btheta}_g$ and $\btheta_0$. 
	
	For I, note that the production term $\prod_{s=1}^3 (\bar{\theta}_{gj_s} - {\theta}_{0j_s})$ does not involve the index $i$. We can change the summation and production order as 
	$\sum_{j_{1}=1}^d \cdots \sum_{j_{3}=1}^d\sum_{g=1}^G\prod_{s=1}^3 \sum_{i=1}^{n}  $.
	Further, for any fixed $j_1,j_2,j_3$,  by Cauchy's inequality, we have
	\begin{equation} \label{eq:theta}
		\prod_{s=1}^3 \left|\bar{\theta}_{gj_s} - {\theta}_{0j_s}\right| \leq 
		\left(\sum_{j=1}^d \left|\bar{\theta}_{gj} - {\theta}_{0j} \right|\right)^3\leq \left(\sqrt{d}\right)^3 \left\|\bar{\btheta}_g -{\btheta}_{0}\right\|_2^3.
	\end{equation}
	Hence, 
	\begin{align*}
		|{\rm I}| &\leq \sum_{j_{1}=1}^d \cdots \sum_{j_{3}=1}^d \left|\sum_{g=1}^G \bar{\alpha}_g \prod_{s=1}^3 (\bar{\theta}_{gj_s} - {\theta}_{0j_s}) \left(\sum_{i=1}^{n}  \left(\frac{\partial^3 f(x_i;  \btheta_0)}{\partial \theta_{j_1} \partial \theta_{j_2}\partial \theta_{j_3}}\right)\bigg/  (3!f(x_i;  \btheta_0))\right)\right| \\
		&\leq \sum_{j_{1}=1}^d \cdots \sum_{j_{3}=1}^d \sum_{g=1}^G \bar{\alpha}_g \left|\prod_{s=1}^3 (\bar{\theta}_{gj_s} - {\theta}_{0j_s})\right| \left|\left(\sum_{i=1}^{n}  \left(\frac{\partial^3 f(x_i;  \btheta_0)}{\partial \theta_{j_1} \partial \theta_{j_2}\partial \theta_{j_3}}\right)\bigg/  (3!f(x_i;  \btheta_0))\right)\right| \\
		&\leq  \sum_{j_{1}=1}^d \cdots \sum_{j_{3}=1}^d \left| \left(\sum_{g=1}^G \bar{\alpha}_g \left(\sqrt{d}\right)^3  \Vert\bar{\btheta}_g -{\btheta}_{0}\Vert_2^3\right) \left(\sum_{i=1}^{n} \left(\frac{\partial^3 f(x_i;  \btheta_0)}{\partial \theta_{j_1} \partial \theta_{j_2}\partial \theta_{j_3}}\right) \bigg /  (3!f(x_i;  \btheta_0))\right)\right|.
	\end{align*}
	Also for any fixed $j_1,j_2,j_3$, let $D_i(j_1,j_2,j_3) = \left(\frac{\partial^3 f(x_i;  \btheta_0)}{\partial \theta_{j_1} \partial \theta_{j_2}\partial \theta_{j_3}}\right)\big /  (3!f(x_i;  \btheta_0))$. 
	By Condition (WC3), we have $\Ex\left(D_i(j_1,j_2,j_3) \right) = 0$ and $\var\left(D_i(j_1,j_2,j_3) \right) < \infty$. Applying the Central Limit Theorem, we have $\sum_{i=1}^n D_i(j_1,j_2,j_3)  = O_p\left(n^{1/2}\right)$.
	Hence, we conclude that 
	\begin{equation}
		|{\rm I}| = O_p\left(n^{1/2}\right)\left(\sum_{g=1}^G \bar{\alpha}_g \left(\sqrt{d}\right)^3  \Vert\bar{\btheta}_g -{\btheta}_{0}\Vert_2^3\right).
	\end{equation}
	Similarly, for II, we have 
	\begin{equation}
		|{\rm II}| = O_p\left(n^{1/2}\right)\left(\sum_{g=1}^G \bar{\alpha}_g \left(\sqrt{d}\right)^4 \Vert\bar{\btheta}_g -{\btheta}_{0}\Vert_2^4\right).
	\end{equation}
	For III,  from Condition (WC3), for any $j_1,\ldots, j_5$, we have 
	$$\sup_{\Vert\btheta - \btheta_0\Vert \leq \tau} \left|\frac{\partial^5 f(x,  \btheta)}{\partial \theta_{j_1} \cdots \partial \theta_{j_5}} \bigg /  f(x,  \btheta_0)\right| \leq g(x; \btheta_0).$$
	By  the law of large numbers, we have 
	$$ \sum_{i=1}^n g(x_i; \btheta_{0}) = O_p(n). $$
	Using the consistency of $\bar{\btheta}_{g}$ from Lemma \ref{lem:consisasy}, we have 
	\begin{equation}
		|{\rm III }| = O_p(n)\left(\sum_{g=1}^G \bar{\alpha}_g \left(\sqrt{d}\right)^5  \Vert\bar{\btheta}_g -{\btheta}_{0}\Vert_2^5\right).
	\end{equation}
	Since $\Vert\bar{\btheta}_g -{\btheta}_{0}\Vert_2 = o_p(1)$, we get
	$$\left|\sum_{i=1}^{n}\varepsilon_{in}\right| =  \left|o_p\left(n^{1/2}\right)\right|\sum_{g=1}^G \bar{\alpha}_g  \Vert\bar{\btheta}_g -{\btheta}_{0}\Vert_2^2 + |o_p(n)|\sum_{g=1}^G \bar{\alpha}_g  \Vert\bar{\btheta}_g -{\btheta}_{0}\Vert_2^4.$$
	On the one hand, we have 
	\begin{equation}\label{eq2}
		|o_p(n)|\sum_{g=1}^G \bar{\alpha}_g  \Vert\bar{\btheta}_g -{\btheta}_{0}\Vert_2^4 \leq |o_p(n)| \cdot \Vert\bar{\m}\Vert_2^2.
	\end{equation}
	On the other hand, we have
	\begin{equation}\label{eq3}
		\left|o_p\left(n^{1/2}\right)\right| \sum_{g=1}^G \bar{\alpha}_g  \Vert\bar{\btheta}_g -{\btheta}_{0}\Vert_2^2 \leq \left|o_p\left(n^{1/2}\right)\right| \Vert\bar{\m}\Vert_2 \leq |o_p(1)| + |o_p(n)| \cdot \Vert\bar{\m}\Vert_2^2,
	\end{equation}
	where the last inequality is from
	$$\left|o_p\left(n^{1/2}\right)\right| \Vert\bar{\m}\Vert_2 = \left|o_p(1)\right| n^{1/2} \Vert\bar{\m}\Vert_2 \leq \left|o_p(1)\right|  \left(\frac{1+n \Vert\bar{\m}\Vert_2^2}{2}\right) =  |o_p(1)| + |o_p(n)| \cdot \Vert\bar{\m}\Vert_2^2.$$
	Combining (\ref{eq2}) with (\ref{eq3}), we get (\ref{eps1}).
	
	\noindent
	{\bf{Step 2. Obtaining the convergence rate}}. From (\ref{eps1}), we have 
	\begin{equation} \label{eq:cv1}
		\sum_{i=1}^{n}\delta_i = \sum_{i=1}^n \left( \bar{\m}\trans \b_i +  \varepsilon_{in}\right) = \sum_{i=1}^n \bar{\m}\trans \b_i +  o_p(1) + o_p(n)\Vert\bar{\m}\Vert_2^2.
	\end{equation}
	Similarly, we can prove 
	\begin{equation} \label{eq:cv2}
		\sum_{i=1}^n \delta_i^2 = \sum_{i=1}^n \left(\bar{\m}\trans \b_i +  \varepsilon_{in}\right)^2 = \sum_{i=1}^n \left(\bar{\m}\trans \b_i\right)^2 +  o_p(1) + o_p(n)\Vert\bar{\m}\Vert_2^2,
	\end{equation}
	and 
	\begin{equation} \label{eq:cv3}
		\sum_{i=1}^n \delta_i^3 = \sum_{i=1}^n \left(\bar{\m}\trans \b_i +  \varepsilon_{in}\right)^3 = \sum_{i=1}^n \left(\bar{\m}\trans \b_i\right)^3 +  o_p(1) + o_p(n)\Vert\bar{\m}\Vert_2^2.
	\end{equation}
	In fact, for  (\ref{eq:cv2}), we have 
	$$\sum_{i=1}^n \left(\bar{\m}\trans \b_i +  \varepsilon_{in}\right)^2 = \sum_{i=1}^n \left(\bar{\m}\trans \b_i\right)^2 + \sum_{i=1}^n \left(\varepsilon_{in}^2 + \bar{\m}\trans \b_i \varepsilon_{in}\right).$$
	By Taylor's expansion, we have 
	\begin{equation} \label{eq:rateExpansion}
		\varepsilon_{in} = 	\sum_{j_{1}=1}^d \cdots \sum_{j_{3}=1}^d\sum_{g=1}^G \bar{\alpha}_g \prod_{s=1}^3 (\bar{\theta}_{gj_s} - {\theta}_{0j_s}) \left(\frac{\partial^3 f(x_i;  {\bm \zeta}_{g}(x_i))}{\partial \theta_{j_1} \partial \theta_{j_2}\partial \theta_{j_3}}\right) \bigg/  (3!f(x_i;  \btheta_0)),
	\end{equation}
	where ${\bm \zeta}_{g}(x_i)$ lies between $\bar{\btheta}_g$ and $\btheta_0$. Note that here we only need to represent the remainder term $\varepsilon_{in}$ in terms of the third derivatives.
	Again, from Condition (WC3), for any fixed $j_1,\ldots,j_3$, we have 
	$$  \sup_{\Vert\btheta - \btheta_0\Vert_2 \leq \tau} \left|\left(\frac{\partial^3 f(x;  \btheta)}{\partial \theta_{j_1} \partial \theta_{j_2}\partial \theta_{j_3}}\right) \bigg/  (3!f(x;  \btheta_0))\right| \leq g(x; \btheta_0).$$
	%and  $\left\|g(x)\right\|_{L^3} < \infty$.
	Note that 
	$\left|\bar{\m}\trans \b_i\right| \leq \Vert\bar{\m}\Vert_2 \Vert\b_i\Vert_2 $. By $\bar{\bxi} - \bxi_0=o_p(1)$, together with the inequality (\ref{eq:theta}), Condition (WC3) and the law of large numbers,  $\sum_{i=1}^n \left |\bar{\m}\trans \b_i \varepsilon_{in}\right|$ can be bounded by 
	\begin{align*}
		&\quad \sum_{i=1}^n \left|\bar{\m}\trans \b_i \varepsilon_{in}\right|  \leq \sum_{i=1}^n  \Vert\bar{\m}\Vert_2 \Vert\b_i\Vert_2 |\varepsilon_{in}| \\
		&\leq \Vert\bar{\m}\Vert_2 \sum_{i=1}^{n}    \Vert\b_i\Vert_2	\sum_{g=1}^G \left(\sqrt{d}\right)^3 \Vert\bar{\btheta}_g -{\btheta}_{0}\Vert_2^3  \sum_{j_{1}=1}^d \cdots \sum_{j_{3}=1}^d   \left|\left(\frac{\partial^3 f(x_i;   {\bm \zeta}_{g}(x_i))}{\partial \theta_{j_1} \partial \theta_{j_2}\partial \theta_{j_3}}\right) \bigg/  (3!f(x_i;  \btheta_0))\right| \\ 
		&\leq \Vert\bar{\m}\Vert_2
		\sum_{g=1}^G \left(\sqrt{d}\right)^3 \Vert\bar{\btheta}_g -{\btheta}_{0}\Vert_2^3 d^3\sum_{i=1}^{n} \Vert\b_i\Vert_2|g(x_i;\btheta_0)| \\
		&= o_p(1) O_p(n) \Vert\bar{\m}\Vert_2^2 = o_p(n)\Vert\bar{\m}\Vert_2^2.
	\end{align*}
	For the  $\sum_{i=1}^n \varepsilon_{in}^2$ term, when $\Vert\bar{\bxi} - \bxi_0\Vert_2 \leq \tau$
	, we have 
	\begin{equation}\label{eq:varepsilon}
		\varepsilon_{in}^2 \leq  \left(d^3\sum_{g=1}^G \left(\sqrt{d}\right)^3 \Vert\bar{\btheta}_g -{\btheta}_{0}\Vert_2^3\right)^2 g^2(x_i;\btheta_0). 
	\end{equation}
	Since $\Vert\bar{\bxi} - \bxi_0\Vert_2 \overset{p}{\longrightarrow} 0$, we have
	$\sum_{i=1}^n \varepsilon_{in}^2 = o_p(n)\Vert\bar{\m}\Vert_2^2. $
	Thus, we prove (\ref{eq:cv2}). Similarly, we can prove (\ref{eq:cv3}) . 
	
	Finally, by the law of large numbers, we have 
	\begin{equation} \label{eq:ratelln2}
		\sum_{i=1}^n \left(\bar{\m}\trans \b_i\right)^2 = n\bar{\m}\trans \B \bar{\m}(1+o_p(1)), 
	\end{equation}
	and
	\begin{equation} \label{eq:ratelln3}
		\left|\sum_{i=1}^n \left(\bar{\m}\trans \b_i\right)^3\right| \leq \sum_{i=1}^n  \Vert\b_i\Vert_2^3  \Vert\bar{\m}\Vert_2^3 \leq O_p(n) \Vert\bar{\m}\Vert^3_2 =   o_p(n) \Vert\bar{\m}\Vert^2_2, 
	\end{equation}
	where $\sum_{i=1}^n  \Vert\b_i\Vert_2^3 = O_p(n)$  is from Condition (\WCc).
	Since 
	$$n\bar{\m}\trans \B \bar{\m} \geq \lambda_{\rm min}(\B)n \Vert\bar{\m}\Vert^2_2 = O_p(n)\Vert\bar{\m}\Vert^2_2,$$
	we conclude that 
	$o_p(n) \Vert\bar{\m}\Vert^2_2 =  o_p(1)n\bar{\m}\trans \B \bar{\m}$.
	Combining (\ref{eq:cv1}) --  (\ref{eq:cv3}) with  (\ref{eq:ratelln2}) --  (\ref{eq:ratelln3}), we get 
	\begin{equation} \label{eq:Rbound}
		R_{1n}(\bar{\bxi}, \bar{\ba}) \leq 2\bar{\m}\trans\sum_{i=1}^n \b_i - n\bar{\m}\trans \B \bar{\m}(1+o_p(1)) + o_p(1).
	\end{equation}
	Since we know $R_{1n} \geq 2c$, the inequality (\ref{eq:Rbound}) implies that 
	$$ \frac{2c}{n} \leq 2\Vert\bar{\m}\Vert_2  \left\Vert \frac{\sum_{i=1}^n \b_i }{n}   \right\Vert_2   -  (\lambda_{\rm min}(\B) + o_p(1))\Vert\bar{\m}\Vert_2^2 + o_p(1/n).$$
	Applying the inequality 
	\begin{equation}\label{eq:meaninequ}
		\frac{\lambda_{\rm min}(\B)}{2} \Vert\bar{\m}\Vert_2^2 + \frac{2} {\lambda_{\rm min}(\B)}   \left\Vert \frac{\sum_{i=1}^n \b_i }{n}   \right\Vert_2^2  \geq 2\Vert\bar{\m}\Vert_2 \left\Vert \frac{\sum_{i=1}^n \b_i }{n}   \right\Vert_2 ,  
	\end{equation}
	we conclude that 
	\begin{equation}\label{eq:barm}
		\left(\frac{\lambda_{\rm min}(\B)}{2} +o_p(1) \right)\Vert\bar{\m}\Vert_2^2 \leq \frac{2} {\lambda_{\rm min}(\B)}   \left\Vert \frac{\sum_{i=1}^n \b_i }{n}   \right\Vert_2^2  -  \frac{2c}{n} + o_p(1/n).
	\end{equation}
	Using the fact $ \left\Vert \frac{\sum_{i=1}^n \b_i }{n}   \right\Vert_2^2 = O_p(1/n)$,
	(\ref{eq:barm}) implies that 
	$\Vert\bar{\m}\Vert_2 = O_p\left(n^{-1/2}\right)$, and thus $\Vert\bar{\m}_1\Vert_2 = O_p(n^{-1/2})$. Since $\delta \leq \bar{\alpha}_g \leq 1$, we have 
	$\Vert\bar{\btheta}_g -{\btheta}_0\Vert_2= O_p\left(n^{-1/4}\right) $  ($g = 1,\dots,G$), and thus we complete the proof.
\end{proof}

\begin{proof}[Proof of Lemma \ref{lem:asymalpha}]
	The first step aims to show 
	\begin{equation}\label{eq:asymalpha1}
		n^{-1}\sum_{i=1}^n \bar{w}_{gi} - \bar{\alpha}_g = o_p(1).
	\end{equation}
	By the definition of $\bar{w}_{gi}$, we have
	$$\bar{w}_{gi} - \bar{\alpha}_g = \frac{\bar{\alpha}_g f(x_i;  \bar{\btheta}_g)}{\varphi(x_i; \bar{\bxi}, \bar{\alpha})} - \bar{\alpha}_g.$$
	Let $\delta_{gi} = \frac{f(x_i;  \bar{\btheta}_g)}{f(x_i;  {\btheta}_0)} - 1$ and $\delta_i = \frac{\varphi(x_i; \bar{\bxi}, \bar{\alpha})}{f(x_i;  {\btheta}_0)} - 1 $. We can rewrite $\bar{w}_{gi} - \bar{\alpha}_g $ as 
	$$\bar{w}_{gi} - \bar{\alpha}_g = \bar{\alpha}_g \frac{1+\delta_{gi}}{1+\delta_i} - \bar{\alpha}_g= \bar{\alpha}_g \frac{\delta_{gi} - \delta_i}{1+\delta_i}.$$
	Thus, we only need to prove 
	\begin{equation}\label{eq:wa}
		n^{-1}\sum_i^n \bar{w}_{gi} - \bar{\alpha}_g  =\bar{\alpha}_g  n^{-1}\sum_{i=1}^n\frac{\delta_{gi} - \delta_i}{1+\delta_i} = o_p(1).
	\end{equation}
	To prove (\ref{eq:wa}), we first prove $\max_i| \delta_i| = o_p(1).$ As in the proof of Lemma \ref{lem:rateasy},  (\ref{eq:deltai2}) gives 
	$$\delta_i = \bar{\m}\trans \b_i + \varepsilon_{in},$$
	and (\ref{eq:varepsilon}) gives
	$$			\varepsilon_{in} \leq  \left(d^3\sum_{g=1}^G \left(\sqrt{d}\right)^3 \Vert\bar{\btheta}_g -{\btheta}_{0}\Vert_2^3\right) g(x_i;\btheta_0). $$
	Then, it remains to show $\max_i \Vert \bar{\m}\trans \b_i \Vert_2 = o_p(1)$ and $\max_i \Vert\bar{\btheta}_g -{\btheta}_{0}\Vert_2^3 g(x_i;\btheta_0) = o_p(1)$.
	Since $\Vert \bar{\m}  \Vert_2 = O_p\left( n ^{-1/2}\right) $, we have $n^{3/8}\Vert \bar{\m}  \Vert_2 = o_p(1)$. 
	In order to show $\max_i \Vert \bar{\m}\trans \b_i \Vert_2 = o_p(1)$, we only need to prove 
	$$\max_i n^{-3/8}\Vert  \b_i \Vert_2 = o_p(1).$$
	For any $\epsilon > 0$, we have  
	$$\pr \left(\max_i n^{-3/8}\Vert  \b_i \Vert_2 \geq \epsilon\right) \leq \sum_{i=1}^{n} \pr \left(  n^{-3/8}\Vert  \b_i \Vert_2 \geq \epsilon\right).$$
	By Chebyshev's inequality, we have
	$$
	\pr \left(  n^{-3/8}\Vert  \b_i \Vert_2 \geq \epsilon\right) \leq \frac{\Ex\left( \Vert  \b_i \Vert_2^3 \right)}{n^{9/8}\epsilon^3}.
	$$
	Thus, we have 
	$$\pr \left(\max_i n^{-3/8}\Vert  \b_i \Vert_2 \geq \epsilon\right) \leq \frac{\Ex\left( \Vert  \b_1 \Vert_2^3 \right)}{n^{1/8}\epsilon^3}.$$
	It follows that $\max_i \Vert \bar{\m}\trans \b_i \Vert_2 = o_p(1) $.  Similarly, we can prove $\max_i \Vert\bar{\btheta}_g -{\btheta}_{0}\Vert_2^3 g(x_i;\btheta_0) = o_p(1) $, and thus
	$\max_i| \delta_i| = o_p(1)$.  
	In order to show (\ref{eq:wa}), by the fact $\max_i| \delta_i| = o_p(1)$, 
	we only need to  show that 
	$$n^{-1} \sum_{i=1}^n \left|\delta_{gi}\right|  = n^{-1} \sum_{i=1}^n   \left|\frac{f(x_i;  \bar{\btheta}_g) - f(x_i;  {\btheta}_0)}{f(x_i;  {\btheta}_0)}\right|  = o_p(1),$$
	which is similar to the proof of (\ref{eq:cv1}) without the summation over of $g$. 
	More specifically, by Lagrange's mean value theorem, we have 
	$$\frac{f(x_i;  \bar{\btheta}_g) - f(x_i;  {\btheta}_0)}{f(x_i;  {\btheta}_0)} =   \sum_{j=1}^d   (\bar{\btheta}_{gj_s} - {\btheta}_{0j_s}) \left(\frac{\partial f(x_i;  {\bm \zeta}_{g}(x_i))}{\partial \theta_{j}}\right) \bigg /  f(x_i;  \btheta_0),$$
	where $ {\bm\zeta}_{g}(x_i)$ lies between $\bar{\btheta}_g$ and $\btheta_0$.  By Condition (\WCc),  when $\Vert \bar{\btheta}_g - \btheta\Vert_2 \leq \tau$, we have 
	$$\left | \left(\frac{\partial f(x_i;  {\bm \zeta}_{g}(x_i))}{\partial \theta_{j}}\right) \bigg /  f(x_i;  \btheta_0) \right| \leq g(x_i; \btheta_{0}),$$
	and $\Ex (g(x_i; \btheta_{0})) < \infty$.  Since $ \Vert \bar{\btheta}_g - \btheta\Vert_2  = o_p(1)$, we prove that 
	$n^{-1} \sum_{i=1}^n \left|\delta_{gi}\right|  =  o_p(1).$

	Then, it suffices to prove that $n^{-1}\sum_{i=1}^n \bar{w}_{gi} - \bar{\alpha}^{(1)}_g = o_p(1)$.  Note that 
	$$
	n^{-1}\sum_{i=1}^n \bar{w}_{gi} - \bar{\alpha}^{(1)}_g = \frac{\sum_{i=1}^n \bar{w}_{gi}}{n} - \frac{\sum_{i=1}^n \bar{w}_{gi} + \lambda}{n+ G\lambda} =  \frac{-G\lambda\sum_{i=1}^n \bar{w}_{gi} + n\lambda}{n(n+ G\lambda)}.
	$$
	By (\ref{eq:wa}),  we have $n^{-1}\sum_{i=1}^n \bar{w}_{gi} = \bar{\alpha}_g  + o_p(1).$ It implies that 
	$$
	\frac{-G\lambda\sum_{i=1}^n \bar{w}_{gi} + n\lambda}{n(n+ G\lambda)} = o_p(1),
	$$
	which proves the lemma.
\end{proof}

\subsection{Proofs of Theorem \thmtwo} 
In this subsection, we give the detailed proof of Theorem \thmtwo.  Let $r = \min(G-1,d)$. Recall that 
\begin{equation} \label{eq:V}
	\mathcal{V} = \left\{ {\rm vech }(\V): \V \in \mathbb{R}^{d \times d} \mbox{  is symmetric},~{\rm rank}(\V)\leq r, \V \succeq 0\right\}.
\end{equation}
We require the following lemma. 
\begin{lemma} \label{lem:V}
	For any fixed $
	\ba \in \Delta^{G-1}$, define  $$\mathcal{V}_{\ba} = \left\{{\bf v}:  {\bf v} = {\rm vech} (\A\A\trans), \sum_{g=1}^G  {{\alpha}_g} {\A}_{\cdot g} = {\bf 0}, \A \in \mathbb{R}^{d \times G} \right\}. $$ Then, we have $\mathcal{V}_{\ba} \equiv \mathcal{V}$, where $ \mathcal{V}$ is defined in (\ref{eq:V}). 
\end{lemma}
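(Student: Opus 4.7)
The plan is to establish $\mathcal{V}_{\ba} = \mathcal{V}$ by proving both set inclusions separately. Both directions hinge on the observation that the constraint $\sum_g \alpha_g \A_{\cdot g} = \mathbf{0}$ is equivalent to $\A\ba = \mathbf{0}$, which (since $\ba$ is a nonzero vector in $\mathbb{R}^G$) forces the columns of $\A$ to be linearly dependent and hence $\rank(\A) \leq G-1$.

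For the inclusion $\mathcal{V}_{\ba} \subset \mathcal{V}$, I would take any $\mathbf{v} = \vecone(\A\A\trans) \in \mathcal{V}_{\ba}$ and set $\V = \A\A\trans$. Then $\V$ is automatically symmetric and positive semi-definite, and $\rank(\V) = \rank(\A) \leq \min(d, G)$. Since $\ba \neq \mathbf{0}$ lies in the null space of $\A$, the columns of $\A$ are linearly dependent, so in fact $\rank(\A) \leq G - 1$. Combining gives $\rank(\V) \leq \min(d, G-1) = r$, so $\mathbf{v} \in \mathcal{V}$.

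For the reverse inclusion $\mathcal{V} \subset \mathcal{V}_{\ba}$, given any symmetric PSD $\V \in \mathbb{R}^{d\times d}$ with $\rank(\V) \leq r$, the spectral decomposition yields a factorization $\V = \B\B\trans$ with $\B \in \mathbb{R}^{d \times r}$. The idea is to express $\A = \B\C$ for a cleverly chosen $\C \in \mathbb{R}^{r \times G}$ satisfying (i) $\C\C\trans = \I_r$ and (ii) $\C\ba = \mathbf{0}$. Property (i) guarantees $\A\A\trans = \B\C\C\trans\B\trans = \V$, and (ii) guarantees $\A\ba = \B\C\ba = \mathbf{0}$, placing $\vecone(\A\A\trans)$ in $\mathcal{V}_{\ba}$. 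Constructing such a $\C$ amounts to choosing $r$ orthonormal vectors in the $(G-1)$-dimensional subspace $\{\mathbf{u} \in \mathbb{R}^G : \mathbf{u}\trans\ba = 0\}$. Since $r = \min(G-1, d) \leq G-1$, this subspace has more than enough room, and the Gram--Schmidt procedure produces the required rows of $\C$.

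The main subtlety is the rank bookkeeping at the boundary case $r = d < G-1$: one must verify that even when the ambient matrix dimension $d$ is smaller than $G-1$, we can still simultaneously achieve the rank-$r$ factorization and the orthogonality-to-$\ba$ constraint. The construction $\A = \B\C$ handles this cleanly because we absorb all of the $G$-dimensional structure into $\C$, whose row-space lives in $\mathbb{R}^G$ and is independent of $d$; the only requirement is $r \leq G-1$, which holds by definition of $r$. This is what makes the two definitions coincide rather than the right-hand side being strictly larger.
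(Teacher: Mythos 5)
Your proof is correct, and the reverse inclusion is argued by a genuinely different construction than the paper's. The paper eliminates the last column via the constraint (assuming WLOG $\alpha_G\neq 0$), writing $\A=({\bf M},{\bf M}\bm{\beta})$ with $\bm{\beta}=-(\alpha_1/\alpha_G,\dots,\alpha_{G-1}/\alpha_G)\trans$, so that $\A\A\trans={\bf M}(\I+\bm{\beta}\bm{\beta}\trans){\bf M}\trans$; it then factors the positive definite middle matrix as $\bQ\bQ\trans$ and, given $\V=\bP\D\bP\trans$, sets ${\bf M}=\bP\D^{1/2}\bQ^{-1}$, with an explicit zero-padding case split depending on whether $d\geq G-1$ or $d<G-1$. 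Your route instead factors $\V=\B\B\trans$ with $\B\in\mathbb{R}^{d\times r}$ and post-multiplies by a partial isometry $\C\in\mathbb{R}^{r\times G}$ whose orthonormal rows lie in $\ba^{\perp}$; since $\dim(\ba^{\perp})=G-1\geq r$, such a $\C$ exists, and $\A=\B\C$ satisfies both $\A\A\trans=\V$ and $\A\ba={\bf 0}$ in one stroke. This avoids the matrix inversion $\bQ^{-1}$ and unifies the two dimension regimes, at the cost of being slightly less tied to the paper's original parametrization (where the columns of ${\bf M}$ have a direct interpretation as rescaled component deviations). Your forward inclusion also makes explicit the rank bound $\rank(\A)\leq G-1$ coming from the nonzero null vector $\ba$, which the paper dismisses as clear; both halves are sound.
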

\begin{proof}[ Proof of Theorem \thmtwo]
	Let 
	$$R_{0n}(\hat{\bxi}_0, \ba_0) = 2\left\{pl_n\left(\hat{\bxi}_0, \ba_0\right) - pl_n({\bxi}_0,\ba_0) \right\},$$
	and 
	$$R_{1n}\left({\bxi}^{(k)},{\balpha}^{(k)}\right) = 2 \left\{pl_n\left( {\bxi}^{(k)},{\balpha}^{(k)} \right) - pl_n( \bxi_0, \ba_0)\right\}.$$
	Firstly, we claim that
	$$R_{0n}(\hat{\bxi}_0, \ba_0) = \left(\sum_{i=1}^n\b_{1i}\right)\trans \left(n\B_{11}\right)^{-1}\left(\sum_{i=1}^n\b_{1i}\right) + o_p(1).$$
	In fact, by Condition (\Ca)--(\Cb) and  Condition (\WCc)--(\WCd), it is the classical expansion of the log likelihood ratio.
	By the likelihood non-decreasing property of the EM algorithm, we have 
	$$pl_n\left({\bxi}^{(k)},{\balpha}^{(k)} \right) \geq pl_n\left(  {\bxi}^{(0)},{\balpha}^{(0)} \right) \geq pl_n( {\bxi}_{0},{\balpha}_0) + p_0,$$
	where $p_0 =    \lambda G \log(\delta G) $ is a constant. 
	Hence, by Theorem \thmone { } and (\ref{eq:Rbound}), for any fixed $k$,  we have
	$$R_{1n}\left({\bxi}^{(k)},{\balpha}^{(k)}\right) \leq  2\left(\m^{(k)}\right)\trans \sum_{i=1}^n \b_i - n \left(\m^{(k)}\right)\trans \B \left(\m^{(k)}\right) \{1+o_p(1)\} + o_p(1),$$
	where $\m^{(k)} = \m(\ba^{(k)}, \bxi^{(k)}, \bxi_0)$ and $\m$ is defined in (\ref{eq:m}). 
	In this proof, for simplicity of notation, from now on we omit the superscript $k$ and abbreviate $\balpha^{(k)}, \bxi^{(k)}, \m^{(k)}$ and $\m_1^{(k)} = \m_1(\ba^{(k)}, \bxi^{(k)}, \bxi_0)$ to $\hat{\balpha}, \hat{\bxi}, \hat{\m}$ and $\hat{\m}_1$, where $\m_1$ is defined in (\ref{eq:m1}). 
	Since $n \hat{\m}\trans \B \hat{\m} = O_p(1)$, we have 
	$$R_{1n}\left(\hat{\bxi},\hat{\balpha}\right)  \leq  2\
	\hat{\m}\trans \sum_{i=1}^n \b_i - n \hat{\m}\trans \B \hat{\m}  + o_p(1).$$
	Let $ \tilde{v}_{hg} = \sqrt{\hat{\alpha}_g} ( \hat{\theta}_{gh}- {\theta}_{0h})$ and $\widetilde{\V}= [\tilde{v}_{hg}] \in \mathbb{R}^{d \times G}$. Define $\widetilde{\V}= (\widetilde{\V}_{\cdot1}, \dots, \widetilde{\V}_{\cdot G} )$.  This gives  $\sum_{g=1}^G  \sqrt{\hat{\alpha}_g}\widetilde{\V}_{\cdot g} = \hat{\m}_1$. Hence, we have 
	$$\widetilde{\V}_{\cdot 1} = \frac{\hat{\m}_1- \sum_{g=2}^G  \sqrt{\hat{\alpha}_g}\widetilde{\V}_{\cdot g}}{\sqrt{\hat{\alpha}_1}}.$$
	Based on this equation, we define $\hat{\V}$ as
	$$\hat{\V}_{\cdot 1} = \frac{- \sum_{g=2}^G  \sqrt{\hat{\alpha}_g}\widetilde{\V}_{\cdot g}}{\sqrt{\hat{\alpha}_1}} \mbox{ and } \hat{\V}_{\cdot g} = \widetilde{\V}_{\cdot g}, g\neq 1.$$
	It follows that $\sum_{g=1}^G  \sqrt{\hat{\alpha}_g}\hat{\V}_{\cdot g} = {\bf 0}$. 
	Let $\hat{\bf v}= {\rm vech} \left(\hat{\V}\hat{\V}\trans\right)$ and ${\bf \tilde{v}}= {\rm vech} \left(\widetilde{\V}\widetilde{\V}\trans\right)$. By Theorem \thmone, we have $\tilde{v}_{hg} = o_p(1)$ and $\left\| \hat{\m}_1\right\|_2 = O_p(n^{-1/2})$. Therefore, we have 
	${\bf \tilde{v}}= \hat{\bf v}+ o_p(n^{-1/2})$. Let $\hat{\t}= \left(\hat{\m}_1\trans, \hat{\bf v}\trans\right)\trans$.  
	Since $\hat{\m} = \left(\hat{\m}_1\trans, {\bf \tilde{v}}\trans\right)\trans$, we have $\hat{\m} = \hat{\t} + o_p\left(n^{-1/2}\right)$.  It follows that 
	$$R_{1n}\left( \hat{\bxi}, \hat{\balpha}\right) \leq 2\hat{\t}\trans \sum_{i=1}^n \b_i - n \hat{\t}\trans \B \hat{\t}  + o_p\left(1\right).$$
	Let $\tilde{\m}_1 = \hat{\m}_1 + \B_{11}^{-1}\B_{12}  \hat{\bf \v}$ and $\widetilde{\B}_{22} =  {\B}_{22} - \B_{21}\B_{11}^{-1}\B_{12}$. It is clear that 
	$$\hat{\t}\trans \B \hat{\t} = \tilde{\m}_1\trans \B_{11} \tilde{\m}_1 + \hat{\v}\trans \widetilde{\B}_{22}  \hat{\bf v},$$
	and 
	$$\hat{\t}\trans \sum_{i=1}^n \b_i =\tilde{\m}_1\trans  \sum_{i=1}^n \b_{1i} + \hat{\v}\trans \left( \sum_{i=1}^n \tilde{\b}_{2i}  \right),$$
	where $\tilde{\b}_{2i} = \b_{2i} - \B_{21}\B_{11}^{-1}\b_{1i}$. 
	Then, we have 
	\begin{align}
		\label{eq:part1} R_{1n}\left(\hat{\bxi},\hat{\balpha}\right)  &\leq 2\tilde{\m}_1\trans \sum_{i=1}^n \b_{1i} - n \tilde{\m}_1\trans \B_{11} \tilde{\m}_1 
		+ 2\hat{\bf \v}\trans \sum_{i=1}^n \tilde{\b}_{2i} -n\hat{\v}\trans \widetilde{\B}_{22} \hat{\bf v} + o_p\left(1\right)  \notag \\
		&\leq \left(\sum_{i=1}^n\b_{1i}\right)\trans \left(n\B_{11}\right)^{-1}\left(\sum_{i=1}^n\b_{1i}\right)  
		+ 2\hat{\v}\trans\sum_{i=1}^n \tilde{\b}_{2i} - n\hat{\v}\trans \widetilde{\B}_{22} \hat{\bf v} + o_p\left(1\right) .
	\end{align}
	Subtracting $R_{0n}(\hat{\bxi}_0, \ba_0)$ from $R_{1n}\left(\hat{\bxi},\hat{\balpha}\right) $, we have 
	$$
	R_{1n}\left(\hat{\bxi},\hat{\balpha}\right) -R_{0n}(\hat{\bxi}_0, \ba_0)  \leq 2\hat{\v}\trans \sum_{i=1}^n \tilde{\b}_{2i} - n\hat{\v}\trans \widetilde{\B}_{22} \hat{\bf v} + o_p\left(1\right).
	$$
	Let $\mathcal{V}_{\hat{\ba}} = \left\{{\bf v}:  {\bf v} = {\rm vech} \left(\V\V\trans\right), \  \sum_{g=1}^G  \sqrt{\hat{\alpha}_g} {\V}_{\cdot g} = 0\right\}$. 
	Since $\hat{\bf v} \in \mathcal{V}_{\hat{\ba}}$, we have
	$$
	R_{1n}\left(\hat{\bxi},\hat{\balpha}\right) -R_{0n}(\hat{\bxi}_0, \ba_0)  \leq \sup_{{\bf v} \in \mathcal{V}_{\hat{\ba}} }2{\bf v}\trans \sum_{i=1}^n \tilde{\b}_{2i} - n{\bf v}\trans \widetilde{\B}_{22} {\bf v} + o_p(1).
	$$
	By Lemma \ref{lem:V},  we have $\mathcal{V}_{\hat{\ba}} \equiv \mathcal{V}$. 
	Based on this fact, we can rewrite the above inequality as 
	$$
	R_{1n}\left(\hat{\bxi},\hat{\balpha}\right) -R_{0n}(\hat{\bxi}_0, \ba_0)   \leq \sup_{{\bf v} \in \mathcal{V}} 2{\bf v}\trans \sum_{i=1}^n \tilde{\b}_{2i} - n{\bf v}\trans \widetilde{\B}_{22} {\bf v} + o_p(1).
	$$
	Hence, 
	$${\rm EM}_n^{(k)} \leq \sup_{{\bf v} \in \mathcal{V}} 2{\bf v}\trans \sum_{i=1}^n 
	\tilde{\b}_{2i} - n{\bf v}\trans \widetilde{\B}_{22} {\bf v} + o_p(1).$$
	On the other hand, let 
	$$\hat{{\bf v}}^{\flat} = \argmax_{\substack{{\bf v} \in \mathcal{V}}} 2{\bf v}\trans \sum_{i=1}^n \tilde{\b}_{2i} - n{\bf v}\trans \widetilde{\B}_{22} {\bf v}.$$
	Since ${\bf 0} \in \mathcal{V}$, it follows that 
	\begin{equation}\label{eq:vfalt}
		0 \leq 2\hat{{\bf v}}^{\flat {\rm T}} \sum_{i=1}^n \tilde{\b}_{2i} - n\hat{{\bf v}}^{\flat {\rm T}}\widetilde{\B}_{22}\hat{{\bf v}}^{\flat} \leq 2\Vert \hat{{\bf v}}^{\flat }\Vert_2 \Vert\sum_{i=1}^n \tilde{\b}_{2i}\Vert_2 - n\lambda_{\rm min}(\widetilde{\B}_{22}) \Vert \hat{{\bf v}}^{\flat }\Vert_2^2.
	\end{equation}
	From (\ref{eq:vfalt}), it is straightforward to show that 
	$\Vert\hat{{\bf v}}^{\flat}\Vert_2 = O_p(n^{-1/2})$. 
	Let $\hat{\V}^{\flat} = (\hat{\V}^{\flat}_{\cdot 1}, \dots,\hat{\V}^{\flat}_{\cdot G})$ be a  matrix such that  $\hat{{\bf v}}^{\flat} = {\rm vech} \left(\hat{\V}^{\flat}\hat{\V}^{\flat {\rm T}}\right)$. It follows that $\Vert\hat{\V}^{\flat}_{\cdot g}\Vert_2 = O_p(n^{-1/4}), (g=1,\dots,G)$.
	Let $${\tilde{\m}}^{\flat}_1 = \B_{11}^{-1} \frac{\sum_{i=1}^n \b_{1i}}{n},$$
	which minimizes $2\tilde{\m}_1\trans \sum_{i=1}^n \b_{1i} - n \tilde{\m}_1\trans \B_{11} \tilde{\m}_1 $.  Therefore, we define $\hat{\m}^{\flat}_1 = {\tilde{\m}}^{\flat}_1 - \B_{11}^{-1}\B_{12}   \hat{\v}^{\flat}$.  Finally, we define ${{\V}}^{\flat} = [{{v}}^{\flat}_{hg}]$
	as 
	$${{\V}}_{\cdot 1}^{\flat} = \frac{\hat{\m}_1^{\flat}- \sum_{g=2}^G  \sqrt{1/G}\hat{\V}^{\flat}_{\cdot g}}{\sqrt{1/G}} \mbox{ and } {{\V}}^{\flat}_{\cdot g} = \hat{\V}^{\flat}_{\cdot g}, g\neq 1.$$   
	Let $ {{\theta}}^{\flat}_{gh} = \sqrt{G} {{v}}^{\flat}_{hg} +  {\theta}_{0h}$ and ${{\btheta}}^{\flat}_g = \left({{\theta}}^{\flat}_{g1},\dots,{{\theta}}^{\flat}_{gd}\right)\trans$. It follows that  $\Vert{{\btheta}}^{\flat}_g - {\btheta}_0\Vert_2 =  O_p\left(n^{-1/4}\right)$.  
	Let ${{\bxi}}^{\flat} =\left( {{\btheta}}^{\flat}_1, \dots, {{\btheta}}^{\flat}_G\right)$.
	Then,  we have
	$${\rm EM}_n^{(k)} \geq R_{1n}\left({{\bxi}}^{\flat}, \ba_0\right) - R_{0n}(\hat{\bxi}_0, \ba_0) = \sup_{{\bf v} \in \mathcal{V}} 2{\bf v}\trans \sum_{i=1}^n \tilde{\b}_{2i} - n{\bf v}\trans \widetilde{\B}_{22} {\bf v} + o_p(1).$$
	Note that  ${\bf v} \in \mathcal{V}$ if and only if  $\sqrt{n}{\bf v} \in \mathcal{V}$. Hence, we have 
	$${\rm EM}_n^{(k)} = \sup_{{\bf v} \in \mathcal{V}} 2{\bf v}\trans \sum_{i=1}^n \tilde{\b}_{2i} / \sqrt{n} - {\bf v}\trans \widetilde{\B}_{22} {\bf v} + o_p(1).$$
	By the central limit theorem, we have $\sum_{i=1}^n \tilde{\b}_{2i} / \sqrt{n} \rightarrow {\rm N}(0, \widetilde{\B}_{22})$. Thus, we get 
	under $\mathbb{H}_0$, for any fixed $k$, we have as $n \rightarrow \infty$, 
	$${\rm EM}_n^{(k)} \overset{d}{\longrightarrow} \sup_{{\bf v} \in \mathcal{V}} 2{\bf v}\trans {\bf{w}}  - {\bf v}\trans\widetilde{\B}_{22} {\bf v}, $$
	where ${\bf{w}} = (w_1,\ldots,w_{d(d+1)/2})\trans$ is a zero-mean multivariate normal random vector with a covariance matrix $\widetilde{\B}_{22}$,  and thus we  prove the theorem.  
\end{proof}

\begin{proof}[Proof of Lemma  \ref{lem:V}]
	It is clear that $\mathcal{V}_{\balpha} \subset \mathcal{V}$. Hence, we aim to prove $\mathcal{V} \subset \mathcal{V}_{\balpha}$. Without loss of generality, we assume $\alpha_G \neq 0$.  
	Let ${\bf M} = (\A_{\cdot 1}, \dots,\A_{\cdot (G-1)})$. Then, $\A$ can be rewritten as $\A = ({\bf M}, \A_{\cdot G})$. Since $\sum_{g=1}^G  {{\alpha}_g} {\A}_{\cdot g} = 0$, we have $ \A_{\cdot G} = -\sum_{g=1}^{G-1}  {{\alpha}_g} {\A}_{\cdot g} /{{\alpha}_G} $. Let 
	$$\bm{\beta} = -({{\alpha}_1}/{{\alpha}_G}, \ldots , {{\alpha}_{G-1}}/{{\alpha}_G})\trans.$$  
	Then, $ \A_{\cdot G}$ can be rewritten as $\A_{\cdot G} = {\bf M}\bm{\beta}$. It follows that $\A = ({\bf M},{\bf M}\bm{\beta})$ and 
	$$\A\A\trans = {\bf M}{\bf M}\trans + {\bf M}\bm{\beta} \bm{\beta}\trans {\bf M}\trans = {\bf M}(\I + \bm{\beta} \bm{\beta}\trans){\bf M}\trans.$$ 
	Since the minimum eigenvalue of $\I + \bm{\beta} \bm{\beta}\trans$ is greater than or equal to 1, $\I + \bm{\beta} \bm{\beta}\trans$ is positive definite. Hence, there exists a full rank matrix ${\bf Q}$ such that $\I + \bm{\beta} \bm{\beta}\trans = {\bf Q}{\bf Q}\trans$.
	Then $\A\A\trans = {\bf M}{\bf Q}{\bf Q}\trans {\bf M}\trans$.
	Therefore, for any $\bV \in \mathcal{V}$, we aim to prove that there exists ${\bf M}$ such that $\V = {\bf M}{\bf Q}{\bf Q}\trans {\bf M}\trans$. When $d \geq G-1$,  since $\V$ is a positive semi-definite matrix and ${\rm rank}(\bV) \leq r$, by eigenvalue decomposition theorem, there exists a diagonal matrix $\D = \diag(\lambda_1, \dots, \lambda_{G-1}), \lambda_1\geq \cdots \geq \lambda_{G-1}$ and an orthogonal matrix $\bP \in \mathbb{R}^{d \times (G-1)}$ such that 
	$\V = \bP\D\bP\trans$. Taking ${\bf M} = \bP\D^{1/2}\bQ^{-1}$ yields this lemma. 
	When $d < G-1$, similarly, we have $\V = \bP\D\bP\trans$, where $\D = \diag(\lambda_1, \dots, \lambda_{d})$ and $\bP \in  \mathbb{R}^{d \times d}$.  Write 
	$$\bP^{\dagger} = (\bP, {\bf 0}) \in \mathbb{R}^{d \times (G-1)}  \mbox{ and } \D^{\dagger} = \diag(\lambda_1, \dots, \lambda_{d}, \dots, 0) \in\mathbb{R}^{ (G-1)\times (G-1)} .$$
	Taking ${\bf M} = \bP^{\dagger}\D^{^{\dagger}1/2}\bQ^{-1}$ yields this lemma. 
\end{proof}

\section{Examples}\label{appC}
In this section, we give distribution examples that satisfy  Condition (\Ca)--(\Ch). Condition (\Ca)--(\Cb) and Condition (\Cg)--(\Ch) are easy to meet.  The following sections show that when the initial value $\ba^{(0)}$ is close to the true value $\ba^{*}$, Condition (\Cf) holds. We first mainly discuss Condition (\Cc) and Condition (\Cd). 
\begin{example}[Exponential families]
	Assume that $x$ is from a canonical exponential family with density
	$f(x; \btheta) = \exp \left\{ \btheta\trans T(x) - \xi(\btheta)  \right\}h(x),$
	where  $ \btheta \in \Theta$ is a  $d$-dimensional vector, $\Theta$ is a convex compact subset of the natural parameter space and $\xi(\btheta)$ is a smooth function. This family contains  most of the commonly used distributions, such as the Poisson distribution and the exponential distribution. For Condition (\Cc),  we show that $m$ can be taken as any positive integer.  When $T(x), {\rm vech }\left(T(x) T(x)\trans \right)$ are  linearly independent, we show that the covariance matrix $\B$ is positive definite, and thus Condition (\Cd) fulfills.   
	%		The proofs are in the supplementary material.
	As an example, consider the  Poisson distribution.  In such a case, we have $T(x) = x$. Since $x,x^2$ are linearly independent,  the covariance matrix $\B$ is positive definite.  Similarly,  for the gamma distribution,  we have $T(x) = (\log x , x)$.  Using the same argument, we can verify   Condition (\Cd). In addition, many exponential family  distributions, such as the Poisson distribution and the gamma distribution  satisfy the assumption that $\mathcal{P}^G$ is an identifiable finite mixture \citep{Yakowitz1968OnTI, BarndorffNielsen1965IdentifiabilityOM}.   
\end{example}

\begin{example}[Negative binomial model]
	The negative binomial distribution $x \sim {\rm NB} (\mu,r)$ has a probability mass function
	\begin{equation}\label{eq:pmfnb}
		\Pr(x = k) = \frac{\Gamma(r+k)}{k! \Gamma(r)} \left(\frac{r}{r+\mu} \right)^r\left(\frac{\mu}{r+\mu} \right)^k, \mbox{ for } k=0,1,2,\ldots ,
	\end{equation}
	where $\mu$ is the mean parameter and $r$ is the size parameter.  In such a case,  we let $\Theta = \{  (\mu, r): 0 < \delta_1 \leq \mu, r  \leq \delta_2 < \infty\}$ be a compact set, where $\delta_1  $ and $\delta_2$ are two constants.  Similarly, we can show that $m$ in Condition (\Cc) can be taken as any positive integer and the negative binomial distribution satisfies Condition (\Cd) and the identifiability assumption \citep{Yakowitz1968OnTI}.  
\end{example}
Our next goal aims to verify  Condition (\Ca)--(\Ch) in the above two examples.  In this section, we use $C,C' > 0$ as a generic 
constant, which may change from occurrence to occurrence.   
The following lemma gives an upper bound for the sums of independent sub-exponential random variables.
\begin{lemma} \label{lem:hdp}
	Let $X_1,\dots,X_n$ be independent mean-zero sub-exponential random variables. Then, we have 
	$$\left\|\sum_{i=1}^n X_i\right\|^2_{\psi_1} \leq C \sum_{i=1}^n \Vert X_i\Vert_{\psi_1}^2,$$ 
	where $C$ is a constant.
\end{lemma}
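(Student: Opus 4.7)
The plan is to proceed via the classical MGF-then-Bernstein route, converting the resulting tail bound into a $\psi_1$-norm bound by tail integration. Write $S=\sum_{i=1}^n X_i$, set $K_i=\|X_i\|_{\psi_1}$, $\sigma^2=\sum_{i=1}^n K_i^2$, and $K^{*}=\max_i K_i$, and note the elementary but crucial inequality $K^{*}\leq \sigma$.

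First I would invoke the standard equivalent characterization of sub-exponentiality: since each $X_i$ is mean-zero with $\|X_i\|_{\psi_1}=K_i$, there exist absolute constants $c_1,c_2>0$ such that $\mathbb{E}\exp(\lambda X_i)\leq \exp(c_1\lambda^2 K_i^2)$ for all $|\lambda|\leq c_2/K_i$. Multiplying over independent $X_i$ then gives
\[
\mathbb{E}\exp(\lambda S)\leq \exp\!\bigl(c_1\lambda^2\sigma^2\bigr) \qquad \text{for all } |\lambda|\leq c_2/K^{*}.
\]
Applying Markov's inequality to $\exp(\lambda S)$ and $\exp(-\lambda S)$ and optimizing $\lambda$ in two regimes (picking $\lambda=t/(2c_1\sigma^2)$ when this is admissible, and $\lambda=c_2/K^{*}$ otherwise) yields the Bernstein-type tail bound
\[
\Pr\bigl(|S|>t\bigr)\leq 2\exp\!\left(-c_3\min\!\left(\tfrac{t^2}{\sigma^2},\tfrac{t}{K^{*}}\right)\right) \quad \text{for all } t\geq 0,
\]
for some absolute $c_3>0$.

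Next I would translate this tail bound into a $\psi_1$-norm estimate. Using the layer-cake identity,
\[
\mathbb{E}\exp(|S|/\lambda) = 1 + \lambda^{-1}\!\int_0^\infty \Pr(|S|>v)\,e^{v/\lambda}\,\rd v,
\]
I would split the integral at the crossover $v_0=\sigma^2/K^{*}$. On $[0,v_0]$ the Gaussian regime dominates, and completing the square in $-c_3 v^2/\sigma^2 + v/\lambda$ bounds this portion by a quantity of the form $C\sigma\lambda^{-1}\exp(C'\sigma^2/\lambda^2)$. On $[v_0,\infty)$ the exponential regime dominates, and convergence requires $\lambda^{-1}<c_3/K^{*}$, which is automatic once $\lambda\geq K^{*}/c_3$. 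Choosing $\lambda=C_0\sigma$ with $C_0$ a sufficiently large absolute constant, the inequality $\sigma\geq K^{*}$ guarantees $\lambda\geq K^{*}/c_3$ as well, and both pieces can be made arbitrarily small, giving $\mathbb{E}\exp(|S|/\lambda)\leq 2$. By the definition of the $\psi_1$-norm this yields $\|S\|_{\psi_1}\leq C_0\sigma$, and squaring gives the claimed inequality.

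The main technical obstacle is the last step: the two-regime Bernstein tail does \emph{not} place $S$ in the sub-exponential class with norm proportional to $K^{*}$, so naive integration would only give $\|S\|_{\psi_1}\lesssim K^{*}$, which is too weak. The key is that the Gaussian regime extends all the way up to $v_0=\sigma^2/K^{*}$, and one must exploit the inequality $K^{*}\leq\sigma$ (which holds trivially since $(K^{*})^2\leq \sum K_i^2=\sigma^2$) to ensure that choosing $\lambda\asymp\sigma$ simultaneously tames both regimes. Once this is arranged carefully, the claim follows with an explicit absolute constant $C$.
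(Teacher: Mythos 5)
Your proposal follows essentially the same route as the paper: both bound the moment generating function of each $X_i$ via the standard sub-exponential characterization, multiply over the independent summands to control the MGF of the sum on a restricted range of $\lambda$, and then convert back to a $\psi_1$-norm bound using $\max_i\|X_i\|_{\psi_1}\leq\bigl(\sum_i\|X_i\|_{\psi_1}^2\bigr)^{1/2}$. The only difference is that the paper invokes the equivalence of sub-exponential characterizations directly after restricting to $|\lambda|\leq 1/\sqrt{c\sum_i\|X_i\|_{\psi_1}^2}$, whereas you spell out that conversion explicitly through the two-regime Bernstein tail bound and tail integration; your version is more detailed but is mathematically the same argument.
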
 

\begin{proof}[Proof of Lemma  \ref{lem:hdp}]
	Since $X_i$ is a mean-zero sub-exponential  random variable,   
	there exists $c>0$ such that 
	$$\Ex\left(\exp(\lambda X_i)\right) \leq \exp(c\Vert X_i\Vert_{\psi_1}^2 \lambda^2 ), 
	|\lambda| \leq \frac{1}{\sqrt{c} \Vert X_i\Vert_{\psi_1}}.
	$$
	By independence,  we have
	$$\Ex\left(\exp\left(\lambda \sum_{i=1}^{n} X_i\right)\right) \leq \exp\left(c \lambda^2 \sum_{i=1}^{n} \Vert X_i\Vert_{\psi_1}^2\right), |\lambda| \leq \frac{1}{\sqrt{c \sum_{i=1}^{n}\Vert X_i\Vert_{\psi_1}^2} } .$$
	It follows that there exists a constant $C>0$ such that 
	$$\left\|\sum_{i=1}^n X_i\right\|^2_{\psi_1} \leq C \sum_{i=1}^n \Vert X_i\Vert_{\psi_1}^2,$$
	which proves the lemma. 
\end{proof}
\subsection{A sufficient condition for Condition (\Cf)} \label{subsec:suffi}
In this subsection, we will give a sufficient condition for Condition (\Cf). 
Recall that
\begin{equation}\label{eq:Xi1}
	\Xi_{1} = \left\{ \bxi :  \max_{g \neq g'} \Vert\btheta_g - \btheta_{g'} \Vert_2 \geq \gamma , \bxi \in \Xi \right\}.
\end{equation} 
\begin{lemma}\label{lem:\Cf}
	Suppose that $ \Ex_{\balpha_1,\bxi_1} {\rm log } \ \varphi\left(x;\bxi_2, \ba_2\right)$ is continuous with respect to $ \balpha_1 , \bxi_1,\ba_2,  \bxi_2$ and
	$\Ex_{\balpha_1,\bxi_1} {\rm log} \ f(x;\btheta_0 )$ is continuous with respect to $\balpha_1 , \bxi_1,  \btheta_0$. 
	For any $\gamma>0$, there exists a constant $\tau(\gamma)>0$ such that  if $\left\| \ba^{(0)} - \ba^*  \right\|_2 \leq \tau(\gamma)$, then we have 
	$$\inf_{\bxi^{*} \in \Xi_1  }  \left\{ \sup_{\bxi \in \Xi}\Ex_{\balpha^*,\bxi^{*}} {\rm log } \ \varphi\left(x;\bxi, \ba^{(0)}\right)-  \sup_{\btheta_0 \in \Theta}
	\Ex_{\balpha^*,\bxi^{*}} {\rm log} \ f(x;\btheta_0 )\right\} = \varrho > 0.$$
\end{lemma}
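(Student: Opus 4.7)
The plan is to introduce the ``gap'' function
$$\Phi(\bxi^{*}, \ba^{(0)}) \;=\; \sup_{\bxi \in \Xi}\Ex_{\ba^{*},\bxi^{*}}\log \varphi(x;\bxi,\ba^{(0)}) \;-\; \sup_{\btheta_{0}\in\Theta}\Ex_{\ba^{*},\bxi^{*}}\log f(x;\btheta_{0}),$$
and to prove (i) $\Phi(\bxi^{*},\ba^{*})>0$ pointwise on $\Xi_{1}$, (ii) $\Phi$ is jointly continuous, and (iii) $\Xi_{1}$ is compact, so a standard uniform-continuity argument yields a neighborhood of $\ba^{*}$ on which $\inf_{\bxi^{*}\in\Xi_{1}}\Phi(\bxi^{*},\ba^{(0)})>0$.

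First I would verify the pointwise positivity at $\ba^{(0)}=\ba^{*}$. Plugging $\bxi=\bxi^{*}$ into the first supremum gives
$$\Phi(\bxi^{*},\ba^{*}) \;\geq\; \Ex_{\ba^{*},\bxi^{*}}\log \varphi(x;\bxi^{*},\ba^{*}) - \Ex_{\ba^{*},\bxi^{*}}\log f(x;\btheta_{0}^{\dagger}(\bxi^{*})) \;=\; \mathrm{KL}\!\bigl(\varphi(\cdot;\bxi^{*},\ba^{*}) \,\Vert\, f(\cdot;\btheta_{0}^{\dagger}(\bxi^{*}))\bigr),$$
where $\btheta_{0}^{\dagger}(\bxi^{*})$ is the maximizer of the second supremum (attained by compactness of $\Theta$ and continuity). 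Because $\bxi^{*}\in\Xi_{1}$ forces $\max_{g\neq g'}\Vert\btheta_{g}^{*}-\btheta_{g'}^{*}\Vert_{2}\geq\gamma$, the true mixture is genuinely heterogeneous, so the assumed identifiability of $\mathcal{P}^{G}$ (Section 2.1) implies $\varphi(\cdot;\bxi^{*},\ba^{*})\notin\mathcal{P}$. Hence the KL divergence is strictly positive, giving $\Phi(\bxi^{*},\ba^{*})>0$.

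Next I would upgrade pointwise positivity to a uniform lower bound. The set $\Xi_{1}$ defined in (\ref{eq:Xi1}) is closed in $\Xi$ and therefore compact, as $\Xi$ is compact. The two hypotheses of the lemma, combined with the compactness of $\Xi$ and $\Theta$, allow an application of Berge's maximum theorem: each of the two suprema is continuous in $(\bxi^{*},\ba^{(0)})$, so $\Phi$ is jointly continuous. Consequently $\bxi^{*}\mapsto\Phi(\bxi^{*},\ba^{*})$ attains its infimum on the compact $\Xi_{1}$, and this infimum equals some $2\varrho_{0}>0$. Uniform continuity of $\Phi$ on the compact set $\Xi_{1}\times\mathbb{S}^{G-1}$ then yields $\tau(\gamma)>0$ such that $|\Phi(\bxi^{*},\ba^{(0)})-\Phi(\bxi^{*},\ba^{*})|\leq\varrho_{0}$ whenever $\Vert\ba^{(0)}-\ba^{*}\Vert_{2}\leq\tau(\gamma)$, which gives $\inf_{\bxi^{*}\in\Xi_{1}}\Phi(\bxi^{*},\ba^{(0)})\geq\varrho_{0}=:\varrho>0$, as required.

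The main obstacle is the continuity/maximum-theorem step: one must check that under the stated continuity hypotheses the map $\bxi^{*}\mapsto\sup_{\bxi\in\Xi}\Ex_{\ba^{*},\bxi^{*}}\log\varphi(x;\bxi,\ba^{(0)})$ really is (jointly) continuous, and not merely upper semi-continuous. This reduces to verifying the Berge hypotheses on the compact parameter spaces $\Xi$ and $\Theta$; the rest of the argument is a straightforward compactness-plus-continuity exercise. Everything else---positivity via identifiability and extension to a $\tau(\gamma)$-neighborhood of $\ba^{*}$---follows without further subtlety.
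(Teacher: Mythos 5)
Your proposal is correct and follows essentially the same skeleton as the paper's proof: strict positivity at $\ba^{(0)}=\ba^{*}$ via identifiability (Jensen/KL), a uniform lower bound over the compact set $\Xi_{1}$ by continuity, and a perturbation argument to pass from $\ba^{*}$ to nearby $\ba^{(0)}$. The one step you single out as the main obstacle---establishing joint continuity of the value function $\bxi^{*}\mapsto\sup_{\bxi\in\Xi}\Ex_{\ba^{*},\bxi^{*}}\log\varphi(x;\bxi,\ba^{(0)})$ via Berge's maximum theorem---is avoidable: the paper never needs continuity of the supremum, because it simply lower-bounds $\sup_{\bxi\in\Xi}\Ex_{\ba^{*},\bxi^{*}}\log\varphi(x;\bxi,\ba^{(0)})$ by its value at $\bxi=\bxi^{*}$ and then controls $\Ex_{\ba^{*},\bxi^{*}}\log\varphi(x;\bxi^{*},\ba^{(0)})-\Ex_{\ba^{*},\bxi^{*}}\log\varphi(x;\bxi^{*},\ba^{*})$ directly by the assumed uniform continuity of the expectation on a compact set, splitting the target as $\widetilde{\varrho}-\widetilde{\varrho}/2$. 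Your route works too (the Berge hypotheses do hold here, since the objective is jointly continuous by assumption and the feasible set $\Xi$ is a fixed compact set), but it is heavier machinery for the same conclusion. One small slip to fix: $\mathbb{S}^{G-1}$ as defined in the paper is the \emph{open} simplex and is not compact, so you should take uniform continuity of $\Phi$ on $\Xi_{1}\times K$ for a closed ball $K\subset\mathbb{S}^{G-1}$ around $\ba^{*}$ (which exists because $\min_{g}\alpha_{g}^{*}>0$), rather than on $\Xi_{1}\times\mathbb{S}^{G-1}$.
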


\begin{proof}[Proof of Lemma  \ref{lem:\Cf}]
	We first prove that
	\begin{equation}\label{eq:SC61}
		\inf_{\bxi^{*} \in \Xi_1  }  \left\{ \Ex_{\balpha^*,\bxi^{*}} {\rm log } \ \varphi\left(x;\bxi^*, \ba^*\right)-  \sup_{\btheta_0 \in \Theta}
		\Ex_{\balpha^*,\bxi^{*}} {\rm log} \ f(x;\btheta_0 )\right\} :=  \widetilde{\varrho} > 0,
	\end{equation}
	Note that the above formula (\ref{eq:SC61}) can  be written as
	$$\inf_{\bxi^{*} \in \Xi_1  }  \inf_{\btheta_0 \in \Theta}    \left\{  \Ex_{\balpha^*,\bxi^{*}} {\rm log } \ \varphi\left(x;\bxi^*, \ba^*\right)- 
	\Ex_{\balpha^*,\bxi^{*}} {\rm log} \ f(x;\btheta_0 )\right\} . $$
	Since $\mathcal{P}^{G}$ is an identifiable finite mixture, applying Jensen's  inequality,  we have   for any $ \bxi^{*} \in \Xi_1 $  and $\btheta_0 \in \Theta$,
	$$ \Ex_{\balpha^*,\bxi^{*}} {\rm log } \ \varphi\left(x;\bxi^*, \ba^*\right)- 
	\Ex_{\balpha^*,\bxi^{*}} {\rm log} \ f(x;\btheta_0 ) > 0.$$
	By the continuity and the compactness of $\Xi_1,  \Theta$, we prove (\ref{eq:SC61}).  Then,
	we aim to prove that there exists a constant $\tau(\gamma)>0$ such that  if $\left\|  \ba^{(0)} -\ba^* \right\|_2 \leq \tau(\gamma)$, then 
	\begin{equation}\label{eq:SC62}
		\inf_{\bxi^{*} \in \Xi_1  }  \left\{ \sup_{\bxi \in \Xi}\Ex_{\balpha^*,\bxi^{*}} {\rm log } \ \varphi\left(x;\bxi, \ba^{(0)}\right)-   \Ex_{\balpha^*,\bxi^{*}} {\rm log } \ \varphi\left(x;\bxi^*, \ba^*\right)   \right\} \geq-  \widetilde{\varrho}/2. 	
	\end{equation}
	We only need to prove 
	$$
	\inf_{\bxi^{*} \in \Xi_1  }  \left\{ \Ex_{\balpha^*,\bxi^{*}} {\rm log } \ \varphi\left(x;\bxi^*, \ba^{(0)}\right)-   \Ex_{\balpha^*,\bxi^{*}} {\rm log } \ \varphi\left(x;\bxi^*, \ba^*\right)   \right\} \geq-  \widetilde{\varrho}/2. 	
	$$
	Using the fact that  $ \Ex_{\balpha_1,\bxi_1} {\rm log } \ \varphi\left(x;\bxi_2, \ba_2\right)$ is uniformly continuous on a compact set,  for $ \widetilde{\varrho}/2$, there exists a constant $\tau(\gamma)>0$ such that if $\left\| \ba^{(0)}-\ba^*  \right\|_2 \leq \tau(\gamma)$, 
	then for any  $\bxi^* \in \Xi_1$,  we have 
	$$  \left|\Ex_{\balpha^*,\bxi^{*}} {\rm log } \ \varphi\left(x;\bxi^*, \ba^{(0)}\right)-   \Ex_{\balpha^*,\bxi^{*}} {\rm log } \ \varphi\left(x;\bxi^*, \ba^*\right)   \right| \leq   \widetilde{\varrho}/2,$$
	which proves (\ref{eq:SC62}).  Combining (\ref{eq:SC61}) with (\ref{eq:SC62}) yields the result. 
\end{proof}
\subsection{Exponential families} \label{subsec:exp}
It is clear that Condition (\Ca) and (\Cb) hold.  
A sufficient condition for Condition (\Cf) is in \ref{subsec:suffi}.
The remainder of Section \ref{subsec:exp} will be devoted to verify Condition (\Cc)--(\Cd) and (\Cg)--(\Ch).

\subsubsection{Condition (\Cc)}
We first prove that for any $0<r\leq5, j_1, \dots, j_r \in \{1, \dots, d\}$ and $m>0$, 
\begin{equation}\label{eq:exp1}
	\left\| \sup_{\btheta \in \Theta} \left |\frac{ \partial^r }{\partial \theta_{j_1} \cdots \partial \theta_{j_r}} \log f(x; \btheta)\right|\right\|_{L^m} <  \infty. 
\end{equation}
Note that $\log f(x; \btheta) = \btheta\trans T(x) - \xi(\btheta)  + \log h(x)$, $\btheta \in \Theta \subset \mathbb{R}^d$ and $\Theta$ is a compact set. Then 
$$\frac{ \partial^r }{\partial \theta_{j_1} \cdots \partial \theta_{j_r}} \log f(x; \btheta) = \sum_{\ell = 1}^d T_\ell(x) {\mathbb{I}(r = 1)} +  \frac{ \partial^r }{\partial \theta_{j_1} \cdots \partial \theta_{j_r}} \xi(\btheta),$$
where ${\mathbb{I}(\cdot)} $ is the indicator function.
Since $\left\| T_\ell(x) \right\|_{L^m} < \infty$ and $ \frac{ \partial^r }{\partial \theta_{j_1} \cdots \partial \theta_{j_r}} \xi(\btheta)$ are bounded in $\Theta$,  the conclusion follows. Our next goal is to prove for any $m>0, 0<r\leq5$,  
$$\left\| \sup_{\btheta \in \Theta} \left| \frac{ \partial^r }{\partial \theta_{j_1} \cdots \partial \theta_{j_r}}  f(x; \btheta)  \bigg/ f(x; \btheta)\right|\right\|_{L^m} < \infty.$$ 
In fact, it is a direct consequence of (\ref{eq:exp1}).  To prove this, when $r = 2$, we can write $\frac{ \partial^2 }{\partial \theta_{j_1}  \partial \theta_{j_2}}  f(x; \btheta)  \big/ f(x; \btheta) $ as 
$$ \frac{ \partial^2 }{\partial \theta_{j_1}  \partial \theta_{j_2}}  f(x; \btheta)  \bigg/ f(x; \btheta)  = \frac{ \partial^2 }{\partial \theta_{j_1}  \partial \theta_{j_2}}  \log f(x; \btheta)  +  \frac{ \partial }{\partial \theta_{j_1}}  \log f(x; \btheta)  \frac{ \partial}{\partial \theta_{j_2}}  \log f(x; \btheta) .$$
By (\ref{eq:exp1}), we prove the result.  The same reasoning applies to the case $3 \leq r \leq 5$. 
In order to verify Condition (\Cc),  we note that
\begin{align*}
	&\quad \sup_{\Vert\btheta - \btheta_0\Vert_2 \leq \tau} \left|\frac{\partial^r f(x; \btheta)}{\partial \theta_{j_1} \cdots \partial \theta_{j_r}} \bigg/  f(x; \btheta_0)\right| \\
	&  \leq  \sup_{\Vert\btheta - \btheta_0\Vert_2 \leq \tau} \left|\frac{\partial^r f(x; \btheta)}{\partial \theta_{j_1} \cdots \partial \theta_{j_r}} \bigg/  f(x; \btheta)\right| \sup_{\Vert\btheta - \btheta_0\Vert_2 \leq \tau} \left|f(x; \btheta) /  f(x; \btheta_0)\right|.
\end{align*}
It remains to consider the function  
$\sup_{\Vert\btheta - \btheta_0\Vert_2 \leq \tau} \left|f(x; \btheta) /  f(x; \btheta_0)\right|.$ 
Let $C$ be a constant such that 
$$  \sup_{\btheta \in \Theta} \left\|  \frac{\partial }{\partial \btheta } \xi(\btheta)    \right\|_2 \leq C. $$ 
By Lagrange's theorem, if $\Vert\btheta - \btheta_0\Vert_2 \leq \tau ,$ we have 
\begin{align}\label{eq:extau}
	f(x; \btheta) /  f(x; \btheta_0)  &=  \exp \left\{ [\btheta -\btheta_0 ]\trans T(x) - (\xi(\btheta) - \xi(\btheta_0))  \right\} \notag \\
	&\leq  \exp \left\{ \tau \Vert T(x)\Vert_1 + C \tau  \right\}.
\end{align}
Since $\btheta_{0}$ is an interior point of the natural parameter space, for any $m>0$, by (\ref{eq:extau}), there exists a $\tau>0$ such that
\begin{equation}\label{eq:expdiv}
	\left\|\sup_{\Vert\btheta - \btheta_0\Vert_2 \leq \tau} \left|f(x; \btheta) /  f(x; \btheta_0)\right| \right\|_{L^m} < \infty.
\end{equation}
Further, since (\ref{eq:extau}) is independent of $\btheta_{0}$, this  gives that  for any $0<r\leq5, j_1, \dots, j_r \in \{1, \dots, d\}$ and $m>0$, there exists a $\tau>0$ such that 
$$\sup_{\btheta_0 \in \Theta}\left\Vert  \sup_{\Vert\btheta - \btheta_0\Vert_2 \leq \tau} \left|\frac{\partial^r f(x; \btheta)}{\partial \theta_{j_1} \cdots \partial \theta_{j_r}} \bigg /  f(x; \btheta_0)\right| \right\|_{L^m} < \infty.$$

Our next goal is to prove $r(x)$ exists.  We first prove that 
\begin{equation}\label{eq:r1}
	\int\sup_{\btheta \in \Theta}{{f(x;\btheta)}} \mu(\rd x) < \infty.
\end{equation}
For any $\btheta_0$,  (\ref{eq:expdiv}) shows that there is a $\tau>0$ such that 
$$
\int\sup_{\Vert\btheta - \btheta_0\Vert_2 \leq \tau} f(x; \btheta) \mu(\rd x) < \infty.
$$
Let $U(\btheta, \tau) = \{\btheta': \Vert\btheta' - \btheta\Vert_2 < \tau\}$.  Since $\bTheta$ is a compact set,  by the Heine-Borel theorem,   open cover 
$ \{ U(\btheta, \tau(\btheta)), \btheta \in \bTheta \} $ has a finite subcover $ \{ U(\btheta_j, \tau(\btheta_j))\}_{j=1}^J.$ 
It follows that 
$$\sup_{\btheta \in \Theta}{{f(x;\btheta)}}  \leq \sum_{j=1}^J \sup_{\Vert\btheta - \btheta_j\Vert_2 \leq \tau(\btheta_j)} f(x; \btheta),$$
and thus we have
$$  \int\sup_{\btheta \in \Theta}{{f(x;\btheta)}}\mu(\rd x) < \infty.$$
It remains to show that 
$$\int \sup_{\btheta \in \Theta} \frac{1}{{f(x;\btheta)}}\left\| \frac{\partial f(x;\btheta)}{\partial \btheta} \right\|_2^2  \mu(\rd x)< \infty.$$
To this end, we only need to prove that for any $\ell$,  
\begin{equation}\label{eq:r2}
	\int \sup_{\btheta \in \Theta} \frac{1}{{f(x;\btheta)}} \left | \frac{\partial f(x;\btheta)}{\partial \theta_\ell} \right | ^2  \mu(\rd x)< \infty.
\end{equation}
Note that 
$$
\frac{1}{{f(x;\btheta)}}\left | \frac{\partial f(x;\btheta)}{\partial \theta_\ell} \right | ^2  = f(x;\btheta)\left | \frac{\partial \log f(x;\btheta)}{\partial \theta_\ell} \right | ^2.
$$
Since 
$
\frac{\partial \log f(x;\btheta)}{\partial \theta_\ell} = T_\ell(x) + \frac{\partial \xi(\theta)}{\partial \theta_\ell}
$
and there exists a constant $C$ such that 
$
\sup_{\btheta \in \Theta} \left|\frac{\partial \xi(\btheta)}{\partial \theta_\ell}\right| \leq C,
$
it follows that 
$$
\sup_{\btheta \in \Theta} \frac{1}{{f(x;\btheta)}}\left | \frac{\partial f(x;\btheta)}{\partial \theta_\ell} \right | ^2 \leq (|T_\ell(x)| + C)^2 \sup_{\btheta \in \Theta}  f(x;\btheta).
$$
Similarly, for any $\btheta_0$, by (\ref{eq:extau}), there exists a $\tau>0$ such that 
\begin{equation}\label{eq:r3}
	\int\sup_{\Vert\btheta - \btheta_0\Vert_2 \leq \tau} f(x; \btheta)  (|T_\ell(x)| + C)^2 \mu(\rd x) < \infty.
\end{equation}
With (\ref{eq:r3}), applying the Heine-Borel theorem and using the same argument as in the proof of (\ref{eq:r1}), we can easily prove (\ref{eq:r2}).
Therefore, we verify Condition (\Cc).

\subsubsection{Condition (\Cd)}
We claim that if $\left( T(x), {\rm vech}\left(T(x)T(x)\trans\right)  \right) $ are linearly independent, then  Condition (\Cd) fulfills.
Observe that 
$$
\frac{ \partial  f(x; \btheta)}{\partial\btheta}  \bigg/  f(x; \btheta)=T(x)+  \frac{ \partial}{\partial \btheta} \xi(\btheta),
$$ 
and 
$$
\frac{ \partial^2 f(x; \btheta)}{\partial\btheta \btheta\trans}   \bigg/ f(x;\btheta)=\left(T(x)+  \frac{ \partial}{\partial \btheta} \xi(\btheta)\right)\left(T(x)+  \frac{ \partial}{\partial \btheta} \xi(\btheta)\right)\trans + \frac{ \partial^2 }{\partial\btheta \btheta\trans} \xi(\btheta).
$$
Hence, to prove $\B(\btheta_{0}) = {\rm Cov} (\b)$ is positive definite, we only need  that the covariance of 
$$\left( \frac{ \partial  f(x; \btheta)}{\partial\btheta}  \bigg/  f(x; \btheta), {\rm vech}\left( \frac{ \partial^2 f(x; \btheta)}{\partial\btheta \btheta\trans}   \bigg/ f(x;\btheta) \right) \right)$$
is positive definite.  Note that $ \frac{ \partial}{\partial \btheta} \xi(\btheta)$ and $\frac{ \partial^2 }{\partial\btheta \btheta\trans} \xi(\btheta)$ are independent of $x$.  Thus, it suffices to show 
$$\left( T(x), {\rm vech}\left(\left(T(x)+  \frac{ \partial}{\partial \btheta} \xi(\btheta)\right)\left(T(x)+  \frac{ \partial}{\partial \btheta} \xi(\btheta)\right)\trans\right)  \right) $$ 
are linearly independent.  
However, it is equivalent to $\left( T(x), {\rm vech}\left(T(x)T(x)\trans\right)  \right)$ are linearly independent,
and thus $\B(\btheta_{0})$ is positive definite.
Finally, using the fact that $\lambda_{\rm min}(\B(\btheta_{0}))$ is continuous and $\Theta$ is compact, we verify  Condition (\Cd). 
\subsubsection{Condition (\Cg)}
To verify Condition (\Cg), we only need to show that  there exists $M_{\psi_1}$ such that
$$\left\| \log\left(\sum_{g=1}^{G} \balpha_g^{\left(0\right)}f\left(x;{\btheta}^{\dagger}_g\right)\right)  -\log\left(f\left(x;{\btheta}_0^{\dagger}\right)\right) \right\|_{\psi_1} \leq M_{\psi_1}. $$
Applying (\ref{eq:extau}), it follows that 
\begin{equation*}
	\sum_{g=1}^{G} \balpha_g^{\left(0\right)}f\left(x;{\btheta}^{\dagger}_g\right) \big/ f\left(x;{\btheta}_0^{\dagger}\right)  
	\leq  \exp \left\{C {\rm diam(\Theta)} \Vert T(x)\Vert_1  + C {\rm diam(\Theta)}\right\} .
\end{equation*}
Thus, there exists a $t$ such that 
$$\Ex\left[ \left(\sum_{g=1}^{G} \balpha_g^{\left(0\right)}f\left(x;{\btheta}^{\dagger}_g\right) \big / f\left(x;{\btheta}_0^{\dagger}\right) \right)^t\right] < \infty,$$
and thus we verify Condition (\Cg).  

\subsubsection{Condition (\Ch)}
Finally, we aim to verify  Condition (\Ch).  Note that 
$$  Z_{\btheta}(\bxi^*) -  Z_{\btheta'}(\bxi^*)=   n^{-1/2} \sum_{i=1}^n \left\lbrace  \log f(x;\btheta) - \log f(x; \btheta') - (D(\btheta )- D(\btheta'))  \right\rbrace.$$
Then, by Lemma \ref{lem:hdp}, we have 
\begin{align*}
	C \left\| Z_{\btheta}(\bxi^*) -  Z_{\btheta'}(\bxi^*) \right\|_{\psi_1} &\leq  \left\|  [\btheta - \btheta' ]\trans T(x)  \right\|_{\psi_1}  + \left\|\xi(\btheta )- \xi(\btheta') \right\|_{\psi_1}  + \left\|D(\btheta )- D(\btheta') \right\|_{\psi_1}  \\
	&\leq   \left\|  \sum_{\ell=1}^d [\btheta_\ell- \btheta'_\ell] T_\ell(x)  \right\|_{\psi_1} + C' \left\|\btheta - \btheta' \right\|_2 \\
	&\leq  C'' \left\|\btheta - \btheta' \right\|_2  \sum_{\ell=1}^d  \left\| T_\ell(x)  \right\|_{\psi_1} + C' \left\|\btheta - \btheta' \right\|_2,
\end{align*}
where the second inequality is from the fact that $\xi(\btheta )$ and $D(\btheta ) $ have continuous derivative functions.  
Since $\Xi$ is a compact set, there is a constant $C>0$ such that $\sum_{\ell=1}^d  \left\| T_\ell(x)  \right\|_{\psi_1} \leq C$,
and thus we verify Condition (\Ch). 

\subsection{Negative binomial model}
The same proof remains valid for the negative binomial example, and thus we omit it.  
\section{Details for the  simulation data generation}
For the low-noise and high-noise scenarios, we independently generate $r_j$ from the uniform distributions ${\rm U}(10,11)$ and ${\rm U}(5,6)$, respectively. For the clustering-relevant features ($j=1,\dots,20$), the mean parameters $\mu_{gj}$ are either set as $\exp(u_j)$ or $\exp(u_j)+D_j$, where $u_j$ is generated from $ {\rm U} (\log \ 2, \log \ 5)$, and $D_j$ is to control the signal strength (the differences between clusters). We generate $D_j$ from ${\rm U}(5,6)$, ${\rm U}(7,8)$ or ${\rm U}(9,10)$ for the low, medium and high signal strength settings, respectively. For the first 5 features ($1\leq j\leq 5$), we set $\mu_{2j}=\exp(u_j)+D_j$ and $\mu_{gj} = \exp(u_j) (g\neq 2)$. Similarly, for $5k+1\leq j\leq 5k+5 (k=1,2,3)$, we set $\mu_{k+2,j}=\exp(u_j)+D_j$ and $\mu_{gj} = \exp(u_j) (g\neq k+2)$. For all cluster-irrelevant features ($j=21,\dots,p$), we set $\mu_j=\exp(u_j)$, where $u_j$ is generated from $ {\rm U} (\log \ 2, \log \ 5)$. 
\section{Additional simulation results}
In this section, we present the additional simulation results.

\subsection{Simulations for EM-test with  mis-specified group number $G$}
\begin{table}[htbp]
	\centering
	\caption{The means and standard deviations (in parenthesis) of ARIs over 100 replications by EM-test with mis-specified $G$.  The values in the table are shown as the actual values $\times$ 100.  Simulation are generated from the negative binomial model (Section 4.1 in the main manuscript). The true number of clusters is 5. EM-adjust means that  the features are selected by the adjusted p-values and EM-0.35 means that we choose the threshold as $n^{0.35}$.}
	\renewcommand\arraystretch {1.2}
	\footnotesize
	\setlength{\tabcolsep}{3mm}
	\begin{tabular}{ccccccccc}
		\hline
		\hline
		&       & \multicolumn{3}{c}{EM-adjust} &       & \multicolumn{3}{c}{EM-0.35} \\
		\cline{3-5} \cline{7-9} 	
		&       & $G=5$  (True) & $G=2$   & $G=8$   &       & $G=5$  (True) & $G=2$   & $G=8$ \\
		\hline 
		\multicolumn{9}{c}{Case 1:  High signal  and low noise} \\  
		\hline 
		$p=500$ &       & 98 (1.5) & 98 (1.5) & 98 (1.5) &       & 97 (2.7) & 98 (1.3) & 97 (2.7) \\
		$p=5000$ &       & 98 (0.9) & 98 (0.9) & 98 (0.9) &       & 97 (1.5) & 97 (1.6) & 97 (1.5) \\
		$p=20,000$ &       & 97 (2.9) & 97 (2.9) & 97 (2.9) &       & 97 (1.3) & 98 (0.9) & 98 (0.9) \\
		\hline 
		\multicolumn{9}{c}{Case 2:  High signal  and high noise} \\  
		\hline 
		$p=500$ &       & 94 (2.8) & 94 (2.8) & 94 (2.8) &       & 94 (3.0) & 94 (1.3) & 94 (2.9) \\
		$p=5000$ &       & 94 (1.7) & 94 (1.7) & 94 (1.7) &       & 94 (1.7) & 94 (1.6) & 94 (2.0) \\
		$p=20,000$ &       & 94 (1.9) & 93 (2.9) & 94 (1.9) &       & 93 (1.5) & 93 (1.4) & 93 (1.5) \\
		\hline 
		\multicolumn{9}{c}{Case 3:  Medium signal  and low noise} \\  
		\hline 
		$p=500$ &       & 95 (1.8) & 95 (1.8) & 95 (1.8) &       & 95 (1.9) & 95 (1.9) & 95 (2.0) \\
		$p=5000$ &       & 95 (1.2) & 95 (1.2) & 95 (1.2) &       & 96 (1.1) & 96 (1.1) & 95 (1.1) \\
		$p=20,000$ &       & 95 (1.9) & 95 (1.9) & 95 (1.9) &       & 95 (1.3) & 95 (1.1) & 95 (1.3) \\
		\hline 
		\multicolumn{9}{c}{Case 4:  Medium signal  and high noise} \\  
		\hline 
		$p=500$ &       & 90 (2.0) & 90 (2.0) & 90 (2.0) &       & 90 (1.7) & 90 (1.7) & 90 (1.7) \\
		$p=5000$ &       & 90 (2.5) & 90 (2.6) & 90 (2.5) &       & 90 (2.1) & 90 (2.2) & 90 (2.1) \\
		$p=20,000$ &       & 88 (3.0) & 88 (3.0) & 88 (2.9) &       & 89 (1.8) & 89 (1.7) & 89 (1.8) \\
		\hline 
		\multicolumn{9}{c}{Case 5:  Low signal  and low noise} \\  
		\hline 
		$p=500$ &       & 84 (7.6) & 83 (7.7) & 84 (7.6) &       & 88 (2.6) & 88 (2.8) & 88 (2.6) \\
		$p=5000$ &       & 78 (7.1) & 77 (7.3) & 79 (7.1) &       & 88 (2.6) & 88 (2.6) & 88 (2.6) \\
		$p=20,000$ &       & 74 (8.7) & 73 (9.7) & 75 (8.4) &       & 87 (3.0) & 87 (2.9) & 87 (3.1) \\
		\hline 
		\multicolumn{9}{c}{Case 6:  Low signal  and high noise} \\  
		\hline 
		$p=500$ &       & 73 (6.5) & 71 (6.7) & 73 (6.4) &       & 80 (3.6) & 79 (4.5) & 80 (3.5) \\
		$p=5000$ &       & 66 (8.3) & 65 (9.0) & 66 (8.2) &       & 79 (3.9) & 79 (4.0) & 79 (3.9) \\
		$p=20,000$ &       & 59 (11.7) & 57 (12.3) & 59 (11.9) &       & 72 (8.9) & 75 (6.8) & 72 (8.8) \\
		\hline
	\end{tabular}%
	\label{tab:wrongGARI1}%
\end{table}%
% Table generated by Excel2LaTeX from sheet 'addition'
\begin{table}[htbp]
	\centering
	\caption{The mean of numbers of correctly retained features ($\mathcal{R}$) and falsely retained features ($\mathcal{F}$) by EM-test with mis-specified $G$. The true number of clusters is 5. Simulation are generated from the negative binomial model (Section 4.1 in the main manuscript).  EM-adjust means that  the features are selected by the adjusted p-values and EM-0.35 means that we choose the threshold as $n^{0.35}$.}
	\renewcommand\arraystretch {1.1}
	\footnotesize
	\setlength{\tabcolsep}{3mm}
	\begin{tabular}{ccccccccc}
		\hline
		\hline
		&       & \multicolumn{3}{c}{EM-adjust} &       & \multicolumn{3}{c}{EM-0.35} \\	
		\cline{3-5} \cline{7-9} 	
		&       & $G=5$ (True)  & $G=2$   & $G=8$   &       & $G=5$  (True) & $G=2$   & $G=8$ \\
		\hline 
		\multicolumn{9}{c}{Case 1:  High signal  and low noise} \\  
		\hline 
		\multirow{2}[0]{*}{$p=500$} & $\mathcal{R}$  & 20.0 (0.1) & 20.0 (0.1) & 20.0 (0.1) &       & 20.0 (0.1) & 20.0 (0.1) & 20.0 (0.1) \\
		& $\mathcal{F}$  & 0.0 (0.1) & 0.0 (0.1) & 0.0 (0.1) &       & 1.2 (1.1) & 0.5 (0.7) & 1.3 (1.2) \\
		\multirow{2}[0]{*}{$p=5000$} & $\mathcal{R}$  & 20.0 (0.0) & 20.0 (0.0) & 20.0 (0.0) &       & 20.0 (0.0) & 20.0 (0.0) & 20.0 (0.0) \\
		& $\mathcal{F}$  & 0.0 (0.1) & 0.0 (0.1) & 0.0 (0.1) &       & 10.2 (3.3) & 6.3 (2.5) & 11.5 (3.5) \\
		\multirow{2}[0]{*}{$p=20,000$} & $\mathcal{R}$  & 20.0 (0.0) & 20.0 (0.0) & 20.0 (0.0) &       & 20.0 (0.0) & 20.0 (0.0) & 20.0 (0.0) \\
		& $\mathcal{F}$  & 0.0 (0.2) & 0.0 (0.1) & 0.0 (0.2) &       & 40.5 (5.6) & 26.1 (4.4) & 46.7 (5.9) \\
		\hline 
		\multicolumn{9}{c}{Case 2:  High signal  and high noise} \\  
		\hline 
		\multirow{2}[0]{*}{$p=500$} & $\mathcal{R}$  & 20.0 (0.1) & 20.0 (0.1) & 20.0 (0.1) &       & 20.0 (0.0) & 20.0 (0.0) & 20.0 (0.0) \\
		& $\mathcal{F}$  & 0.1 (0.2) & 0.0 (0.1) & 0.1 (0.3) &       & 2.0 (1.4) & 1.1 (1.0) & 2.1 (1.5) \\
		\multirow{2}[0]{*}{$p=5000$} & $\mathcal{R}$  & 20.0 (0.0) & 20.0 (0.0) & 20.0 (0.0) &       & 20.0 (0.0) & 20.0 (0.0) & 20.0 (0.0) \\
		& $\mathcal{F}$  & 0.0 (0.1) & 0.0 (0.0) & 0.0 (0.2) &       & 17.9 (4.1) & 12.1 (3.5) & 20.1 (4.5) \\
		\multirow{2}[0]{*}{$p=20,000$} & $\mathcal{R}$  & 20.0 (0.2) & 20.0 (0.2) & 20.0 (0.2) &       & 20.0 (0.0) & 20.0 (0.0) & 20.0 (0.0) \\
		& $\mathcal{F}$  & 0.1 (0.3) & 0.0 (0.2) & 0.1 (0.3) &       & 74.9 (7.7) & 49.1 (6.1) & 84.2 (7.9) \\
		\hline 
		\multicolumn{9}{c}{Case 3:  Medium signal  and low noise} \\  
		\hline 
		\multirow{2}[0]{*}{$p=500$} & $\mathcal{R}$  & 19.9 (0.3) & 19.9 (0.3) & 19.9 (0.3) &       & 20.0 (0.1) & 20.0 (0.1) & 19.9 (0.2) \\
		& $\mathcal{F}$  & 0.0 (0.1) & 0.0 (0.1) & 0.0 (0.1) &       & 1.2 (1.1) & 0.5 (0.7) & 1.2 (1.2) \\
		\multirow{2}[0]{*}{$p=5000$} & $\mathcal{R}$  & 19.8 (0.4) & 19.8 (0.4) & 19.8 (0.4) &       & 20.0 (0.1) & 20.0 (0.1) & 20.0 (0.1) \\
		& $\mathcal{F}$  & 0.0 (0.1) & 0.0 (0.1) & 0.0 (0.1) &       & 10.1 (3.2) & 6.4 (2.6) & 11.6 (3.6) \\
		\multirow{2}[0]{*}{$p=20,000$} & $\mathcal{R}$  & 19.6 (0.6) & 19.6 (0.6) & 19.6 (0.6) &       & 20.0 (0.0) & 20.0 (0.0) & 20.0 (0.0) \\
		& $\mathcal{F}$  & 0.1 (0.2) & 0.0 (0.1) & 0.1 (0.2) &       & 40.8 (5.7) & 26.4 (4.5) & 46.9 (6.0) \\
		\hline 
		\multicolumn{9}{c}{Case 4:  Medium signal  and high noise} \\  
		\hline 
		\multirow{2}[0]{*}{$p=500$} & $\mathcal{R}$  & 19.8 (0.5) & 19.7 (0.5) & 19.8 (0.5) &       & 20.0 (0.1) & 20.0 (0.2) & 20.0 (0.1) \\
		& $\mathcal{F}$  & 0.0 (0.1) & 0.0 (0.1) & 0.1 (0.3) &       & 2.0 (1.4) & 1.3 (1.1) & 2.2 (1.5) \\
		\multirow{2}[0]{*}{$p=5000$} & $\mathcal{R}$  & 19.2 (0.9) & 19.1 (0.9) & 19.2 (0.9) &       & 20.0 (0.1) & 20.0 (0.1) & 20.0 (0.1) \\
		& $\mathcal{F}$  & 0.0 (0.2) & 0.0 (0.0) & 0.0 (0.2) &       & 17.9 (4.1) & 12.1 (3.5) & 20.0 (4.4) \\
		\multirow{2}[0]{*}{$p=20,000$} & $\mathcal{R}$  & 18.7 (1.2) & 18.6 (1.2) & 18.8 (1.1) &       & 20.0 (0.2) & 19.9 (0.2) & 19.9 (0.2) \\
		& $\mathcal{F}$  & 0.1 (0.3) & 0.1 (0.2) & 0.1 (0.3) &       & 74.7 (7.5) & 48.9 (6.0) & 83.7 (8.2) \\
		\hline 
		\multicolumn{9}{c}{Case 5:  Low signal  and low noise} \\  
		\hline 
		\multirow{2}[0]{*}{$p=500$} & $\mathcal{R}$  & 16.5 (2.0) & 15.9 (1.9) & 16.6 (2.0) &       & 18.9 (1.0) & 18.7 (1.1) & 19.0 (0.9) \\
		& $\mathcal{F}$  & 0.0 (0.1) & 0.0 (0.1) & 0.0 (0.1) &       & 1.1 (1.1) & 0.7 (0.7) & 1.3 (1.1) \\
		\multirow{2}[0]{*}{$p=5000$} & $\mathcal{R}$  & 13.7 (2.0) & 13.3 (2.1) & 13.8 (2.1) &       & 19.0 (1.0) & 18.9 (1.0) & 19.0 (1.0) \\
		& $\mathcal{F}$  & 0.0 (0.1) & 0.0 (0.0) & 0.0 (0.1) &       & 10.1 (3.3) & 6.3 (2.6) & 11.6 (3.6) \\
		\multirow{2}[0]{*}{$p=20,000$} & $\mathcal{R}$  & 12.1 (2.3) & 11.8 (2.4) & 12.3 (2.3) &       & 18.9 (1.1) & 18.7 (1.1) & 18.9 (1.1) \\
		& $\mathcal{F}$  & 0.0 (0.1) & 0.0 (0.0) & 0.0 (0.2) &       & 40.5 (5.6) & 26.4 (4.6) & 46.9 (5.8) \\
		\hline 
		\multicolumn{9}{c}{Case 6:  Low signal  and high noise} \\  
		\hline 
		\multirow{2}[0]{*}{$p=500$} & $\mathcal{R}$  & 14.7 (2.1) & 14.0 (2.0) & 14.9 (2.1) &       & 18.4 (1.2) & 17.9 (1.5) & 18.5 (1.1) \\
		& $\mathcal{F}$  & 0.0 (0.1) & 0.0 (0.1) & 0.0 (0.2) &       & 1.9 (1.4) & 1.3 (1.0) & 2.1 (1.5) \\
		\multirow{2}[0]{*}{$p=5000$} & $\mathcal{R}$  & 12.0 (2.3) & 11.6 (2.4) & 12.0 (2.2) &       & 18.4 (1.3) & 18.1 (1.3) & 18.4 (1.2) \\
		& $\mathcal{F}$  & 0.0 (0.1) & 0.0 (0.0) & 0.0 (0.1) &       & 18.1 (4.2) & 12.1 (3.5) & 20.0 (4.5) \\
		\multirow{2}[0]{*}{$p=20,000$} & $\mathcal{R}$  & 10.0 (2.5) & 9.6 (2.5) & 10.1 (2.5) &       & 18.1 (1.4) & 17.8 (1.4) & 18.1 (1.4) \\
		& $\mathcal{F}$  & 0.1 (0.3) & 0.0 (0.2) & 0.1 (0.3) &       & 74.9 (7.4) & 49.0 (5.9) & 84.0 (7.9) \\
		\hline
	\end{tabular}%
	\label{tab:wrongGPN1}%
\end{table}%
\newpage
\subsection{Simulations for EM-test with different penalties $\lambda$}

\begin{table}[htbp]
	\centering
	\caption{The mean of numbers of correctly retained features ($\mathcal{R}$) and falsely retained features ($\mathcal{F}$) over 100 replications by  EM-test with different penalties $\lambda$. Simulation are generated from the negative binomial model (Section 4.1 in the main manuscript).}
	\renewcommand\arraystretch {1.2}
	\footnotesize
	\setlength{\tabcolsep}{3mm}
	\begin{tabular}{cccccccccc}
		\hline
		\hline
		&       & \multicolumn{2}{c}{$p=500$} &       & \multicolumn{2}{c}{$p=5000$} &       & \multicolumn{2}{c}{$p=20,000$} \\
		\cline{3-4} \cline{6-7} \cline{9-10}
		&       & EM-adjust & EM-0.35 &       & EM-adjust & EM-0.35 &       & EM-adjust & EM-0.35 \\
		\hline
		\multicolumn{10}{c}{ Medium signal  and high noise} \\  
		\hline
		\multirow{2}[0]{*}{$\lambda=10^{-7}$} & $\mathcal{R}$  & 19.8 (0.5) & 20.0 (0.1) &       & 19.2 (0.9) & 20.0 (0.1) &       & 18.7 (1.2) & 20.0 (0.2) \\
		& $\mathcal{F}$  & 0.0 (0.1) & 2.0 (1.4) &       & 0.0 (0.2) & 17.9 (4.1) &       & 0.1 (0.3) & 74.7 (7.5) \\
		\multirow{2}[0]{*}{$\lambda=10^{-5}$} & $\mathcal{R}$  & 19.8 (0.5) & 20.0 (0.1) &       & 19.2 (0.9) & 20.0 (0.1) &       & 18.7 (1.2) & 20.0 (0.2) \\
		& $\mathcal{F}$  & 0.0 (0.1) & 2.0 (1.4) &       & 0.0 (0.2) & 17.9 (4.1) &       & 0.1 (0.3) & 74.7 (7.5) \\
		\multirow{2}[0]{*}{$\lambda=10^{-3}$} & $\mathcal{R}$  & 19.8 (0.5) & 20.0 (0.1) &       & 19.2 (0.9) & 20.0 (0.1) &       & 18.7 (1.2) & 20.0 (0.2) \\
		& $\mathcal{F}$  & 0.0 (0.1) & 2.0 (1.4) &       & 0.0 (0.2) & 17.8 (4.1) &       & 0.1 (0.3) & 74.6 (7.5) \\
		\multirow{2}[0]{*}{$\lambda=10^{-1}$} & $\mathcal{R}$  & 19.7 (0.5) & 20.0 (0.1) &       & 19.2 (0.9) & 20.0 (0.1) &       & 18.7 (1.2) & 20.0 (0.2) \\
		& $\mathcal{F}$  & 0.0 (0.1) & 1.8 (1.3) &       & 0.0 (0.2) & 17.5 (4.1) &       & 0.1 (0.3) & 73.2 (7.4) \\
		\multirow{2}[0]{*}{$\lambda=1$} & $\mathcal{R}$  & 19.7 (0.5) & 20.0 (0.1) &       & 19.2 (0.9) & 20.0 (0.1) &       & 18.7 (1.2) & 20.0 (0.2) \\
		& $\mathcal{F}$  & 0.0 (0.1) & 1.6 (1.3) &       & 0.0 (0.1) & 15.8 (3.9) &       & 0.1 (0.3) & 65.8 (7.5) \\
		\multirow{2}[0]{*}{$\lambda=10$} & $\mathcal{R}$  & 19.7 (0.5) & 20.0 (0.1) &       & 19.2 (0.9) & 20.0 (0.1) &       & 18.7 (1.2) & 20.0 (0.2) \\
		& $\mathcal{F}$  & 0.0 (0.1) & 1.5 (1.3) &       & 0.0 (0.1) & 14.3 (3.7) &       & 0.1 (0.2) & 58.9 (7.0) \\
		\multirow{2}[0]{*}{$\lambda=100$} & $\mathcal{R}$  & 19.7 (0.5) & 20.0 (0.1) &       & 19.2 (0.9) & 20.0 (0.1) &       & 18.7 (1.2) & 20.0 (0.2) \\
		& $\mathcal{F}$  & 0.0 (0.1) & 1.5 (1.3) &       & 0.0 (0.1) & 14.0 (3.7) &       & 0.1 (0.2) & 58.1 (6.8) \\
		\hline
	\end{tabular}%
	\label{tab:lambda}%
\end{table}%
\subsection{Simulations for EM-test with different iteration steps $K$}
% Table generated by Excel2LaTeX from sheet 'addition'
\begin{table}[htbp]
	\centering
	\caption{The mean of numbers of correctly retained features ($\mathcal{R}$) and falsely retained features ($\mathcal{F}$) over 100 replications by EM-test with different steps $K$. Simulation are generated from the negative binomial model (Section 4.1 in the main manuscript).}
	\renewcommand\arraystretch {1.2}
	\footnotesize
	\setlength{\tabcolsep}{3mm}
	\begin{tabular}{cccccccccc}
		\hline
		\hline
		&       & \multicolumn{2}{c}{$p=500$} &       & \multicolumn{2}{c}{$p=5000$} &       & \multicolumn{2}{c}{$p=20,000$} \\
		\cline{3-4} \cline{6-7} \cline{9-10}
		&       & EM-adjust & EM-0.35 &       & EM-adjust & EM-0.35 &       & EM-adjust & EM-0.35 \\
		\hline
		\multicolumn{10}{c}{ Medium signal  and high noise} \\  
		\hline
		\multirow{2}[0]{*}{$K=1$} & $\mathcal{R}$  & 19.6 (0.6) & 20.0 (0.2) &       & 4.2 (3.4) & 12.5 (3.5) &       & 0.0 (0.0) & 2.8 (2.1) \\
		& $\mathcal{F}$  & 0.0 (0.0) & 0.0 (0.0) &       & 0.0 (0.0) & 0.0 (0.0) &       & 0.0 (0.0) & 0.0 (0.1) \\
		\multirow{2}[0]{*}{$K=3$} & $\mathcal{R}$  & 19.7 (0.5) & 20.0 (0.2) &       & 15.0 (2.8) & 19.0 (1.3) &       & 6.8 (2.5) & 15.5 (1.8) \\
		& $\mathcal{F}$  & 0.0 (0.0) & 0.0 (0.0) &       & 0.0 (0.0) & 0.1 (0.2) &       & 0.0 (0.0) & 0.2 (0.5) \\
		\multirow{2}[0]{*}{$K=5$} & $\mathcal{R}$  & 19.7 (0.5) & 20.0 (0.2) &       & 18.1 (1.7) & 19.8 (0.5) &       & 14.7 (2.1) & 19.1 (0.9) \\
		& $\mathcal{F}$  & 0.0 (0.0) & 0.1 (0.2) &       & 0.0 (0.0) & 0.5 (0.6) &       & 0.0 (0.0) & 1.7 (1.4) \\
		\multirow{2}[0]{*}{$K=10$} & $\mathcal{R}$  & 19.7 (0.5) & 20.0 (0.2) &       & 18.9 (1.0) & 20.0 (0.2) &       & 18.0 (1.4) & 19.9 (0.3) \\
		& $\mathcal{F}$  & 0.0 (0.0) & 0.2 (0.4) &       & 0.0 (0.0) & 2.1 (1.4) &       & 0.0 (0.1) & 8.6 (3.1) \\
		\multirow{2}[0]{*}{$K=20$} & $\mathcal{R}$  & 19.7 (0.5) & 20.0 (0.1) &       & 19.1 (1.0) & 20.0 (0.1) &       & 18.5 (1.2) & 19.9 (0.2) \\
		& $\mathcal{F}$  & 0.0 (0.0) & 0.6 (0.7) &       & 0.0 (0.0) & 5.8 (2.5) &       & 0.1 (0.2) & 23.3 (4.4) \\
		\multirow{2}[0]{*}{$K=50$} & $\mathcal{R}$  & 19.7 (0.5) & 20.0 (0.1) &       & 19.1 (0.9) & 20.0 (0.1) &       & 18.6 (1.2) & 19.9 (0.2) \\
		& $\mathcal{F}$  & 0.0 (0.1) & 1.1 (1.0) &       & 0.0 (0.0) & 10.8 (3.5) &       & 0.1 (0.2) & 42.8 (6.2) \\
		\multirow{2}[0]{*}{$K=100$} & $\mathcal{R}$  & 19.7 (0.5) & 20.0 (0.1) &       & 19.2 (0.9) & 20.0 (0.1) &       & 18.6 (1.2) & 19.9 (0.2) \\
		& $\mathcal{F}$  & 0.0 (0.1) & 1.6 (1.3) &       & 0.0 (0.0) & 13.6 (3.6) &       & 0.1 (0.2) & 55.4 (7.1) \\
		\multirow{2}[0]{*}{$K=200$} & $\mathcal{R}$  & 19.8 (0.5) & 20.0 (0.1) &       & 19.2 (0.9) & 20.0 (0.1) &       & 18.7 (1.2) & 20.0 (0.2) \\
		& $\mathcal{F}$  & 0.0 (0.1) & 2.0 (1.4) &       & 0.0 (0.2) & 17.9 (4.1) &       & 0.1 (0.3) & 74.7 (7.5) \\
		\hline
	\end{tabular}%
	\label{tab:step}%
\end{table}%
\newpage
\subsection{Simulations for EM-test with different thresholds $\vartheta$}
\begin{table}
	\centering
	\caption{The mean of numbers of correctly retained features ($\mathcal{R}$) and falsely retained features ($\mathcal{F}$) by the EM-test with different thresholds over 100 replications. Simulation are generated from the negative binomial model (Section 4.1 in the main manuscript). The numbers in the parenthesis are the standard deviation of $\mathcal{R}$ and  $\mathcal{F}$ over 100 replications. EM-0.2 means that we choose the threshold as $n^{0.2}$, similar for EM-0.25 -- EM-0.45. }
	\renewcommand\arraystretch {1.15}
	\footnotesize
	\setlength{\tabcolsep}{3mm}
	\begin{tabular}{cccccccc}
		\hline
		\hline
		&       & \multicolumn{1}{c}{EM-0.2} & \multicolumn{1}{c}{EM-0.25} & \multicolumn{1}{c}{EM-0.3} & \multicolumn{1}{c}{EM-0.35} & \multicolumn{1}{c}{EM-0.4} & \multicolumn{1}{c}{EM-0.45} \\
		\hline 
		\multicolumn{8}{c}{Case 1:  High signal  and low noise} \\  
		\hline  
		\multirow{2}[0]{*}{$p=500$} & $\mathcal{R}$  & 20 (0.1) & 20 (0.1) & 20 (0.1) & 20 (0.1) & 20 (0.1) & 20 (0.1) \\
		& $\mathcal{F}$  & 45 (6.4) & 19 (4.3) & 6 (2.3) & 1 (1.1) & 0 (0.3) & 0 (0.1) \\
		\multirow{2}[0]{*}{$p=5000$} & $\mathcal{R}$  & 20 (0.0) & 20 (0.0) & 20 (0.0) & 20 (0.0) & 20 (0.0) & 20 (0.0) \\
		& $\mathcal{F}$  & 459 (20.7) & 189 (12.2) & 57 (7.4) & 10 (3.3) & 1 (1.0) & 0 (0.1) \\
		\multirow{2}[0]{*}{$p=20,000$} & $\mathcal{R}$  & 20 (0.0) & 20 (0.0) & 20 (0.0) & 20 (0.0) & 20 (0.0) & 20 (0.0) \\
		& $\mathcal{F}$  & 1833 (38.3) & 757 (24.7) & 223 (12.9) & 40 (5.6) & 4 (2.0) & 0 (0.4) \\
		\hline 
		\multicolumn{8}{c}{Case 2:  High signal  and high noise} \\  
		\hline
		\multirow{2}[0]{*}{$p=500$} & $\mathcal{R}$  & 20 (0.0) & 20 (0.0) & 20 (0.0) & 20 (0.0) & 20 (0.0) & 20 (0.2) \\
		& $\mathcal{F}$  & 63 (7.5) & 29 (5.6) & 9 (2.9) & 2 (1.4) & 0 (0.4) & 0 (0.0) \\
		\multirow{2}[0]{*}{$p=5000$} & $\mathcal{R}$  & 20 (0.0) & 20 (0.0) & 20 (0.0) & 20 (0.0) & 20 (0.0) & 20 (0.0) \\
		& $\mathcal{F}$  & 658 (23.4) & 290 (16.0) & 90 (9.4) & 18 (4.1) & 2 (1.3) & 0 (0.1) \\
		\multirow{2}[0]{*}{$p=20,000$} & $\mathcal{R}$  & 20 (0.0) & 20 (0.0) & 20 (0.0) & 20 (0.0) & 20 (0.0) & 20 (0.0) \\
		& $\mathcal{F}$  & 2637 (52.6) & 1177 (36.3) & 375 (21.5) & 75 (7.7) & 8 (3.1) & 0 (0.6) \\
		\hline 
		\multicolumn{8}{c}{Case 3:  Medium signal  and low noise} \\  
		\hline  
		\multirow{2}[0]{*}{$p=500$} & $\mathcal{R}$  & 20 (0.1) & 20 (0.1) & 20 (0.1) & 20 (0.1) & 20 (0.2) & 20 (0.4) \\
		& $\mathcal{F}$  & 45 (6.7) & 19 (4.2) & 6 (2.4) & 1 (1.1) & 0 (0.2) & 0 (0.1) \\
		\multirow{2}[0]{*}{$p=5000$} & $\mathcal{R}$  & 20 (0.0) & 20 (0.0) & 20 (0.1) & 20 (0.1) & 20 (0.1) & 20 (0.4) \\
		& $\mathcal{F}$  & 459 (20.4) & 189 (12.7) & 57 (7.1) & 10 (3.2) & 1 (1.0) & 0 (0.1) \\
		\multirow{2}[0]{*}{$p=20,000$} & $\mathcal{R}$  & 20 (0.0) & 20 (0.0) & 20 (0.0) & 20 (0.0) & 20 (0.2) & 20 (0.4) \\
		& $\mathcal{F}$  & 1834 (41.0) & 757 (25.6) & 223 (13.0) & 41 (5.7) & 4 (2.0) & 0 (0.4) \\
		\hline 
		\multicolumn{8}{c}{Case 4:  Medium signal  and high noise} \\  
		\hline 
		\multirow{2}[0]{*}{$p=500$} & $\mathcal{R}$  & 20 (0.0) & 20 (0.0) & 20 (0.1) & 20 (0.1) & 20 (0.4) & 19 (0.8) \\
		& $\mathcal{F}$  & 63 (7.6) & 28 (5.5) & 9 (2.9) & 2 (1.4) & 0 (0.4) & 0 (0.0) \\
		\multirow{2}[0]{*}{$p=5000$} & $\mathcal{R}$  & 20 (0.0) & 20 (0.0) & 20 (0.0) & 20 (0.1) & 20 (0.4) & 19 (0.9) \\
		& $\mathcal{F}$  & 659 (23.0) & 291 (16.2) & 90 (9.1) & 18 (4.1) & 2 (1.3) & 0 (0.2) \\
		\multirow{2}[0]{*}{$p=20,000$} & $\mathcal{R}$  & 20 (0.0) & 20 (0.1) & 20 (0.1) & 20 (0.2) & 20 (0.4) & 19 (0.9) \\
		& $\mathcal{F}$  & 2637 (51.5) & 1177 (34.9) & 375 (21.3) & 75 (7.5) & 8 (3.0) & 0 (0.6) \\
		\hline 
		\multicolumn{8}{c}{Case 5:  Low signal  and low noise} \\  
		\hline  
		\multirow{2}[0]{*}{$p=500$} & $\mathcal{R}$  & 20 (0.3) & 20 (0.5) & 19 (0.7) & 19 (1.0) & 18 (1.5) & 15 (2.0) \\
		& $\mathcal{F}$  & 45 (6.9) & 19 (4.1) & 6 (2.3) & 1 (1.1) & 0 (0.2) & 0 (0.1) \\
		\multirow{2}[0]{*}{$p=5000$} & $\mathcal{R}$  & 20 (0.3) & 20 (0.3) & 20 (0.5) & 19 (1.0) & 17 (1.6) & 14 (1.9) \\
		& $\mathcal{F}$  & 459 (20.7) & 189 (11.8) & 57 (7.4) & 10 (3.3) & 1 (1.0) & 0 (0.1) \\
		\multirow{2}[0]{*}{$p=20,000$} & $\mathcal{R}$  & 20 (0.3) & 20 (0.4) & 20 (0.6) & 19 (1.1) & 17 (1.4) & 14 (1.9) \\
		& $\mathcal{F}$  & 1833 (38.3) & 758 (25.2) & 223 (13.3) & 41 (5.6) & 4 (1.9) & 0 (0.4) \\
		\hline 
		\multicolumn{8}{c}{Case 6:  Low signal  and high noise} \\  
		\hline  
		\multirow{2}[0]{*}{$p=500$} & $\mathcal{R}$  & 20 (0.4) & 20 (0.6) & 19 (0.8) & 18 (1.2) & 16 (1.7) & 13 (2.0) \\
		& $\mathcal{F}$  & 65 (7.8) & 29 (5.4) & 10 (3.2) & 2 (1.4) & 0 (0.4) & 0 (0.0) \\
		\multirow{2}[0]{*}{$p=5000$} & $\mathcal{R}$  & 20 (0.4) & 20 (0.6) & 19 (0.8) & 18 (1.3) & 16 (1.7) & 13 (2.1) \\
		& $\mathcal{F}$  & 660 (24.4) & 291 (16.3) & 91 (9.2) & 18 (4.2) & 2 (1.2) & 0 (0.1) \\
		\multirow{2}[0]{*}{$p=20,000$} & $\mathcal{R}$  & 20 (0.5) & 20 (0.6) & 19 (0.9) & 18 (1.4) & 16 (1.7) & 12 (2.0) \\
		& $\mathcal{F}$  & 2636 (52.9) & 1176 (36.7) & 374 (21.7) & 75 (7.4) & 8 (3.0) & 0 (0.6) \\
		\hline
	\end{tabular}%
	\label{tab:threshold}%
\end{table}%

\newpage
\subsection{Simulations for continuous data}
{\colblue In this section, we perform simulations for continuous data.  The distribution family is chosen as the normal distribution.  We consider three dimension setups $p = 500, 5000$ and $20, 000$. The sample size is set as $n = 1000$ and the number of cluster-relevant features is $s = 20$. 
	
	We first consider simulation setup of  balanced scenario. We set the number of clusters as $G=5$ and the proportions of the clusters as $\ba =(\alpha_1,\dots, \alpha_5) = (0.2,0.2,0.2,0.2,0.2)$.   For the $i$th sample, we first randomly assign it to a cluster $g$ with the probability $\alpha_g$. Then, if the $j$th feature is cluster-relevant ($j=1,\dots,20$), we randomly sample $x_{ij}$ from ${\rm Normal} (\mu_{gj}, \sigma^2_{j})$; If it is cluster-irrelevant ($j=21,\dots,p$), we randomly sample $x_{ij}$ from ${\rm Normal} (\mu_j, \sigma^2_{j})$.  
	We independently generate $\sigma_j$ from the uniform distributions ${\rm U}(1,1.5)$.  For the clustering-relevant features ($j=1,\dots,20$), the mean parameters $\mu_{gj}$ are either set as $u_j$ or $u_j+D_j$, where $u_j$ is generated from $ {\rm U} (-5, 5)$, and $D_j$ is to control the signal strength (the differences between clusters). We generate $D_j$ from ${\rm U}(10,11)$. For the first 5 features ($1\leq j\leq 5$), we set $\mu_{2j}=u_j+D_j$ and $\mu_{gj} = u_j(g \neq 2)$. 
	%Similarly, for $11\leq j\leq 20$, we set $\mu_{1j}=u_j+D_j$ and $\mu_{2j} = u_j$. For all cluster-irrelevant features ($j=21,\dots,p$), we set $\mu_j=u_j$, where $u_j$ is generated from $ {\rm U} (-5,  5)$.  For the first 5 features ($1\leq j\leq 5$), we set $\mu_{2j}=u_j+D_j$ and $\mu_{gj} = u_j (g\neq 2)$. 
	Similarly, for $5k+1\leq j\leq 5k+5 (k=1,2,3)$, we set $\mu_{k+2,j}=u_j+D_j$ and $\mu_{gj} = u_j (g\neq k+2)$. For all cluster-irrelevant features ($j=21,\dots,p$), we set $\mu_j=u_j$, where $u_j$ is generated from $ {\rm U} (-5,  5)$.

	Then we consider simulation setup of unbalanced scenario. We set the number of clusters as $G=5$ and the proportions of the clusters as $\ba =(\alpha_1,\dots,\alpha_5) = (0.5,0.125,0.125,0.125,0.125)$.  
	We independently generate $\sigma_j$ from the uniform distributions ${\rm U}(1,2)$. For the clustering-relevant features ($j=1,\dots,20$), the mean parameters $\mu_{gj}$ are either set as $u_j$ or $u_j+D_j$, where $u_j$ is generated from $ {\rm U} (-5, 5)$, and $D_j$ is to control the signal strength (the differences between clusters). We generate $D_j$ from ${\rm U}(3,4)$. For the first 5 features ($1\leq j\leq 5$), we set $\mu_{2j}=u_j+D_j$ and $\mu_{gj} = u_j(g\neq 2)$. 
	%Similarly, for $11\leq j\leq 20$, we set $\mu_{1j}=u_j+D_j$ and $\mu_{2j} = u_j$. For all cluster-irrelevant features ($j=21,\dots,p$), we set $\mu_j=u_j$, where $u_j$ is generated from $ {\rm U} (-5,  5)$.  For the first 5 features ($1\leq j\leq 5$), we set $\mu_{2j}=u_j+D_j$ and $\mu_{gj} = u_j (g\neq 2)$. 
	Similarly, for $5k+1\leq j\leq 5k+5 (k=1,2,3)$, we set $\mu_{k+2,j}=u_j+D_j$ and $\mu_{gj} = u_j (g\neq k+2)$. For all cluster-irrelevant features ($j=21,\dots,p$), we set $\mu_j=u_j$, where $u_j$ is generated from $ {\rm U} (-5,  5)$. }

\begin{table}
	\centering
	\caption{The mean of numbers of correctly retained features ($\mathcal{R}$) and falsely retained features ($\mathcal{F}$) by different methods in the normal simulations. EM-adjust means that select the features by the adjusted p-values and EM-0.35 means that we choose the threshold as $n^{0.35}$. KS-test is the test used by IF-PCA \citep{jin2016influential}, Dip-test is the uni-modality test \citep{chan2010using} and COSCI is the test mentioned  by the reviewer \citep{banerjee2017feature}. COSCI is not evaluated for $p = 20, 000$ because of its high computational cost.}
	\setlength{\tabcolsep}{4mm}
	\renewcommand\arraystretch {1.2}
	\footnotesize
	\begin{tabular}{cccccccc}
		\hline
		\hline
		&       & EM-adjust & EM-0.35 & SC-FS  & KS-test & Dip-test & COSCI \\
		\hline
		\multicolumn{8}{c}{Case 1:  Balanced } \\  
		\hline
		\multirow{2}[0]{*}{$p=500$} & $\mathcal{R}$  & 20 (0.0) & 20 (0.0) & 20 (0.0) & 20 (0.0) & 20 (0.0) & 16 (2.1) \\
		& $\mathcal{F}$  & 0 (0.5) & 4 (2.2) & 0 (0.0) & 0 (0.0) & 0 (0.0) & 45 (4.6) \\
		\multirow{2}[0]{*}{$p=5000$} & $\mathcal{R}$  & 20 (0.0) & 20 (0.0) & 20 (0.0) & 20 (0.0) & 20 (0.0) & 17 (1.6) \\
		& $\mathcal{F}$  & 0 (0.6) & 39 (6.2) & 0 (0.0) & 0 (0.0) & 0 (0.0) & 532 (16.4) \\
		\multirow{2}[0]{*}{$p=20,000$} & $\mathcal{R}$  & 20 (0.0) & 20 (0.0) & 16 (4.3) & 20 (0.0) & 20 (0.0) & NA \\
		& $\mathcal{F}$  & 0 (0.6) & 153 (12.5) & 0 (0.0) & 0 (0.0) & 0 (0.0) & NA \\
		\hline
		\multicolumn{8}{c}{Case 2: Unbalanced} \\  
		\hline
		\multirow{2}[0]{*}{$p=500$} & $\mathcal{R}$  & 18 (1.4) & 19 (0.8) & 20 (0.7) & 6 (2.3) & 0 (0.0) & 1 (0.9) \\
		& $\mathcal{F}$  & 0 (0.5) & 4 (2.0) & 0 (0.0) & 0 (0.0) & 0 (0.0) & 53 (5.3) \\
		\multirow{2}[0]{*}{$p=5000$} & $\mathcal{R}$  & 16 (1.7) & 19 (0.9) & 0 (0.3) & 4 (2.0) & 0 (0.0) & 1 (1.1) \\
		& $\mathcal{F}$  & 0 (0.5) & 39 (5.9) & 0 (0.0) & 0 (0.0) & 0 (0.0) & 538 (17.2) \\
		\multirow{2}[0]{*}{$p=20,000$} & $\mathcal{R}$  & 15 (2.0) & 19 (0.9) & 0 (0.0) & 3 (1.7) & 0 (0.0) & NA \\
		& $\mathcal{F}$  & 0 (0.6) & 153 (11.9) & 0 (0.0) & 0 (0.0) & 0 (0.0) & NA \\
		\hline
	\end{tabular}%
	\label{tab:PNnormalCase}%
\end{table}%

\begin{table}[htbp]
	\centering
	\caption{Similar to Table \ref{tab:PNnormalCase} but for clustering accuracy. The means and standard deviations (in parenthesis) of ARIs over 100 replications by different methods in the normal simulations. The values in the table are shown as the actual values $\times$ 100. No-Screening means that we use all features for clustering. Oracle means that we only use the $s=20$ clustering-relevant features for clustering.}
	\renewcommand\arraystretch {1.2}
	\footnotesize
	\begin{tabular}{ccccccccc}
		\hline
		\hline
		& No-Screening & Oracle & EM-adjust & EM-0.35 & SC-FS  & KS-test & Dip-test & COSCI \\
		\hline
		\multicolumn{9}{c}{Case 1:  Balanced} \\  
		\hline
		$p=500$ & 71 (10.6) & 96 (9.5) & 97 (8.9) & 94 (11.4) & 97 (8.6) & 96 (9.5) & 96 (9.7) & 87 (11.6) \\
		$p=5000$ & 62 (2.8) & 94 (11.1) & 94 (11.1) & 90 (9.3) & 94 (11.4) & 94 (11.1) & 96 (9.7) & 75 (4.1) \\
		$p=20,000$ & 16 (4.9) & 93 (11.7) & 93 (12.3) & 80 (5.1) & 66 (18.1) & 93 (11.7) & 96 (10.0) & NA \\
		\hline
		\multicolumn{9}{c}{Case 2:  Unbalanced} \\  
		\hline
		$p=500$ & 61 (9.9) & 97 (1.5) & 95 (3.4) & 96 (2.1) & 96 (5.3) & 52 (21.5) & 0 (0.0) & 1 (1.8) \\
		$p=5000$ & 1 (0.6) & 96 (3.0) & 93 (5.1) & 94 (3.7) & 0 (1.6) & 36 (20.4) & 0 (0.0) & 0 (0.2) \\
		$p=20,000$ & 0 (0.2) & 97 (1.5) & 92 (7.1) & 86 (8.2) & 0 (0.0) & 32 (19.8) & 0 (0.0) & NA \\
		\hline
	\end{tabular}%
	\label{tab:ARInormalCase}%
\end{table}%

{\colblue Results of these two scenarios are presented in Table \ref{tab:PNnormalCase} and \ref{tab:ARInormalCase}.  We first compare different algorithms in terms of the number of correctly retained features and falsely retained features (Table \ref{tab:PNnormalCase}). The  balanced case is a relative simple scenario and many methods work well. Overall, the two versions of EM-test could correctly select most cluster-relevant features and have very few false positives. The  unbalanced case is much more challenging and EM-test outperforms other methods by a large margin, especially when $p$ is larger. Dip-test \citep{chan2010using} tends to be very conservative in this case and does not select any features. KS-test (the test used by IF-PCA) is able to select a few important features with very few false positives. COSCI has a large number false positives. SC-FS performs well in the low dimensional case ($p=500$), but cannot select any feature in the higher dimensional cases. We also compare the clustering accuracy based on the selected features (Table \ref{tab:ARInormalCase}). Again, we see that EM-test outperforms other methods, especially for the more challenging unbalanced case.}

\subsection{EM-test under mis-specified model}
{\colblue We consider two mis-specified models to investigate the robustness of the proposed method.  
	%	Considering that a Poisson-Gamma mixture is a negative binomial distribution,  we first replace the Gamma distribution with the truncated normal distribution ${\rm TN}(\mu, \sigma, \gamma)$ with the   probability density function  
	%	$$
	%	{\displaystyle f(x;\mu ,\sigma ,a,b)={\frac {1}{\sigma }}\,{\frac {\phi ({\frac {x-\mu }{\sigma }})}{1-\Phi ({\frac {\gamma-\mu }{\sigma }})}}}, x\geq \gamma,
	%	$$
	%	where $\phi(\cdot)$ s the probability density function of the standard normal distribution and 
	%	$\Phi (\cdot )$ is its cumulative distribution function.  
	The two mis-specified models are the Poisson-truncated-normal and the binomial-Gamma distributions. A random variable $y$ is said to follow a Poisson-truncated-normal distribution ${\rm PTN}(\mu, \sigma, \gamma)$ ($\mu\in \mathcal{R}, \sigma>0, \gamma>0$), if conditional on a latent variable $\lambda$, $x$ follows a Poisson distribution with a mean $\lambda$, and the latent variable $\lambda$ follows the truncated normal distribution with the probability density function
	$$
	{\displaystyle f(x;\mu ,\sigma ,a,b)={\frac {1}{\sigma }}\,{\frac {\phi ({\frac {x-\mu }{\sigma }})}{1-\Phi ({\frac {\gamma-\mu }{\sigma }})}}}, x\geq \gamma,
	$$
	where $\phi(\cdot)$ is the probability density function of the standard normal distribution and 
	$\Phi (\cdot )$ is its cumulative distribution function. In this simulation, we set $\gamma = 0.5$, $\sigma = 1$. 
	%	The simulation data generation is the same as the negative binomial  case except for setting $\gamma = 0.5$, $\sigma = 1$ and the truncated normal  mean differences  between clusters $D \sim {\rm U}(7,8)$.   
	
	The simulation data generation is the same as the negative binomial  case. We set the number of clusters as $G=5$ and the proportions of the clusters as $$\ba =(\alpha_1,\dots,\alpha_5) = (0.5,0.125,0.125,0.125,0.125).$$  We consider three dimension setups $p = 500, 5000$ and $20, 000$. The sample size is set as $n = 1000$ and the number of cluster-relevant features is $s = 20$.   For the clustering-relevant features ($j=1,\dots,20$), the mean parameters of the truncated normal distribution $\mu_{gj}$ are either set as $\exp(u_j)$ or $\exp(u_j)+D_j$, where $u_j$ is generated from $ {\rm U} (\log \ 2, \log \ 5)$, and $D_j$ is to control the signal strength (the differences between clusters). We generate $D_j$ from ${\rm U}(7,8)$. For the first 5 features ($1\leq j\leq 5$), we set $\mu_{2j}=\exp(u_j)+D_j$ and $\mu_{gj} = \exp(u_j) (g \neq 2)$. 
	%Similarly, for $11\leq j\leq 20$, we set $\mu_{1j}=u_j+D_j$ and $\mu_{2j} = u_j$. For all cluster-irrelevant features ($j=21,\dots,p$), we set $\mu_j=u_j$, where $u_j$ is generated from $ {\rm U} (-5,  5)$.  For the first 5 features ($1\leq j\leq 5$), we set $\mu_{2j}=u_j+D_j$ and $\mu_{gj} = u_j (g\neq 2)$. 
	Similarly, for $5k+1\leq j\leq 5k+5 (k=1,2,3)$, we set $\mu_{k+2,j}=\exp(u_j)+D_j$ and $\mu_{gj} = \exp(u_j) (g\neq k+2)$. For all cluster-irrelevant features ($j=21,\dots,p$), we set $\mu_j=\exp(u_j)$, where $u_j$ is generated from $ {\rm U} (\log \ 2, \log \ 5)$.

	We next consider another  mis-specified model the binomial-Gamma distribution as follows.  A random variable $x$ is said to follow a binomial-Gamma ${\rm BG}(z, \mu, r)$ if 
	$$
	x|\lambda \sim {\rm Binomial}(\lceil \max(z,\lambda) \rceil, \lambda / \lceil \max(z,\lambda) \rceil),~\lambda \sim {\rm Gamma}(r, \mu/r)
	$$
	where $r$ is the shape parameter and $\mu/r$ is the scale parameter of the Gamma distribution.   In this simulation, we set $z = 100$.   		
	%	Similarly, the simulation data generation is the same as the negative binomial  case except for generating $r_j$ from the uniform distributions  ${\rm U}(5,6)$ and the mean ($\mu$) differences between clusters $D_j$  from ${\rm U}(2,3)$.
	We set the number of clusters as $G=5$ and the proportions of the clusters as $\ba =(\alpha_1,\dots,\alpha_5) = (0.5,0.125,0.125,0.125,0.125)$.  We consider three dimension setups $p = 500, 5000$ and $20, 000$. The sample size is set as $n = 1000$ and the number of cluster-relevant features is $s = 20$.   We independently generate $r_j$ from the uniform distributions ${\rm U}(5, 6)$.  For the clustering-relevant features ($j=1,\dots,20$), the mean parameters $\mu_{gj}$ of binomial-Gamma  (${\rm BG}(z, \mu_{gj}, r_j)$) are either set as $\exp(u_j)$ or $\exp(u_j)+D_j$, where $u_j$ is generated from $ {\rm U} (\log \ 2, \log \ 5)$, and $D_j$ is to control the signal strength (the differences between clusters). We generate $D_j$ from ${\rm U}(2,3)$. For the first 5 features ($1\leq j\leq 5$), we set $\mu_{2j}=\exp(u_j)+D_j$ and $\mu_{gj} = \exp(u_j) (g\neq 2)$. 
	%Similarly, for $11\leq j\leq 20$, we set $\mu_{1j}=u_j+D_j$ and $\mu_{2j} = u_j$. For all cluster-irrelevant features ($j=21,\dots,p$), we set $\mu_j=u_j$, where $u_j$ is generated from $ {\rm U} (-5,  5)$.  For the first 5 features ($1\leq j\leq 5$), we set $\mu_{2j}=u_j+D_j$ and $\mu_{gj} = u_j (g\neq 2)$. 
	Similarly, for $5k+1\leq j\leq 5k+5 (k=1,2,3)$, we set $\mu_{k+2,j}=\exp(u_j)+D_j$ and $\mu_{gj} = \exp(u_j) (g\neq k+2)$. For all cluster-irrelevant features ($j=21,\dots,p$), we set $\mu_j=\exp(u_j)$, where $u_j$ is generated from $ {\rm U} (\log \ 2, \log \ 5)$. 
	
	For comparison, we also included the continuous methods IF-PCA (KS-test), Dip-test and COCSI. Before applying the continuous methods, we apply a log transformation ($\log (x+1)$) to the count data to make the data more like continuous data. Because of the computational burden, COCSI is not considered for the $p=20,000$ simulations. Table \ref{tab:ARImisnor}  and \ref{tab:PNmisnor} show the simulation results for the Poisson-truncated normal model.	EM-test still performs the best under this mis-specified model. For example, the ARIs of the clustering results based on features selected by EM-test are consistently larger than those of other methods, especially in the higher dimensional setups (Table \ref{tab:ARImisnor}). EM-test could select almost all clustering-relevant features with few false positives. Other methods either select too few clustering-relevant features or report too much false positives. The continuous methods do not perform well for these count data. KS-test and Dip-test report many false positives and select almost all features as clustering-relevant features (adjusted p-values $<0.01$ ). COCSI instead is very conservative  in this simulation and could not select any features. The simulation results for the binomial-Gamma mis-specified model are similar and are shown in Table S10-S11.  }
\begin{table}[htbp]
	\centering
	
	\caption{The means and standard deviations (in parenthesis) of ARIs over 100 replications by different methods under the Poisson-truncated normal (mis-specified) model. The values in the table are shown as the actual values $\times$ 100. EM-adjust means that  the features  are selected  by the adjusted p-values and EM-0.35 means that we choose the threshold as $n^{0.35}$. The three continuous methods Dip-test, KS-test and COSCI are applied to the normalized data (log normalization).}
	\renewcommand\arraystretch {1.2}
	\footnotesize
	\begin{tabular}{ccccccccc}
		\hline
		\hline
		& No-Screening & EM-adjust & EM-0.35 & Chi-square & SC-FS & Dip-test & KS-test & COSCI \\
		\hline
		$p=500$ & 93 (6.1) & 99 (2.7) & 99 (2.7) & 82 (15.4) & 98 (4.0) & 95 (1.4) & 95 (1.5) & 0 (0.0) \\
		$p=5000$ & 7 (2.7) & 99 (1.2) & 99 (0.5) & 50 (24.6) & 37 (28.0) & 9 (2.6) & 9 (2.6) & 0 (0.0) \\
		$p=20,000$ & 0 (0.3) & 99 (0.9) & 98 (0.7) & 24 (21.4) & 0 (0.0) & 1 (0.3) & 1 (0.3) & NA \\
		
		\hline
	\end{tabular}%
	
	\label{tab:ARImisnor}%
\end{table}%
% Table generated by Excel2LaTeX from sheet 'addition'
\begin{table}[htbp]
	\centering
	\caption{The mean of numbers of correctly retained features ($\mathcal{R}$) and falsely retained features ($\mathcal{F}$) by different methods under the Poisson-truncated normal (mis-specified) model. }
	
	\renewcommand\arraystretch {1.2}
	\footnotesize
	\begin{tabular}{ccccccccc}
		\hline
		\hline
		&       & EM-adjust & EM-0.35 & Chi-square & SC-FS & Dip-test & KS-test & COSCI \\
		\hline
		\multirow{2}[0]{*}{$p=500$} & $\mathcal{R}$  & 20 (0.4) & 20 (0.1) & 10 (2.7) & 20 (0.4) & 20 (0.0) & 20 (0.0) & 0 (0.0) \\
		& $\mathcal{F}$  & 0 (0.3) & 3 (1.7) & 0 (0.4) & 0 (0.0) & 480 (0.0) & 480 (0.0) & 0 (0.0) \\
		\multirow{2}[0]{*}{$p=5000$} & $\mathcal{R}$  & 19 (0.7) & 20 (0.0) & 5 (2.5) & 14 (4.7) & 20 (0.0) & 20 (0.0) & 0 (0.0) \\
		& $\mathcal{F}$  & 0 (0.3) & 31 (6.0) & 0 (0.5) & 0 (0.0) & 4980 (0.0) & 4980 (0.0) & 0 (0.0) \\
		\multirow{2}[0]{*}{$p=20,000$} & $\mathcal{R}$  & 19 (1.1) & 20 (0.1) & 2 (1.9) & 0 (0.2) & 20 (0.0) & 20 (0.0) & NA \\
		& $\mathcal{F}$  & 0 (0.4) & 124 (10.3) & 0 (0.3) & 0 (0.0) & 19980 (0.0) & 19980 (0.0) & NA  \\
		
		\hline
	\end{tabular}%
	
	\label{tab:PNmisnor}%
\end{table}%

\begin{table}[htbp]
	\centering
	\caption{The means and standard deviations (in parenthesis) of ARIs over 100 replications by different methods under the binomial-Gamma (mis-specified) model. The values in the table are shown as the actual values $\times$ 100.  EM-adjust means that  the features  are selected by the adjusted p-values and EM-0.35 means that we choose the threshold as $n^{0.35}$, similar for EM-0.35. The three continuous methods Dip-test, KS-test and COSCI are applied to the normalized data (log normalization).}
	\renewcommand\arraystretch {1.2}
	\footnotesize
	\begin{tabular}{ccccccccc}
		\hline
		\hline
		& No-Screening & EM-adjust & EM-0.35 & Chi-square & SC-FS & Dip-test & KS-test & COSCI \\
		\hline
		$p=500$ & 73 (33.1) & 83 (32.9) & 86 (29.9) & 75 (39.7) & 82 (30.9) & 79 (33.7) & 78 (33.9) & 2 (4.0) \\
		$p=5000$ & 14 (12.0) & 82 (35.2) & 84 (31.2) & 70 (39.2) & 56 (47.6) & 16 (12.6) & 16 (13.3) & 0 (1.1) \\
		$p=20,000$ & 1 (0.6) & 81 (35.4) & 81 (33.9) & 45 (42.3) & 0 (2.1) & 1 (0.7) & 1 (0.8) & NA \\
		\hline
	\end{tabular}%
	
	\label{tab:ARImisbio}%
\end{table}%

\begin{table}[htbp]
	\centering
	\caption{The mean of numbers of correctly retained features ($\mathcal{R}$) and falsely retained features ($\mathcal{F}$) by different methods  under the binomial-Gamma (mis-specified) model.}
	\setlength{\tabcolsep}{1.6mm}
	\renewcommand\arraystretch {1.2}
	\footnotesize
	\begin{tabular}{ccccccccc}
		\hline
		\hline
		&       & EM-adjust & EM-0.35 & Chi-square & SC-FS & Dip-test & KS-test & COSCI \\
		\hline
		\multirow{2}[0]{*}{$p=500$} & $\mathcal{R}$  & 16 (6.8) & 17 (5.7) & 14 (8.0) & 18 (5.0) & 20 (0.3) & 19 (1.5) & 1 (1.1) \\
		& $\mathcal{F}$  & 0 (0.3) & 3 (2.4) & 68 (95.4) & 0 (0.3) & 480 (0.2) & 453 (35.2) & 40 (22.5) \\
		\multirow{2}[0]{*}{$p=5000$} & $\mathcal{R}$  & 16 (7.2) & 17 (5.6) & 14 (8.3) & 13 (9.0) & 20 (0.3) & 19 (1.5) & 1 (1.1) \\
		& $\mathcal{F}$  & 0 (0.4) & 25 (17.5) & 676 (975.2) & 0 (0.0) & 4980 (0.9) & 4697 (362.0) & 425 (227.6) \\
		\multirow{2}[0]{*}{$p=20,000$} & $\mathcal{R}$  & 16 (7.2) & 17 (5.7) & 13 (8.4) & 0 (1.4) & 20 (0.3) & 19 (1.2) & NA \\
		& $\mathcal{F}$  & 0 (0.4) & 100 (68.9) & 2703 (3934.1) & 0 (0.0) & 19979 (4.4) & 18835 (1459.0) & NA \\
		
		\hline
	\end{tabular}%
	
	\label{tab:PNmisbio}%
\end{table}%

\newpage
\subsection{Simulations for the limiting distribution}
\begin{table}[htbp]
	\centering
	\caption{The means and standard deviations (in parenthesis) of FDR and power over 100 replications. EM(1) means that the p-values are obtained from the $\chi^2(3)$ distribution, and EM(2) means that the p-values are calculated using the limiting distribution in Theorem \thmtwo.  The Benjamini–Hochberg procedure controls the FDR at 0.01. Simulation are generated from the negative binomial model (Section 4.1 in the main manuscript).}
	\renewcommand\arraystretch {1.2}
	\footnotesize
	\setlength{\tabcolsep}{4mm}
	\begin{tabular}{ccccccc}
		\hline
		\hline
		& \multicolumn{2}{c}{EM (1)} & \multicolumn{2}{c}{EM (2)} & \multicolumn{2}{c}{Chi-square} \\
		\cline{2-3} \cline{4-5} 	 \cline{6-7} 
		& FDR   & Power & FDR   & Power & FDR   & Power \\
		\hline 
		\multicolumn{7}{c}{Case 1:  High signal  and low noise} \\  
		\hline 
		$p=500$ & 0.00 (0.01) & 1.00 (0.01) &0.01 (0.02) &	1.00 (0.00) & 0.02 (0.03) & 0.98 (0.03) \\
		$p=5000$ & 0.00 (0.00) & 1.00 (0.00) & 0.02 (0.03) & 1.00 (0.00) & 0.02 (0.03) & 0.93 (0.06) \\
		$p=20,000$ & 0.00 (0.01) & 1.00 (0.00) & NA  & NA  & 0.01 (0.03) & 0.90 (0.06) \\
		\hline 
		\multicolumn{7}{c}{Case 2:  High signal  and high noise} \\  
		\hline 
		$p=500$ & 0.00 (0.01) & 1.00 (0.01) & 0.02 (0.03) &	1.00 (0.00) & 0.01 (0.03) & 0.92 (0.06) \\
		$p=5000$ & 0.00 (0.01) & 1.00 (0.00) & 0.04 (0.04) & 1.00 (0.00) & 0.02 (0.03) & 0.83 (0.09) \\
		$p=20,000$ & 0.00 (0.01) & 1.00 (0.01) & NA  & NA  & 0.02 (0.04) & 0.75 (0.10) \\
		\hline 
		\multicolumn{7}{c}{Case 3:  Medium signal  and low noise} \\  
		\hline 
		$p=500$ & 0.00 (0.01) & 1.00 (0.01) & 0.01 (0.02) &	1.00 (0.01) & 0.01 (0.03) & 0.78 (0.12) \\
		$p=5000$ & 0.00 (0.00) & 0.99 (0.02) & 0.02 (0.03) & 1.00 (0.02) & 0.02 (0.03) & 0.61 (0.11) \\
		$p=20,000$ & 0.00 (0.01) & 0.98 (0.03) & NA  & NA  & 0.01 (0.04) & 0.50 (0.13) \\
		\hline 
		\multicolumn{7}{c}{Case 4:  Medium signal  and high noise} \\  
		\hline 
		$p=500$ & 0.00 (0.01) & 0.99 (0.03) & 0.02 (0.03)	& 1.00 (0.02) & 0.02 (0.04) & 0.65 (0.11) \\
		$p=5000$ & 0.00 (0.01) & 0.96 (0.04) & 0.04 (0.05) & 0.98 (0.03) & 0.01 (0.03) & 0.46 (0.13) \\
		$p=20,000$ & 0.00 (0.01) & 0.93 (0.06) & NA  & NA  & 0.02 (0.05) & 0.34 (0.13) \\
		\hline 
		\multicolumn{7}{c}{Case 5:  Low signal  and low noise} \\  
		\hline 
		$p=500$ & 0.00 (0.01) & 0.82 (0.10) &0.01 (0.02) &	0.90 (0.08) & 0.02 (0.07) & 0.20 (0.12) \\
		$p=5000$ & 0.00 (0.01) & 0.69 (0.10) & 0.03 (0.04) & 0.80 (0.09) & 0.03 (0.12) & 0.08 (0.08) \\
		$p=20,000$ & 0.00 (0.01) & 0.61 (0.12) & NA  & NA  & 0.02 (0.11) & 0.04 (0.05) \\
		\hline 
		\multicolumn{7}{c}{Case 6:  Low signal  and high noise} \\  
		\hline 
		$p=500$ & 0.00 (0.01) & 0.73 (0.10) & 0.02 (0.04) &	0.83 (0.09) & 0.02 (0.07) & 0.14 (0.10) \\
		$p=5000$ & 0.00 (0.01) & 0.60 (0.11) & 0.05 (0.05) & 0.74 (0.09) & 0.02 (0.11) & 0.05 (0.05) \\
		$p=20,000$ & 0.01 (0.02) & 0.50 (0.12) & NA  & NA  & 0.02 (0.12) & 0.03 (0.05) \\
		\hline
	\end{tabular}%
	\label{tab:FDRpower}%
\end{table}%
\newpage
\subsection{The computation time of different methods.}
\begin{table}[htbp]
	\centering
	\caption{The computation time of different methods. Simulation are generated from the negative binomial model (Section 4.1 in the main manuscript).}
	\renewcommand\arraystretch {1.2}
	\footnotesize
	\setlength{\tabcolsep}{4mm}
	\begin{tabular}{cccccccc}
		\hline
		\hline
		Time (s) & EM-test & Chi-square & SC-FS & Skmeans & KS-test & Dip-test & COSCI \\
		\hline
		$p=500$   & 14.59  & 25.57  & 0.98  & 265.36  & 1.83  & 1.86  & 53.73  \\
		$p=5000$  & 127.21  & 246.97  & 7.60  & 3074.06  & 7.01  & 7.09  & 470.50  \\
		$p=20,000$ & 505.21  & 986.82  & 30.53  & NA    & 25.90  & 25.84  & NA \\
		\hline
	\end{tabular}%
	\label{tab:time}%
\end{table}%

\section{Details for the application on scRNA-seq data}
In the analysis of the scRNA-seq data from \citet{Heming2021NeurologicalMO}, we mainly follow the analysis protocol of Seurat \citep{Butler2018IntegratingST}. Since there are 31 patients in this dataset, we must consider the batch effect \citep{Haghverdi2018BatchEI} which may have a non-negligible effect on the count matrix from different patients.  Also, it is known that systematic differences in library size between different cells are often observed in scRNA-seq data \citep{Stegle2015ComputationalAA}. Therefore, before applying our screening procedure, we remove this confounding effect via down-sampling  such that each cell has the same number of unique molecular identifier (UMI) counts.

Under the assumption that clustering-informative genes must be heterogeneously distributed in at least one batch, we apply the EM-test to each batch $b \,(b=1,\dots,B)$, respectively, and get a p-value $p_j^{(b)}$ for  each gene $j$. Then, we calculate the Bonferroni-type combined p-values \citep{vovk2020combining}:
$$p_j^\text{comb} = B \cdot  \min \left \{p_j^{(1)}, p_j^{(2)},\dots,p_j^{(B)} \right\},$$
and then perform the Bonferroni-Hochberg (BH) procedure \citep{Benjamini1995ControllingTF} of false discovery rate control on the $p_j^\text{comb}$'s. Finally, we select genes with an adjusted p-value smaller than 0.01 for downstream analysis.

Before applying dimensional reduction methods, we first normalize and scale the count matrix by \texttt{NormalizeData()} and \texttt{ScaleData()} in Seurat. Then, we perform PCA analysis and select the first $40$ principle components as the input for \texttt{harmony} to do batch effect removal \citep{Korsunsky2018FastSA}. After that, following the standard analysis protocol of Seurat, we construct a SNN graph with the derived ``harmony dimensions", perform clustering with the Louvain's method, and further reduce the dimensions of the data to two via UMAP for virtualization. To annotate each derived clusters, we check the expression of each marker genes provided by \cite{Heming2021NeurologicalMO}.

As to the implementation of the Chi-square test, we group the data into four bins according to the median and the upper and lower quartiles  and perform similar analyses as the EM-test.

%%%%%%%%%%%%%%%%%%%%%%%

\bigskip

\bibliographystyle{Chicago}

\bibliography{FeatureScreening}
\end{document}